\pdfoutput=1

\documentclass[sigplan, screen]{acmart}

\newif\iffull\fulltrue

\usepackage{xspace}

\usepackage{typingjudgements}
\usepackage{tikz}
\usetikzlibrary{arrows,positioning}
\usepackage{mathpartir}
\usepackage{booktabs}   
\usepackage{subcaption} 
\usepackage{typingjudgements}
\usepackage{trfrac}
\usepackage{tcolorbox}
\usepackage{syntax}
\usepackage{xcolor}

\usepackage{listings}
\usepackage{comment}

\definecolor{commcolor}{rgb}{0, 0.6, 0}

\definecolor{pblue}{rgb}{0.2, 0, 1}
\definecolor{pred}{rgb}{1, 0, 0}

\lstdefinelanguage{pfour}
{
    keywords=[1]{
        low,high,top, bot,A,B
    },
    keywordstyle=[1]\color{pred},
    keywords=[2]{table, enum, action, apply, header, control, struct},
    keywordstyle=[2]\color{pblue}
}

\lstset{
language=pfour,
basicstyle=\small\sffamily,
numbers=left,
numberstyle=\tiny,
xleftmargin=2em,
frame=tb,
columns=fullflexible,
morecomment=[l][\color{commcolor}]{//},
morecomment = [s][\color{commcolor}]{/*}{*/},
showstringspaces=false,
commentstyle=\color{commcolor}
}

\makeatletter
\@addtoreset{equation}{enumi}
\makeatother

\makeatletter
\let\old@lstKV@SwitchCases\lstKV@SwitchCases
\def\lstKV@SwitchCases#1#2#3{}
\makeatother
\usepackage{lstlinebgrd}
\makeatletter
\let\lstKV@SwitchCases\old@lstKV@SwitchCases

\lst@Key{numbers}{none}{%
    \def\lst@PlaceNumber{\lst@linebgrd}%
    \lstKV@SwitchCases{#1}%
    {none:\\%
    left:\def\lst@PlaceNumber{\llap{\normalfont
                \lst@numberstyle{\thelstnumber}\kern\lst@numbersep}\lst@linebgrd}\\%
    right:\def\lst@PlaceNumber{\rlap{\normalfont
                \kern\linewidth \kern\lst@numbersep
                \lst@numberstyle{\thelstnumber}}\lst@linebgrd}%
    }{\PackageError{Listings}{Numbers #1 unknown}\@ehc}}
\makeatother
\usepackage{lstlinebgrd}

\usepackage{balance}


\usepackage[capitalize]{cleveref}



\newcommand{\nt}[1]{\textit{#1}}

\lstdefinestyle{topper}{
  float=tp,
  floatplacement=tbp,
  abovecaptionskip=-5pt
}

\def\name{\textsc{P4BID}\xspace}

\setcopyright{rightsretained}
\acmPrice{}
\acmDOI{10.1145/3519939.3523717}
\acmYear{2022}
\copyrightyear{2022}
\acmSubmissionID{pldi22main-p469-p}
\acmISBN{978-1-4503-9265-5/22/06}
\acmConference[PLDI '22]{Proceedings of the 43rd ACM SIGPLAN International Conference on Programming Language Design and Implementation}{June 13--17, 2022}{San Diego, CA, USA}
\acmBooktitle{Proceedings of the 43rd ACM SIGPLAN International Conference on Programming Language Design and Implementation (PLDI '22), June 13--17, 2022, San Diego, CA, USA}

\begin{document}

\title[\name: Information Flow Control in P4]{\name: Information Flow Control in P4}
\iffull
\else
\titlenote{This is the conference version of the paper. We defer technical
details, secondary definitions, and proofs to the full version of the
paper~\citep{p4bid-long}.}
\fi
\author{Karuna Grewal}
\affiliation{%
  \institution{Cornell University}
  \country{USA}
}

\author{Loris D'Antoni}
\affiliation{%
  \institution{University of Wisconsin}
  \country{USA}
}

\author{Justin Hsu}
\affiliation{%
  \institution{Cornell University}
  \country{USA}
}

\begin{abstract}
Modern programmable network switches can implement custom applications using efficient packet processing hardware, and the programming language P4 provides high-level constructs to program such switches.
The increase in speed and programmability has inspired research in \emph{dataplane programming}, where many complex functionalities, e.g., key-value stores and load balancers, can be implemented entirely in network switches.
However, dataplane programs may suffer from novel security errors that are not traditionally found in network switches.

To address this issue, we present a new information-flow control type system for P4.
We formalize our type system in a recently-proposed core version of P4, and we prove a soundness theorem: well-typed programs satisfy non-interference.
We also implement our type system in a tool, \name, which extends the type checker in the \textsc{p4c} compiler, the reference compiler for the latest version of P4.
We present several case studies showing that natural security, integrity, and isolation properties in networks can be captured by non-interference, and our type system can detect violations of these properties while certifying correct programs.
\end{abstract}

\begin{CCSXML}
<ccs2012>
   <concept>
       <concept_id>10002978.10003006.10011608</concept_id>
       <concept_desc>Security and privacy~Information flow control</concept_desc>
       <concept_significance>500</concept_significance>
       </concept>
   <concept>
       <concept_id>10002978.10002986.10002990</concept_id>
       <concept_desc>Security and privacy~Logic and verification</concept_desc>
       <concept_significance>300</concept_significance>
       </concept>
   <concept>
       <concept_id>10002978.10003014</concept_id>
       <concept_desc>Security and privacy~Network security</concept_desc>
       <concept_significance>300</concept_significance>
       </concept>
 </ccs2012>
\end{CCSXML}

\ccsdesc[500]{Security and privacy~Information flow control}
\ccsdesc[300]{Security and privacy~Logic and verification}
\ccsdesc[300]{Security and privacy~Network security}

\keywords{Information-flow control, programmable networks}

\maketitle

\section{Introduction}
\label{sec:intro}

The last two decades have seen an ongoing shift in how networks are programmed.
The task of programming a network once consisted of manually setting configurations in specialized switch hardware that provided limited customization; low-level programming was the only way to achieve performance.
Today, switches are highly programmable and provide rich functionalities for processing network packets.
This increased programmability is enabling complex network functionalities, which traditionally run on slower dedicated devices, to run directly on switches and other networking hardware~\cite{netcache,d2r}.
Furthermore, new programming models and languages make it easier for network operators to define complex functionalities~\citep{p4}.

While the advent of programmable network switches has inspired a large number of practitioners and researchers to write complex functionalities that can run on switches, it has also brought a new level of complexity in a world where bugs can be costly.
As is well known, network configuration errors have led to widespread and costly outages~(e.g., \citep{fb-outage,gcp-outage}).
The problem of preventing these, and other, types of bugs has received a lot of attention in the programming languages and verification communities.
For example, researchers have developed formal tools for verifying that switch configurations guarantee desirable network properties, such as node reachability, the absence of black holes, and resilience to link failures~(e.g., \citep{netkat,minesweeper,zeppelin}).
While these tools are extremely useful for network operators, applications running on programmable switches may exhibit errors that are not traditionally associated with networks.
In particular, there has been little work on verifying security properties for dataplane programs.

\paragraph*{Our work.}
We develop a new information-flow control (IFC) type system for the network programming language P4~\citep{p4}, a leading language for programming network switches.
P4 is an attractive target: it is actively developed by researchers from academia and industry, and can compile to a variety of networking hardware.
\emph{Information flow control} (IFC) is a well-studied, language-based approach to verifying security properties where variables in the program are tagged with security labels, and the type system ensures that no information can flow from high-security variables (secret) to low-security ones (public).
IFC is
(i) \textit{flexible}: by changing the label usage one can model security properties, like confidentiality and integrity;
(ii) \textit{general}: it can accommodate complex programming constructs; and
(iii) \textit{lightweight}: the analysis is simple, type-based, and requires minimal annotations from the programmer.
Owing to these strengths, IFC has found wide adoption and has been deployed in real languages~\citep{jif,flowcaml}.

Designing an IFC type system for P4 involves both technical and conceptual challenges.
On the technical side, while P4 resembles a standard imperative language, it has a number of features to target the restricted computational model of networking switches.
For instance, much of the computation in P4 programs happens via \emph{tables}, which match on data in packet headers and select which actions to run.
While a P4 program implements the actions, the table itself is not known until it is installed at runtime by the network controller.
A second technical challenge is the size and complexity of the language.
Like many languages in real-world use, P4 does not have a formal specification.
To firm up the foundations of P4, \citet{petr4} developed a formal version of P4, called Core P4, as part of the broader \Petr\ project.
The formal operational model of Core P4 makes it possible to develop type systems that provably guarantee program properties.
However, Core P4 is still quite large---P4 is a language intended for real-world use, with a wide variety of declarations, statements, and expressions, and Core P4 models almost all the features of P4.
Our work develops an IFC system that can handle the principal features of Core P4.

On the conceptual side, IFC for dataplane programming has been little-studied and it is not know what useful properties network properties an IFC system can enforce.
As part of our work, we present case studies showing that standard properties guaranteed by IFC, like \emph{confidentiality} and \emph{integrity}, are useful security properties for networking applications.
We also show how natural network \emph{isolation} properties can also be guaranteed with an IFC system, by adjusting the lattice of security labels.

\paragraph*{Outline.}
After overviewing our approach in \Cref{sec:overview} and providing the
necessary background on P4 and Core P4 in \Cref{sec:corep4}, we present our
central contributions:
\begin{enumerate}
  \item An \textit{Information Flow Control (IFC) type system for Core
    P4}~\citep{petr4}, a core calculus modeling the P4 language, together with
    a soundness theorem: well-typed programs satisfy \emph{non-interference}
    (\Cref{sec:ifc}).
  \item \name: a \textit{type-checker} implemented on top of \textsf{p4c}, the
    reference compiler for P4. We evaluate our system through four case studies,
    demonstrating how properties enforced by IFC, like confidentiality and
    integrity, can be useful in a networking context. We implement our case
    studies in P4 and show that \name can automatically detect when these
    properties are violated, while correctly type-checking versions of these
    programs where the problems are removed (\Cref{sec:eval}).
\end{enumerate}
We conclude by surveying related work (\Cref{sec:rw}) and outlining possible
future directions (\Cref{sec:conc}).

\section{Overview}
\label{sec:overview}

\paragraph*{A quick introduction to P4.}
P4 is an actively-developed language for programming the network data plane.
Computation is divided into three phases: \emph{parser}, \emph{pipeline}, and
\emph{deparser}. The packet processing starts at the parser, where the input
packet is extracted into a typed representation given by headers using a finite
state machine. The pipeline phase executes the primary logic of the switch by
transforming the parsed representation of the input packet. The deparser
serializes the parsed typed representation of the input packet into the output
packet. Our work focuses on P4 \emph{control blocks}, which implement the
pipeline phase.
To get a feel for the language, we consider a P4 program for a basic
task: converting virtual addresses to physical addresses when packets enter a
local network. \Cref{lst:topology} begins by declaring the types of the
\emph{headers} which are carried by packets; P4 programs manipulate the state of
packets by modifying the headers. In our case, there are three headers:
\textsf{ipv4} and \textsf{ethernet} carry the routing information in the original packet, while
\textsf{local_hdr} carries information specific to the local network.

\begin{lstlisting}[caption={Translating virtual to physical addresses.},label={lst:topology}, numbers=left, breaklines=true, escapechar=|, style=topper]
header local_hdr_t {
    bit<32> phys_dstAddr;
    bit<8> phys_ttl;
    bit<48> next_hop_MAC_addr;
}

header ipv4_t {
    bit<8> ttl;
    bit<8> protocol;
    bit<32> srcAddr;
    bit<32> dstAddr;
}

header eth_t {
  bit<48> srcAddr;
  bit<48> dstAddr;
}

struct headers {
  ipv4_t ipv4;
  eth_t eth;
  local_hdr_t local_hdr;
}

control Obfuscate_Ingress(inout headers hdr,
        inout standard_metadata_t std_metadata) {
    table virtual2phys_topology {  |\label{ln:topo-v2p-tbl}|
        key = { hdr.ipv4.dstAddr: exact; }
        actions = { update_to_phys; }
    }
    action update_to_phys(bit<32> phys_dstAddr,
                          bit<8> phys_ttl) { |\label{ln:topo-update-to-phys}|
        hdr.local_hdr.phys_dstAddr = phys_dstAddr;
        hdr.ipv4.ttl = phys_ttl; |\label{ln:topo:problem}|
    }
    table ipv4_lpm_forward {
        key = { hdr.ipv4.dstAddr: lpm; }
        actions = { ipv4_forward; drop; }
    }
    action ipv4_forward(bit<48> dstAddr, bit<9> port) { |\label{ln:topo-ipv4-forward}|
        hdr.eth.dstAddr = dstAddr;
        standard_metadata.egress_spec = port;
    }
    action drop() { mark_to_drop(standard_metadata); }
    apply {
        virtual2phys_topology.apply();
        ipv4_lpm_forward.apply();
    }
}
\end{lstlisting}%

\Cref{lst:topology} shows the code for the control block, which implements the
core part of the logic. (The full P4 program also describes other stages of the
packet-processing pipeline like parsing and deparsing, which we do not consider
in our work.) The switch behavior is organized into \emph{tables} and
\emph{actions}. Tables match data in headers (the \emph{keys}) and apply
actions. For instance, the table \mbox{\textsf{ipv4_lpm_forward}} inspects the
value of the header \textsf{hdr.ipv4.dstAddr} and then decides whether to run
action \textsf{ipv4_forward} or \textsf{drop} the packet. The concrete mapping
is not specified by the P4 program; instead, the switch controller installs
these mappings at runtime. Actions can inspect and modify packet headers.
Actions can also be parameterized by arguments, which are supplied by the table
when the action is applied. For example, the action \mbox{\textsf{ipv4_forward}}
accepts a destination address and port as arguments, and then proceeds to update
headers. Finally, the apply block specifies the overall behavior of the control
block: here, the switch applies table \textsf{virt2phys} to translate virtual
addresses to physical addresses, and then \mbox{\textsf{ipv4_lpm_forward}} to
forward the packet.

\paragraph*{A potential security vulnerability.}
\Cref{lst:topology} is designed to process a packet as it enters a local
network. The incoming packet refers to a \emph{virtual address}, which must be
translated to a physical address. Furthermore, the switch adjusts other packet
fields, like the maximum number of hops (time-to-live, \textsf{ttl}), to reflect
the topology of the local network. To preserve privacy, details of the local
network should not leak into fields that are visible when the packet leaves the
network. To accomplish this goal, the program uses a separate header of
type \textsf{local_hdr_t} to store local information (\Cref{ln:local:hdr}). As
the packet is routed in the local network, the switches do not touch the public
\textsf{ipv4} and \textsf{ethernet} headers; instead, they parse
\textsf{local_hdr} and update it with the next hop route information. When the
packet exits the local network, the header \textsf{local_hdr} is removed.

While the intended behavior is simple to describe, the program in
\Cref{lst:topology} has an error: \Cref{ln:topo:problem} incorrectly stores the
local ttl in the \textsf{ipv4} header, rather than the \textsf{local_hdr}
header. Even when the local header is removed, the \textsf{ipv4} header will
carry private information about the local network. This kind of error
unintentionally leaks local information into public headers, but it can be easy
to overlook.

\begin{lstlisting}[caption={Security-Annotated Version of \Cref{lst:topology}},label={lst:fixed-topology}, numbers=left, breaklines=true, escapechar=|, style=topper]
header local_hdr_t {  |\label{ln:local:hdr}|
    <bit<32>, high> phys_dstAddr;
    <bit<8>, high> phys_ttl;
    // ...
}

header ipv4_t {
    <bit<8>, low> ttl;
    // ...
}

struct headers {
  ipv4_t ipv4;
  local_hdr_t local_hdr;
  // ...
}

control Obfuscate_Ingress(inout headers hdr,
        inout standard_metadata_t std_metadata) {
    action update_to_phys(<bit<32>, high> phys_dstAddr,
                          <bit<8>, high> phys_ttl) {
        hdr.local_hdr.phys_dstAddr = phys_dstAddr;
        // !BUG!: low <- high
        hdr.ipv4.ttl = phys_ttl; |\label{ln:topo:annot:bug}|
        // *FIX*: high <- high
        hdr.local_hdr.phys_ttl = phys_ttl; |\label{ln:topo:annot:fix}|
    }
    // ...
}
\end{lstlisting}%

\paragraph*{Security types to the rescue.}
We design an information-flow control type system for P4 to catch such bugs. Like standard IFC type
systems, our system extends each P4 type with a \emph{security label}:
\textsf{high} if the data is secret, and \textsf{low} if the data is public.
\Cref{lst:fixed-topology} shows our example program annotated with security
types. All data specific to the local network (e.g., \textsf{phys_dstAddr},
\textsf{phys_ttl}) are marked as \emph{high} security. The publicly visible
headers (e.g., \textsf{ipv4}, \textsf{eth}) are marked as \emph{low} security.
Our type system guarantees that information from high-security data does not
influence low-security data. For instance, the information leak we saw before
can be flagged in our type system: \Cref{ln:topo:annot:bug} incorrectly assigns
a high-security data \textsf{phys_ttl} to a low-security field
\textsf{ipv4.ttl}. The problem is corrected by assigning \textsf{phys_ttl}
to \mbox{\textsf{local_hdr.ttl}} (\Cref{ln:topo:annot:fix}), which is a
high-security field.

While this kind of analysis is fairly straightforward, the design of our type
system must handle unusual features from P4's programming model (e.g., actions
and tables); we discuss these aspects in \Cref{sec:corep4} and \Cref{sec:ifc}.
Furthermore, while \Cref{lst:topology} demonstrates a basic information leak, we
will see more interesting applications of our type system to networking
applications in \Cref{sec:eval}.


\section{Syntax and Semantics of Core P4}
\label{sec:corep4}

This section briefly reviews the core P4 calculus presented in the recent work
on \Petr~\cite{petr4}, the representation of P4 programs in terms of the core
calculus syntax, and the operational semantics and typing judgements for the
core calculus.

\begin{figure*}
\centering
\begin{subfigure}[t]{.48\linewidth}
\centering
$\begin{array}{rclr}
exp & ::= & b & \text{Boolean}\\
                 &\OR & n_w  &\text{integers or bits of width w}\\
                 &\OR & x &\text{variable} \\
                 &\OR & exp_1[exp_2] &\text{array indexing} \\
                 &\OR & exp_1 \oplus exp_2 &\text{binary operation}\\
                 &\OR & \{\overline{f_i = exp_i}\} &\text{record} \\
                 &\OR & exp.f_i &\text{field projection}\\
                 &\OR & exp_1(\overline{exp_2}) &\text{function call}
\end{array}$
\caption{Expressions}
\end{subfigure}
\begin{subfigure}[t]{.48\linewidth}
\centering
$\begin{array}{rclr}
{stmt} & ::= & exp_1(\overline{exp_2}) &\text{function call}\\
               & \OR & exp_1 := exp_2 &\text{assignment} \\
              &  \OR & \terminal{if}~ (exp_1)~ stmt_1~ \terminal{else}~ stmt_2 &\text{conditional}\\
              &  \OR & \{\overline{stmt}\} &\text{sequencing}\\
               & \OR & \terminal{exit} &\text{exit}\\
                &\OR & \terminal{return}~ exp &\text{return}\\
               & \OR & var\_decl &\text{variable declaration}
\end{array}$
\caption{Statements}
\end{subfigure}
\\\vspace{2mm}
\begin{subfigure}[t]{.52\linewidth}
\centering
$\begin{array}{rclr}
{prg} & ::=&  \overline{typ\_decl}~ ctrl\_body  \\
{ctrl\_body} &::= & \overline{decl} ~stmt \\
{decl} & ::=&  var\_decl \OR obj\_decl \OR typ\_decl \\
{var\_decl} & ::=&  \tau~x:= exp
        \OR \tau~x
        \\
{typ\_decl} & ::=& \terminal{match\_kind}~\{\overline{f}\} \OR \terminal{typedef}~\tau~X\\
{obj\_decl} & ::=& \terminal{table}~x~\{\overline{key}~\overline{act}\} \\
                &\OR & \terminal{function}~ \tau_{ret}~x~(\overline{d~y:\tau}) \{stmt\}
\end{array}$
\caption{Declarations}
\end{subfigure}
\begin{subfigure}[t]{.47\linewidth}
\centering
$\begin{array}{rclr}
{d}  & ::= &  in \OR inout\\
{lval} & ::=&  x \\
        &\OR & lval.f \\
        &\OR & lval[n]\\
{key} & ::= & exp: x \\
{act} & ::=&  x(\overline{exp}, \overline{x: \tau})
\end{array}$
\caption{Other constructs}
\end{subfigure}
\caption{Core P4 Expressions (fragment)}
\label{P4IFC-syntax}
\end{figure*}

\subsection{Core P4 Syntax} \label{sec:petr4-background}
\Petr\ formalizes the semantics of various P4 primitives, like control blocks, match-action tables, and statements in a calculus called Core P4.
For our information-flow control type system, we focus on the fragment of Core P4 in \Cref{P4IFC-syntax}.
Expressions and statements are largely standard.

Core P4 programs (\textit{prg}) are represented as a sequence of variable, object, or type declarations followed by a control block.
The central construct in a P4 program is the control block, which describes how the switch processes packets in terms of table and action calls inside its \terminal{apply} block.
A control block body (\textit{ctrl\_body}) is a sequence of declarations and statements.
The \textit{stmt} in the control block corresponds to the \terminal{apply} block of a P4 program.

Variable and type declarations (\textit{var\_decl}, \textit{typ\_decl}) are largely standard; the \terminal{match\_kind} enum declares different ways tables can match on packet fields.
Object declarations (\textit{obj\_decl}) declare P4 objects: tables and actions.
These object declarations can have nested ordinary statements (\textit{stmt}) that allow usual imperative primitives like mutation and control flow statements.
To get a feel for these features, let's consider how they correspond to parts of the {\terminal{Obfuscate\_Ingress}} control block in \Cref{lst:topology}.
The example control block consists of three actions declarations
(\terminal{update\_to\_phys}, \terminal{ipv4\_forward}, and \terminal{drop}),
and two table declarations (\terminal{virtual2phys\_topology} and \terminal{ipv4\_lpm\_forward}).

\paragraph*{Tables}
A table declaration, $\terminal{table}~x~\{\overline{key}~\overline{act}\}$, is composed of a list of expressions (usually packet header fields) that specify the lookup key, $\overline{key}$,  and actions, $\overline{act}$, which the lookup table might execute.
A table application uses the key to lookup the entries in the table (installed by the control plane) and invokes the action from the matched entry.
For example, table \terminal{virtual2phys\_topology} in \Cref{ln:topo-v2p-tbl} contains the key \terminal{hdr.ipv4.dstAddr: exact} (where \terminal{exact} specifies the match pattern, in this case, exact match on the key), and the action \terminal{update\_to\_phys} action.
Applying this table, represented in Core P4 as \terminal{virtual2phys\_topology()}, matches the table entries installed by the control plane against the corresponding keys in the current packet and returns an appropriate action to run, with all its arguments.
Any optional arguments in the returned action will be supplied by the control plane. The match pattern determines the criterion for choosing a table entry based on the key. For instance, \terminal{lpm} specifies that a key is matched to the entry corresponding to its longest prefix; \terminal{exact} specifies that a key should be exactly matched to some table entry otherwise it is a match failure.

\paragraph*{Actions}
An action declaration is a special case of a function declaration, $\terminal{function}~ \tau_{ret}~x~(\overline{d~y:\tau} )\{stmt\}$, with no return type.
For example, the action \terminal{update\_to\_phys} on \Cref{ln:topo-update-to-phys} in \Cref{lst:topology}  has parameters \terminal{phys\_dstAddr} and \terminal{phys\_ttl}, of types \terminal{bit$\langle 32 \rangle$} and \terminal{bit$\langle 8\rangle$}.
Parameters can have a \emph{directionality}, $d$: an \emph{in} expression can only be read from, while an \emph{inout} expression can be both read and written to.
Omitted directions in parameters default to the \emph{in} direction; these directionless parameters are optional arguments that can be passed by the control plane.
Invoking the action, which can be done directly as a statement or indirectly from a table, runs the statement $stmt$ in the action body.
Actions, like all Core P4 functions, do not support recursion.

\paragraph*{Differences compared to Core P4.}
The language in \Cref{P4IFC-syntax} is a significant fragment of Core P4, but it does not handle some of its more specialized features (e.g., generics, constant declarations, slice operation, and native functions).
We consider this fragment for simplicity, but we do not foresee difficulties in extending our IFC analysis to full Core P4. We omitted some lesser-used features, like generics, because the core language is already quite large and we believe it is unlikely that omitted features lead to information-flow violations.
We focus on programs with a single control block because most P4 programs encode their main functionality in a single ingress control block.
Since our system already supports user-defined functions and closures, with all of their technical intricacies, we do not see any obstacle to handling multiple control blocks besides increasing the complexity of our type system.

\subsection{Core P4 Semantics}\label{cp4-semantics}
To understand the semantics of Core P4 programs, we will review the evaluation judgement forms  for expressions, statements, and declarations from \Petr~\cite{petr4}.
The main judgements are as follows:
\begin{align*}
   \langle \mathcal{C}, \Delta, \mu, \epsilon, \nt{exp} \rangle &\Downarrow \langle \mu', val \rangle \\
   \langle \mathcal{C}, \Delta, \mu, \epsilon, stmt \rangle &\Downarrow \langle \mu', \epsilon', sig \rangle \\
   \langle \mathcal{C}, \Delta, \mu, \epsilon, decl \rangle &\Downarrow \langle \Delta', \mu', \epsilon', sig \rangle
\end{align*}

The contexts used in these judgements are defined in \Cref{fig:contexts}.
Here, $\Delta$ is the partial map from type names to types; $\epsilon$ is the partial map between variables and their memory locations; $\mu$ is the memory store mapping variable locations to their values.
 $\mathcal{C}$ models the table lookup map provided by the control plane: given a table at location $l$ with $key=val$, and a list of actions described by a list of $PartialActionRef$ (actions with optional arguments missing),  $\mathcal{C}$ returns an action call expression with all the optional arguments of the action supplied ($ActionRef$).
\begin{figure}
\raggedright
  \begin{subfigure}[t]{.4\linewidth}
  \[
	\begin{array}{rlrl}
    \mathit{Var} &: \text{variables} & \mathit{Val} &: \text{values} \\
    \mathit{TypVar} &: \text{type variables} &\mathit{Typ} &: \text{types in Core P4}\\
    \mathit{Loc} &: \text{locations}
	 \end{array}
	 \]
	 \end{subfigure}
	 \newline
	 \begin{subfigure}[t]{.4\linewidth}
	 \[
	 \begin{array}{rlrl}
     \Gamma &: \mathit{Var} \to \mathit{Typ} &\hspace{10mm}
     \Delta &: \mathit{TypVar} \to \mathit{Typ} \\
	    \epsilon &: \mathit{Var} \to \mathit{Loc} &
	    \mu &: \mathit{Loc} \to \mathit{Val}\\
	    \mathcal{C} & \multicolumn{3}{l}{: \mathit{Loc} \times \mathit{Val} \times \overline{PartialActionRef} \to ActionRef}
	\end{array}
	\]
	\end{subfigure}
\caption{Typing and Evaluation Contexts}
\label{fig:contexts}
\end{figure}
The judgements use $val$ to denote a value; and $sig$ to denote a \emph{signal}, which indicates whether the program's control flow proceeds normally (\terminal{cont}), returns a value ($\terminal{return}\ val$), or errors (\terminal{exit}).

Since function calls are expressions, and a function's body can update the memory store, the evaluation judgement for expressions can modify the memory store.
Similarly, the statement evaluation judgement captures the updated memory store from evaluating a statement with side-effects and the environment extension on declaring a new variable.
A declaration evaluation can reduce to a new memory store and environment when evaluating a variable or object declaration.
Additionally, a declaration statement can update the type definition context by introducing a new type alias.
Both declarations and statements evaluate to a signal $sig$, representing the result of the control flow in their sequencing blocks.

\subsection{Core P4 Type System}\label{cp4-type-system}

\begin{figure}
$\begin{array}{rcl}
  \rho \quad &::=& bool
            \OR int
            \OR bit \langle n\rangle
            \OR unit \\
             &\OR& \{\overline{f: \rho }\}
             \OR header\{\overline{f: \rho }\}
             \OR \rho[n] \\
             &\OR& match\_kind\{\overline{f}\} \\
  \kappa \quad &::=& \rho 
            \OR table
            \OR \overline{d~\kappa} \rightarrow \kappa
\end{array}$
\caption{Core P4 types \label{fig:coreP4-types}}
\end{figure}

\Cref{fig:coreP4-types} recalls the types from Core P4.  Core P4 divides the P4
types into two categories: base types, $\rho$, and general types, $\kappa$. The
fields of headers and records must be base types.
The simplified Core P4 typing judgements for the fragment of Core P4 presented in \Cref{P4IFC-syntax} are as follows:
\[
\Gamma, \Delta \vdash \nt{exp}: \kappa~goes~d \quad
\Gamma, \Delta \vdash stmt \dashv \Gamma' \quad
\Gamma, \Delta \vdash decl \dashv \Gamma', \Delta'
\]
The expression typing judgement associates a directionality with expressions to indicate if the expression is read only (\terminal{in}) or is both readable and writable (\terminal{inout}).
Intuitively, the contexts on the left of $\vdash$ in the statement and
declaration typing rule describe the contexts before their execution, while the
contexts on the right  of $\dashv$ define the context after the execution of the
statement and declaration.\footnote{%
  The original Core P4 typing judgements also have a constant store, to model compile-time constants.
  We omit this store since our fragment does not include compile-time constants.}

\section{IFC Type System for P4}
\label{sec:ifc}
This section presents the security-type extension
for the Core P4 fragment presented in \Cref{P4IFC-syntax}. Before presenting the security-types for our fragment of Core P4, we describe the main idea behind security type systems.

\subsection{Background on Security Type Systems}
A security type system lifts ordinary types to security types by annotating them
with security labels~\citep{ifc-survey}. These security labels are drawn from a
security lattice, $(\mathbb{L}, \sqsubseteq)$, associated with the type system.
We illustrate the key ideas using a simple two point lattice $\{ \textsf{low},
\textsf{high} \}$. Here, \textsf{low} identifies publicly visible values and
\textsf{high} represents secure values, and $\textsf{low} \sqsubseteq
\textsf{high}$.

Consider a well-typed closed expression $\nt{exp}$ with type $\tau$, represented by an ordinary type system as $\vdash \nt{exp}: \tau$.
A security-type system will additionally assign a security label, $\chi \in \mathbb{L}$ to $\nt{exp}$.
This can be represented by the typing judgement $\vdash \nt{exp}: \type{\tau}{\chi}$, where the pair $\type{\tau}{\chi}$ is the \emph{security type}.
For instance, if $\nt{exp}$ evaluates to $val$ and $\chi=\textsf{high}$, then $val$ is considered to be a secure value.

For statements (or expressions) that can mutate variables, a security type system assigns a security label $pc \in \mathbb{L}$ to the typing judgements.
This label denotes the security context used to track the security level for variables that can be written at a given program point (\emph{program counter}).
Consider a conditional statement that branches on a \textsf{high} security guard expression:
\[
\textsf{if~} (h == 1)~\{~h:=set\_high();\}~\textsf{else}~\{~h:=1;\},
\]
where the security level of $h$ is \textsf{high} and the $set\_high$ function call in the true branch writes to only \textsf{high} security variables.
Since the guard is at \textsf{high} security level, the $pc$ for both the conditional branches becomes \textsf{high}.
Here, both branches need to be well-typed under the \textsf{high} security label, which implies that no variable at security level lower than \textsf{high} can be mutated in  either branch.
For instance, we must have $\Gamma \vdash_{\textsf{high}}h:=set\_high()$ and $\Gamma \vdash_{\textsf{high}}(h:=1)$.
Without this restriction, there can be an implicit flow of information from the conditional guard into the statement blocks of the conditional, for instance, if the function wrote to a \textsf{low} variable.

The utility of a security-type system lies in the \emph{non-interference} guarantee offered by a well-typed program.
To define non-interference, suppose that all \textsf{low} security variables are observable while any \textsf{high} security variable is unobservable.
Informally, non-interference can be understood as the property of a program where no unobservable input variable influences the value of any observable output.

\subsection{P4 IFC Type System}
This section describes our information-flow control type system for the language in \Cref{P4IFC-syntax}.
We assume the lattice $(\mathbb{L}, \sqsubseteq)$ of security labels has $\top$ and $\bot$ elements, representing the top and bottom elements of the lattice.
In our example lattice, $\bot=\textsf{low}$ and $\top=\textsf{high}$.

\Cref{security-types} summarizes the security types of our information-flow control system.
Core P4 types are lifted to security types using a security label, $\chi$, from the lattice $\mathbb{L}$.
We also use $pc$ to denote a security label when it is used as a security context.
As in Core P4, we distinguish between base security types $\rho$ and general security types $\kappa$.
For non-base types, the security label is tracked within the type itself, for instance, the fields of headers and records are assigned security labels instead of the header or record.
But to keep the shape of types uniform, we assign the  $\bot$ security label for such types.
We use the metavariable $\tau$ to denote a security type without its outer-most security label; thus, security types are of the form $\type{\tau}{\chi}$.

\begin{figure}
$\begin{array}{rcl}
  \rho \quad &::=& \type{bool}{\chi}
                \OR \type{int}{\chi}
                \OR \type{bit \langle n\rangle}{\chi}
                \OR \type{unit}{\bot} \\
             &\OR& \type{\{\overline{f: \rho }\}}{\bot}
                \OR \type{header\{\overline{f: \rho }\}}{\bot}
                \OR \type{\rho[n]}{\bot} \\
             &\OR& \type{match\_kind\{\overline{f}\}}{\bot} \\
  \kappa \quad &::=& \rho
                \OR \type{table(pc_{tbl})}{\bot}
                \OR \type{\overline{d~ \rho} \xrightarrow{pc} \rho_{ret}}{\bot}\\
  \tau \quad &::=& bool
                \OR int
                \OR bit\langle n \rangle
                \OR unit \\
             &\OR& \{\overline{f: \rho }\}
                \OR header\{\overline{f: \rho }\} \\
             &\OR& \rho[n]
            \OR match\_kind\{\overline{f}\} \\
             &\OR& table(pc_{tbl})
            \OR \overline{d~\rho} \xrightarrow{pc} \rho_{ret}
\end{array}$
\caption{IFC Types}
\label{security-types}
\end{figure}

Before describing the judgement forms of the security type system, we introduce the contexts used in the typing judgements.  The typing judgements use a typing context, $\Gamma$, a type definition context, $\Delta$, and a security context, $pc$, which are same as Core P4's contexts \Cref{fig:contexts}, with the difference that now $\mathit{Typ}$ is the set of security types of the form $\type{\tau}{\chi}$.

For a given security label $pc$, variables in a typing context $\Gamma$ at security level  $\chi \sqsubseteq pc$
will be referred as \textit{below-pc} variables, and variables at security level $\chi \nsqsubseteq pc$
will be referred as \textit{not below-pc} (or sometimes \emph{above-pc}) variables.

Our security type system has three forms of judgements for expressions,
statements, and declarations, respectively:
\begin{align*}
\textbf{Expressions}:~ &\ordinaryTyping[pc]{\Gamma}{\Delta}{\nt{exp}}{\type{\tau}{\chi}}~goes~d\\
\textbf{Statements}:~ &\stmtTyping{pc}{\Gamma}{\Delta}{stmt}{\Gamma'}\\
\textbf{Declarations}:~ &\declTyping{pc}{\Gamma}{\Delta}{decl}{\Gamma'}{\Delta'}
\end{align*}
The direction annotation $goes~d$ in the typing judgement for expressions is dropped when the direction is not important.
The complete security typing rules can be found in \Cref{P4IFC-typing-exp}
(expressions), \Cref{P4IFC-typing-stmt} (statements), and
\Cref{P4IFC-typing-decl} (declarations).
Expression typing assumes a typing oracle $\mathcal{T}$, giving the meaning of the binary operations.
In statement and declaration typing, the judgement $\Delta \vdash \tau \rightsquigarrow \tau'$ converts $\tau$ to a base type by unfolding type definitions~\citep{petr4}.
Below, we discuss the most interesting---and technically intricate---typing rules: those for functions, tables, and subtyping.

\paragraph*{Typing rules for functions}
Our system has rules for function declarations and function calls.
These are also the key rules for typing actions, which are functions with no return type.
The \textsc{T-FnDecl} rule in \Cref{P4IFC-typing-decl} typechecks the body of
the function to eliminate any leaks in the function body.
The $pc_{fn}$ security label on the function's arrow type records the lower
bound on the security labels of the variables that the function mutates.
For instance, in the following function:
\[
  \textsf{function}~\terminal{insecure}()\{ l:= 1; h:=2;\} ,
\]
where the security labels of $l$ and $h$ variable are \textsf{low} and
\textsf{high} respectively, $pc_{fn}$ will be \textsf{low}.
The \textsc{T-FnCall} rule in \Cref{P4IFC-typing-exp} enforces that a function
will not be invoked in a context that is higher than the function's $pc_{fn}$
because doing so, for instance in the example program, will implicitly flow
information from a \textsf{high} guard expression into a \textsf{low} variable.

\begin{figure*}
\begin{mathpar}
\inferrule*[right=T-Subtype-PC]
{
\ordinaryTyping[pc']{\Gamma}{\Delta}{exp}{\type{\tau}{\chi}} \\
pc \sqsubseteq pc'
}
{
\ordinaryTyping[pc]{\Gamma}{\Delta}{exp}{\type{\tau}{\chi}}
}

\inferrule*[right=T-SubType-In]
{
\ordinaryTyping[pc]{\Gamma}{\Delta}{exp}{\type{\tau}{\chi}~goes~ in} \\
\chi \sqsubseteq \chi'
}
{
\ordinaryTyping[pc]{\Gamma}{\Delta}{exp}{\type{\tau}{\chi'}~goes~ in}
}

\inferrule*[right=T-Bool]
{
}
{
\ordinaryTyping[pc]{\Gamma}{\Delta}{b}{\type{bool}{\bot}~goes~ in}
}

\inferrule*[right=T-Int]
{
}
{
\ordinaryTyping[pc]{\Gamma}{\Delta}{n_{\infty}}{\type{int}{\bot}~goes~ in}
}

\inferrule*[right=T-Var]
{   x \in \dom{\Gamma} \qquad
\Gamma(x) = \type{\tau}{\chi}
}
{
\ordinaryTyping[pc]{\Gamma}{\Delta}{x}{\type{\tau}{\chi} \text{~goes inout}}
}

\inferrule*[right=T-BinOP]
{
\ordinaryTyping[pc]{\Gamma}{\Delta} {exp_{1}}{\type{\rho_{1}}{\chi_{1}}} \\
\ordinaryTyping[pc]{\Gamma}{\Delta}{exp_{2}}{\type{\rho_{2}}{\chi_{2}}} \\\\
\mathcal{T}(\Delta; \oplus; \rho_{1}; \rho_{2}) = \rho_{3} \\
\chi_{1} \sqsubseteq \chi' \\
\chi_{2} \sqsubseteq \chi'
}
{
\ordinaryTyping[pc]{\Gamma}{\Delta}{exp_{1} \oplus exp_{2}}{\type{\rho_3}{\chi'} ~goes~ in}
}

\inferrule*[right=T-Rec]
{
\listOrdinaryTyping[pc]{\Gamma}{\Delta}{\{\overline{exp: \type{\tau_i}{\chi_i}}\}}
}
{
\ordinaryTyping[pc]{\Gamma}{\Delta}{\{ \overline{f: exp} \}}{\type{\{ \overline{f: \langle \tau_i, \chi_i \rangle} \}}{\bot}~goes~ in}
}

\inferrule*[right=T-MemRec]
{
\ordinaryTyping[pc]{\Gamma}{\Delta}{exp}{\type{\{ \overline{f_i: \langle \tau_i, \chi_i \rangle} \}}{\bot}}~goes~d
}
{
\ordinaryTyping[pc]{\Gamma}{\Delta}{exp.f_{i}}{\type{\tau_{i}}{\chi_{i}}~ goes~ d}
}

\inferrule*[right=T-Index]
{
\ordinaryTyping[pc]{\Gamma}{\Delta}{exp_{1}}{\type{\type{\tau}{\chi_1}[n]}{\bot}~goes~ d} \\\\
\ordinaryTyping[pc]{\Gamma}{\Delta}{exp_{2}}{\type{bit \langle 32 \rangle}{\chi_2}} \\
\chi_2 \sqsubseteq \chi_1
}
{
\ordinaryTyping[pc]{\Gamma}{\Delta} {exp_{1}[exp_{2}]}{\type{\tau}{\chi_1}~ goes~ d }
}

\inferrule*[right=T-MemHdr]
{
\ordinaryTyping[pc]{\Gamma}{\Delta}{exp} {\type{header \{ \overline{f_i: \langle \tau_i, \chi_i \rangle} \}}{\bot}~ goes~ d}
}
{
\ordinaryTyping[pc]{\Gamma}{\Delta}{exp.f_{i}}{\type{\tau_{i}}{\chi_{i}}~goes~ d}
}

\inferrule*[right=T-Call]
{
\Gamma, \Delta \vdash_{pc} exp_{1}: \langle \overline{d~\type{\tau_i}{\chi_i}}
\xrightarrow{pc_{fn}} \langle \tau_{ret}, \chi_{ret}\rangle, \bot \rangle \\\\
\Gamma, \Delta \vdash_{pc} \overline{exp_{2}:\type{\tau_i}{\chi_i}~ goes~d} \\
pc \sqsubseteq pc_{fn}
}
{
\Gamma, \Delta \vdash_{pc} exp_{1} (\overline{exp_{2}}): \type{\tau_{ret}}{\chi_{ret}} ~\text{goes in}
}
\end{mathpar}
    \caption{IFC Typing Rules for Expressions}
    \label{P4IFC-typing-exp}
    \end{figure*}

    \begin{figure*}
    \begin{mathpar}
\inferrule*[right=T-Empty]
{
~
}
{
\stmtTyping{pc}{\Gamma}{\Delta}{\{ \}}{\Gamma}
}

\inferrule*[right=T-Exit]
{
~
}
{
\stmtTyping{\bot}{\Gamma}{\Delta}{\terminal{exit}}{\Gamma}
}

\inferrule*[right=T-Seq]
{
\Gamma, \Delta \vdash_{pc} stmt_{1} \dashv \Gamma_{1} \qquad
\Gamma_{1}, \Delta \vdash_{pc} \{ \overline{stmt_{2}} \} \dashv \Gamma_{2}
}
{
\Gamma, \Delta \vdash_{pc} \{ stmt_{1}; \overline{stmt_{2}} \} \dashv \Gamma_2
}

\inferrule*[right=T-Assign]
{
\Gamma,\Delta \vdash_{pc} exp_{1}:\type{\tau}{\chi_{1}}~goes~ inout \\\\
\Gamma,\Delta \vdash_{pc} exp_{2} :\type{\tau}{\chi_{2}}\\
\chi_{2} \sqsubseteq \chi_{1}\\
pc \sqsubseteq \chi_{1}
}
{
\Gamma, \Delta \vdash_{pc} exp_{1} := exp_{2} \dashv \Gamma
}

\inferrule*[right=T-Cond]
{
\stmtTyping{\chi_2}{\Gamma}{\Delta}{stmt_{1}}{\Gamma_{1}}\\
\stmtTyping{\chi_2}{\Gamma}{\Delta}{stmt_{2}}{\Gamma_{2}}\\\\
\ordinaryTyping[pc]{\Gamma}{\Delta}{exp}{\type{bool}{\chi_1}}\\
\chi_1 \sqsubseteq \chi_2 \\
pc \sqsubseteq \chi_2
}
{
\stmtTyping{pc}{\Gamma}{\Delta}{\terminal{if}~(exp)~~stmt_{1}~\terminal{else}~stmt_{2}}{\Gamma}
}

\inferrule*[right=T-Return]
{
\Gamma,\Delta \vdash_{pc}  exp: \langle \tau, \chi_{ret} \rangle \\
\Gamma(\terminal{return}) = \type{\tau_{ret}}{\chi_{ret}} \\
\Delta    \vdash \tau_{ret} \rightsquigarrow \tau
}
{
\Gamma, \Delta \vdash_{\bot} \terminal{return}~ exp \dashv \Gamma
}

\inferrule*[right=T-Decl]
{
\Gamma, \Delta \vdash_{pc} var\_decl \dashv \Gamma_1, \Delta
}
{
\Gamma, \Delta \vdash_{pc} var\_decl \dashv \Gamma_1
}

\inferrule*[right=T-FnCallStmt]
{
\Gamma, \Delta \vdash_{pc} exp_1(\overline{exp_2}): \type{\tau_{ret}}{\chi_{ret}}
}
{
\Gamma, \Delta \vdash_{pc} exp_1(\overline{exp_2}) \dashv \Gamma
}

\inferrule*[right=T-TblCall]
{
\Gamma, \Delta \vdash_{pc} exp: \langle table(pc_{tbl}), \bot \rangle \\
pc \sqsubseteq pc_{tbl}
}
{
\Gamma, \Delta \vdash_{pc} exp() \dashv \Gamma
}
\end{mathpar}
    \caption{IFC Typing Rules for Statements}
    \label{P4IFC-typing-stmt}
    \end{figure*}

    \begin{figure*}
    \begin{mathpar}
    \inferrule*[right=T-VarDecl]
    {~}
    {\Gamma, \Delta \vdash_{pc} \type{\tau}{\chi}~ x \dashv \Gamma [x: \langle \tau, \chi \rangle], \Delta}

    \inferrule*[right=T-VarInit]
{
\ordinaryTyping[pc]{\Gamma}{\Delta}{exp}{\type{\tau'}{\chi}} \\
\Delta \vdash \tau \rightsquigarrow \tau'
}
{
\Gamma; \Delta \vdash_{pc} \type{\tau}{\chi}~ x:= exp \dashv \Gamma [x: \langle \tau', \chi\rangle]; \Delta
}

      \inferrule*[right=T-TblDecl]
    {
    \newordinaryTyping[pc_{tbl}]{\Gamma}{\Delta}{\overline{exp_k:\type{\tau_k}{\chi_k}}}{} \\
    \newordinaryTyping[pc_{tbl}]{\Gamma}{\Delta}{\overline{x_k:\type{match\_kind}{\bot}}}{} \\
    {\chi_k} \sqsubseteq {pc_{tbl}}~\text{for all}~k\\\\
    \newordinaryTyping[pc_{tbl}]{\Gamma}{\Delta}{act_{a_j}: \type{\overline{d\type{\tau_{a_{ji}}}{\chi_{a_{ji}}}}~;\overline{\type{\tau_{c_{ji}}}{\chi_{c_{ji}}}} \xrightarrow{pc_{fn_j}} \type{unit}{\bot}}{\bot}}{}
    \\
    pc_a \sqsubseteq pc_{fn_j}~\text{for all}~j \\\\
    \newordinaryTyping[pc_{tbl}]{\Gamma}{\Delta}{\overline{exp_{a_{ji}}:\type{\tau_{a_{ji}}}{\chi_{a_{ji}}}~goes~d}}{} \\
    {\chi_k} \sqsubseteq {pc_{fn_j}}~\text{for all}~j,k \\
    pc_{tbl} \sqsubseteq pc_a
    }
    {
    \declTyping{pc}{\Gamma}{\Delta}{\text{table}~x~ \{\overline{exp_k: x_k}~ \overline{act_{a_j}(\overline{exp_{a_{ji}}})}\}}{\Gamma[x: \type{table(pc_{tbl})}{\bot}]}{\Delta}
    }

\inferrule*[right=T-FuncDecl]
{
\Gamma_1 = \Gamma[\overline{x_{i}: \type{\tau_{i}'}{\chi_{i}}}, \terminal{return}: \type{\tau_{ret}'}{\chi_{ret}}]\\
\declTyping{pc_{fn}}{\Gamma_{1}}{\Delta}{stmt}{\Gamma_{2}} \\\\
\Delta \vdash \tau_i \rightsquigarrow \tau_i'~\text{for each}~\tau_i \\
\Delta \vdash \tau_{ret} \rightsquigarrow \tau_{ret}' \\
\Gamma' = \Gamma[x: \type{\overline{d~\type{\tau_{i}'}{\chi_{i} }} \xrightarrow[]{pc_{fn}} \type{\tau_{ret}'}{\chi_{ret}}}{\bot}]
}
{
\declTyping{pc}{\Gamma}{\Delta}{\terminal{function}~\type{\tau_{ret}}{\chi_{ret}}~x~(\overline{d~ x_{i}: \type{\tau_{i}}{\chi_{i}}}) \{stmt\}}{\Gamma'}{\Delta}
}

    \end{mathpar}

    \caption{IFC Typing Rules for Declaration}
    \label{P4IFC-typing-decl}
    \end{figure*}

\paragraph*{Typing rules for tables}
Since a table matches on the key to select an action to invoke, the key of a
table resembles the guard of a conditional. Thus, the value of a key can
implicitly leak in the action's body if the invoked action writes to variables
at security label lower that that of the key expression.
Therefore, to declare a table of type $\type{table(pc_{tbl})}{\bot}$,
the rule {\textsc{T-TblDecl}} in \Cref{P4IFC-typing-decl} ensures that the security
label of the most secure key, $\chi_k$, is lower than the label of the least
secure assignment, $pc_a$, in any action. Here, $pc_{tbl}$ records the lower bound on the write effects  associated with any keys, actions, or arguments.

The \textsc{T-TblCall} rule in \Cref{P4IFC-typing-stmt} prevents any implicit flow into any of the actions that a table might invoke by allowing a  table
 to be applied only in a $pc$ context lower than the least secure write effect
 associated with the table application, $pc_{tbl}$. This prevents implicit leaks during the evaluation of keys, arguments, or the action's body.

\paragraph*{Subtyping rule}
The \textsc{T-SubType-In} rule in \Cref{P4IFC-typing-exp} allows only read-only
($in$) expressions to increase their security label. It is not safe to allow
$inout$ expressions to be subtyped. To see why, consider the following function:
\[
  \terminal{write_to_high}~(inout~ \textsf{h} : \langle bool, high \rangle)~\{\textsf{h}:=true;\}
\]
Suppose we have a \textsf{low} variable $\textsf{l} : \langle bool, low \rangle$. Since variables are $inout$ expressions (\textsc{T-Var} in \Cref{P4IFC-typing-exp}), if $inout$ expressions were allowed to increase their label, $\terminal{write_to_high}(\textsf{l})$ call would have been valid. In this case, the function would have written to a \textsf{low} variable when it should have operated with only a \textsf{high} variable.

\subsection{Non-Interference}
\label{subsec:noninterference}

To define non-interference, consider two program states,
$ \langle \mathcal{C},\Delta, \mu_a, \epsilon_a\rangle$ and $ \langle \mathcal{C}, \Delta, \mu_b, \epsilon_b\rangle$,
where the environments have equal domains. Suppose every below-pc variable $x$ has equal value under both the
memory stores, $\mu_a(\epsilon_a(x)) = \mu_b(\epsilon_b(x))$, but the value of any variables that are not below-pc can differ between the two stores. Non-interference is satisfied if evaluating an expression, statement, or declaration
in the two program states results in two final program states that agree on below-pc variables.



The following definition formally describes a pair of below-pc equivalent memory stores and environments.
The store typing context $\Xi$ maps locations in a store to security types.
\begin{definition}
Consider two pairs of memory stores and environments $\langle \mu_a, \epsilon_a \rangle$ and
$\langle \mu_b, \epsilon_b \rangle$. Then
\[
  \semanticBelowPCState{pc}{\Xi_a}{\Xi_b}{\Delta}{\mu_{a}}{\epsilon_{a}}{\mu_{b}}{\epsilon_{b}}{\Gamma}
\]
is satisfied when
\[
  \semanticStoreEnv{\Xi_a}{\Delta}{\mu_a}{\epsilon_a}{\Gamma}
  \qquad\text{and}\qquad
  \semanticStoreEnv{\Xi_b}{\Delta}{\mu_b}{\epsilon_b}{\Gamma}
\]
and every below-pc variable $x$ in $\epsilon_a$ and $\epsilon_b$ has equal value
\ie $\mu_a(\epsilon_a(x)) = \mu_b(\epsilon_b(x))$.
\end{definition}
Intuitively, $\semanticStoreEnv{\Xi}{\Delta}{\mu}{\epsilon}{\Gamma}$ states that
the store and environment are well-typed: recalling that the location of every
variable is described by the environment $\epsilon$ and the value at valid
locations is described by the memory store $\mu$, the type assigned to a
variable using the store typing $\Xi$ must be the same as the type assigned by
the typing context $\Gamma$.
\iffull
The formal definition for this relation is provided in \Cref{def:mem-store-pair-semantic}.
\fi

The following definition of non-interference for statements requires that
evaluating a statement under below-pc equivalent pairs of memory stores and
environment can only reduce to pairs of final memory stores and environments
that are below-pc equivalent. Technically, this is a \emph{termination
insensitive} notion of non-interference, since it does not require that both
executions terminate. However, P4 programs do not allow recursion and
\citet{petr4} prove that all well-typed Core P4 programs terminate.

\begin{definition}[Non-interference for statements]
	For any security lable $l$, $\NIstmt{pc}{\Gamma}{\Delta}{stmt}{\Gamma'}$ holds
  for any 	 $\Xi_a$, $\Xi_b$, $\mu_{a}$, $\mu_{b}$, $\epsilon_{a}$,
  $\epsilon_{b}$, $\mu_{a}'$, $\mu_{b}'$, $\epsilon_{a}'$, $\epsilon_{b}'$ if
  whenever
\begin{enumerate}
\item  $\semanticBelowPCState{l}{\Xi_a}{\Xi_b}{\Delta}{\mu_{a}}{\epsilon_{a}}{\mu_{b}}{\epsilon_{b}}{\Gamma}$,
\item $\evalsto{\config[stmt]{\mathcal{C}; \Delta}{\mu_{a}}{\epsilon_{a}}}{\config{\mu_{a}'}{\epsilon_{a}'}{sig_{1}}}$,
\item $\evalsto{\config[stmt]{\mathcal{C}; \Delta}{\mu_{b}}{\epsilon_{b}}}{\config{\mu_{b}'}{\epsilon_{b}'}{sig_{2}}}$
  \end{enumerate}
  then there exists $\Xi_a'$, $\Xi_b'$, such that
  \begin{enumerate}
    \item $\stmtTyping{pc}{\Gamma}{\Delta}{stmt}{\Gamma'}$,
    \item $\semanticBelowPCState{l}{\Xi_a'}{\Xi_b'}{\Delta} {\mu_{a}'}{\epsilon_{a}'}{\mu_{b}'}{\epsilon_{b}'}{\Gamma'}$,
    \item $\semanticBelowPCState{l}{\Xi_a'}{\Xi_b'}{\Delta} {\mu_{a}'}{\epsilon_{a}}{\mu_{b}'}{\epsilon_{b}}{\Gamma}$,
    \item for any $l_a \in \dom{\mu_a}$ and $l_b \in \dom{\mu_b}$ such that $\ordinaryTyping[]{\Xi_a}{\Delta}{\mu_a(l_a)}{\type{\tau}{\chi}}$ and $\ordinaryTyping[]{\Xi_b}{\Delta}{\mu_b(l_b)}{\type{\tau}{\chi}}$ and $pc \nsqsubseteq \chi$, we have $\mu_{a}'(l_a) = \mu_{a}(l_a)$ and $\mu_{b}'(l_b) = \mu_{b}(l_b)$,
   \item for any $l_a \in \dom{\mu_a}$ such that
     $\ordinaryTyping[]{\Xi_a}{\Delta}{\mu_a(l_a)}{\tau_{clos}}$, where
     $\tau_{clos} \in \{\tau_{fn}, \tau_{tbl}\}$, we have $\mu_a'(l_a) = \mu_a(l_a)$,
   \item for any $l_b \in \dom{\mu_b}$ such that
     $\ordinaryTyping[]{\Xi_b}{\Delta}{\mu_b(l_b)}{\tau_{clos}}$, where
     $\tau_{clos} \in \{\tau_{fn}, \tau_{tbl}\}$, we have $\mu_b'(l_b) = \mu_b(l_b)$,
   \item one of the following holds:
     \begin{itemize}
       \item $sig_1=sig_2= cont$; or
       \item $sig_1=sig_2= exit$; or
       \item $sig_1=\textsf{return}~val_1$ and $sig_2=\textsf{return}~val_2$ such
         that $\Xi_a', \Xi_b', \Delta \models_{l} NI(val_{1}, val_2):
         \type{\tau_{ret}'}{\chi_{ret}}$, where $\Delta \vdash \tau_{ret}
         \rightsquigarrow \tau_{ret}'$ and $\Gamma[\terminal{return}] =
         \type{\tau_{ret}}{\chi_{ret}}$,
     \end{itemize}
	 \item we have the inclusions:
     \begin{itemize}
       \item $\Xi_a \subseteq \Xi_a'$ and $\Xi_b \subseteq \Xi_b'$;
       \item $\dom{\mu_a} \subseteq \dom{\mu_a'}$ and $\dom{\mu_b} \subseteq
         \dom{\mu_b'}$; and
       \item $\dom{\epsilon_a} \subseteq \dom{\epsilon_a'}$ and $\dom{\epsilon_b} \subseteq \dom{\epsilon_b'}$.
     \end{itemize}
 \end{enumerate}
\end{definition}
We present similar non-interference definitions for expressions and declarations in
\iffull
\Cref{def:def-ni-exp} and \Cref{def:def-ni-decl}.
\else
the full version of the paper.
\fi

Then, our main soundness theorem states that a well-typed program in our information-flow control type system will be non-interfering.
\begin{theorem}[Main Soundness Theorem]\label{thm:def-ni-stmt}
If $~\stmtTyping{pc}{\Gamma}{\Delta}{stmt}{\Gamma'}$, then  $\NIstmt{pc}{\Gamma}{\Delta}{stmt}{\Gamma'}$.
\end{theorem}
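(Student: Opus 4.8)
The plan is to prove \Cref{thm:def-ni-stmt} together with the corresponding non-interference statements for expressions and declarations (\Cref{def:def-ni-exp}, \Cref{def:def-ni-decl}) by a single mutual induction, since a function call is an expression whose evaluation runs a declaration-then-statement body, so the three judgements are interdependent. I would induct on the derivation of the operational evaluation (equivalently, on a standard size/call-depth measure, which is well-founded because Core P4 forbids recursion and \citet{petr4} show it terminates), with a case analysis on the last security typing rule; the typing derivation of a called closure's body is not a subderivation of the call's typing derivation, so I recover it from the store-well-formedness invariant $\semanticStoreEnv{\Xi}{\Delta}{\mu}{\epsilon}{\Gamma}$, which must be strengthened to say that every closure-typed cell contains a closure whose captured environment is well-typed and whose body typechecks. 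In each case I invert the two evaluations, apply the induction hypotheses to corresponding subterms, and reassemble the conjuncts of the non-interference definition: well-typedness and below-$l$ agreement of the resulting stores against both the original and the extended environments (items~(1)--(3), which also discharges a type-preservation side obligation for $\Xi$, $\mu$, $\epsilon$), stability of above-$pc$ cells and of closure/table-typed cells (items~(4)--(6)), agreement of signals with any returned values related at level $l$ (item~(7)), and the domain inclusions (item~(8)).

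\textbf{Key lemmas.}
Two auxiliary lemmas carry most of the weight. The first is a \emph{confinement} lemma: if a statement, expression, or declaration is well-typed with program counter $pc$, then evaluating it from a well-typed store mutates only pre-existing cells whose label $\chi$ satisfies $pc \sqsubseteq \chi$, never mutates a closure- or table-typed cell, and (when $pc \neq \bot$) produces the signal $cont$ --- the last point because \textsc{T-Return} and \textsc{T-Exit} force the program counter to $\bot$ and the rules thread the same $pc$ through sequencing, so a high-$pc$ statement can neither \terminal{return} nor \terminal{exit}. This lemma is itself proved by mutual induction over the three judgements and is exactly what makes high-security control flow invisible to a low observer. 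The second ingredient is the standard collection of structural lemmas: weakening of $\Gamma$ and $\Delta$, the subtyping facts licensed by \textsc{T-Subtype-PC} and \textsc{T-SubType-In}, and a canonical-forms/inversion lemma matching runtime values to their security types through $\Xi$.

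\textbf{Informative cases.}
\textsc{T-Assign}: if the target label $\chi_1$ is below $l$ then $pc \sqsubseteq \chi_1 \sqsubseteq l$ puts us in a low context where the lvalue resolves the same way in both runs, and expression non-interference on $exp_2$ (using $\chi_2 \sqsubseteq \chi_1$) shows the written values agree; if $\chi_1 \nsqsubseteq l$ the write lands in a non-observable cell, and the side condition $pc \sqsubseteq \chi_1$ makes this compatible with above-$pc$ stability. \textsc{T-Cond}: split on whether the guard label $\chi_1$ is below $l$; if so, expression non-interference makes the guard evaluate identically, both runs take the same branch, and that branch's hypothesis closes the case; if not, $\chi_1 \sqsubseteq \chi_2$ forces $\chi_2 \nsqsubseteq l$, so confinement applied to each branch (typed at $\chi_2$) shows no below-$l$ cell is disturbed and both runs signal $cont$. \textsc{T-TblCall}: the control plane $\mathcal{C}$ picks the action from the key values, so if all key labels $\chi_k$ are below $l$ the same $ActionRef$ is selected in both runs and we recurse through the function-call machinery, whereas otherwise some $\chi_k \nsqsubseteq l$, so $\chi_k \sqsubseteq pc_{tbl}$ gives $pc_{tbl} \nsqsubseteq l$ and the \textsc{T-TblDecl} side conditions ($pc_{tbl} \sqsubseteq pc_a \sqsubseteq pc_{fn_j}$ and $\chi_k \sqsubseteq pc_{fn_j}$) confine every candidate action to cells above $pc_{tbl}$, hence above $l$. \textsc{T-Seq} threads $\Xi$, $\mu$, $\epsilon$, $\Gamma$ from the first statement into the hypotheses for the tail and combines the signal information; \textsc{T-Return} and \textsc{T-FnCallStmt} reduce to the expression judgement; and the declaration rules allocate fresh cells and, for \textsc{T-VarInit}, invoke expression non-interference.

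\textbf{Main obstacle.}
I expect the hard part to be the interaction of first-class closures and table values with the store-based semantics and with location allocation. Calling a closure in the two runs requires that its captured environments are themselves below-$pc$ equivalent, which is why the store invariant has to describe the contents of closure cells and why the hypothesis that such cells are never mutated (items~(5)--(6)) must be carried through every case, including through nested calls and table applications. A secondary difficulty is coordinating the two runs' fresh allocations and the scoping of block-local declarations: the existential quantification of $\Xi_a'$, $\Xi_b'$ in the definition gives some slack, but one must still choose them so that the below-$l$ agreements and the inclusions $\Xi_a \subseteq \Xi_a'$, $\Xi_b \subseteq \Xi_b'$ hold simultaneously, and so that cells allocated under a high program counter --- where the two runs may be at different program points and are not required to agree --- do not leak into the post-context, which again reduces to confinement and to the fact that \textsc{T-Cond} does not export the branch's context. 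The remaining obligations are routine.
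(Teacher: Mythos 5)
Your proposal is correct and follows the same overall architecture as the paper's proof --- a mutual induction over the non-interference statements for expressions, statements, and declarations, with a case analysis on the last typing rule, closures handled through a strengthened store/environment invariant, and the same case splits for \textsc{T-Assign}, \textsc{T-Cond}, and \textsc{T-TblCall} (low guard/key $\Rightarrow$ identical control flow via the expression IH; high guard/key $\Rightarrow$ no observable writes and no \terminal{return}/\terminal{exit} because those rules force $pc=\bot$). Two of your choices genuinely diverge from the paper, and both are defensible. First, you induct on the evaluation derivation (or a termination measure), whereas the paper inducts on the typing derivation and, in the \textsc{T-Call} case, applies the ``induction hypothesis'' to the closure body's statement --- which is not a subderivation of the call's typing derivation; the paper recovers the body's typing judgement from the non-interference invariant on closure values rather than from structural descent. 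Your measure makes that step an honest application of the IH, at the cost of having to carry type information alongside the evaluation derivation. Second, you factor out a single-run \emph{confinement} lemma, whereas the paper bakes the confinement facts (items on $pc$-bounded writes and stability of closure/table cells) into the paired non-interference predicate and discharges them by the same induction. Your separation is the more robust choice precisely where the two runs execute \emph{different} code --- the high-guard branch of \textsc{T-Cond} and the high-key branch of \textsc{T-TblCall} --- since the paired predicate cannot be instantiated with two distinct statements there; the paper asserts the needed single-run property from well-typedness of each branch at $\chi_2$ (respectively $pc_{tbl}$) without isolating it as a lemma. You would still need the paper's auxiliary machinery for l-value evaluation and writing, copy-in/copy-out of function arguments, and the ``unused location'' argument for restoring the caller's environment after a call, but these fall under what you call the routine obligations.
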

We present similar non-interference theorems for expressions and declarations in
\iffull
\Cref{ni-exp} and \Cref{ni-decl}.
\else
the full version of the paper.
\fi
\begin{proofsk}
  We prove non-interference theorems for statements, expressions and
  declarations together as a mutual induction on the typing derivation. The
  detailed proof of \Cref{thm:def-ni-stmt} is given in
  \iffull
    \Cref{app:proof-ni}.
  \else
    the full version of the paper.
  \fi
  The most involved case is the rule for function calls (\textsc{T-FnCall}),
  where we must slightly strengthen the non-interference definition for
  expressions, statements, and declarations.
\end{proofsk}

\section{Implementation and Case Studies}
\label{sec:eval}

To evaluate our type system, we implemented a type-checker for annotated P4
programs and used it to analyze a range of example programs exhibiting
different kinds of errors.
We call our tool \name.
Our information-flow control type system is implemented as an extension of the type checker in the
\textsc{p4c} compiler~\citep{p4c}, the reference compiler for
$\textsc{P4}_{16}$~\citep{p416}.
The target of our type checker is the \textsf{simple_switch} based on the \textsc{BMv2} behavioral model.
Our implementation adds about $700$ LOC to \textsc{p4c} and supports the
$\mathcal{L} = \{\textsf{high}, \textsf{low}\}$ lattice, and a simple diamond
lattice from \Cref{fig:lattice}, $\mathcal{L} = \{\textsf{high}, \textsf{alice},
\textsf{bob}, \textsf{low}\}$ for modeling isolation specifications.
Standard P4 types can be annotated with a security label from the lattice;
unannotated types default to \textsf{low}.

We evaluate our implementation by comparing the typechecking time of the secure
programs presented in the case studies using the \name typechecker with the
typechecking time of their uninstrumented insecure counterparts using the
original \textsc{p4c} compiler. \Cref{tab:eval-time} shows that our
implementation incurs an overhead of 5\% (or 30ms) on average in comparison to
the reference \textsc{p4c} compiler when evaluated on the instrumented and
uninstrumented versions of the same program. We believe this overhead is
reasonable for an unoptimized implementation that builds on the stock p4c
compiler; developing a more optimized implementation is a direction for future
work.

\begin{table}[htbp]
\centering
\caption{Typechecking time in milliseconds.}
\begin{tabular}{lccc}
\hline
Program               & Unannotated, p4c            & Annotated, \name \\
\hline
D2R &  534 & 599 \\
App & 593 & 600 \\
Lattice & 495 & 527\\
Topology & 554 & 591\\
Cache & 538 & 550 \\
\hline
Average & 543 & 573 \\
\hline
\end{tabular}
\label{tab:eval-time}
\end{table}

In the rest of the section, we present our case studies.

\subsection{Dataplane Routing with Priorities}
In traditional networks, the control plane is responsible for \emph{routing}, determining how to send a packet from source to destination, while the data plane is responsible for \emph{forwarding}, sending a packet to its next hop.
\citet{d2r} have shown that using programmable switches, one can handle routing in the data plane, avoiding the control plane entirely.
In their scheme, called D2R, when a switch receives a packet, it uses pre-loaded information about the network topology and local knowledge about link failures to perform a breadth-first search (BFS) and find a path to the target destination address.
D2R uses P4 mechanisms (e.g., stacks) to perform the BFS computation entirely on the switch, without needing to communicate with the control plane.

\begin{lstlisting}[caption={D2R: Dataplane Routing},label={lst:d2r}, numbers=left,  breaklines=true, escapechar=|, style=topper]
header bfs_t {
  <bit<32>, low> curr;
  <bit<32>, low> tried_links;
  <bit<32>, high> num_hops;
  // ...
}
header ipv4_t {
  <bit<3>, low> priority;
  // ...
}
struct headers {
  bfs_t bfs;
  ipv4_t ipv4;
  // ...
}

control D2R_Ingress(headers hdr) {
  <bit<32>, high> failures
  = num_bits_set(hdr.bfs.tried_links) - hdr.bfs.num_hops; |\label{ln:d2r:num-fails}|

  table bfs_step { ... }
  table forward {
    key = { hdr.bfs.next_node: exact; }
    actions = { forwarding(failures); NoAction; }
  }
  action forwarding(in <bit<32>, high> failures) {
    if (failures >= THRESHOLD) {
      hdr.ipv4.priority = PRIO_1; // Leak |\label{ln:d2r:leak1}|
    }
    else {
      hdr.ipv4.priority = PRIO_2; // Leak |\label{ln:d2r:leak2}|
    }
    // ... normal forwarding logic ...
  }
  apply {	|\label{ln:d2r:apply}|
    if (hdr.bfs.curr != hdr.ipv4.dstAddr) {
      bfs_step.apply(); |\label{ln:d2r:not-completed}|
    } else {
      forward.apply(); |\label{ln:d2r:completed}|
    }
    // repeat applications of bfs
  }
}
\end{lstlisting}%

We consider an extension of D2R where packets that encounter a higher number of link failures will receive higher priority.
\Cref{lst:d2r} gives schematic code for the main headers and control block implementing this variant of data plane routing.
The \textsf{bfs_t} headers describe the auxiliary information carried in the packets to perform the BFS, e.g., which links have been tried, while the \textsf{ipv4_t} headers contain information for standard packet forwarding.
In the control block \textsf{D2R_Ingress}, the number of failures count (\Cref{ln:d2r:num-fails}) can be computed from the vector of links that have been tried, \textsf{hdr.bfs.tried_links}, and the number of traversed links, \textsf{hdr.bfs.num_hops}.
The table \textsf{bfs_step} performs one step of BFS; the details are not important for our purposes.
Since P4 does not support loops, an iterative search algorithm like BFS is modeled in the apply block on \Cref{ln:d2r:apply} by unrolling the loop.
If the BFS search has not completed, \ie the current node in the BFS search is not the destination node (\Cref{ln:d2r:not-completed}), the BFS table is applied again (we elide the details of this BFS search algorithm which can be found in \cite{d2r}).
When the BFS search has successfully completed (\Cref{ln:d2r:completed}), the forwarding table is applied and packet priorities are assigned based on the number of failures encountered by the packet.

Using failure information to prioritize packets may leak information.
For instance, there are several potential reasons why \textsf{hdr.bfs.num_hops} could be secret---e.g., the packet could be transiting a private network and one might not want to reveal whether the network has reliable or unreliable links.
If \textsf{hdr.bfs.num_hops} is annotated as high security, the program is rejected by our typechecker because the \textsf{forwarding} action writes data to the low-security priority after branching on the number of the failures, which is high security (\Cref{ln:d2r:leak1,ln:d2r:leak2}).
This is an example of an indirect leak: the program branches on the secret, and then writes to public fields.

To remedy this information leak, we can modify the scheme so that the priority is computed based on non-sensitive information.
For instance, we can assign priority based on the total number of links that a packet tried to cross.
This count is an approximate proxy for the number of failures: as the number of failures rises, the packet tries more links.
This change can be implemented by removing \textsf{hdr.bfs.num_hops} in \Cref{ln:d2r:num-fails}, giving a program that is accepted by our typechecker.

A similar kind of leak can manifest in the implementation of NetChain \citep{netchain}, an in-network implementation of chain replication on top of a key-value store. The implementation assigns \textsf{roles} to the various switches in the network to determine the head, tail, or internal nodes  of the chain, which among various actions determines if the node sends out a reply or not. If the \textsf{roles} header field is labeled as a secret field, this can give away private topological information. When instrumented with a \textsf{high} label on \textsf{role}, the typechecker flagged implicit leaks in the implementation.

\subsection{Modeling Timing for In-Network Caching}
Like other IFC systems, our type system can model different notions of
adversary-observable data. For an example, we can consider a key-value store
with an in-network cache~\citep{netcache}. These systems are a prominent
application of data plane computing: switches can quickly retrieve hot items,
keep track of which items are frequently requested, and notify the controller
about which items should be stored on the switch. While the result of a query
should be the same no matter where the item is stored, an observer may be able
to detect variations in timing: data that is stored on the switch is returned
faster, while data that is stored on the controller takes longer to access. In
some cases, this \emph{timing side-channel} may allow an adversary to learn
about the state of the system.

While Core P4 does not model timing aspects of program behavior, we can still model timing information leaks by augmenting the program with new variables holding data that a timing-sensitive adversary may be able to observe.
For example, \Cref{lst:cache} gives a schematic P4 program implementing a simple cache.
The switch first tries to fetch data locally (\Cref{ln:fetchlocally}).
If the request hits then the table runs action \textsf{cache_hit}, while if the request misses then the table runs action \textsf{cache_miss}.
Both actions record the hit or miss in \textsf{hdr.resp.hit}.
We mark this field as a low-security (publicly visible) variable, to model an adversary who can distinguish whether a request was serviced by the cache or the controller.
If the query is sensitive information, \textsf{hdr.req.query} is declared as high security.
Our typechecker rejects this program because of an information leak: the actions \textsf{cache_hit} and \textsf{cache_miss} write to the low-security field \textsf{hdr.response.hit} (\Cref{ln:cache:leak1,ln:cache:leak2}), but they are invoked in a table with a high-security key \textsf{hdr.req.query} (\Cref{ln:query}).
This is again an indirect leak, modeling a simple timing side-channel.

\begin{lstlisting}[caption={In-network cache},label={lst:cache}, numbers=left, keywords={low,high},  breaklines=true, escapechar=|, style=topper]
header request_t { <bit<8>, high> query; }
header response_t { <bool, low> hit; <bit<32>, low> value; }
struct headers { request_t req; response_t resp; eth_t eth; }

control Cache_Ingress(headers hdr) {
  action cache_hit(<bit<32>, low> value) {
    hdr.resp.value = value;
    hdr.resp.hit = true; |\label{ln:cache:leak1}|
  }
  action cache_miss() { hdr.resp.hit = false; |\label{ln:cache:leak2}| }
  table fetch_from_cache {
    key = { hdr.req.query: exact; } |\label{ln:query}|
    actions = { cache_hit; cache_miss; }
  }
  apply {
    fetch_from_cache.apply(); |\label{ln:fetchlocally}|
    // ... if miss, try to fetch from controller ...
  }
}
\end{lstlisting}%

\subsection{Preventing Manipulation in Resource Allocation}
The examples we have seen so far use IFC to guarantee confidentiality: secret information (\textsf{high}) should not leak into publicly visible outputs (\textsf{low}).
As is well-known, if we interpret high-security data as ``untrusted'' and low-security data as ``trusted'', IFC systems can also ensure \emph{integrity}: untrusted inputs should not affect trusted outputs.
To demonstrate, suppose several applications are running on separate subnetworks behind a single gateway switch, which is responsible for forwarding packets to their destination subnetwork and allocate resources to the application flows.
We consider a very simple form of resource allocation, where a switch caters to the needs of latency-sensitive applications by increasing the priority of packets belonging to such applications.
%
%
The P4 program in \Cref{lst:app} gives the main logic for a gateway switch that accomplishes this task.
In addition to ordinary IP headers, packet headers in this setting also include an application ID \textsf{hdr.app.appID} indicating which application the packet belongs to.
In the control block, the table \mbox{\textsf{app_resources}} matches on the application ID, and then calls \mbox{\textsf{set_priority}} with the desired priority level.
This action then sets the priority level of the packet by writing to \textsf{hdr.ipv4.priority} (\Cref{ln:setpriority}).
Finally, the switch forwards the packet to the destination address \textsf{hdr.ipv4.dstAddr}.

While this program behaves well when clients are honest, a malicious client may manipulate the switch to increase the priority of their packets.
Specifically, since \textsf{hdr.app.appID} is used to determine priority but not used to forward the packets, a client may report a false application ID.
This issue can be detected by our IFC system if we label \textsf{hdr.app.appID} as untrusted (\textsf{high}) and \textsf{hdr.ipv4.priority} as trusted (\textsf{low}): setting priority based on application ID is an information-flow violation.

\begin{lstlisting}[caption={Resource Allocation},label={lst:app}, numbers=left,  breaklines=true, escapechar=|, style=topper]
header app_t { <bit<8>, high> appID; }
header ipv4_t {
  <bit<32>, low> dstAddr;
  <bit<32>, low> priority;
  // ...
}
struct headers {
  app_t app;
  ipv4_t ipv4;
  // ...
}

control App_Ingress(headers hdr) {
  action set_priority(<bit<3>, low> priority) {
    hdr.ipv4.priority = priority; |\label{ln:setpriority}|
  }
  table app_resources {
    key = { hdr.app.appID: exact; } |\label{ln:priority:key}|
    actions = { set_priority; }
  }
  apply {
    set_priority.apply();
    // ... forward the packet to hdr.ipv4.dstAddr ...
  }
}
\end{lstlisting}%

To address this problem, we can set the priority based on the destination address instead, by matching on \textsf{hdr.ipv4.dstAddr} instead of \textsf{hdr.app.appID} on \Cref{ln:priority:key}.
It is reasonable to model this header as trusted (\textsf{low}) because if a client were to manipulate this data, the packet would be delivered to the wrong destination.
In the modified program, the priority is now only computed based on trusted data in \textsf{hdr.ipv4.dstAddr} and the typechecker accepts this program because there is no integrity violation.

\subsection{Ensuring Network Isolation}
The previous example changes the interpretation of security labels in order to
establish different properties with IFC. For our final case study, we show how
our type system can use a richer lattice to enforce network isolation
properties.

Suppose we have a private network used by two clients, Alice and Bob, who run
dataplane programs on two separate nodes (the precise topology is not important,
but a sketch can be see in \Cref{fig:switch}).  Nodes pass around a shared
packet header with separate fields for Alice and for Bob, and we want to ensure
that Alice does not touch Bob's fields, and vice versa.  Furthermore, the
network operator wants to carry telemetry data alongside the packets
(\emph{in-band network telemetry}~\citep{int-whitepaper}) this data may depend
on Alice or Bob's data, but neither Alice nor Bob should be able to use
telemetry data.

\begin{figure}[!t]
  \begin{subfigure}[b]{.45\linewidth}
    \center
    \begin{tikzpicture}[scale=.35]
      \node (one) {Bob Switch};
      \node[below=of one ] (a) {Alice switch};
      \draw (a) -- (one);
    \end{tikzpicture}
    \caption{Network Topology}
    \label{fig:switch}
  \end{subfigure}%
  \begin{subfigure}[b]{.45\linewidth}
    \center
    \begin{tikzpicture}[scale=.35]
      \node (one) at (0,2) {$\top$};
      \node (a) at (-2,0) {$B$};
      \node (b) at (2,0) {$A$};
      \node (zero) at (0,-2) {$\bot$};
      \draw (zero) -- (a) -- (one) -- (b) -- (zero) -- (a) -- (zero);
    \end{tikzpicture}
    \caption{Security Lattice}
    \label{fig:lattice}
  \end{subfigure}
  \caption{Security lattice for a network topology\label{fig:lattice-ex}}
\end{figure}

We can model this isolation property as non-interference with a four-point
diamond lattice with labels $\{ A, B, \top, \bot \}$ (\Cref{fig:lattice}).
Non-interference ensures that data from level $\chi$ can flow to variables
labeled $\chi'$ if and only if $\chi \sqsubseteq \chi'$. Thus, if we label
Alice's fields $A$ and label Bob's fields $B$, then Alice's data cannot influence
Bob's fields, and vice versa. Similarly, $\top$-labeled fields can depend on all
data, but cannot influence data below $\top$. For instance, telemetry data can
be labeled $\top$: both Alice and Bob can accumulate data into $\top$-labeled
fields (e.g., increment a counter), but neither Alice nor Bob are able to leak
information from $\top$-labeled data into their own fields. Finally, fields
labeled $\bot$ contain globally visible data that cannot depend on other fields
above $\bot$. For example, we can pre-configure a packet's route through the
private network in $\bot$-labeled fields: this ensures that information from
Alice or Bob does not influence routing, potentially leading to an indirect leak
or isolation failure.

Labeling data from the four-point lattice can already rule out many kinds of
leaks. However, it still allows some leaks involving $\bot$-labeled data. For
instance, Alice may write Bob's fields with $\bot$-labeled data, while Bob may
use $\bot$-labeled data to modify $\bot$-labeled data. While potentially
undesirable, neither of these actions violates IFC since high data is allowed to
depend on low data. To prevent these behaviors, we can additionally typecheck
Alice's code with $pc$ label $A$, and typecheck Bob's code with $pc$ label $B$.
Then, non-interference guarantees that Alice can only write to fields labeled
$A$ or $\top$, and Bob can only write to fields labeled $B$ or $\top$.

\begin{lstlisting}[caption={Network Isolation and Telemetry},label={lst:isolation}, numbers=left,  breaklines=true, escapechar=|, style=topper]
struct headers {
  <alice_t, A> alice_data;
  <bob_t, B> bob_data;
  <telem_t, top> telem;
  <eth_t, bot> eth;
}

// typed at pc = |\color{pred}{A}|
control Alice_Ingress(headers hdr) {
  action set_by_alice(<bob_t, A> value) {
    // Error: should not have written to Bob's field
    hdr.bob = value; |\label{ln:isolation:explicit}|
  }
  table update_by_alice {
    // Error: should not have used telemetry field
    key = { hdr.telem: exact; } |\label{ln:isolation:implicit}|
    actions  = { set_by_alice; }
  }
  apply { update_by_alice.apply(); }
}

// typed at pc = |\color{pred}{B}|
control Bob_Ingress(headers hdr) {
  action set_by_bob() {
    // Allowed: modify telemetry using telemetry information
    hdr.telem = hdr.telem + 1;
  }
  table update {
    key = { hdr.eth.dstAddr: exact; }
    actions = { set_by_bob; NoAction; }
  }
  apply { update_by_bob.apply(); }
}
\end{lstlisting}%

\Cref{lst:isolation} shows schematic versions of programs implementing the Alice
and Bob switches.  Both the switches have a single action. The packet header
carries one of the four security labels.  In this example, we consider that
\textsf{hdr.alice_data} and \textsf{hdr.bob_data} are Alice's and Bob's data,
respectively; \textsf{hdr.eth} cannot be updated by either switch, but it can be
used by both the switches; and \textsf{hdr.telem} can be updated by any switch
but it should not be visible to Alice or Bob. Then, isolation can be established
by checking two judgements:
\begin{align*}
\stmtTyping{A}{\Gamma}{\Delta}{\textsf{update_by_alice}()}{\Gamma'} \\
\stmtTyping{B}{\Gamma}{\Delta}{\textsf{update_by_bob}()}{\Gamma'}
\end{align*}

Programs that incorrectly access packet headers will be flagged by the
typechecker. For instance, in \textsf{Alice_Ingress}, the switch tries to write
to Bob's field, \Cref{ln:isolation:explicit} and on \Cref{ln:isolation:implicit}
it attempts to use the telemetry field \textsf{hdr.telem}, which can only be
written to, not read. Our typechecker flags both leaks. A safe version of
Alice's switch program is shown in \Cref{lst:isolation-fixed}. In contrast,
\textsf{Bob_Ingress} is accepted by the typechecker: it applies a table that
branches on the $\bot$-labeled header $\textsf{hdr.eth}$, and the action
\textsf{set_by_bob} only modifies the $\top$-level header $\textsf{hdr.telem}$, 
incrementing a counter.

\begin{lstlisting}[caption={Isolation Respecting Switch
Program},label={lst:isolation-fixed}, numbers=left,  breaklines=true, escapechar=|,style=topper]
// typed at pc = |\color{pred}{A}|
control Alice_Ingress(headers hdr) {
  action set_by_alice(<alice_data, A> value) {
    hdr.alice_data = value;
  }
  table update_by_alice {
    key = { hdr.alice_data: exact; }
    action  = { set_by_alice; }
  }
  apply { update_by_alice.apply(); }
}
\end{lstlisting}%

While our concrete example only involves two switches and two parties, the same
idea can be directly generalized to more parties by adding additional labels at
the level of $A$ and $B$. Then, our typechecker can ensure that programs written
by different parties act on only their own packet headers. Richer dataflow
policies could potentially be enforced by using more complex lattices; this is
an interesting direction for future work.


\section{Related Work}
\label{sec:rw}

\paragraph*{Security in programmable networks.}
Recent works explore the security and privacy implications of programmable
networks. For instance, in-network systems can be used to defend against
denial-of-service attacks~\citep{ripple,fastflex}, obfuscate network
topology~\citep{nethide}, mitigate covert channels~\citep{netwarden}, and
enforce custom security policies~\citep{poise,d2r}. Tools have also been
developed for helping operators test their dataplane programs against
adversarial inputs (e.g., \citep{p4wn}). Our work complements these systems by
detecting security and privacy bugs in programs running on programmable
switches.

\paragraph*{Network verification.}
The network verification literature is too vast to summarize here; methods have
have targeted many aspects of networked systems, including routing
protocols~(e.g., \citep{bagpipe,minesweeper,bonsai,shapeshifter}), network
configurations~(e.g., \citep{config2spec,netdice}), and network
controllers~(e.g.,~\citep{avenir,nv}). Techniques have also been developed for
verifying dataplane programs~(e.g., \citep{netkat,coalg-netkat}).  Some works
also allow one to automatically repair faulty configurations~\cite{qarc} or to
automatically synthesize policy-compliant ones~\cite{genesis,zeppelin}.

Our work focuses on dataplane programs written in the P4 language~\citep{p4},
building on the core version of P4 developed by \citet{petr4}. Perhaps the most
closely related work is \textsf{p4v}~\citep{p4v}, a verification system for P4
programs. Using \textsf{p4v}, a P4 program is verified against a logical
specification by extracting a logical formula, which can be dispatched to
solvers like Z3. \citet{p4v} use \textsf{p4v} to verify basic correctness
properties, e.g., a program does not read or write invalid headers, or a program
implements the desired functionality correctly. While our system cannot verify
the general properties established by \textsf{p4v}, our target non-interference
property cannot be established in \textsf{p4v} since it relates a program's
behavior on pairs of inputs~\citep{hyperp}. Furthermore, our type-based analysis
is lightweight and does not require automated solvers.

Two closely related type-system based works that explore properties orthogonal
to non-interference properties are \textsf{SafeP4} \citep{safep4} and
\textsf{$\Pi$4} \citep{pi4}. \textsf{SafeP4} aims at catching invalid header
access bugs, while \textsf{$\Pi$4} presents a dependently-typed extension of P4
for verifying richer properties that SafeP4 could not cover. Unlike
\textsf{$\Pi$4}, \name has a light-weight typechecking algorithm that does not
involve constraint solving.  Furthermore, our system builds on Core P4, a more
realistic formal model of P4. For example, Core P4 models different calling
conventions of P4 functions (e.g., pass by value and pass by reference) and
control flow signals. These features introduced new opportunities for implicit
leaks, which our type system rules out.

\paragraph*{Information-flow control.}
Our approach belongs to a line of research on information-flow control (IFC), a
type-based method of expressing and verifying a wide variety of security
properties. Starting from work by \citet{denning} and \citet{volpano}, there are
now many information-flow control systems ensuring different variants of
non-interference against different kinds of adversaries; the survey
by~\citet{ifc-survey} is a good introduction to this area. Existing systems
target general-purpose programming languages (e.g.,~\citep{flowcaml,jif}). Our
work brings this idea to languages for programmable networks.

\section{Conclusion and Future Directions}
\label{sec:conc}

We have designed an information-flow control type system for P4 and demonstrated
how it can verify networking properties for programs running on programmable
switches.

We see several possibilities for further investigation. First, our
non-interference theorems treat P4 programs as mapping a single input packet to
a single output packet, but,P4 allows programming switches that can maintain
internal state and recirculate packets for additional processing. These features
could lead to security leaks if an adversary can observe sequences of input and
output packets, and it would be interesting to establish non-interference in
this richer setting. Second, it could be interesting to refine our analysis with
information or assumptions about the control plane~\citep{p4v}.

\begin{acks}
  This work benefited substantially from discussions about P4 and Core P4 with
  Eric Campbell, Ryan Doenges, and Nate Foster. We thank the reviewers and our
  shepherd, Jedidiah McClurg, for their close reading and constructive feedback.
  This work is partially supported by NSF grants \#2152831 and \#1943130.
\end{acks}

\balance
\bibliographystyle{ACM-Reference-Format}
\bibliography{header,bibfile}

\iffull
\onecolumn
\appendix

\section{Grammar}

\paragraph*{Expressions}

$\begin{array}{rclr}
exp & ::= & b & \text{Boolean}\\
                 &\OR & n_w  &\text{integers or bits of width w}\\
                 &\OR & x &\text{variable} \\
                 &\OR & exp_1[exp_2] &\text{array indexing} \\
                 &\OR & exp_1 \oplus exp_2 &\text{binary operation}\\
                 &\OR & \{\overline{f_i = exp_i}\} &\text{record} \\
                 &\OR & exp.f_i &\text{field projection}\\
                 &\OR & exp_1(\overline{exp_2}) &\text{function call}
\end{array}$
\paragraph*{Statements}

$\begin{array}{rclr}
{stmt} & ::= & exp_1(\overline{exp_2}) &\text{function call}\\
               & \OR & exp_1 := exp_2 &\text{assignment} \\
              &  \OR & \terminal{if}~ (exp_1)~ stmt_1~ \terminal{else}~ stmt_2 &\text{conditional}\\
              &  \OR & \{\overline{stmt}\} &\text{sequencing}\\
               & \OR & \terminal{exit} &\text{exit}\\
                &\OR & \terminal{return}~ exp &\text{return}\\
               & \OR & var\_decl &\text{variable declaration}
\end{array}$

\paragraph*{Declaration}

$\begin{array}{rclr}
{prg} & ::=&  \overline{typ\_decl}~ ctrl\_body  \\
{ctrl\_body} &::= & \overline{decl} ~stmt \\
{decl} & ::=&  var\_decl \OR obj\_decl \OR typ\_decl \\
{var\_decl} & ::=&  \tau~x:= exp
        \OR \tau~x
        \\
{typ\_decl} & ::=& \terminal{match\_kind}~\{\overline{f}\} \OR \terminal{typedef}~\tau~X\\
{obj\_decl} & ::=& \terminal{table}~x~\{\overline{key}~\overline{act}\} \\
                &\OR & \terminal{function}~ \tau_{ret}~x~(\overline{d~y:\tau}) \{stmt\}
\end{array}$

$\begin{array}{rclr}
{d}  & ::= &  in \OR inout\\
{lval} & ::=&  x \\
        &\OR & lval.f \\
        &\OR & lval[n]\\
{key} & ::= & exp: x \\
{act} & ::=&  x(\overline{exp}, \overline{x: \tau})\\
\end{array}$
\section{Typing Rules}
$\Delta \vdash \tau \rightsquigarrow \tau'$ are judgements that resolve the base types for typedefs. We use the same definition as presented in Petr4's sections A.7 and A.8~\citep{petr4}. Note that the grammar that we consider doesn't support $bit\langle exp \rangle$ as we have discounted slice operations, instead we have $bit \langle n \rangle$, where $n$ is some constant.

\paragraph*{Expression Typing Rules}
\begin{mathpar}
\inferrule*[right=T-SubType-PC]
{
\ordinaryTyping[pc']{\Gamma}{\Delta}{exp}{\type{\tau}{\chi}} \\
pc \sqsubseteq pc'
}
{
\ordinaryTyping[pc]{\Gamma}{\Delta}{exp}{\type{\tau}{\chi}}
}

\inferrule*[right=T-SubType-In]
{
\ordinaryTyping[pc]{\Gamma}{\Delta}{exp}{\type{\tau}{\chi}~goes~ in} \\
\chi \sqsubseteq \chi'
}
{
\ordinaryTyping[pc]{\Gamma}{\Delta}{exp}{\type{\tau}{\chi'}~goes~ in}
}

\inferrule*[right=T-Int]
{
}
{
\ordinaryTyping[pc]{\Gamma}{\Delta}{n_{\infty}}{\type{int}{\bot}~goes~ in}
}

\inferrule*[right=T-Var]
{   x \in \dom{\Gamma} \qquad
\Gamma(x) = \type{\tau}{\chi}
}
{
\ordinaryTyping[pc]{\Gamma}{\Delta}{x}{\type{\tau}{\chi} \text{~goes inout}}
}

\inferrule*[right=T-BinOP]
{
\ordinaryTyping[pc]{\Gamma}{\Delta} {exp_{1}}{\type{\rho_{1}}{\chi_{1}}} \\
\ordinaryTyping[pc]{\Gamma}{\Delta}{exp_{2}}{\type{\rho_{2}}{\chi_{2}}} \\\\
\mathcal{T}(\Delta; \oplus; \rho_{1}; \rho_{2}) = \rho_{3} \\
\chi_{1} \sqsubseteq \chi' \\
\chi_{2} \sqsubseteq \chi'
}
{
\ordinaryTyping[pc]{\Gamma}{\Delta}{exp_{1} \oplus exp_{2}}{\type{\rho_3}{\chi'} ~goes~ in}
}

\inferrule*[right=T-Rec]
{
\listOrdinaryTyping[pc]{\Gamma}{\Delta}{\{\overline{exp: \type{\tau_i}{\chi_i}}\}}
}
{
\ordinaryTyping[pc]{\Gamma}{\Delta}{\{ \overline{f: exp} \}}{\type{\{ \overline{f: \langle \tau_i, \chi_i \rangle} \}}{\bot}~goes~ in}
}

\inferrule*[right=T-MemRec]
{
\ordinaryTyping[pc]{\Gamma}{\Delta}{exp}{\type{\{ \overline{f_i: \langle \tau_i, \chi_i \rangle} \}}{\bot}}~goes~d
}
{
\ordinaryTyping[pc]{\Gamma}{\Delta}{exp.f_{i}}{\type{\tau_{i}}{\chi_{i}}~ goes~ d}
}

\inferrule*[right=T-Index]
{
\ordinaryTyping[pc]{\Gamma}{\Delta}{exp_{1}}{\type{\type{\tau}{\chi_1}[n]}{\bot}~goes~ d} \\\\
\ordinaryTyping[pc]{\Gamma}{\Delta}{exp_{2}}{\type{bit \langle 32 \rangle}{\chi_2}} \\\\
\chi_2 \sqsubseteq \chi_1
}
{
\ordinaryTyping[pc]{\Gamma}{\Delta} {exp_{1}[exp_{2}]}{\type{\tau}{\chi_1}~ goes~ d }
}

\inferrule*[right=T-MemHdr]
{
\ordinaryTyping[pc]{\Gamma}{\Delta}{exp} {\type{header \{ \overline{f_i: \langle \tau_i, \chi_i \rangle} \}}{\bot}~ goes~ d}
}
{
\ordinaryTyping[pc]{\Gamma}{\Delta}{exp.f_{i}}{\type{\tau_{i}}{\chi_{i}}~goes~ d}
}

\inferrule*[right=T-Call]
{
\Gamma, \Delta \vdash_{pc} exp_{1}: \langle \overline{d~\type{\tau_i}{\chi_i}}
\xrightarrow{pc_{fn}} \langle \tau_{ret}, \chi_{ret}\rangle, \bot \rangle \\\\
\Gamma, \Delta \vdash_{pc} \overline{exp_{2}:\type{\tau_i}{\chi_i}~ goes~d} \\
pc \sqsubseteq pc_{fn}
}
{
\Gamma, \Delta \vdash_{pc} exp_{1} (\overline{exp_{2}}): \type{\tau_{ret}}{\chi_{ret}} ~\text{goes in}
}

\inferrule*[right=T-FuncDecl]
{
\Gamma_1 = \Gamma[\overline{x_{i}: \type{\tau_{i}'}{\chi_{i}}}, \terminal{return}: \type{\tau_{ret}'}{\chi_{ret}}]\\
\declTyping{pc_{fn}}{\Gamma_{1}}{\Delta}{stmt}{\Gamma_{2}} \\\\
\Delta \vdash \tau_i \rightsquigarrow \tau_i'~\text{for each}~\tau_i \\
\Delta \vdash \tau_{ret} \rightsquigarrow \tau_{ret}' \\
\Gamma' = \Gamma[x: \type{\overline{d~\type{\tau_{i}'}{\chi_{i} }} \xrightarrow[]{pc_{fn}} \type{\tau_{ret}'}{\chi_{ret}}}{\bot}]
}
{
\declTyping{pc}{\Gamma}{\Delta}{\terminal{function}~\type{\tau_{ret}}{\chi_{ret}}~x~(\overline{d~ x_{i}: \type{\tau_{i}}{\chi_{i}}}) \{stmt\}}{\Gamma'}{\Delta}
}

\end{mathpar}

\paragraph*{Statement Typing Rules}
\begin{mathpar}
\inferrule*[right=T-Empty]
{
~
}
{
\stmtTyping{pc}{\Gamma}{\Delta}{\{ \}}{\Gamma}
}

\inferrule*[right=T-Exit]
{
~
}
{
\stmtTyping{\bot}{\Gamma}{\Delta}{\terminal{exit}}{\Gamma}
}

\inferrule*[right=T-conditional]
{
\ordinaryTyping[pc]{\Gamma}{\Delta}{exp}{\type{bool}{\chi_1}}\\\\
\stmtTyping{\chi_2}{\Gamma}{\Delta}{stmt_{1}}{\Gamma_{1}}\\
\stmtTyping{\chi_2}{\Gamma}{\Delta}{stmt_{2}}{\Gamma_{2}} \\
\chi_1 \sqsubseteq \chi_2 \\
pc \sqsubseteq \chi_2
}
{
\stmtTyping{pc}{\Gamma}{\Delta}{\terminal{if}~(exp)~~stmt_{1}~\terminal{else}~stmt_{2}}{\Gamma}
}

\inferrule*[right=T-Seq]
{
\Gamma, \Delta \vdash_{pc} stmt_{1} \dashv \Gamma_{1} \qquad
\Gamma_{1}, \Delta \vdash_{pc} \{ \overline{stmt_{2}} \} \dashv \Gamma_{2}
}
{
\Gamma, \Delta \vdash_{pc} \{ stmt_{1}; \overline{stmt_{2}} \} \dashv \Gamma_2
}

\inferrule*[right=T-Return]
{
\Gamma,\Delta \vdash_{pc}  exp: \langle \tau, \chi_{ret} \rangle \\
\Gamma(\terminal{return}) = \type{\tau_{ret}}{\chi_{ret}} \\
\Delta    \vdash \tau_{ret} \rightsquigarrow \tau
}
{
\Gamma, \Delta \vdash_{\bot} \terminal{return}~ exp \dashv \Gamma
}

\inferrule*[right=T-Assign]
{
\Gamma,\Delta \vdash_{pc} exp_{1}:\type{\tau}{\chi_{1}}~goes~ inout \\
\Gamma,\Delta \vdash_{pc} exp_{2} :\type{\tau}{\chi_{2}}\\
\chi_{2} \sqsubseteq \chi_{1}\\
pc \sqsubseteq \chi_{1}
}
{
\Gamma, \Delta \vdash_{pc} exp_{1} := exp_{2} \dashv \Gamma
}

\inferrule*[right=T-Decl]
{
\Gamma, \Delta \vdash_{pc} var\_decl \dashv \Gamma_1, \Delta
}
{
\Gamma, \Delta \vdash_{pc} var\_decl \dashv \Gamma_1
}

\inferrule*[right=T-FnCallStmt]
{
\Gamma, \Delta \vdash_{pc} exp_1(\overline{exp_2}): \type{\tau_{ret}}{\chi_{ret}}
}
{
\Gamma, \Delta \vdash_{pc} exp_1(\overline{exp_2}) \dashv \Gamma
}

\inferrule*[right=T-TblCall]
{
\Gamma, \Delta \vdash_{pc} exp: \langle table(pc_{tbl}), \bot \rangle \\
 pc \sqsubseteq pc_{tbl}
}
{
\Gamma, \Delta \vdash_{pc} exp() \dashv \Gamma
}
\end{mathpar}

\paragraph*{Declaration Typing Rules}

\begin{mathpar}
    \inferrule*[right=T-VarDecl]
    {~}
    {\Gamma, \Delta \vdash_{pc} \type{\tau}{\chi}~ x \dashv \Gamma [x: \langle \tau, \chi \rangle], \Delta}

    \inferrule*[]
{
\ordinaryTyping[pc]{\Gamma}{\Delta}{exp}{\type{\tau'}{\chi}} \\
\Delta \vdash \tau \rightsquigarrow \tau'
}
{
\Gamma; \Delta \vdash_{pc} \type{\tau}{\chi}~ x:= exp \dashv \Gamma [x: \langle \tau', \chi\rangle]; \Delta
}

      \inferrule*[right=T-TblDecl]
    {
    \newordinaryTyping[pc_{tbl}]{\Gamma}{\Delta}{\overline{exp_k:\type{\tau_k}{\chi_k}}}{} \\
    \newordinaryTyping[pc_{tbl}]{\Gamma}{\Delta}{\overline{x_k:\type{match\_kind}{\bot}}}{} \\\\
    \newordinaryTyping[pc_{tbl}]{\Gamma}{\Delta}{act_{a_j}: \type{\overline{d\type{\tau_{a_{ji}}}{\chi_{a_{ji}}}}~;\overline{\type{\tau_{c_{ji}}}{\chi_{c_{ji}}}} \xrightarrow{pc_{fn_j}} \type{unit}{\bot}}{\bot}}{},~\text{for all}~ j\\\\
    \newordinaryTyping[pc_{tbl}]{\Gamma}{\Delta}{\overline{exp_{a_{ji}}:\type{\tau_{a_{ji}}}{\chi_{a_{ji}}}goes~d}}{}  \\
    {\chi_k} \sqsubseteq {pc_{fn_j}}~\text{for all}~j,k \\
    pc_a \sqsubseteq pc_{fn_j}, \text{for all}~j \\
    {\chi_k} \sqsubseteq {pc_{tbl}}~\text{for all}~k\\
    pc_{tbl} \sqsubseteq pc_a
    }
    {
    \declTyping{pc}{\Gamma}{\Delta}{\text{table}~x~ \{\overline{exp_k: x_k}~ \overline{act_{a_j}(\overline{exp_{a_{ji}}})}\}}{\Gamma[x: \type{table(pc_{tbl})}{\bot}]}{\Delta}
    }

\inferrule*[right=T-FuncDecl]
{
\Gamma_1 = \Gamma[\overline{x_{i}: \type{\tau_{i}'}{\chi_{i}}}, \terminal{return}: \type{\tau_{ret}'}{\chi_{ret}}]\\
\declTyping{pc_{fn}}{\Gamma_{1}}{\Delta}{stmt}{\Gamma_{2}} \\\\
\Delta \vdash \tau_i \rightsquigarrow \tau_i'~\text{for each}~\tau_i \\
\Delta \vdash \tau_{ret} \rightsquigarrow \tau_{ret}' \\
\Gamma' = \Gamma[x: \type{\overline{d~\type{\tau_{i}'}{\chi_{i} }} \xrightarrow[]{pc_{fn}} \type{\tau_{ret}'}{\chi_{ret}}}{\bot}]
}
{
\declTyping{pc}{\Gamma}{\Delta}{\terminal{function}~\type{\tau_{ret}}{\chi_{ret}}~x~(\overline{d~ x_{i}: \type{\tau_{i}}{\chi_{i}}}) \{stmt\}}{\Gamma'}{\Delta}
}

    \end{mathpar}
 \section{Definition}
Let $\tau_{fn} =\langle  \overline{ d~\rho} \xrightarrow{pc_{fn}} \rho_{ret}, \bot \rangle$ and $\tau_{tbl}=\type{table(pc_{tbl})}{\bot}$

\begin{definition}[Store typing]\label{def:store-typing}
Store typing context $\Xi$ is a partial map from the locations to types, $\Xi: \mathbb{L} \to \type{\tau}{\chi}$.
A memory-store $\mu$ is well-typed in a store-typing context $\Xi$, which can be represented as $\semanticStore{\Xi}{\Delta}{\mu}$, if for every location, $l \in \dom{\mu}$ there exists a type, $\type{\tau}{\chi} = \Xi(l)$ and
$\ordinaryTyping{\Xi}{\Delta}{\mu(l)}{\type{\tau}{\chi}}$ (value typing is defined in \Cref{vtyping}).
\end{definition}

\begin{definition}[Typing of environment] \label{def:env-typing}
	$\Xi \vdash \epsilon: \Gamma$ is defined as
	\begin{mathpar}
	  \inferrule{~}{\Xi \vdash []: []}

	  \inferrule{\Xi \vdash \epsilon: \Gamma \\ \Xi(l)= \type{\tau}{\chi}}{\Xi \vdash (\epsilon, x \mapsto l): (\Gamma, x: \type{\tau}{\chi})}

	  \inferrule{\Xi \vdash \epsilon: \Gamma}{\Xi \vdash \epsilon: \Gamma, \textsf{return} \mapsto l}
	\end{mathpar}
\end{definition}
\begin{definition}[Semantic typing of store and environment] \label{def:store-env-semantic typing}
A pair of store and environment $\langle \mu, \epsilon \rangle$ is semantically well-typed $\semanticStoreEnv{\Xi}{\Delta}{\mu}{\epsilon}{\Gamma}$ if the following conditions hold:
	\begin{enumerate}
		\item $\semanticStore{\Xi}{\Delta}{\mu}$
		\item $\Xi \vdash \epsilon: \Gamma$
		\item For any $x \in \dom{\epsilon}$, $\epsilon(x) \in \dom{\mu}$,
		\item For all $x$ in $\dom{\epsilon}$ and some $\Gamma_{fn} \subseteq \Gamma$ and any $pc$,  if $\Gamma, \Delta \vdash_{pc} x: \tau_{fn}$, $\mu(\epsilon(x)) = clos(\epsilon_{c},...)$, and $\Xi \models \epsilon_c: \Gamma_{fn}$, then $\dom{\epsilon_{c}} \subseteq \dom{\epsilon}$ and $\semanticStoreEnv{\Xi}{\Delta}{\mu}{\epsilon_{c}}{\Gamma_{fn}}$. Here $\tau_{fn}$ is the function type. We elide the full view of the closures in this definition.
\item For all $x$ in $\dom{\epsilon}$ and some $\Gamma_{tbl} \subseteq \Gamma$ and any $pc$,  if $\Gamma, \Delta \vdash_{pc} x: \tau_{tbl}$,  $\mu(\epsilon(x)) = table~l~(\epsilon_{c},...)$, and $\Xi \models \epsilon_c: \Gamma_{tbl}$, then $\dom{\epsilon_{c}} \subseteq \dom{\epsilon}$ and $\semanticStoreEnv{\Xi}{\Delta}{\mu}{\epsilon_c}{\Gamma_{tbl}}$. Here $\tau_{tbl}$ is the table type.
	\end{enumerate}
\end{definition}

\begin{definition}[Semantic typing for a pair of memory stores and environments] \label{def:mem-store-pair-semantic}
	$\semanticBelowPCState{l}{\Xi_a}{\Xi_b}{\Delta}{\mu_{a}}{\epsilon_{a}}{\mu_{b}}{\epsilon_{b}}{\Gamma}$ holds when
		\begin{enumerate}
			\item $\semanticStoreEnv{\Xi_a}{\Delta}{\mu_{a}}{\epsilon_{a}}{\Gamma}$ and $\semanticStoreEnv{\Xi_b}{\Delta}{\mu_{b}}{\epsilon_{b}}{\Gamma}$
			\item $\dom{\epsilon_a} = \dom{\epsilon_b}$
			\item For any $x \in \dom{\epsilon_a} = \dom{\epsilon_b}$, $\NIval{l}{\Xi_a}{\Xi_b}{\Delta}{\mu_a(\epsilon_{a}(x))}{\mu_b(\epsilon_{b}(x))}{\Gamma(x)}$ (defined in \Cref{def:def-ni-val}),
			\item For all $x$ in $\dom{\epsilon_a} = \dom{\epsilon_b}$ and some $\Gamma_{fn} \subseteq \Gamma$ and any $pc$,  if $\Gamma, \Delta \vdash_{pc} x: \tau_{fn}$, $\mu_a(\epsilon_a(x)) = clos(\epsilon_{c_a},...)$, $\mu_b(\epsilon_b(x)) = clos(\epsilon_{c_b},...)$, $\Xi_a \models \epsilon_{c_a}: \Gamma_{fn}$, and $\Xi_b \models \epsilon_{c_b}: \Gamma_{fn}$ , then $\semanticBelowPCState{l}{\Xi_a}{\Xi_b}{\Delta}{\mu_{a}}{\epsilon_{c_a}}{\mu_{b}}{\epsilon_{c_b}}{\Gamma_{fn}}$,
			\item For all $x$ in $\dom{\epsilon_a} = \dom{\epsilon_b}$ and some $\Gamma_{tbl} \subseteq \Gamma$ and any $pc$,  if $\Gamma, \Delta \vdash_{pc} x: \tau_{tbl}$, $\mu_a(\epsilon_a(x)) = table~l_a~(\epsilon_{c_a},...)$, $\mu_b(\epsilon_b(x)) = table~l_b~(\epsilon_{c_b},...)$, $\Xi_a \models \epsilon_{c_a}: \Gamma_{tbl}$, and $\Xi_b \models \epsilon_{c_b}: \Gamma_{tbl}$ , then $\semanticBelowPCState{l}{\Xi_a}{\Xi_b}{\Delta}{\mu_{a}}{\epsilon_{c_a}}{\mu_{b}}{\epsilon_{c_b}}{\Gamma_{tbl}}$.
		\end{enumerate}
\end{definition}

\begin{definition}[Non-interference for Expressions] \label{def:def-ni-exp}
	$\NIexp{pc}{\Gamma}{\Delta}{exp}{\type{\tau}{\chi}}$ holds
     if for any $\Xi_a$, $\Xi_b$, $\mu_{a}$, $\mu_{b}$, $\epsilon_{a}$, $\epsilon_{b}$,  $\mu_{a}'$, $\mu_{b}'$, and any security level $l$,
	 \begin{enumerate}

	   \item \emph{Variable at level lower than $l$ are indistinguishable at the beginning.}$\semanticBelowPCState{l}{\Xi_a}{ \Xi_b} {\Delta}{\mu_{a}}{\epsilon_{a}}{\mu_b}{\epsilon_b}{\Gamma}$,
	   \item $\evalsto
    {\config[exp]{\mathcal{C}; \Delta}{\mu_{a}}{\epsilon_{a}}}
    {\configval{\mu_{a}'}{val_a}}$,
    \item $\evalsto
    {\config[exp]{\mathcal{C}; \Delta}{\mu_{b}}{\epsilon_{b}}}
    {\configval{\mu_{b}'}{val_b}}$
	 \end{enumerate}
	 implies there exists a $\Xi_a'$, $\Xi_b'$ such that
	 \begin{enumerate}

	   \item \emph{Effects on any variable at level lower than $l$ should be indistinguishable.} $\semanticBelowPCState{l}{\Xi_{a}'}{\Xi_{b}'}{\Delta}{\mu_{a}'}{\epsilon_{a}}{\mu_{b}'}{\epsilon_{b}}{\Gamma}$,
	   \item \emph{PC is used to bound writes.} For any $l_a \in \dom{\mu_a}$ and $l_b \in \dom{\mu_b}$ such that $\ordinaryTyping[]{\Xi_a}{\Delta}{\mu_a(l_a)}{\type{\tau}{\chi}}$ and $\ordinaryTyping[]{\Xi_b}{\Delta}{\mu_b(l_b)}{\type{\tau}{\chi}}$ and $pc \nsqsubseteq \chi$, we have $\mu_{a}'(l_a) = \mu_{a}(l_a)$ and $\mu_{b}'(l_b) = \mu_{b}(l_b)$,
   \item For any $l_a \in \dom{\mu_a}$ such that $\ordinaryTyping[]{\Xi_a}{\Delta}{\mu_a(l_a)}{\tau_{clos}}$, where $\tau_{clos} \in \{\tau_{fn}, \tau_{tbl}\}$, then $\mu_a'(l_a) = \mu_a(l_a)$,
   \item For any $l_b \in \dom{\mu_b}$ such that $\ordinaryTyping[]{\Xi_b}{\Delta}{\mu_b(l_b)}{\tau_{clos}}$, where $\tau_{clos} \in \{\tau_{fn}, \tau_{tbl}\}$, then $\mu_b'(l_b) = \mu_b(l_b)$,
	   \item $\Xi_a \subseteq \Xi_a'$, $\Xi_b \subseteq \Xi_b'$, $\dom{\mu_a} \subseteq \dom{\mu_a'}$, and $\dom{\mu_b} \subseteq \dom{\mu_b'}$,
	   \item $\NIval{l}{\Xi_a'}{\Xi_b'}{\Delta}{val_{a}}{val_b}{\type{\tau}{\chi}}$.
	 \end{enumerate}

\end{definition}

\begin{definition}[Non-interference for values]\label{def:def-ni-val}
$\NIval{l}{\Xi_a}{\Xi_b}{\Delta}{val_{a}}{val_b}{\type{\tau}{\chi}}$ holds when:
\begin{enumerate}
\item $\ordinaryTyping[]{\Xi_a}{\Delta}{val_a}{\langle \tau, \chi \rangle}$ and $\ordinaryTyping[]{\Xi_b}{\Delta}{val_b}{\langle \tau, \chi \rangle}$ (value typing is defined in \Cref{vtyping}),
\item If $\tau \notin \{ \rho[n], ~\{\overline{f: \rho }\},~ header~\{\overline{f: \rho }\}, ~\overline{d~\rho} \xrightarrow{pc} \rho_{ret}, ~table(pc_{tbl})\}$ and  $\chi \sqsubseteq l$, then $val_{a} = val_{b}$,
\item If $\tau =\rho[n]$, $\tau = \{\overline{f: \rho }\}$ or $\tau = header~\{\overline{f: \rho }\}$, then $\NIval{l}{\Xi_a}{\Xi_b}{\Delta}{val_{ai}}{val_{bi}}{\type{\tau_{i}}{\chi_{i}}}$, for all $val_{ai} \in \overline{val_{ai}} = val_a$ and $val_{bi} \in \overline{val_{bi}} = val_b$, $\rho = \type{\tau_i}{\chi_i}$.
\item  If $\type{\tau}{\chi} = \tau_{fn}$, then $\NIclos{}{\Xi_a}{\Xi_b}{\Delta}{val_{a}}{val_b}{\tau_{fn}}$ (\Cref{clos-def}),
\item If $\type{\tau}{\chi} = \tau_{tbl}$, then $\NItbl{}{\Xi_a}{\Xi_b}{\Delta}{val_{a}}{val_b}{\tau_{tbl}}$ (\Cref{ni-tbl}).
%
\end{enumerate}
\end{definition}

\begin{definition} \label{clos-def}
	$\NIclos{}{\Xi_a}{\Xi_b}{\Delta}{val_{a}}{val_b}{\tau_{fn}}$, where $val_a$ and $val_b$ are of the form $clos(\epsilon_{c_a}, \overline{d x: \type{\tau}{\chi}}, \type{\tau_{ret}}{\chi_{ret}},\text{stmt})$ and	 $clos(\epsilon_{c_b}, \overline{d x: \type{\tau}{\chi}}, \type{\tau_{ret}}{\chi_{ret}},\text{stmt})$  holds when there exists a $\Gamma$ such that the following are satisfied:
	\begin{enumerate}
		\item $\Xi_a \vdash \epsilon_{c_a}: \Gamma$ and $\Xi_b \vdash \epsilon_{c_b}: \Gamma$
		\item for any $pc$, $\Gamma, \Delta \vdash_{pc} val_a :\langle  \overline{d~\langle\tau, \chi \rangle} \xrightarrow{pc_{fn}} \langle \tau_{ret}, \chi_{ret}\rangle, \bot \rangle$ and $\Gamma, \Delta \vdash_{pc} val_b :\langle \overline{d~\langle\tau, \chi\rangle} \xrightarrow{pc_{fn}} \langle \tau_{ret}, \chi_{ret}\rangle, \bot \rangle$.
		\item $\Gamma[\overline{x:\type{\tau}{\chi}}, return: \type{\tau_{ret}}{\chi_{ret}}], \Delta \vdash_{pc_{fn}} stmt \dashv \Gamma'$
		\item $val_{a} =_{clos} val_{b}$.
	\end{enumerate}
	Here, $val_{a} =_{clos} val_{b}$ is defined as two closures with $\dom{\epsilon_{c_a}} = \dom{\epsilon_{c_b}}$.
\end{definition}

\begin{definition}\label{ni-tbl}
 $\NItbl{}{\Xi_a}{\Xi_b}{\Delta}{val_{a}}{val_b}{\tau_{tbl}}$, where
 \[val_a = table~l_a~(\epsilon_{a}, \overline{exp_k: x_k}, \overline{act_{a_j}(\overline{exp_{a_{ji}}}, \overline{y_{c_{ji}}: \type{\tau_{c_{ji}}}{\chi_{c_{ji}}}})})\] and
\[val_b = table~l_b~(\epsilon_{b}, \overline{exp_k: x_k}, \overline{act_{a_j}(\overline{exp_{a_{ji}}}, \overline{y_{c_{ji}}: \type{\tau_{c_{ji}}}{\chi_{c_{ji}}}})})\] holds when there exists a $\Gamma$ and $pc_a$ such that the following are satisfied:

\begin{enumerate}
\item $\Xi_a \models \epsilon_{a}: \Gamma$ and $\Xi_b \models \epsilon_{b}: \Gamma$
\item for any $pc$, $\Gamma; \Delta \vdash_{pc} val_a : \type{table(pc_{tbl})}{\bot}$, $\Gamma; \Delta \vdash_{pc} val_b: \type{table(pc_{tbl})}{\bot}$.
\item $\ordinaryTyping[pc_{tbl}]{\Gamma}{\Delta}{x_k}{\type{match\_kind}{\bot}}$ for each $x_k \in \overline{x_k}$
\item $\ordinaryTyping[pc_{tbl}]{\Gamma}{\Delta}{exp_k}{\type{\tau_k}{\chi_{k}}}$ for each $exp_k \in \overline{exp_k}$
\item $\ordinaryTyping[pc_{tbl}]{\Gamma}{\Delta}{act_{aj}}{\type{\overline{d~\type{\tau_{a_{ji}}}{\chi_{a_{ji}}}}~;\overline{\type{\tau_{c_{ji}}}{\chi_{c_{ji}}}} \xrightarrow{pc_{fn_j}} \type{unit}{\bot}}{\bot}}$ for each $act_{a_j} \in \overline{act_{a_j}}$
\item $\ordinaryTyping[pc_{tbl}]{\Gamma}{\Delta}{exp_{a_{ji}}}{\type{\tau_{a_{ji}}}{\chi_{a_{ji}}}~goes~d}$ for each $exp_{a_{ji}} \in \overline{exp_{a_{ji}}}$
\item $val_a =_{tbl} val_b$.
\item $\chi_k \sqsubseteq {pc_{fn_j}}$, for all $j,k$
\item $pc_a \sqsubseteq pc_{fn_j}$, for all $j$
\item ${\chi_k} \sqsubseteq {pc_{tbl}}~\text{for all}~k$
\item $pc_{tbl} \sqsubseteq pc_a$.
\end{enumerate}
	Here, $table~l_a =_{tbl} table~l_b$ is defined as two table values with $\dom{\epsilon_{a}} = \dom{\epsilon_{b}}$. Their control plane entries will be the same.
\end{definition}

\begin{definition}[Non-interference for statements]
	For any security lable $l$, $\NIstmt{pc}{\Gamma}{\Delta}{stmt}{\Gamma'}$ holds	 for any 	 $\Xi_a$, $\Xi_b$, $\mu_{a}$, $\mu_{b}$, $\epsilon_{a}$, $\epsilon_{b}$, $\mu_{a}'$, $\mu_{b}'$, $\epsilon_{a}'$, $\epsilon_{b}'$ if
\begin{enumerate}
\item  $\semanticBelowPCState{l}{\Xi_a}{\Xi_b}{\Delta}{\mu_{a}}{\epsilon_{a}}{\mu_{b}}{\epsilon_{b}}{\Gamma}$,
\item $\evalsto{\config[stmt]{\mathcal{C}; \Delta}{\mu_{a}}{\epsilon_{a}}}{\config{\mu_{a}'}{\epsilon_{a}'}{sig_{1}}}$,
\item $\evalsto{\config[stmt]{\mathcal{C}; \Delta}{\mu_{b}}{\epsilon_{b}}}{\config{\mu_{b}'}{\epsilon_{b}'}{sig_{2}}}$
  \end{enumerate}
  then there exists $\Xi_a'$, $\Xi_b'$, such that
  \begin{enumerate}
    \item $\stmtTyping{pc}{\Gamma}{\Delta}{stmt}{\Gamma'}$,
    \item $\semanticBelowPCState{l}{\Xi_a'}{\Xi_b'}{\Delta} {\mu_{a}'}{\epsilon_{a}'}{\mu_{b}'}{\epsilon_{b}'}{\Gamma'}$,
    \item $\semanticBelowPCState{l}{\Xi_a'}{\Xi_b'}{\Delta} {\mu_{a}'}{\epsilon_{a}}{\mu_{b}'}{\epsilon_{b}}{\Gamma}$,
    \item \emph{PC is used to bound writes.} For any $l_a \in \dom{\mu_a}$ and $l_b \in \dom{\mu_b}$ such that $\ordinaryTyping[]{\Xi_a}{\Delta}{\mu_a(l_a)}{\type{\tau}{\chi}}$ and $\ordinaryTyping[]{\Xi_b}{\Delta}{\mu_b(l_b)}{\type{\tau}{\chi}}$ and $pc \nsqsubseteq \chi$, we have $\mu_{a}'(l_a) = \mu_{a}(l_a)$ and $\mu_{b}'(l_b) = \mu_{b}(l_b)$,
   \item For any $l_a \in \dom{\mu_a}$ such that $\ordinaryTyping[]{\Xi_a}{\Delta}{\mu_a(l_a)}{\tau_{clos}}$, where $\tau_{clos} \in \{\tau_{fn}, \tau_{tbl}\}$, then $\mu_a'(l_a) = \mu_a(l_a)$,
   \item For any $l_b \in \dom{\mu_b}$ such that $\ordinaryTyping[]{\Xi_b}{\Delta}{\mu_b(l_b)}{\tau_{clos}}$, where $\tau_{clos} \in \{\tau_{fn}, \tau_{tbl}\}$, then $\mu_b'(l_b) = \mu_b(l_b)$,
   \item $sig_1=sig_2= cont$ or $sig_1=sig_2= exit$ or $sig_1=\textsf{return}~val_1;~ sig_2=\textsf{return}~val_2$,
	 \item If $sig_{1}=\textsf{return}~ val_{1}$ and $sig_{2}=\textsf{return}~ val_{2}$, then $\Xi_a', \Xi_b', \Delta \models_{l} NI(val_{1}, val_2): \type{\tau_{ret}'}{\chi_{ret}}$, where $\Delta \vdash \tau_{ret} \rightsquigarrow \tau_{ret}'$ and $\Gamma[\terminal{return}] = \type{\tau_{ret}}{\chi_{ret}}$,
	 \item $\Xi_a \subseteq \Xi_a'$, $\Xi_b \subseteq \Xi_b'$, $\dom{\mu_a} \subseteq \dom{\mu_a'}$, $\dom{\mu_b} \subseteq \dom{\mu_b'}$, $\dom{\epsilon_a} \subseteq \dom{\epsilon_a'}$, and $\dom{\epsilon_b} \subseteq \dom{\epsilon_b'}$.
 \end{enumerate}
\end{definition}

\begin{definition}[Non-interference for declaration statements]\label{def:def-ni-decl}
For any security lable $l$, $\NIdecl{pc}{\Gamma}{\Delta}{decl}{\Gamma'}{\Delta_1}$ holds for any
$\Xi_a$, $\Xi_b$, $\mu_{a}$, $\mu_{b}$, $\epsilon_{a}$, $\epsilon_{b}$, $\mu_{a}'$, $\mu_{b}'$, $\epsilon_{a}'$, $\epsilon_{b}'$ if
\begin{enumerate}
\item  $\semanticBelowPCState{l}{\Xi_a}{\Xi_b}{\Delta}{\mu_{a}}{\epsilon_{a}}{\mu_{b}}{\epsilon_{b}}{\Gamma}$,
\item $\evalsto
{\config[decl]{\mathcal{C}; \Delta}{\mu_{a}}{\epsilon_{a}}}
{\config[cont]{\Delta_1}{\mu_{a}'}{\epsilon_{a}'}}$,
\item $\evalsto
{\config[decl]{\mathcal{C}; \Delta}{\mu_{b}}{\epsilon_{b}}}
{\config[cont]{\Delta_1}{\mu_{b}'}{\epsilon_{b}'}}$,
\end{enumerate}
then there exists $\Xi_a'$, $\Xi_b'$ such that
\begin{enumerate}
\item $\declTyping{pc}{\Gamma}{\Delta}{decl}{\Gamma'}{\Delta_1}$,
\item $\semanticBelowPCState{l}{\Xi_a'}{\Xi_b'}{\Delta_1}{\mu_{a}'}{\epsilon_{a}'}{\mu_{b}'}{\epsilon_{b}'}{\Gamma'}$ and $\semanticBelowPCState{l}{\Xi_a'}{\Xi_b'}{\Delta_1}{\mu_{a}'}{\epsilon_{a}}{\mu_{b}'}{\epsilon_{b}}{\Gamma}$
\item \emph{PC is used to bound writes.} For any $l_a \in \dom{\mu_a}$ and $l_b \in \dom{\mu_b}$ such that $\ordinaryTyping[]{\Xi_a}{\Delta}{\mu_a(l_a)}{\type{\tau}{\chi}}$ and $\ordinaryTyping[]{\Xi_b}{\Delta}{\mu_b(l_b)}{\type{\tau}{\chi}}$ and $pc \nsqsubseteq \chi$, we have $\mu_{a}'(l_a) = \mu_{a}(l_a)$ and $\mu_{b}'(l_b) = \mu_{b}(l_b)$,
\item For any $l_a \in \dom{\mu_a}$ such that $\ordinaryTyping[]{\Xi_a}{\Delta}{\mu_a(l_a)}{\tau_{clos}}$, where $\tau_{clos} \in \{\tau_{fn}, \tau_{tbl}\}$, then $\mu_a'(l_a) = \mu_a(l_a)$,
\item For any $l_b \in \dom{\mu_b}$ such that $\ordinaryTyping[]{\Xi_b}{\Delta}{\mu_b(l_b)}{\tau_{clos}}$, where $\tau_{clos} \in \{\tau_{fn}, \tau_{tbl}\}$, then $\mu_b'(l_b) = \mu_b(l_b)$,
\item $\Xi_a \subseteq \Xi_a'$, $\Xi_b \subseteq \Xi_b'$, $\dom{\mu_a} \subseteq \dom{\mu_a'}$, $\dom{\mu_b} \subseteq \dom{\mu_b'}$, $\dom{\epsilon_a} \subseteq \dom{\epsilon_a'}$, and $\dom{\epsilon_b} \subseteq \dom{\epsilon_b'}$, $\Delta \subseteq \Delta_1$.
\end{enumerate}
\end{definition}

\section{Theorems}

\begin{theorem} \label{ni-exp}
If $\ordinaryTyping[pc]{\Gamma}{\Delta}{exp}{\type{\tau}{\chi}}$, then $\NIexp{pc}{\Gamma}{\Delta}{exp}{\type{\tau}{\chi}}$.
\end{theorem}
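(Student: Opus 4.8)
\begin{proofsk}
The plan is to prove \Cref{ni-exp} simultaneously with the non-interference theorems for statements (\Cref{thm:def-ni-stmt}) and for declarations, by a single mutual induction on the typing derivation; for \Cref{ni-exp} the induction would proceed on the derivation of $\ordinaryTyping[pc]{\Gamma}{\Delta}{exp}{\type{\tau}{\chi}}$ with a case analysis on the last rule applied. In each case I would fix an arbitrary observation level $l$ and a pair of below-$pc$-equivalent configurations (in the sense of \Cref{def:mem-store-pair-semantic}), run $exp$ in both, and exhibit extended store typings $\Xi_a' \supseteq \Xi_a$ and $\Xi_b' \supseteq \Xi_b$ witnessing the clauses of \Cref{def:def-ni-exp}: preservation of below-$l$-equivalence, the ``writes are bounded by $pc$'' property, immutability of closure- and table-valued locations, monotone growth of the store typings and memory domains, and non-interference of the two result values in the sense of \Cref{def:def-ni-val}. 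The base cases are immediate: for \textsc{T-Bool} and \textsc{T-Int} evaluation does not touch memory and returns the same literal with label $\bot \sqsubseteq l$; for \textsc{T-Var} the returned value is $\mu(\epsilon(x))$, and below-$pc$-equivalence together with semantic store/environment typing directly gives non-interference of these values at $\Gamma(x)$, with memory unchanged.

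The two subtyping rules would follow by monotonicity using transitivity of $\sqsubseteq$. For \textsc{T-SubType-PC}, the induction hypothesis gives non-interference at $pc'$ with $pc \sqsubseteq pc'$; the only $pc$-dependent clause is the write bound, and $pc \nsqsubseteq \chi$ implies $pc' \nsqsubseteq \chi$ (otherwise $pc \sqsubseteq pc' \sqsubseteq \chi$), so the $pc'$-bound yields the $pc$-bound. For \textsc{T-SubType-In}, $\chi \sqsubseteq \chi'$ promotes value non-interference at $\chi$ to one at $\chi'$: when $\chi' \sqsubseteq l$ we also have $\chi \sqsubseteq l$, so the equality supplied by the hypothesis is inherited. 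The remaining ``compositional'' rules --- \textsc{T-BinOP}, \textsc{T-Rec}, \textsc{T-MemRec}, \textsc{T-MemHdr}, \textsc{T-Index} --- I would handle by threading the induction hypothesis through the sub-evaluations in the order dictated by the operational semantics: each sub-evaluation preserves below-$l$-equivalence and the monotonicity and immutability invariants and returns non-interfering values, after which determinism of the primitive operations closes the argument. For $exp_1 \oplus exp_2$, if $\chi' \sqsubseteq l$ then $\chi_1, \chi_2 \sqsubseteq \chi' \sqsubseteq l$ forces both operands, hence both results, to coincide; field projection returns a component of a componentwise-non-interfering record or header value; and for $exp_1[exp_2]$ one case-splits on whether the element label $\chi_1 \sqsubseteq l$ --- if not, value non-interference is vacuous, and if so then $\chi_2 \sqsubseteq \chi_1 \sqsubseteq l$ forces the two index values to agree, so the selected elements agree.

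The hard part will be the function-call rule \textsc{T-Call}. Evaluation first reduces $exp_1$ to a closure and the arguments $\overline{exp_2}$ to values, then follows P4's copy-in/copy-out discipline: fresh locations are allocated for the parameters, the \emph{in} and \emph{inout} arguments are copied in, the body runs, and the \emph{inout} parameters are copied back to the caller's l-values. The induction hypothesis on $exp_1$ would yield two closures related by \Cref{clos-def} --- captured environments typed by a common $\Gamma_{fn}$, agreeing below $l$, with a shared body that type-checks at $pc_{fn}$ --- and since $pc \sqsubseteq pc_{fn}$ the statement non-interference theorem applies to the body. The write bound transfers because $pc \sqsubseteq pc_{fn}$: any location the body net-mutates has some label $\chi''$ with $pc_{fn} \sqsubseteq \chi''$, hence $pc \sqsubseteq \chi''$; and copy-back to an \emph{inout} l-value of label $\chi_i$ is either a no-op or else forces $pc_{fn} \sqsubseteq \chi_i$, so again $pc \sqsubseteq \chi_i$. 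The genuinely delicate point is that the two runs allocate distinct fresh locations and the body's non-interference must be applied to stores freshly extended with the copied-in (hence non-interfering) argument values, with the final state then ``re-attached'' to the caller's original environment. Making this composition precise is what forces the slight strengthening of the non-interference statements for all three judgement forms mentioned in the proof sketch of \Cref{thm:def-ni-stmt}: the relations have to additionally constrain the post-body stores with respect to the closure's captured environment and the parameter bindings, so that these intermediate configurations stay in the relation as one moves between the caller's and the callee's environments.

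With the strengthened statements in hand, the remainder of the mutual induction is routine, reusing the same ideas for statements and declarations: sequencing threads the hypothesis through; the conditional \textsc{T-Cond} and the table rules \textsc{T-TblDecl} and \textsc{T-TblCall} use that branch (resp.\ key) labels lie below the guard's (resp.\ $pc_{tbl} \sqsubseteq pc_a \sqsubseteq pc_{fn_j}$) level, so that on a ``high'' branch the two runs may diverge in behaviour yet only mutate locations whose label is $\nsqsubseteq l$, preserving below-$l$-equivalence; assignment \textsc{T-Assign} uses $\chi_2 \sqsubseteq \chi_1$ and $pc \sqsubseteq \chi_1$ for value non-interference and the write bound; and closure- and table-valued locations are never overwritten by the induction hypothesis, so their immutability clauses are maintained throughout. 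Since \citet{petr4} prove that well-typed Core P4 programs terminate, the termination-insensitive formulation loses nothing.
\end{proofsk}
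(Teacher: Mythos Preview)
Your proposal is correct and follows essentially the same approach as the paper: a mutual induction on typing derivations across expressions, statements, and declarations, with the same case structure and the same identification of \textsc{T-Call} as the crux requiring a strengthened induction hypothesis. The paper additionally factors the function-call case through a suite of explicit auxiliary lemmas---for l-value evaluation and writing, for the copy-in/copy-out discipline, and an ``\textsf{unused}'' predicate to re-attach the post-body store to the caller's environment---which you describe informally but do not isolate; this is a difference of presentation rather than of strategy.
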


\begin{theorem}\label{ni-stmt}
If $\stmtTyping{pc}{\Gamma}{\Delta}{stmt}{\Gamma'}$, then  $\NIstmt{pc}{\Gamma}{\Delta}{stmt}{\Gamma'}$.
\end{theorem}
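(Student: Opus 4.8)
The plan is to prove the three non-interference statements together---for expressions (\Cref{ni-exp}), statements (\Cref{thm:def-ni-stmt}), and declarations (\Cref{def:def-ni-decl})---by a single mutual induction on the typing derivation. Fix an arbitrary observer level $l$; since every clause is universally quantified over $l$ this loses nothing. For each rule I would assume two evaluations starting from a pair of well-typed, below-$l$-equivalent store/environment configurations (\Cref{def:mem-store-pair-semantic}) and re-establish below-$l$-equivalence of the final configurations together with the bookkeeping clauses: monotonicity of $\Xi$ and of the store and environment domains, the ``$pc$ bounds writes'' clause, preservation of closure- and table-valued cells, and agreement of the control-flow signals. I would freely use the Core~P4 preservation-style facts of \citet{petr4} (evaluation preserves well-typedness of stores and environments), so that every intermediate configuration stays well-typed, and the fact---also from \citet{petr4}---that well-typed Core~P4 programs are recursion-free and terminating, which is what makes the termination-insensitive formulation harmless.

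Most rules fall into two familiar shapes. For rules that dispatch on data---the guard in \textsc{T-Cond} and the key in \textsc{T-TblCall}---I would case on whether the label at which the dependent code is re-checked ($\chi_2$, respectively $pc_{tbl}$) lies below $l$. If it does, then the guard label $\chi_1 \sqsubseteq \chi_2 \sqsubseteq l$ (resp.\ every key label $\chi_k \sqsubseteq pc_{tbl} \sqsubseteq l$), so \Cref{ni-exp} forces the two runs to observe identical scalar values and, for tables, identical control-plane entries (via $=_{tbl}$); hence both runs take the same branch or invoke the same action, and the inductive hypothesis at the elevated $pc$ finishes the case. If it does not lie below $l$, then the side conditions $pc\sqsubseteq\chi_2$ and $\chi_k\sqsubseteq pc_{tbl}\sqsubseteq pc_a\sqsubseteq pc_{fn_j}$ force every write effect reachable in either run to carry a label that is itself not below $l$ (a one-line lattice argument using transitivity of $\sqsubseteq$), so the ``$pc$ bounds writes'' clause of the inductive hypothesis guarantees that no below-$l$ cell is modified and equivalence is preserved for free---this is where the usual ``confinement lemma'' is absorbed into the statement. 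Assignment (\textsc{T-Assign}) follows the same dichotomy using $\chi_2\sqsubseteq\chi_1$ and $pc\sqsubseteq\chi_1$; the subtyping rules \textsc{T-SubType-PC} and \textsc{T-SubType-In} merely weaken the guarantee monotonically (again by transitivity); records, headers, field projection, and array indexing recurse componentwise through the value relation (\Cref{def:def-ni-val}); and \textsc{T-Seq}, \textsc{T-Decl}, \textsc{T-VarDecl}, \textsc{T-VarInit}, \textsc{T-Empty}, \textsc{T-Exit}, \textsc{T-Return} are simple compositions, with \textsc{T-Return} additionally discharging the returned-value clause directly from \Cref{ni-exp}.

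The hard part will be the function-call case---\textsc{T-Call}/\textsc{T-FnCall} and its statement form \textsc{T-FnCallStmt}---together with the declaration rules \textsc{T-FuncDecl} and \textsc{T-TblDecl} that manufacture the closure and table values it consumes; I expect the bare statements of the three theorems to be too weak to close the induction here, just as the proof sketch anticipates. Calling a function evaluates the argument expressions, builds a fresh activation frame from the \emph{captured} closure environment, runs the body at the possibly-different label $pc_{fn}$, and---because Core~P4 has \emph{inout} parameters and pass-by-reference---copies results back through the caller's lvalues afterwards. To make the induction go through I would strengthen the invariant so that (i) the closure- and table-valued store cells seen by the two runs are related by \Cref{clos-def} / \Cref{ni-tbl}---their captured environments are below-$l$-equivalent and share a common typing context $\Gamma_{fn}$---and carry this relation through \textsc{T-FuncDecl} and \textsc{T-TblDecl}; (ii) apply the inductive hypothesis to the body under the frame context $\Gamma[\overline{x_i{:}\,\type{\tau_i'}{\chi_i}},\terminal{return}{:}\,\type{\tau_{ret}'}{\chi_{ret}}]$ at $pc_{fn}$, using $pc\sqsubseteq pc_{fn}$ to transport the caller's write bound to the callee and back; and (iii) reconcile location freshness between the two runs---since the Petr4 semantics allocates fresh locations for frame variables and the two executions need not pick syntactically equal ones, I would track everything through the store typings $\Xi_a,\Xi_b$, extending them in lockstep, rather than through raw location equality, so that the value relation is stable under the new allocations. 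The final \emph{inout} copy-back then reduces to an assignment-shaped argument layered on top of the body's guarantee. This strengthening, invisible in the statement of \Cref{thm:def-ni-stmt} but threaded through \Cref{def:mem-store-pair-semantic}, \Cref{clos-def}, and \Cref{ni-tbl}, is the technical heart of the proof.
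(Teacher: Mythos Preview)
Your proposal is correct and matches the paper's approach closely: a mutual induction over the typing derivations of expressions, statements, and declarations, with the two-way case split on ``observer sees the guard/key'' versus ``confinement via the $pc$-bounds-writes clause'' for \textsc{T-Cond} and \textsc{T-TblCall}, and the function-call case handled by unpacking the closure relation, pushing the hypothesis through the body at $pc_{fn}$, and replaying copy-out as assignment. The paper factors out what you describe inline into a small battery of auxiliary lemmas---an l-value evaluation lemma, an l-value write lemma, a copy-in/out lemma for argument passing, and an ``unused location'' lemma that handles shadowed variables not reachable through any captured closure environment---and in the high-conditional case it additionally argues signal agreement by observing that \textsc{T-Return} and \textsc{T-Exit} are only typable at $pc=\bot$, so a branch typed at $\chi_2\nsqsubseteq l$ can only yield \textsf{cont}; you should make that last point explicit, but otherwise your plan is the paper's plan.
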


\begin{theorem}\label{ni-decl}
If $\declTyping{pc}{\Gamma}{\Delta}{decl}{\Gamma'}{\Delta'}$, then $\NIdecl{pc}{\Gamma}{\Delta}{decl}{\Gamma'}{\Delta'}$.
\end{theorem}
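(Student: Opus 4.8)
The plan is to prove \Cref{thm:def-ni-stmt} simultaneously with the non-interference statements for expressions (\Cref{ni-exp}) and declarations (\Cref{ni-decl}) by a single mutual induction on the three typing derivations, holding the observation level $l$ in the non-interference predicates fixed but arbitrary. I would work throughout with the fully-decorated predicates of \Cref{def:def-ni-exp,def:def-ni-decl} (and the matching predicate for statements): besides the headline property ``below-$l$ inputs yield below-$l$ outputs'', these carry the auxiliary invariants that (i)~every write lands in a location whose security label is not below the ambient $pc$, (ii)~locations holding closure or table values are left untouched, and (iii)~store typings and the domains of the stores and environments only grow. Invariants (i)--(iii) are precisely what makes the induction compositional: they let the \emph{caller} of a function recover facts about its own state after the callee returns. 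I would also establish up front a one-execution lemma: if a statement is typed at a $pc$ with $pc \nsqsubseteq l$ (equivalently $pc \neq \bot$), then it evaluates with the \textsf{cont} signal, since \textsc{T-Exit} and \textsc{T-Return} force $pc = \bot$ while \textsc{T-Seq}, \textsc{T-Cond}, and the two call rules either preserve $pc$ or raise it.

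The structural cases are routine propagation of the decorated predicate. Constants (\textsc{T-Bool}, \textsc{T-Int}) yield equal values and touch nothing; \textsc{T-Var} reads a value already related by hypothesis; \textsc{T-BinOP}, \textsc{T-Rec}, \textsc{T-MemRec}, \textsc{T-MemHdr}, and \textsc{T-Index} thread the induction hypothesis through the subexpressions and observe that when the \emph{result} label sits below $l$ so must the operand labels, whence the operand values---and the result---coincide. The subtyping rules need only a lattice remark: since $pc \sqsubseteq pc'$ implies $\{\chi \mid pc \nsqsubseteq \chi\} \subseteq \{\chi \mid pc' \nsqsubseteq \chi\}$, invariant (i) transfers downward along \textsc{T-SubType-PC}, and raising the label of an $in$-value along \textsc{T-SubType-In} either keeps it above $l$ or drives it below $l$, where it was already forced equal. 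For statements, \textsc{T-Empty}, \textsc{T-Exit}, and \textsc{T-Return} are immediate; \textsc{T-Seq} composes two applications of the induction hypothesis, re-establishing below-$l$ equivalence at the intermediate state; \textsc{T-Assign} splits on whether the target label $\chi_1 \sqsubseteq l$ (then the source label $\chi_2 \sqsubseteq \chi_1 \sqsubseteq l$ forces equal written values) or not (then the target is invisible at $l$), using $pc \sqsubseteq \chi_1$ for write confinement; \textsc{T-Cond} splits on the guard label $\chi_1$: if $\chi_1 \sqsubseteq l$ the guard evaluates equally, both runs take the same branch, and we apply the induction hypothesis at $pc = \chi_2$, while if $\chi_1 \nsqsubseteq l$ then $\chi_2 \sqsupseteq \chi_1 \nsqsubseteq l$, so invariant (i) prevents either branch from writing below $l$ and the one-execution lemma forces both branches to produce \textsf{cont}, so the post-states stay below-$l$ equivalent regardless of which branch each run takes. \textsc{T-FnCallStmt} delegates to the expression case. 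The declaration cases \textsc{T-VarDecl} and \textsc{T-VarInit} allocate a fresh location and, for \textsc{T-VarInit}, store the related initial value supplied by the induction hypothesis; \textsc{T-TblDecl} and \textsc{T-FuncDecl} build a table or closure value whose side conditions in \Cref{ni-tbl,clos-def} are literally the premises of the declaration rule, with the ``equal captured-environment domains'' obligation discharged because both executions capture the same syntactic environment.

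The genuinely hard case is the function call \textsc{T-FnCall}, together with the table application \textsc{T-TblCall}, which reduces to it: the shared control-plane map $\mathcal{C}$ returns the same action and the same control-plane-supplied arguments whenever all key values agree---forced when all key labels are below $l$---while if some key label is above $l$, the selected action runs at a $pc_{fn_j} \sqsupseteq pc_{tbl} \sqsupseteq \chi_k \nsqsubseteq l$ and hence, by invariant (i), cannot disturb anything below $l$. The obstacle in \textsc{T-FnCall} is that the body of the callee was typed at the function's declaration site, so its typing derivation is \emph{not} a sub-derivation of the call's typing derivation and the structural induction hypothesis does not directly apply to it. I would break this circularity using the termination theorem of \citet{petr4} for well-typed Core P4: since every execution has a finite evaluation derivation, I would strengthen the induction to proceed on the evaluation derivations as well (keeping the typing derivation as a secondary measure for the subtyping rules), so that the callee's body evaluation---a strict sub-derivation of the call's evaluation---is a strictly smaller instance to which the induction hypothesis for ``well-typed statements are non-interfering'' does apply; the body's typing derivation at $pc_{fn}$ under the extended context, needed to invoke that hypothesis, is recovered from the closure-compatibility relation (\Cref{clos-def}) carried by the two closure values returned by the induction hypothesis on $exp_1$. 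The rest is delicate but mechanical: build the two call frames by extending the captured environments (below-$l$ equivalent by the closure clause of \Cref{def:mem-store-pair-semantic}) with the parameters bound to the related argument values, so the frames are below-$l$ equivalent at the function's local context; apply the induction hypothesis to the body to obtain a below-$l$ equivalent post-state and, on a \textsf{return} signal, related return values at $\langle \tau_{ret}, \chi_{ret} \rangle$; perform the copy-back for $inout$ parameters, using $pc \sqsubseteq pc_{fn}$ and the fact that the body could modify an $inout$ parameter of label $\chi_i$ only when $pc_{fn} \sqsubseteq \chi_i$ to keep the write within the caller's budget, with the copied-back values agreeing when $\chi_i \sqsubseteq l$; and re-establish invariants (ii) and (iii) for the caller from the same invariants delivered by the induction hypothesis on the body. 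I expect the circularity-avoidance at \textsc{T-FnCall}, and the precise strengthening of the non-interference predicates it demands, to be the crux; everything else is careful but routine bookkeeping.
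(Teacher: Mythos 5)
Your proposal matches the paper's proof in all essentials: a mutual induction over the expression, statement, and declaration typing judgements carrying exactly the strengthened invariants you list (pc-bounded writes, unchanged closure/table locations, monotone store typings and domains), the same case analysis for the declaration rules (fresh-location allocation for \textsc{T-VarDecl}/\textsc{T-VarInit}, discharging the closure/table non-interference side conditions directly from the premises of \textsc{T-FuncDecl}/\textsc{T-TblDecl}), and the same identification of the function/table call as the crux, resolved by recovering the callee body's typing from the non-interference relation on closure values. Your extra move of inducting on evaluation derivations (via \citet{petr4}'s termination result) to justify applying the induction hypothesis to the callee's body makes explicit a well-foundedness point that the paper's proof leaves implicit, but it does not change the structure of the argument.
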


\section{Lemmas}
\begin{lemma}
Suppose $\semanticStore{\Xi}{\Delta}{\mu}$. For any $l$, if $l \notin \dom{\Xi}$, then $l \notin \dom{\mu}$.
\end{lemma}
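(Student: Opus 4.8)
The plan is to argue by contraposition: instead of showing directly that $l \notin \dom{\Xi}$ implies $l \notin \dom{\mu}$, I will assume $l \in \dom{\mu}$ and derive $l \in \dom{\Xi}$. This reduces the lemma to a single unfolding of \Cref{def:store-typing}.

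The key step: by hypothesis $\semanticStore{\Xi}{\Delta}{\mu}$, so by the definition of store typing, for \emph{every} location $l' \in \dom{\mu}$ there exists a type $\type{\tau}{\chi}$ with $\Xi(l') = \type{\tau}{\chi}$ and $\ordinaryTyping{\Xi}{\Delta}{\mu(l')}{\type{\tau}{\chi}}$. Instantiating the universally quantified $l'$ with the given $l \in \dom{\mu}$ immediately yields that $\Xi$ is defined at $l$, i.e.\ $l \in \dom{\Xi}$, contradicting $l \notin \dom{\Xi}$. Hence $l \notin \dom{\mu}$, as required. So the proof is: restate contrapositively, unfold $\semanticStore{\Xi}{\Delta}{\mu}$, instantiate the quantified location at $l$, conclude. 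No induction, case analysis, or appeal to the value-typing judgement $\ordinaryTyping{\Xi}{\Delta}{\mu(l)}{\type{\tau}{\chi}}$ or to $\Delta$ is needed.

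There is no real obstacle here: the statement is precisely the domain-inclusion $\dom{\mu} \subseteq \dom{\Xi}$ that is already baked into the definition of a well-typed store (via the clause ``there exists a type $\type{\tau}{\chi} = \Xi(l)$''). The only point worth checking is that this clause is indeed quantified over all $l \in \dom{\mu}$ in \Cref{def:store-typing}, which it is; the lemma then follows with essentially no further work, and it will be used later as a bookkeeping fact (e.g.\ when allocating a fresh location one knows it lies outside both $\dom{\Xi}$ and $\dom{\mu}$ simultaneously).
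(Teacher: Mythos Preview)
Your proposal is correct and matches the paper's own proof, which simply says ``Direct proof by expanding the definition of $\semanticStore{\Xi}{\Delta}{\mu}$.'' Your contrapositive unfolding is exactly that expansion spelled out in slightly more detail.
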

\begin{proof}
Direct proof by expanding the definition of the $\semanticStore{\Xi}{\Delta}{\mu}$.
\end{proof}
\begin{lemma} \label{aux}
If $\semanticTyping{\Xi}{\Delta}{val}{\type{\tau}{\chi}}$ and $\Xi \subseteq \Xi'$, then $\semanticTyping{\Xi'}{\Delta}{val}{\type{\tau}{\chi}}$.
\end{lemma}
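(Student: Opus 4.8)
The plan is to prove \Cref{aux} by induction on the derivation of the value-typing judgement $\semanticTyping{\Xi}{\Delta}{val}{\type{\tau}{\chi}}$ (the value typing defined in \Cref{vtyping}), equivalently by structural induction on $val$ paired with $\tau$. The organizing observation is that a value-typing derivation uses the store typing $\Xi$ only in a tightly circumscribed way: for scalar values ($bool$, $int$, $bit\langle n\rangle$, $unit$) the rule mentions no $\Xi$-premise at all; for structured values (records, headers, arrays) it touches $\Xi$ only through the value typings of the component values; and for closure and table values it touches $\Xi$ only through an environment-typing premise $\Xi \vdash \epsilon_c : \Gamma$ (\Cref{def:env-typing}) for the captured environment, since the accompanying static typing derivations --- of a closure body, or of a table's keys, actions, and arguments --- are judgements over $\Gamma$ and $\Delta$ and never refer to $\Xi$.

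First I would isolate an auxiliary weakening principle for environment typing: if $\Xi \vdash \epsilon : \Gamma$ and $\Xi \subseteq \Xi'$, then $\Xi' \vdash \epsilon : \Gamma$. This is a short induction on the derivation in \Cref{def:env-typing}; the only rule with a side condition reads off $\Xi(l) = \type{\tau}{\chi}$ when extending the environment with $x \mapsto l$, and $\Xi \subseteq \Xi'$ ensures $\Xi'(l) = \type{\tau}{\chi}$, so the same rule re-applies over $\Xi'$.

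With that in hand I would run the main induction case by case. The scalar cases are immediate: the rule carries no $\Xi$-premise, so the same instance derives the judgement over $\Xi'$. For records, headers, and arrays I would apply the induction hypothesis to each component value-typing premise and reassemble the rule over $\Xi'$. For a closure value $clos(\epsilon_c, \overline{d~x : \type{\tau}{\chi}}, \type{\tau_{ret}}{\chi_{ret}}, stmt)$ I would use the auxiliary principle to move $\Xi \vdash \epsilon_c : \Gamma$ to $\Xi' \vdash \epsilon_c : \Gamma$, keep the unchanged static typing derivation of $stmt$ under $\Gamma$ extended with the parameters and $\terminal{return}$, and re-apply the closure rule. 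The table-value case has the same shape: weaken the captured-environment typing via the auxiliary principle, and keep the key/action/argument typing premises, which live over $\Gamma$.

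I do not expect a real obstacle --- this is a standard store-typing weakening lemma --- but the one point worth checking carefully is that the value-typing rules for closures and tables in \Cref{vtyping} genuinely confine their $\Xi$-dependence to the single environment-typing premise, rather than, say, also requiring a $\semanticStore{\Xi}{\Delta}{\mu}$-style condition that would drag the proof into a larger mutual induction with \Cref{def:store-typing}. Once that is confirmed, the auxiliary environment-typing weakening principle does essentially all the work and the remaining cases are bookkeeping.
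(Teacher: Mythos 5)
Your proposal is correct and takes essentially the same approach as the paper, which proves \Cref{aux} by induction on the value-typing derivation (and separately records the environment-typing weakening fact you isolate as your auxiliary principle, namely \Cref{lem-env-subtyping}). The paper's proof is stated in one line; your case analysis fills in exactly the bookkeeping that line elides.
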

\begin{proof}
By induction on the typing derivations of value typing judgement.
\end{proof}
\begin{lemma} \label{lem:store-typing-extended}
Suppose $\semanticStore{\Xi}{\Delta}{\mu}$ and for any $l_a \notin \dom{\Xi}$, let $~\Xi' = \Xi[l_a \mapsto \type{\tau'}{\chi'}]$, $\mu' = \mu[l_a \mapsto val]$, and $\ordinaryTyping{\Xi'}{\Delta}{\mu'(l_a)}{\type{\tau'}{\chi'}}$. Then $\semanticStore{\Xi'}{\Delta}{\mu'}$.
\end{lemma}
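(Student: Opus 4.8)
The plan is a direct proof: unfold the definition of $\semanticStore{\Xi'}{\Delta}{\mu'}$ (\Cref{def:store-typing}) and dispatch each location by a two-way case analysis. Recall that $\semanticStore{\Xi'}{\Delta}{\mu'}$ asks that for every $l \in \dom{\mu'}$ there is a security type $\type{\tau}{\chi} = \Xi'(l)$ with $\ordinaryTyping{\Xi'}{\Delta}{\mu'(l)}{\type{\tau}{\chi}}$. First I would observe that since $l_a \notin \dom{\Xi}$, the preceding lemma (well-typedness forces $\dom{\mu} \subseteq \dom{\Xi}$) gives $l_a \notin \dom{\mu}$, so $\dom{\mu'} = \dom{\mu} \cup \{l_a\}$ and $\Xi \subseteq \Xi'$. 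Now fix an arbitrary $l \in \dom{\mu'}$ and split on whether $l = l_a$.

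For $l = l_a$: here $\Xi'(l_a) = \type{\tau'}{\chi'}$ and $\mu'(l_a) = val$, and the hypothesis $\ordinaryTyping{\Xi'}{\Delta}{\mu'(l_a)}{\type{\tau'}{\chi'}}$ is exactly the required obligation, so this case is immediate. For $l \in \dom{\mu}$ with $l \neq l_a$: the map updates leave this entry untouched, so $\mu'(l) = \mu(l)$ and $\Xi'(l) = \Xi(l)$; write $\type{\tau}{\chi}$ for this type. From $\semanticStore{\Xi}{\Delta}{\mu}$ we get $\ordinaryTyping{\Xi}{\Delta}{\mu(l)}{\type{\tau}{\chi}}$, and then \Cref{aux} (value typing is preserved under $\Xi \subseteq \Xi'$) upgrades this to $\ordinaryTyping{\Xi'}{\Delta}{\mu'(l)}{\type{\tau}{\chi}}$. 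Combining the two cases yields $\semanticStore{\Xi'}{\Delta}{\mu'}$.

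There is essentially no obstacle here: this is a routine store-typing weakening step. The only slightly non-trivial ingredient is \Cref{aux}, whose own proof is an induction over value-typing derivations — this matters because the value typing of closure and table values refers to the store-typing context, so monotonicity there is precisely what makes the ``old location'' case go through — but since \Cref{aux} is already established, nothing further is needed.
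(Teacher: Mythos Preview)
Your proposal is correct and matches the paper's proof essentially step for step: the paper also observes $l_a \notin \dom{\mu}$ from $l_a \notin \dom{\Xi}$, performs the same two-case split on $l = l_a$ versus $l \in \dom{\mu}$, and invokes \Cref{aux} with $\Xi \subseteq \Xi'$ to lift the old-location typing to $\Xi'$. There is nothing to add.
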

\begin{proof}
By \Cref{def:store-typing}, $l_a \notin \dom{\Xi}$ implies $l_a \notin \dom{\mu}$. For all $l \in \dom{\mu'}$, there are two cases:
\begin{itemize}
\item $l \in \dom{\mu}$. By the definitions of $\mu'$ and $\Xi'$ we know that for the locations in this case ($l \in \dom{\mu'} \cap \dom{\mu}$), $\mu'(l) = \mu(l)$ and $\Xi'(l) = \Xi(l)$.
Using $\semanticStore{\Xi}{\Delta}{\mu}$, we can conclude that for the locations in this case, there exists a type, $\type{\tau}{\chi} = \Xi(l) = \Xi'(l)$ and
$\ordinaryTyping{\Xi}{\Delta}{\mu(l)}{\type{\tau}{\chi}}$.
Using $\mu'(l) = \mu(l)$, we can say $\ordinaryTyping{\Xi}{\Delta}{\mu'(l)}{\type{\tau}{\chi}}$.
Applying \Cref{aux} with $\Xi \subseteq \Xi'$, we conclude $\ordinaryTyping{\Xi'}{\Delta}{\mu'(l)}{\type{\tau}{\chi}}$.
\item $l = l_a$. For the last case where $l=l_a$, we can see that $\Xi'(l_a)=\type{\tau'}{\chi'}$  and  $\ordinaryTyping{\Xi'}{\Delta}{\mu'(l_a)}{\type{\tau'}{\chi'}}$.
\end{itemize}
Therefore, we have shown that for every location, $l \in \dom{\mu'}$ there exists a type, $\type{\tau}{\chi} = \Xi'(l)$ and $\ordinaryTyping{\Xi'}{\Delta}{\mu'(l)}{\type{\tau}{\chi}}$.
\end{proof}
\begin{lemma} \label{lem-env-subtyping}
If $\Xi \vdash \epsilon: \Gamma$, then for any $\Xi'$ such that $\Xi \subseteq \Xi'$, we have $\Xi' \vdash \epsilon: \Gamma$.
\end{lemma}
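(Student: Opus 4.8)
The plan is to proceed by induction on the derivation of $\Xi \vdash \epsilon : \Gamma$, using the three rules of \Cref{def:env-typing}. The single observation I would lean on throughout is that the hypothesis $\Xi \subseteq \Xi'$, read as graph containment of partial maps, means exactly that whenever $\Xi(l) = \type{\tau}{\chi}$ is defined we also have $\Xi'(l) = \type{\tau}{\chi}$. Hence any lookup into $\Xi$ that appears as a premise of an environment-typing rule can be replayed verbatim against $\Xi'$, which is the only place the store typing is consulted.

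For the base case, the derivation ends in the axiom $\Xi \vdash [] : []$, and the same axiom instantiated at $\Xi'$ gives $\Xi' \vdash [] : []$ with no side conditions to discharge. For the variable-binding rule, the derivation concludes $\Xi \vdash (\epsilon, x \mapsto l) : (\Gamma, x : \type{\tau}{\chi})$ from a subderivation $\Xi \vdash \epsilon : \Gamma$ together with the lookup premise $\Xi(l) = \type{\tau}{\chi}$; applying the induction hypothesis to the subderivation yields $\Xi' \vdash \epsilon : \Gamma$, the observation above gives $\Xi'(l) = \type{\tau}{\chi}$, and reassembling with the same rule gives $\Xi' \vdash (\epsilon, x \mapsto l) : (\Gamma, x : \type{\tau}{\chi})$. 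For the $\textsf{return}$-binding rule, which extends only the typing context $\Gamma$ and carries no lookup premise, the induction hypothesis applied to the single subderivation immediately produces the conclusion after one more application of the rule.

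I do not anticipate any genuine obstacle: this is a routine weakening lemma, and the induction is shallow. The only point needing a moment's care is pinning down the precise reading of $\Xi \subseteq \Xi'$ as containment of partial maps (equivalently, $\Xi'$ agreeing with $\Xi$ on $\dom{\Xi}$), since that is exactly what makes the lookup premise monotone; this is the same notion of containment already invoked in \Cref{lem:store-typing-extended} and in the non-interference definitions, so it is consistent with the rest of the development, and the lemma will later be used to transport environment typings across the store extensions that arise when new locations are allocated.
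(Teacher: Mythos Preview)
Your proposal is correct and matches the paper's approach: the paper records only ``Direct proof using the definition of $\Xi' \vdash \epsilon: \Gamma$,'' and what you have written is precisely that direct proof unfolded case by case on the inductive definition. There is nothing to add.
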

\begin{proof}
	Direct proof using the definition of  $\Xi' \vdash \epsilon: \Gamma$
\end{proof}

\begin{lemma}\label{lem:only-mem-superset}
Suppose $~\semanticStoreEnv{\Xi_a}{\Delta}{\mu_{a}}{\epsilon_{a}}{\Gamma}$. Let $\Xi_a \subseteq \Xi_a'$, $\mu_a \subseteq \mu_a'$. If $\semanticStore{\Xi_a'}{\Delta}{\mu_a'}$, then $~\semanticStoreEnv{\Xi_a'}{\Delta}{\mu_{a}'}{\epsilon_{a}}{\Gamma}$.
\end{lemma}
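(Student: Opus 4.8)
The plan is to unfold \Cref{def:store-env-semantic typing} at the goal $\semanticStoreEnv{\Xi_a'}{\Delta}{\mu_{a}'}{\epsilon_{a}}{\Gamma}$ and establish its five clauses, using the hypothesis $\semanticStoreEnv{\Xi_a}{\Delta}{\mu_{a}}{\epsilon_{a}}{\Gamma}$ together with $\Xi_a \subseteq \Xi_a'$, $\mu_a \subseteq \mu_a'$, and the extra assumption $\semanticStore{\Xi_a'}{\Delta}{\mu_a'}$. The first three clauses are routine: clause~(1), $\semanticStore{\Xi_a'}{\Delta}{\mu_a'}$, is assumed outright; clause~(2), $\Xi_a' \vdash \epsilon_a : \Gamma$, follows from clause~(2) of the hypothesis via \Cref{lem-env-subtyping} since $\Xi_a \subseteq \Xi_a'$; and for clause~(3), given $x \in \dom{\epsilon_a}$, clause~(3) of the hypothesis gives $\epsilon_a(x) \in \dom{\mu_a}$, which lies in $\dom{\mu_a'}$ because $\mu_a \subseteq \mu_a'$.

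The substance is in clauses~(4) and~(5), the closure and table conditions, which are handled symmetrically; take~(4). Fix $x \in \dom{\epsilon_a}$, a context $\Gamma_{fn} \subseteq \Gamma$, and a $pc$ with $\Gamma,\Delta \vdash_{pc} x : \tau_{fn}$, $\mu_a'(\epsilon_a(x)) = clos(\epsilon_c,\ldots)$, and $\Xi_a' \vdash \epsilon_c : \Gamma_{fn}$; the obligation is $\dom{\epsilon_c} \subseteq \dom{\epsilon_a}$ and $\semanticStoreEnv{\Xi_a'}{\Delta}{\mu_a'}{\epsilon_c}{\Gamma_{fn}}$. Since $\epsilon_a(x) \in \dom{\mu_a}$ and $\mu_a \subseteq \mu_a'$, the stored value is unchanged: $\mu_a(\epsilon_a(x)) = \mu_a'(\epsilon_a(x)) = clos(\epsilon_c,\ldots)$; likewise $\Xi_a$ and $\Xi_a'$ agree at $\epsilon_a(x)$, and by clause~(2) of the hypothesis (i.e.\ \Cref{def:env-typing}) and inversion on \textsc{T-Var} this common type is $\tau_{fn}$. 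From $\semanticStore{\Xi_a}{\Delta}{\mu_a}$ the closure is then well-typed under $\Xi_a$ at $\tau_{fn}$, and inversion on the value typing of closures (cf.\ \Cref{clos-def}) recovers the environment typing $\Xi_a \vdash \epsilon_c : \Gamma_{fn}$ required to invoke clause~(4) of the hypothesis. That invocation discharges $\dom{\epsilon_c} \subseteq \dom{\epsilon_a}$ and yields $\semanticStoreEnv{\Xi_a}{\Delta}{\mu_a}{\epsilon_c}{\Gamma_{fn}}$; applying the present lemma recursively with $(\epsilon_c,\Gamma_{fn})$ in place of $(\epsilon_a,\Gamma)$ (the side conditions $\Xi_a \subseteq \Xi_a'$, $\mu_a \subseteq \mu_a'$, $\semanticStore{\Xi_a'}{\Delta}{\mu_a'}$ carry over unchanged) gives $\semanticStoreEnv{\Xi_a'}{\Delta}{\mu_a'}{\epsilon_c}{\Gamma_{fn}}$. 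Clause~(5) is identical with $table~l~(\epsilon_c,\ldots)$ and $\Gamma_{tbl}$ in place of $clos(\epsilon_c,\ldots)$ and $\Gamma_{fn}$.

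The main obstacle is exactly this closure/table case. First, the goal's antecedent provides an environment typing against the larger $\Xi_a'$, whereas reusing the hypothesis needs one against $\Xi_a$; bridging this requires routing through the value typing of the stored closure (resp.\ table value), which is well-typed under $\Xi_a$ since $\semanticStore{\Xi_a}{\Delta}{\mu_a}$ holds and the value sits at a location in $\dom{\mu_a}$. Second, the recursive appeal to the lemma must be justified as well-founded: a closure's (or table's) captured environment $\epsilon_c$ mentions only variables declared before the closure was created, and Core P4 forbids recursive functions, so the depth of nested closures is a finite, strictly decreasing measure and the whole argument is an induction on it (equivalently, if \Cref{def:store-env-semantic typing} is read coinductively, clauses~(4)--(5) are discharged by this single guarded unfolding). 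The remaining bookkeeping---domain inclusions and the agreement of $\mu_a,\mu_a'$ and $\Xi_a,\Xi_a'$ on the relevant locations---is straightforward.
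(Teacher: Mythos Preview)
Your proposal is correct and follows essentially the same approach as the paper: unfold \Cref{def:store-env-semantic typing}, discharge clauses (1)--(3) immediately, and for the closure/table clauses use that $\mu_a$ and $\mu_a'$ agree on $\epsilon_a(x)$ to transfer the hypothesis's instance and then recurse on the captured environment. The paper phrases the recursion as an induction on $\dom{\epsilon_a}$ rather than on closure-nesting depth, and it glosses over the $\Xi_a'$-to-$\Xi_a$ bridging in the antecedent of clause~(4) that you make explicit via value-typing inversion, but these are presentational differences, not a different route.
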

\begin{proof}
By definition, $\Xi_a' = \Xi_a \cup \{\overline{l_a \mapsto \type{\tau}{\chi}}\}$, $\mu_a' = \mu_a \cup \{\overline{l_a \mapsto val_a}\}$ and $l_a \notin \Xi_a$. This is followed by:
$\mu_a'(\epsilon_a(x)) = \mu_a(\epsilon_a(x))$ and $\Xi_a'(\epsilon_a(x)) = \Xi_a(\epsilon_a(x))$, for any $x \in \epsilon_a$.
We prove this lemma by induction on the $\dom{\epsilon_a}$
\begin{enumerate}
  \item Base case. $\dom{\epsilon_a}=\emptyset$. Trivial.
  \item To prove $~\semanticStoreEnv{\Xi_a'}{\Delta}{\mu_{a}'}{\epsilon_{a}}{\Gamma}$, we need to show that
  \begin{enumerate}
    \item $\semanticStore{\Xi_a'}{\Delta}{\mu_a'}$ (already given),
    \item $\Xi_a' \vdash \epsilon_a: \Gamma$ (follows from \Cref{lem-env-subtyping}),
    \item for any $x \in \dom{\epsilon_a}$, we have $\epsilon_a(x) \in \dom{\mu_a} \subseteq \dom{\mu_a'}$,
    \item For all $x$ in $\dom{\epsilon_a}$ and any $\Gamma_{clos} \subseteq \Gamma$ and any $pc$,  if $\Gamma, \Delta \vdash_{pc} x: \tau_{clos}$, $\mu_a'(\epsilon_a(x)) = val_{clos}'$, $val_{clos} = \{clos(\epsilon_c,...), table~l_a (\epsilon_c,...)\}$, and $\Xi_a' \models \epsilon_c: \Gamma_{clos}$, then $\dom{\epsilon_{c}} \subseteq \dom{\epsilon_a}$ and $\semanticStoreEnv{\Xi_a'}{\Delta}{\mu_a'}{\epsilon_{c}}{\Gamma_{clos}}$. Note that $val_{clos}' = \mu_a(\epsilon_a(x))$. This implies the $\epsilon_c$ will still satisfy $\dom{\epsilon_{c}} \subseteq \dom{\epsilon_a}$. By applying the induction hypothesis on $~\semanticStoreEnv{\Xi_a}{\Delta}{\mu_{a}}{\epsilon_{c}}{\Gamma}$ with $\dom{\epsilon_{c}} \subseteq \dom{\epsilon_a}$, we can conclude that $\semanticStoreEnv{\Xi_a'}{\Delta}{\mu_a'}{\epsilon_{c}}{\Gamma_{clos}}$ holds.
  \end{enumerate}
\end{enumerate}
\end{proof}
\begin{lemma}[Non-interference with Subtyping]\label{lem:ni-subtyping}
If $\NIval{l}{\Xi_a}{\Xi_b}{\Delta}{val_{a}}{val_b}{\type{\tau}{\chi}}$ and $\chi \sqsubseteq \chi'$, then $\NIval{l}{\Xi_a}{\Xi_b}{\Delta}{val_{a}}{val_b}{\type{\tau}{\chi'}}$.
\end{lemma}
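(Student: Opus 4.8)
The plan is to unfold \Cref{def:def-ni-val} and show that each of its five clauses survives raising the outermost label from $\chi$ to $\chi'$, using only $\chi \sqsubseteq \chi'$ and transitivity of $\sqsubseteq$. I expect no induction to be required: the clause governing records, headers, and arrays recurses on the \emph{component} security types $\type{\tau_i}{\chi_i}$, whose labels are unaffected by relabeling the enclosing type, so that clause transfers verbatim. I would therefore structure the argument as a single case split on the shape of $\tau$.

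For every non-scalar shape of $\tau$ --- $unit$, a record $\{\overline{f:\rho}\}$, a header $header\{\overline{f:\rho}\}$, an array $\rho[n]$, a $match\_kind\{\overline{f}\}$, a function arrow $\overline{d~\rho}\xrightarrow{pc}\rho_{ret}$, or a table $table(pc_{tbl})$ --- the grammar of IFC types in \Cref{security-types} fixes the outermost label at $\bot$; hence $\type{\tau}{\chi}$ and $\type{\tau}{\chi'}$ can both be well-formed security types only if $\chi=\chi'=\bot$, and the conclusion coincides with the hypothesis. This disposes of clauses 3, 4, and 5 (which only ever apply to these shapes) and makes clause 2 trivial for $unit$ and $match\_kind$. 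The one remaining case is $\tau\in\{bool,\ int,\ bit\langle n\rangle\}$, where I check the clauses directly. Clause 1 requires re-deriving $\ordinaryTyping[]{\Xi_a}{\Delta}{val_a}{\type{\tau}{\chi'}}$ and $\ordinaryTyping[]{\Xi_b}{\Delta}{val_b}{\type{\tau}{\chi'}}$ from the assumed value-typing judgements at $\type{\tau}{\chi}$; this holds because the value-typing rules for scalar base types (\Cref{vtyping}) place no constraint on the outer label, so by inversion on those rules a scalar value of type $\type{\tau}{\chi}$ also has type $\type{\tau}{\chi'}$. Clause 2: if $\chi'\sqsubseteq l$ then $\chi\sqsubseteq\chi'\sqsubseteq l$, so clause 2 of the hypothesis yields $val_a=val_b$, which is exactly clause 2 of the conclusion; if $\chi'\nsqsubseteq l$ the clause is vacuous. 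Clauses 3--5 are vacuous for scalar $\tau$.

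The only real obstacle --- and it is slight --- is clause 1, i.e.\ re-establishing value typability at the raised label. If the value-typing relation of \Cref{vtyping} already includes a subsumption rule mirroring \textsc{T-SubType-In}, this step is immediate; otherwise it reduces to a short auxiliary fact that the scalar value-typing rules ignore the outer label, proved by inspection. This lemma is precisely what is needed in the \textsc{T-SubType-In} case of the proof of \Cref{ni-exp}: when an $in$-expression is retyped at a higher security label, the value it evaluates to must still be shown non-interfering at that higher label, and \Cref{lem:ni-subtyping} provides exactly that.
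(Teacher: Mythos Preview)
Your proposal is correct and follows essentially the same structure as the paper's proof: unfold \Cref{def:def-ni-val} and verify each clause survives raising the label. The paper's argument is terser on one point where you hedged: the value-typing system in \Cref{vtyping} \emph{does} include a subsumption rule \textsc{TV-SubType}, so clause~1 is immediate without any inversion on scalar rules or case analysis on $\tau$. The paper simply invokes \textsc{TV-SubType} for both $val_a$ and $val_b$ and then dispatches clause~2 by transitivity exactly as you do. Your explicit treatment of clauses~3--5 (observing that compound shapes force $\chi=\chi'=\bot$, and that the recursive clause on components does not mention the outer label) is more careful than the paper, which silently elides those clauses; but the reasoning you give is the right justification for why they transfer.
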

\begin{proof}
To show $\NIval{l}{\Xi_a}{\Xi_b}{\Delta}{val_{a}}{val_b}{\type{\tau}{\chi'}}$, we need to show the following:
\begin{enumerate}
\item
$\Xi_a, \Delta \vdash_{} val_a: \langle \tau, \chi' \rangle$. The above holds true using the TV-SubType rule in  \Cref{vtyping}, since we have the premise $\Xi_a, \Delta \vdash_{} val_a: \langle \tau, \chi \rangle$ and $\chi \sqsubseteq \chi'$
\item
$\Xi_b, \Delta \vdash_{} val_b: \langle \tau, \chi' \rangle$. Similar to the above case.
\item if $\tau \notin \{ \{\overline{f: \rho }\},~ header~\{\overline{f: \rho }\}, ~\overline{d~\rho} \xrightarrow{pc} \rho_{ret}, ~table(pc_{tbl})\}$ and $\chi' \sqsubseteq l$, then $val_{a} = val_{b}$. Since we know $\chi \sqsubseteq \chi'$, if $\chi' \sqsubseteq l$, then $\chi \sqsubseteq l$, and $val_{a} = val_{b}$ (according to $\NIval{l}{\Xi_a}{\Xi_b}{\Delta}{val_{a}}{val_b}{\type{\tau}{\chi}}$).
\end{enumerate}
\end{proof}
\begin{lemma} \label{lem-ni-val-subtype}
If $\NIval{l}{\Xi_a}{\Xi_b}{\Delta}{val_{a}}{val_b}{\type{\tau}{\chi}}$ and $\Xi_a \subseteq \Xi_a'$ and $\Xi_b' \subseteq \Xi_b'$, then $\NIval{l}{\Xi_a'}{\Xi_b'}{\Delta}{val_{a}}{val_b}{\type{\tau}{\chi}}$.
\end{lemma}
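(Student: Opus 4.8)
The plan is to prove the lemma by induction on the structure of the base type $\tau$, following the case split in the definition of $\NIval$ (\Cref{def:def-ni-val}). At each case the task is to re-establish every clause of $\NIval{l}{\Xi_a'}{\Xi_b'}{\Delta}{val_a}{val_b}{\type{\tau}{\chi}}$ from the corresponding clause of the hypothesis $\NIval{l}{\Xi_a}{\Xi_b}{\Delta}{val_a}{val_b}{\type{\tau}{\chi}}$, using only two facts already available: that value typing is preserved under enlarging the store typing (\Cref{aux}), and that environment typing is preserved under enlarging the store typing (\Cref{lem-env-subtyping}). The hypotheses $\Xi_a \subseteq \Xi_a'$ and $\Xi_b \subseteq \Xi_b'$ are carried through all cases unchanged.

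The scalar and aggregate cases are routine. The first clause of $\NIval$ asks for $\ordinaryTyping[]{\Xi_a'}{\Delta}{val_a}{\type{\tau}{\chi}}$ and $\ordinaryTyping[]{\Xi_b'}{\Delta}{val_b}{\type{\tau}{\chi}}$, which follow from the hypothesis and \Cref{aux}. When $\tau$ is $bool$, $int$, $bit\langle n\rangle$, $unit$, or $match\_kind$, the only remaining obligation is the conditional physical equality ``if $\chi \sqsubseteq l$ then $val_a = val_b$'', which mentions no store typing and transfers verbatim. When $\tau$ is $\rho[n]$, $\{\overline{f:\rho}\}$, or $header\{\overline{f:\rho}\}$, the obligation is $\NIval{l}{\Xi_a'}{\Xi_b'}{\Delta}{val_{ai}}{val_{bi}}{\type{\tau_i}{\chi_i}}$ for each component; since each component type is a proper subterm and the same pair of containments is available, the induction hypothesis applies directly.

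The two interesting cases are the closure type $\tau_{fn}$ and the table type $\tau_{tbl}$, but they reduce to \Cref{lem-env-subtyping}. Unfolding $\NIclos$ (\Cref{clos-def}), the only clause that refers to $\Xi$ is the environment typing $\Xi_a \vdash \epsilon_{c_a}:\Gamma$ and $\Xi_b \vdash \epsilon_{c_b}:\Gamma$ for the common witness context $\Gamma$; by \Cref{lem-env-subtyping} these still hold for $\Xi_a'$ and $\Xi_b'$, and the remaining clauses (the $\Gamma,\Delta$-typing of $val_a$ and $val_b$, the body-typing judgement under $pc_{fn}$, and the syntactic equality $val_a =_{clos} val_b$ on closure domains) do not mention the store typing at all. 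The table case is identical: unfolding $\NItbl$ (\Cref{ni-tbl}), the $\Xi$-dependent conditions are exactly $\Xi_a \models \epsilon_a:\Gamma$ and $\Xi_b \models \epsilon_b:\Gamma$, again handled by \Cref{lem-env-subtyping}, while the key and action typing judgements, the lattice inequalities among $\chi_k$, $pc_{fn_j}$, $pc_a$, and $pc_{tbl}$, and the condition $val_a =_{tbl} val_b$ are unaffected.

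I do not anticipate a genuine obstacle: this is a routine weakening statement. The one point to check carefully is that the closure and table clauses of $\NIval$ do \emph{not} recurse back into $\NIval$ or $\semanticBelowPCState$ under the enlarged contexts---they appeal only to environment typing and to $\Gamma,\Delta$-level typing judgements---so the induction on type structure is well-founded and a single application of \Cref{aux} and \Cref{lem-env-subtyping} per clause suffices. Had those definitions been mutually recursive with $\semanticBelowPCState$, one would instead have to prove the corresponding store-typing weakening for $\semanticBelowPCState$ simultaneously, which would be the real work; here it is not needed.
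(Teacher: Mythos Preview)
Your proposal is correct and matches the paper's own argument, which is a one-line ``direct proof using the definition of $\NIval{l}{\Xi_a}{\Xi_b}{\Delta}{val_{a}}{val_b}{\type{\tau}{\chi}}$ and applying \Cref{aux} and \Cref{lem-env-subtyping}.'' You have simply unfolded what that direct proof amounts to, including the (implicit) structural recursion needed for the aggregate-type clause; the key lemmas and the case analysis are the same.
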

\begin{proof}
Direct proof using the definition of $\NIval{l}{\Xi_a}{\Xi_b}{\Delta}{val_{a}}{val_b}{\type{\tau}{\chi}}$ and applying  \Cref{aux}, and \Cref{lem-env-subtyping}
\end{proof}
\begin{definition}\label{def-unused}
$\textsf{unused}(\langle\mu, \epsilon\rangle, x, l_a)$ holds when the following are satisfied:
\begin{itemize}
\item if $x \in \dom{\epsilon}$, then $\epsilon(x) \ne l_a$,
\item for all $y \in \dom{\epsilon}$, if $\mu(\epsilon(y)) = val_{clos}$, where $val_{clos} \in \{clos(\epsilon_c,...), table~l(\epsilon_c,...)\}$, then $\textsf{unused}(\langle\mu, \epsilon_{clos}\rangle, x, l_a)$ holds.
\end{itemize}
\end{definition}
\begin{lemma} \label{unused-stmt}
Suppose $\semanticStoreEnv{\Xi}{\Delta}{\mu}{\epsilon}{\Gamma}$ and $\langle \Delta, \mu, \epsilon, stmt \rangle \Downarrow \langle \mu', \epsilon', sig \rangle$. For some $\epsilon_a$ with $\semanticStoreEnv{\Xi_a}{\Delta}{\mu}{\epsilon_a}{\Gamma'}$ and some variable $x \in \dom{\epsilon_a}$, if $\textsf{unused}(\langle\mu, \epsilon\rangle, x, \epsilon_a(x))$, then $\mu(\epsilon_a(x)) = \mu'(\epsilon_a(x))$.
\end{lemma}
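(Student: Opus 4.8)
The plan is to prove \Cref{unused-stmt} by induction on the evaluation derivation, carried out simultaneously with the evident companion statements for expressions (if $\evalsto{\config[exp]{\mathcal{C};\Delta}{\mu}{\epsilon}}{\configval{\mu'}{val}}$ and $\textsf{unused}(\langle\mu,\epsilon\rangle,x,l_a)$ then $\mu(l_a)=\mu'(l_a)$) and for declarations. Induction on the derivation, rather than on the phrase, is forced by function calls, whose bodies are arbitrary statements and declarations not syntactically contained in the call; by the termination result of \citet{petr4} these derivations are finite. The semantic content is a framing property: evaluating a phrase in environment $\epsilon$ can only write to locations that are \emph{reachable} from $\epsilon$ --- the image of $\epsilon$, or of a closure/table environment transitively captured in $\mu$ through $\epsilon$ --- or to locations allocated fresh during the evaluation, which by the allocation discipline of the Core P4 semantics lie outside $\dom{\mu}$. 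Since $\semanticStoreEnv{\Xi_a}{\Delta}{\mu}{\epsilon_a}{\Gamma'}$ forces $l_a=\epsilon_a(x)\in\dom{\mu}$, the fresh locations are automatically distinct from $l_a$, so it remains to see that $l_a$ is not reachable from $\epsilon$. This is exactly what $\textsf{unused}(\langle\mu,\epsilon\rangle,x,l_a)$ records once we use the invariant --- implied by fresh allocation together with the injectivity of environments that copy-in/copy-out guarantees --- that a location is ``owned'' by a unique variable name: if $l_a$ were reachable from $\epsilon$ at all, it would appear as $\epsilon'(x)$ for $\epsilon'$ equal to $\epsilon$ or one of its transitively captured closure environments, which \Cref{def-unused} rules out.

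Most cases are then mechanical. For the empty statement and \textsf{exit} the store does not change. For sequencing, conditionals, and \textsf{return}, $\mu'$ is produced by one or two recursive expression/statement evaluations run on $\epsilon$ (or on the branch taken), and the induction hypothesis applies after checking a routine auxiliary fact: $\textsf{unused}(\langle\mu,\epsilon\rangle,x,l_a)$ implies $\textsf{unused}(\langle\mu'',\epsilon\rangle,x,l_a)$ whenever $\mu\to\mu''$ only adds fresh locations and leaves $l_a$ and all closure/table values unchanged (closures are created but never overwritten). For a variable declaration $\tau\,y:=exp$, the subexpression is handled by the expression hypothesis, and the subsequent allocation writes the initializer to a fresh $l\notin\dom{\mu}$, hence $l\ne l_a$ and $\mu(l_a)$ is untouched. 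For an assignment $exp_1:=exp_2$, typability of the left-hand side (it must be \emph{goes inout}, hence an lvalue) forces $exp_1$ to be rooted at some variable $y\in\dom{\epsilon}$, so the only location written is $\epsilon(y)$; the expression hypothesis disposes of $exp_2$, and $\epsilon(y)\ne l_a$ because either $y=x$, so $\epsilon(x)\ne l_a$ directly from $\textsf{unused}$, or $y\ne x$, so $\epsilon(y)\ne l_a=\epsilon_a(x)$ by unique ownership of locations.

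The main obstacle is the function-call case (as an expression $exp_1(\overline{exp_2})$ and as the statement form), since it is where the recursive clause of \Cref{def-unused} must actually be used. Here I would first evaluate $exp_1$ and the arguments $\overline{exp_2}$ using the expression hypothesis --- which also carries $\textsf{unused}$ forward to the intermediate store --- obtaining a closure $clos(\epsilon_c,\dots)$ at a reachable location, so that the second clause of $\textsf{unused}$ yields $\textsf{unused}(\langle\mu'',\epsilon_c\rangle,x,l_a)$. Because Core P4 uses copy-in/copy-out for both $in$ and $inout$ parameters, we then extend $\epsilon_c$ with fresh locations $\overline{l_i}$ for the parameters and initialize them with the copied argument values; re-establishing $\textsf{unused}(\langle\mu''',\epsilon_c[\overline{y_i\mapsto l_i}]\rangle,x,l_a)$ needs $l_i\ne l_a$ (true since $l_i\notin\dom{\mu''}\supseteq\{l_a\}$) and, for the recursive clause, that any closures copied into the $l_i$ already occurred in $\mu''$ and were thus covered. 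The induction hypothesis for statements and declarations then applies to the function body, and finally the copy-out step of each $inout$ parameter writes back to the caller-side location $\epsilon(z_j)$ of its argument lvalue, which differs from $l_a$ by the same ownership argument as in the assignment case; composing all these steps gives $\mu(l_a)=\mu'(l_a)$ for the whole call. The two genuinely delicate points, which I expect to absorb most of the work, are (i) phrasing the mutual induction so that each intermediate configuration yields both ``$l_a$ is preserved'' and ``$\textsf{unused}$ is re-established'', and (ii) the unique-ownership invariant, which I expect to follow from the well-formedness already baked into $\semanticStoreEnv{\cdot}{\cdot}{\cdot}{\cdot}{\cdot}$ together with the fresh-allocation discipline of the operational semantics.
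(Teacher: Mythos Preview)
Your plan matches the paper's: mutual induction on the evaluation derivation simultaneously with the companion lemmas for expressions (\Cref{unused-exp}) and declarations (\Cref{unused-decl}), re-establishing $\textsf{unused}$ and the semantic typing at each intermediate configuration, and using that declarations allocate only fresh locations. The paper's own proof is a three-line sketch naming exactly these ingredients, so you are considerably more explicit, and you correctly isolate the function-call case as the one place where the recursive clause of \Cref{def-unused} is essential.

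One correction to your point (ii). The first clause of \Cref{def-unused} constrains only the binding of the \emph{specific} name $x$; it says nothing about $\epsilon(y)$ for $y\ne x$. And the paper's definition of $\semanticStoreEnv{\Xi}{\Delta}{\mu}{\epsilon}{\Gamma}$ imposes no injectivity on environments. So the ``unique-ownership'' fact you need in the assignment and copy-out cases does \emph{not} follow from well-formedness plus fresh allocation alone. Indeed, without it the lemma as literally stated admits a counterexample: take $\epsilon=[y\mapsto l_1]$, $\epsilon_a=[x\mapsto l_1]$, and $stmt\equiv(y:=0)$; then $x\notin\dom{\epsilon}$ makes $\textsf{unused}(\langle\mu,\epsilon\rangle,x,l_1)$ hold vacuously, yet $\mu'(l_1)\ne\mu(l_1)$. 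You are therefore right that a no-aliasing invariant is required, but it must be formulated separately (for instance, that $l_a$ lies outside the image of $\epsilon$ and of every captured closure environment) and carried through the mutual induction as an explicit hypothesis, rather than expected to fall out of well-formedness. The paper's sketch is silent on this point as well.
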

\begin{proof}
By induction on the evaluation derivation of $stmt$. Involves mutual induction of \Cref{unused-exp}, \Cref{unused-stmt}, and \Cref{unused-decl}.
Some of the interesting bits include concluding $\semanticStoreEnv{\Xi}{\Delta'}{\mu'}{\epsilon_a}{\Gamma'}$ (using the definition), $\textsc{unused}(\langle \mu', \epsilon' \rangle, x, \epsilon_a(x))$, and the fact that declaration introduces new locations into $\mu'$.
\end{proof}
\begin{lemma}\label{unused-exp}
Suppose $\semanticStoreEnv{\Xi}{\Delta}{\mu}{\epsilon}{\Gamma}$ and $\langle \Delta, \mu, \epsilon, exp \rangle \Downarrow \langle \mu', val \rangle$. For some $\epsilon_a$ with $\semanticStoreEnv{\Xi_a}{\Delta}{\mu}{\epsilon_a}{\Gamma'}$ and some variable $x \in \dom{\epsilon_a}$, if $\textsf{unused}(\langle\mu, \epsilon\rangle, x, \epsilon_a(x))$, then $\mu(\epsilon_a(x)) = \mu'(\epsilon_a(x))$.
\end{lemma}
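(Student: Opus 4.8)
The plan is to prove \Cref{unused-exp} by induction on the derivation of the evaluation judgement $\langle \Delta, \mu, \epsilon, exp \rangle \Downarrow \langle \mu', val \rangle$, carried out as a mutual induction together with \Cref{unused-stmt} and the analogous lemma for declarations. To make the induction go through, I would strengthen the statement slightly so that each case also re-establishes the invariants it consumes: that $\semanticStoreEnv{\Xi_a}{\Delta}{\mu'}{\epsilon_a}{\Gamma'}$ still holds after evaluation (appealing to \Cref{lem:only-mem-superset} and \Cref{lem:store-typing-extended} once we know the store only grows), and that the $\textsf{unused}$ predicate is preserved, i.e.\ $\textsf{unused}(\langle\mu',\epsilon\rangle, x, \epsilon_a(x))$ (for statements/declarations, also with the updated environment). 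The central structural observation is that, among the expression forms of \Cref{P4IFC-syntax}, only the function call $exp_1(\overline{exp_2})$ can mutate the memory store: for booleans, integers, variables, binary operations, records, field projections, and array indexing, the \Petr\ semantics either leaves $\mu$ untouched or only threads it through evaluations of strict subexpressions, so the conclusion $\mu(\epsilon_a(x)) = \mu'(\epsilon_a(x))$ follows immediately by chaining the expression induction hypothesis over those subevaluations.

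The only real work is the function-call case. Write $l_a = \epsilon_a(x)$; note $l_a \in \dom{\mu}$ since $\semanticStoreEnv{\Xi_a}{\Delta}{\mu}{\epsilon_a}{\Gamma'}$, and the domain of the store only grows during evaluation. Evaluation proceeds in phases: (i) evaluate $exp_1$ to a closure $clos(\epsilon_c,\overline{d\ y{:}\ \type{\tau}{\chi}},\type{\tau_{ret}}{\chi_{ret}},stmt)$ and evaluate the arguments $\overline{exp_2}$, threading the store; (ii) allocate fresh locations for the parameters, forming a call environment $\epsilon_{call} = \epsilon_c[\overline{y_i \mapsto l_i}]$ over an extended store $\mu_{call}$; (iii) evaluate the body via $\langle \Delta, \mu_{call}, \epsilon_{call}, stmt \rangle \Downarrow \langle \mu_{body}, \epsilon_{body}, sig\rangle$; and (iv) read back the result. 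I would handle the phases in order. Phase (i) is covered by the expression induction hypothesis applied to each subevaluation, using the strengthened form to carry $\textsf{unused}$ forward across the intermediate stores. Phase (ii): the $l_i$ are fresh, hence distinct from $l_a$; and $\textsf{unused}(\langle\mu,\epsilon_c\rangle, x, l_a)$ holds because, since function types are not base types, every closure reachable during evaluation is bound directly to some variable $y \in \dom{\epsilon}$, so the closure clause of \Cref{def-unused} applied to $exp_1$ hands us exactly this; combined with stability of $\textsf{unused}$ under the (fresh-only) store growth from phase (i), we obtain $\textsf{unused}(\langle\mu_{call},\epsilon_{call}\rangle, x, l_a)$. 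Phase (iii): invoke \Cref{unused-stmt} on the body to get $\mu_{call}(l_a) = \mu_{body}(l_a)$. Composing the equalities across all phases yields $\mu(l_a) = \mu'(l_a)$.

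The main obstacle is precisely the bookkeeping in phase (ii): showing that $l_a$ remains ``unused'' once we descend under the called function's captured environment $\epsilon_c$. This rests on two facts I would isolate as auxiliary claims. First, the semantic-typing invariant on $\mu$ and $\epsilon$ guarantees that every closure (and table) stored in memory carries an environment whose domain is contained in the ambient environment and which is itself semantically well-typed; this keeps the recursive obligations of \Cref{def-unused} coherent and matches what the closure clause of $\textsf{unused}$ already propagates. Second, evaluation never re-points or deletes an existing location and extends any closure/table environment only with fresh locations; this ``monotonicity'' of the store and of reachable environments is what makes $\textsf{unused}$ stable across the intermediate evaluations in phases (i) and (iii), and it is the same fact the companion \Cref{unused-stmt} relies on when it observes that declarations introduce new locations into $\mu'$. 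Since table application ultimately dispatches to an action (itself a closure) with control-plane-supplied arguments that are ordinary expressions, the same argument covers it with no new ideas.
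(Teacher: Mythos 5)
Your proposal is correct and takes essentially the same route as the paper: a mutual induction on the evaluation derivation together with \Cref{unused-stmt} and \Cref{unused-decl}, strengthened to carry forward both $\semanticStoreEnv{\Xi_a}{\Delta}{\mu'}{\epsilon_a}{\Gamma'}$ and the preservation of $\textsf{unused}$ across intermediate stores, which are exactly the ``interesting bits'' the paper's (much terser) proof sketch identifies. Your elaboration of the function-call case supplies detail the paper omits but does not change the argument.
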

\begin{proof}
By induction on the evaluation derivation of $exp$. Involves mutual induction of \Cref{unused-exp}, \Cref{unused-stmt}, and \Cref{unused-decl}.
Some of the interesting bits include concluding $\semanticStoreEnv{\Xi}{\Delta}{\mu'}{\epsilon_a}{\Gamma'}$ (using the definition), $\textsc{unused}(\langle \mu', \epsilon \rangle, x, \epsilon_a(x))$.
\end{proof}
\begin{lemma}\label{unused-decl}
Suppose $\semanticStoreEnv{\Xi}{\Delta}{\mu}{\epsilon}{\Gamma}$ and $\langle \Delta, \mu, \epsilon, decl \rangle \Downarrow \langle \Delta', \mu', \epsilon', cont \rangle$. For some $\epsilon_a$ with $\semanticStoreEnv{\Xi_a}{\Delta}{\mu}{\epsilon_a}{\Gamma'}$ and some variable $x \in \dom{\epsilon_a}$, if $\textsf{unused}(\langle\mu, \epsilon\rangle, x, \epsilon_a(x))$, then $\mu(\epsilon_a(x)) = \mu'(\epsilon_a(x))$.
\end{lemma}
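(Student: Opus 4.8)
The plan is to prove \Cref{unused-decl} by induction on the evaluation derivation of $decl$, treating it as one clause of the simultaneous mutual induction that also yields \Cref{unused-exp} and \Cref{unused-stmt}. Among the three clauses this one is the lightest, because a declaration's only nested evaluation is of an initializer expression, which it delegates to \Cref{unused-exp}. The guiding intuition is that $\textsf{unused}(\langle\mu,\epsilon\rangle, x, \epsilon_a(x))$ says the location $\epsilon_a(x)$ is aliased neither by $\epsilon$ itself nor by any closure or table environment reachable from $\langle\mu,\epsilon\rangle$, whereas evaluating a declaration can only (i) grow $\Delta$, (ii) allocate fresh store locations, or (iii) evaluate a sub-expression within the current environment $\epsilon$ --- none of which can write to $\epsilon_a(x)$.

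Concretely, I would case on the form of $decl$. When $decl$ is a $typ\_decl$ ($\terminal{match\_kind}$ or $\terminal{typedef}$) we have $\mu' = \mu$, so the claim is immediate. When $decl$ is an uninitialized declaration $\type{\tau}{\chi}\ x$, the store becomes $\mu' = \mu[l \mapsto v]$ for a location $l$ fresh with respect to $\mu$ and a default value $v$; since $\semanticStoreEnv{\Xi_a}{\Delta}{\mu}{\epsilon_a}{\Gamma'}$ forces $\epsilon_a(x) \in \dom{\mu}$ while $l \notin \dom{\mu}$, we get $l \neq \epsilon_a(x)$ and hence $\mu'(\epsilon_a(x)) = \mu(\epsilon_a(x))$. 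Function and table declarations are handled the same way: each extends $\mu$ at a fresh location with a closure (respectively table) value capturing a sub-environment of $\epsilon$, so the fresh-location argument applies verbatim, and any action-argument expressions that the semantics chooses to evaluate eagerly are discharged exactly as in the initialized case. Finally, when $decl$ is $\type{\tau}{\chi}\ x := exp$, the semantics evaluates $\langle \Delta, \mu, \epsilon, exp \rangle \Downarrow \langle \mu'', val \rangle$ and then sets $\mu' = \mu''[l \mapsto val]$ with $l$ fresh with respect to $\mu''$. Here I would invoke \Cref{unused-exp}, whose hypotheses are precisely the ones we already have ($\semanticStoreEnv{\Xi}{\Delta}{\mu}{\epsilon}{\Gamma}$, the evaluation of $exp$, $\semanticStoreEnv{\Xi_a}{\Delta}{\mu}{\epsilon_a}{\Gamma'}$, $x \in \dom{\epsilon_a}$, and $\textsf{unused}(\langle\mu,\epsilon\rangle, x, \epsilon_a(x))$), to conclude $\mu(\epsilon_a(x)) = \mu''(\epsilon_a(x))$; in particular $\epsilon_a(x) \in \dom{\mu''}$, so $l \neq \epsilon_a(x)$, and therefore $\mu'(\epsilon_a(x)) = \mu''(\epsilon_a(x)) = \mu(\epsilon_a(x))$.

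The substantive work in this part of the mutual induction will sit in \Cref{unused-exp} and \Cref{unused-stmt}, where the function-call case requires showing that $\textsf{unused}$ propagates from the caller's state into the callee's captured environment --- which is exactly why $\textsf{unused}$ in \Cref{def-unused} is defined recursively through closure environments. For \Cref{unused-decl} itself, the only point that needs care is that freshness of the newly allocated location must be checked against the store at the moment of allocation, namely $\mu''$ rather than $\mu$; I would resolve this by first transporting membership of $\epsilon_a(x)$ forward using \Cref{unused-exp} (equivalently, by the expected store-monotonicity property $\dom{\mu} \subseteq \dom{\mu''}$ of Core P4 evaluation). I do not expect any obstacle specific to declarations beyond bookkeeping.
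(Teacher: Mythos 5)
Your proposal matches the paper's proof, which is itself only a one-line sketch: induction on the evaluation derivation of $decl$, carried out as one clause of the mutual induction with the expression and statement lemmas, with the store either unchanged, extended at a fresh location, or modified only through a sub-expression discharged by the expression lemma. Your case analysis and the freshness-versus-$\mu''$ bookkeeping are a reasonable and correct elaboration of exactly that argument.
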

\begin{proof}
By induction on the evaluation derivation of $decl$. Involves mutual induction of \Cref{unused-exp}, \Cref{unused-stmt}, and \Cref{unused-decl}.
Some of the interesting bits include concluding $\semanticStoreEnv{\Xi}{\Delta'}{\mu'}{\epsilon_a}{\Gamma'}$ (using the definition), $\textsc{unused}(\langle \mu', \epsilon' \rangle, x, \epsilon_a(x))$.
\end{proof}
\begin{lemma} \label{lem:pair-env-extend}
	Suppose $~\semanticBelowPCState{pc}{\Xi_a}{\Xi_b}{\Delta}{\mu_{a}}{\epsilon_{a1}}{\mu_{b}}{\epsilon_{b1}}{\Gamma_1}$, $~\semanticBelowPCState{pc}{\Xi_a}{\Xi_b}{\Delta}{\mu_{a}}{\epsilon_{a2}}{\mu_{b}}{\epsilon_{b2}}{\Gamma_2}$, where $\epsilon_{a2} = \{\overline{x \mapsto l_a}\}$ and $\epsilon_{b2} = \{\overline{x \mapsto l_b}\}$ and $\Gamma_2 = [\{\overline{x \mapsto \type{\tau}{\chi}}\}]$.
Then $~\semanticBelowPCState{pc}{\Xi_a}{\Xi_b}{\Delta}{\mu_{a}}{\epsilon_{a1}[\overline{x \mapsto l_a}]}{\mu_{b}}{\epsilon_{b1}[\overline{x \mapsto l_a}]}{\Gamma_1[\overline{x \mapsto \type{\tau}{\chi}}]}$.
\end{lemma}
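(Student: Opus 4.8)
The plan is to prove the lemma by directly unfolding both hypotheses through \Cref{def:mem-store-pair-semantic} (below-pc equivalence of a pair of stores and environments) and then re-establishing each of its five clauses for the combined environments $\epsilon_{a1}[\overline{x \mapsto l_a}]$, $\epsilon_{b1}[\overline{x \mapsto l_b}]$ and the combined context $\Gamma_1[\overline{x \mapsto \type{\tau}{\chi}}]$. Throughout I would case-split on whether a variable $y$ in the combined environment comes from $\epsilon_{\cdot 1}$ and is not shadowed by the new block $\overline{x}$, or lies in $\overline{x}$ (this also covers the shadowed case, whose binding and type are exactly those recorded by $\Gamma_2$ and $\epsilon_{\cdot 2}$): in the former case I pull the required fact from the first hypothesis, in the latter from the second. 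Since both hypotheses share the very same stores $\mu_a,\mu_b$ and store typings $\Xi_a,\Xi_b$, no reindexing of locations and no store-typing weakening is needed.

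For clause~1 I must show $\semanticStoreEnv{\Xi_a}{\Delta}{\mu_a}{\epsilon_{a1}[\overline{x \mapsto l_a}]}{\Gamma_1[\overline{x \mapsto \type{\tau}{\chi}}]}$ and symmetrically for $b$, which I unfold via \Cref{def:store-env-semantic typing}: the store-typing condition $\semanticStore{\Xi_a}{\Delta}{\mu_a}$ is literally given by the hypotheses; the environment-typing judgement is obtained by appending the bindings of $\epsilon_{a2}$ (typed correctly in $\Xi_a$ by the second hypothesis) on top of $\Xi_a \vdash \epsilon_{a1} : \Gamma_1$, following the inductive shape of \Cref{def:env-typing} and, where a store-typing on the left needs to be transported, \Cref{lem-env-subtyping}; the ``$\epsilon(y)\in\dom{\mu_a}$'' requirement holds variable-by-variable from the appropriate hypothesis; and the closure/table clauses of \Cref{def:store-env-semantic typing} for old variables come from the first hypothesis and for new variables from the second, with the side condition $\dom{\epsilon_c}\subseteq\dom{\epsilon}$ preserved because the combined environment only \emph{extends} $\epsilon_{\cdot 1}$ (resp.\ $\epsilon_{\cdot 2}$) and the semantic typing $\semanticStoreEnv{\Xi_a}{\Delta}{\mu_a}{\epsilon_c}{\Gamma_{clos}}$ of a captured environment is self-contained. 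Clause~2 (domain equality) follows from $\dom{\epsilon_{a1}}=\dom{\epsilon_{b1}}$ together with $\dom{\epsilon_{a2}}=\dom{\epsilon_{b2}}=\{\overline{x}\}$. Clause~3 ($\NIval$ pointwise) is the per-variable case split above, using $\Gamma_1(y)$ for unshadowed old variables and the corresponding $\type{\tau_i}{\chi_i}$ for the rest. Clauses~4 and~5 (pair-level closure/table conditions) are handled exactly like the closure/table part of clause~1, again reading off from whichever hypothesis supplies the variable in question.

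The step I expect to need the most care is the interaction of shadowing with the closure and table clauses: when some $x_i$ already occurs in $\dom{\epsilon_{\cdot 1}}$, the combined environment must use the $\epsilon_{\cdot 2}$ binding and the $\Gamma_2$ type, which is exactly what the notation $\epsilon_{\cdot 1}[\overline{x\mapsto l}]$ and $\Gamma_1[\overline{x\mapsto\type{\tau}{\chi}}]$ prescribes, so clause~3 and the closure clauses must be argued against the surviving binding only, and the now-inaccessible old binding simply discarded; because $\mu_a,\mu_b$ are untouched this causes no clash. The remaining content is monotonicity bookkeeping — closure/table invariants and domain containments survive environment extension — which is of exactly the kind already carried out in \Cref{lem:only-mem-superset}, and (via \Cref{aux}) requires no induction beyond the structural recursion implicit in \Cref{def:env-typing} and \Cref{def:store-env-semantic typing}.
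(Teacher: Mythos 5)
Your proposal is correct and matches the paper's argument: the paper's (much terser) proof likewise establishes each clause of the pair-equivalence definition by reducing every variable of the combined environment to either $\dom{\epsilon_{a1}}$ or $\dom{\epsilon_{a2}}$ and invoking the corresponding hypothesis. Your added care about shadowing and the unchanged stores $\mu_a,\mu_b,\Xi_a,\Xi_b$ only fills in details the paper leaves implicit.
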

\begin{proof}
To prove all the requirements of the \Cref{def:mem-store-pair-semantic}, we use the fact that independently all $y \in \dom{\epsilon_{a1}} = \dom{\epsilon_{b1}}$ and $y \in \dom{\epsilon_{a2}} = \dom{\epsilon_{b2}}$ satisfy the required properties. Now, in the extended environment all $y \in \dom{\epsilon_{a1}[\overline{x \mapsto l_a}]} = \dom{\epsilon_{b1}[\overline{x \mapsto l_b}]}$, will also satisfy the properties by reducing to either an element in $\dom{\epsilon_{a1}}$ or $\dom{\epsilon_{a2}}$
\end{proof}

\section{L-value Evaluation Rules}
For a term to be a well-formed l-value, the directionality of the term should be inout. Therefore, only the following typing judgements can be used in the derivation of a well-formed l-value:
\paragraph*{Valid L-Value Expression Typing Rules}
\begin{list}{•}{}
\item T-Var
\item T-Index
\item T-MemHdr
\item T-MemRec
\end{list}
Therefore, l-value is given by the following grammar:
\begin{align*}
base &::= x  \\
lval &::= base
            \OR lval.f_i
            \OR lval[n]
\end{align*}
\paragraph*{\textsf{lval_base}}
Note that only $base \in \epsilon$, where $\epsilon$ is the environment in which the l-value is evaluated.
Other l-values like the ones corresponding to a header field or an array index do not map to a location in the environment.
Instead, it is the header variable or the array variable that has an entry in the environment.
For instance, to write to a header field $lval.f_i$ where $lval$ is the l-value of the header that needs to be updated, the value of the header variable given by $lval$ is updated and there is no variable $lval.f_i$ in $\epsilon$.
The value at location pointed by $\epsilon(lval)$ is then overwritten with the new header value.
Therefore, we define a function $\textsc{lval_base}(lval)$ to return the l-value of the base variable that will be touched while writing to the $lval$.
This is inductively defined using:
\begin{align*}
  \textsc{lval_base}(base) &= base \\
  \textsc{lval_base}(lval.f_i) &= \textsc{lval_base}(lval) \\
  \textsc{lval_base}(lval[n]) &= \textsc{lval_base}(lval)
\end{align*}

\subsection{L-value Equality Relation} \label{def:lval-eq}
We inductively define an equality relation on l-value expressions as follows:
\[
x =_{lval} x
\]
\[
\inferrule*[]
{
lval_{a} =_{lval} lval_{b}
}
{
lval_{a}.f =_{lval} lval_{b}.f
}
\]
\[
\inferrule*[]
{
lval_{a} =_{lval} lval_{b} \qquad n_{a} = n_{b} \qquad n_{a}: int
}
{
lval_{a}[n_{a}] =_{lval} lval_{b}[n_{b}]
}
\]

\begin{definition} \label{lem:lval-eval}
For any security label $l$, $\NIlvalEval[pc]{\Gamma}{\Delta}{lval\_exp}{\type{\tau}{\chi}}$ holds for any $\Xi_a$, $\Xi_b$, $\mu_{a}$, $\mu_{b}$, $\epsilon_{a}$, $\epsilon_{b}$, $\mu_a'$, $\mu_b'$, if
\begin{enumerate}
\item $\semanticBelowPCState{l}{\Xi_a}{\Xi_b}{\Delta}{\mu_a}{\epsilon_a}{\mu_b}{\epsilon_b}{\Gamma}$,
\item $\evalstolval{\config[lval\_exp]{\mathcal{C};\Delta}{\mu_{a}}{\epsilon_{a}}}{\configval{\mu_{a}'}{lval_a}}$ and $\evalstolval{\config[lval\_exp]{\mathcal{C};\Delta}{\mu_{b}}{\epsilon_{b}}}{\configval{\mu_{b}'}{lval_b}}$
\end{enumerate}
then there exists some $\Xi_a'$, $\Xi_b'$ such that
\begin{enumerate}
\item $\Xi_a \subseteq \Xi_a'$, $\Xi_b \subseteq \Xi_b'$, $\dom{\mu_a} \subseteq \dom{\mu_a'}$ and $\dom{\mu_b}\subseteq \dom{\mu_b'}$,
\item $\semanticBelowPCState{l}{\Xi_{a}'}{\Xi_{b}'}{\Delta}{\mu_{a}'}{\epsilon_{a}}{\mu_{b}'}{\epsilon_{b}}{\Gamma}$,
\item For any $l_a \in \dom{\mu_a}$ such that $\ordinaryTyping[]{\Xi_a}{\Delta}{\mu_a(l_a)}{\tau_{clos}}$, where $\tau_{clos} \in \{\tau_{fn}, \tau_{tbl}\}$, then $\mu_a'(l_a) = \mu_a(l_a)$. Similarly for any $l_b \in \dom{\mu_b}$ such that $\ordinaryTyping[]{\Xi_b}{\Delta}{\mu_b(l_b)}{\tau_{clos}}$, where $\tau_{clos} \in \{\tau_{fn}, \tau_{tbl}\}$, then $\mu_b'(l_b) = \mu_b(l_b)$,
\item if $\chi \sqsubseteq l$, then $\lvalEquality{lval_a}{lval_b}$,
\item $\textsc{lval_base}(lval_a) \in \dom{\epsilon_a}$, $\textsc{lval_base}(lval_b) \in \dom{\epsilon_b}$ and $\textsc{lval_base}(lval_a) = \textsc{lval_base}(lval_b)$,
\item $\ordinaryTyping[pc]{\Gamma}{\Delta}{lval_a}{\type{\tau}{\chi}}$ and $\ordinaryTyping[pc]{\Gamma}{\Delta}{lval_b}{\type{\tau}{\chi}}$,
\item for any $l_a' \in \dom{\mu_{a}}$ and $l_b' \in \dom{\mu_{b}}$ such that $\ordinaryTyping[]{\Xi_{a}}{\Delta}{\mu_{a}(l_a')}{\type{\tau}{\chi}}$ and $\ordinaryTyping[]{\Xi_{b}}{\Delta}{\mu_{b}(l_b')}{\type{\tau}{\chi}}$ and $pc \nsqsubseteq \chi$, we have $\mu_{a}(l_a') = \mu_{a1}(l_a')$ and $\mu_{b}(l_b') = \mu_{b1}(l_b')$
\end{enumerate}
\end{definition}

\begin{lemma} \label{lem-lval-eval}
For any security label $l$, if $\ordinaryTyping[pc]{\Gamma}{\Delta}{lval\_exp}{\type{\tau}{\chi}}$, then $\NIlvalEval[pc]{\Gamma}{\Delta}{lval\_exp}{\type{\tau}{\chi}}$.
\end{lemma}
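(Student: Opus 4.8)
The plan is to prove \Cref{lem-lval-eval} by induction on the derivation of $\ordinaryTyping[pc]{\Gamma}{\Delta}{lval\_exp}{\type{\tau}{\chi}}$. Since a well-typed $lval\_exp$ must have direction $inout$, its derivation can end only in \textsc{T-Var}, \textsc{T-MemHdr}, \textsc{T-MemRec}, or \textsc{T-Index}: the rule \textsc{T-SubType-In} is excluded because it yields an $in$-directed expression, and \textsc{T-SubType-PC}, which preserves direction, is dispatched by invoking the induction hypothesis at the larger $pc'$ and noting that $pc \sqsubseteq pc'$ makes the write-bounding clause transfer downward (if $pc \nsqsubseteq \chi$ then also $pc' \nsqsubseteq \chi$). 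This is a self-contained induction, not part of the mutual induction with the expression and statement lemmas, because every proper subterm of an l-value is again an l-value (the index in $lval[n]$ is a literal, not an arbitrary expression). The organizing observation is that, by the restricted grammar $lval ::= x \mid lval.f \mid lval[n]$, an l-value is a purely syntactic access path whose only numeric components are literal constants; evaluating it therefore never evaluates a nested general expression, never invokes a function, and never overwrites an existing memory location. Hence in every case I take $\Xi_a' = \Xi_a$ and $\Xi_b' = \Xi_b$ (or extend them with fresh locations only, handled exactly as in the expression lemma via \Cref{lem:store-typing-extended}, \Cref{lem:only-mem-superset}, and \Cref{lem-ni-val-subtype}), which immediately discharges the write-bounding clause, the two closure-preservation clauses, the store-growth clause, and preservation of below-$l$ equivalence of the store--environment pairs.

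The substantive content lies in the remaining clauses of \Cref{lem:lval-eval}: that the two computed l-values are equal under $=_{lval}$ when $\chi \sqsubseteq l$, that their common base variable lies in both environments, and that they re-typecheck at $\type{\tau}{\chi}$. For \textsc{T-Var} with $lval\_exp = x$, both executions return the l-value $x$; $x =_{lval} x$ holds by definition (so the equality clause is in fact unconditional), the base is $x$, and $x \in \dom{\Gamma}$ together with well-typedness of the environments gives $x \in \dom{\epsilon_a} = \dom{\epsilon_b}$, while re-typing is just \textsc{T-Var}. For \textsc{T-MemHdr} and \textsc{T-MemRec}, $lval\_exp = exp.f_i$ with $exp$ an l-value expression of header (resp.\ record) type whose outer label is $\bot$; applying the induction hypothesis to $exp$ returns l-values $lval_a^{exp}$ and $lval_b^{exp}$, and because $\bot \sqsubseteq l$ always, the equality clause of the hypothesis yields $lval_a^{exp} =_{lval} lval_b^{exp}$ \emph{unconditionally}. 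The field-projection case of the $=_{lval}$ relation then gives $lval_a^{exp}.f_i =_{lval} lval_b^{exp}.f_i$, so the equality clause holds for any field label $\chi_i$; the base is inherited unchanged from $exp$, and re-typing follows by applying \textsc{T-MemHdr}/\textsc{T-MemRec} to the hypothesis's typing of $lval_a^{exp}$. The case \textsc{T-Index} with $lval\_exp = exp_1[n]$ is identical, using that array types also carry label $\bot$ (so the sub-l-value equality is again unconditional) and that the literal index $n$ satisfies the side conditions $n = n$, $n : int$ of the index rule of $=_{lval}$.

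I expect the main obstacle to be bookkeeping rather than mathematics: \Cref{lem:lval-eval} bundles eight conditions, and each inductive step must thread the hypothesis through all of them while keeping the store-typings, environments, and value-typing invariants aligned — but the auxiliary lemmas on store extension and monotonicity already do this heavy lifting. The one genuinely load-bearing fact is that header, record, and array types always carry the $\bot$ security label; this is exactly what makes a sub-l-value's path equality hold unconditionally and so lets equality survive projection and indexing irrespective of the accessed field's own label, which is precisely what guarantees that writing to a low field can never betray a high value by taking a divergent access path. I would therefore state and check this step with particular care.
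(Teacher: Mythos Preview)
Your proposal contains a genuine gap in the \textsc{T-Index} case, stemming from a confusion between the grammar of \emph{output} l-values and that of \emph{input} l-value expressions. The grammar $lval ::= x \mid lval.f \mid lval[n]$ describes the result of $\Downarrow_{lval}$, but the input $lval\_exp$ typed by \textsc{T-Index} has the form $exp_1[exp_2]$ where $exp_2$ is an arbitrary expression of type $\langle bit\langle 32\rangle,\chi_2\rangle$. The l-value evaluation rule first reduces $exp_1$ to an l-value and then evaluates $exp_2$ as an \emph{ordinary expression} (via $\Downarrow$, not $\Downarrow_{lval}$) to obtain the integer $n_a$ (resp.\ $n_b$). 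This evaluation can invoke functions and mutate the store, so your claim that l-value evaluation ``never evaluates a nested general expression, never invokes a function, and never overwrites an existing memory location'' is false, and with it the claim that you may take $\Xi_a'=\Xi_a$, $\Xi_b'=\Xi_b$ and discharge the write-bounding and closure-preservation clauses trivially.

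Consequently the lemma is \emph{not} self-contained: to handle the evaluation of $exp_2$ you must appeal to non-interference for expressions (\Cref{ni-exp}) on the strictly smaller derivation of $exp_2$, which is exactly how the paper proceeds---this lemma participates in the same mutual induction as the main theorems. Moreover, the equality clause $lval_a[n_a] =_{lval} lval_b[n_b]$ requires $n_a = n_b$, and this is \emph{not} unconditional: it follows only when $\chi_2 \sqsubseteq l$, which in turn is obtained from the hypothesis $\chi_1 \sqsubseteq l$ together with the premise $\chi_2 \sqsubseteq \chi_1$ of \textsc{T-Index}. Your argument that ``the literal index $n$ satisfies $n=n$'' simply does not apply. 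The \textsc{T-Var}, \textsc{T-MemRec}, and \textsc{T-MemHdr} cases are handled correctly in your sketch, and your observation that record/header/array types carry label $\bot$ (making the sub-l-value equality unconditional there) is indeed the key point for those cases; but the index case needs to be redone.
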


\begin{proof}
We prove this by induction on the typing derivation of $\ordinaryTyping[pc]{\Gamma}{\Delta}{lval\_exp}{\type{\tau}{\chi}}$, where we choose the last typing rule to be the different cases.
\begin{enumerate}
\item \textbf{T-Var}

Consider the case where the last typing rule in the derivation of l-value expression is T-Var
\[   \inferrule*[right=T-Var]
{   x \in \dom{\Gamma} \qquad
\Gamma(x) = \type{\tau}{\chi}
}
{
\ordinaryTyping[pc]{\Gamma}{\Delta}{x}{\type{\tau}{\chi} \text{~goes inout}}
}
\]
then we need to show that for any $\Xi_a$, $\Xi_b$, $\mu_{a}$, $\mu_{b}$, $\epsilon_{a}$, $\epsilon_{b}$, $\mu_a'$, $\mu_b'$, if
\begin{equation} \label{leval-var-hyp}
\semanticBelowPCState{l}{\Xi_a}{\Xi_b}{\Delta}{\mu_a}{\epsilon_a}{\mu_b}{\epsilon_b}{\Gamma}
\end{equation}
and $x$ is evaluated to get its l-value in two initial configuration $\langle \mu_a, \epsilon_a \rangle$ and $\langle \mu_b, \epsilon_b \rangle$ as follows:
\begin{mathpar}
\inferrule
{~}
{ \langle \mathcal{C}, \Delta_{}, \mu_{a}, \epsilon_{a}, x \rangle \ \Downarrow_{lval}  \langle \mu_{a}, x\rangle
}

\inferrule
{~}
{ \langle \mathcal{C}, \Delta_{}, \mu_{b}, \epsilon_{b}, x \rangle \ \Downarrow_{lval}  \langle \mu_{b}, x \rangle
}
\end{mathpar}
then there exists some $\Xi_a'$ and $\Xi_b'$ satisfying the following properties:
\begin{enumerate}
\item \label{leval-var-tp-1} $\Xi_a \subseteq \Xi_a'$, $\Xi_b \subseteq \Xi_b'$, $\dom{\mu_a} \subseteq \dom{\mu_a'}$, $\dom{\mu_b}\subseteq \dom{\mu_b'}$, and $\semanticBelowPCState{l}{\Xi_{a}'}{\Xi_{b}'}{\Delta}{\mu_{a}'}{\epsilon_{a}}{\mu_{b}'}{\epsilon_{b}}{\Gamma}$, where in this case $\mu_a' = \mu_a$ and $\mu_b' = \mu_b$,
\item \label{leval-var-tp-3} For any $l_a \in \dom{\mu_a}$ such that $\ordinaryTyping[]{\Xi_a}{\Delta}{\mu_a(l_a)}{\tau_{clos}}$, where $\tau_{clos} \in \{\tau_{fn}, \tau_{tbl}\}$, then $\mu_a'(l_a) = \mu_a(l_a)$. Similarly for any $l_b \in \dom{\mu_b}$ such that $\ordinaryTyping[]{\Xi_b}{\Delta}{\mu_b(l_b)}{\tau_{clos}}$, where $\tau_{clos} \in \{\tau_{fn}, \tau_{tbl}\}$, then $\mu_b'(l_b) = \mu_b(l_b)$,
\item  \label{leval-var-tp-2} $\textsc{lval_base}(lval_a) \in \dom{\epsilon_a}$, $\textsc{lval_base}(lval_b) \in \dom{\epsilon_b}$ and $\textsc{lval_base}(lval_a) = \textsc{lval_base}(lval_b)$.
\item \label{leval-var-tp-4} $\ordinaryTyping[pc]{\Gamma}{\Delta}{lval_a}{\type{\tau}{\chi}}$ and $\ordinaryTyping[pc]{\Gamma}{\Delta}{lval_b}{\type{\tau}{\chi}}$,
\item \label{leval-var-tp-5} for any $l_a' \in \dom{\mu_{a}}$ and $l_b' \in \dom{\mu_{b}}$ such that $\ordinaryTyping[]{\Xi_{a}}{\Delta}{\mu_{a}(l_a')}{\type{\tau}{\chi}}$ and $\ordinaryTyping[]{\Xi_{b}}{\Delta}{\mu_{b}(l_b')}{\type{\tau}{\chi}}$ and $pc \nsqsubseteq \chi$, we have $\mu_{a}(l_a') = \mu_{a}'(l_a')$ and $\mu_{b}(l_b') = \mu_{b}'(l_b')$.
\end{enumerate}
Showing that \Cref{leval-var-tp-1} holds for $\Xi_a' = \Xi_a$, $\Xi_b' = \Xi_b$, $\mu_a' = \mu_a$, and $\mu_b' = \mu_b$ is same as proving \Cref{leval-var-hyp}, which is already given.
\Cref{leval-var-tp-3} is trivial as the memory stores do not change.
So is \Cref{leval-var-tp-5}.
\Cref{leval-var-tp-2} is immediate since $lval_a =_{lval} x =_{lval} lval_b$.
Additionally, $x \in \dom{\epsilon_a} = \dom{\epsilon_b}$ since $x \in \dom{\Gamma}$, $\Xi_a \vdash \epsilon_a: \Gamma$, and $\Xi_b \vdash \epsilon_b: \Gamma$.
\Cref{leval-var-tp-4} follows from $\ordinaryTyping[pc]{\Gamma}{\Delta}{x}{\type{\tau}{\chi}}$.

\item \textbf{T-MemRec}

Consider the case where the last typing rule in the derivation of l-value expression is T-MemRec
\[  \inferrule*[right=T-MemRec]
{
\ordinaryTyping[pc]{\Gamma}{\Delta}{exp}{\type{\{ \overline{f_i: \langle \tau_i, \chi_i \rangle} \}}{\bot}}~goes~d
}
{
\ordinaryTyping[pc]{\Gamma}{\Delta}{exp.f_{i}}{\type{\tau_{i}}{\chi_{i}}~ goes~ d}
}
\]
then we need to show that for any $\Xi_a$, $\Xi_b$, $\mu_{a}$, $\mu_{b}$, $\epsilon_{a}$, $\epsilon_{b}$, $\mu_a'$, $\mu_b'$, if
\[
\semanticBelowPCState{l}{\Xi_a}{\Xi_b}{\Delta}{\mu_a}{\epsilon_a}{\mu_b}{\epsilon_b}{\Gamma}
\]
and $exp.f_i$ is evaluated to get its l-value in two initial configuration $\langle \mu_a, \epsilon_a \rangle$ and $\langle \mu_b, \epsilon_b \rangle$ as follows:
\begin{mathpar}
\inferrule*[]
{
 \langle \mathcal{C}, \Delta_{}, \mu_{a}, \epsilon_{a}, exp \rangle \Downarrow_{lval}
\langle \mu_{a}', lval_{a}\rangle
}
{
 \langle \mathcal{C}, \Delta_{}, \mu_{a}, \epsilon_{a}, exp.f_i \rangle \Downarrow_{lval}
\langle \mu_{a}', lval_{a}.f_i\rangle
}

\inferrule*[]
{
 \langle \mathcal{C}, \Delta_{}, \mu_{b}, \epsilon_{b}, exp \rangle \Downarrow_{lval}
\langle \mu_{b}', lval_{b} \rangle
}
{
 \langle \mathcal{C}, \Delta_{}, \mu_{b}, \epsilon_{b}, exp.f_i \rangle \Downarrow_{lval}   \langle \mu_{b}', lval_{b}.f_i \rangle
}
\end{mathpar}
then there exists some $\Xi_a'$ and $\Xi_b'$ satisfying the following properties:
\begin{enumerate}
\item \label{leval-rec-tp-1} $\Xi_a \subseteq \Xi_a'$, $\Xi_b \subseteq \Xi_b'$, $\dom{\mu_a} \subseteq \dom{\mu_a'}$ and $\dom{\mu_b}\subseteq \dom{\mu_b'}$.
Also, $\semanticBelowPCState{l}{\Xi_{a}'}{\Xi_{b}'}{\Delta}{\mu_{a}'}{\epsilon_{a}}{\mu_{b}'}{\epsilon_{b}}{\Gamma}$.
\item \label{leval-rec-tp-3} For any $l_a \in \dom{\mu_a}$ such that $\ordinaryTyping[]{\Xi_a}{\Delta}{\mu_a(l_a)}{\tau_{clos}}$, where $\tau_{clos} \in \{\tau_{fn}, \tau_{tbl}\}$, then $\mu_a'(l_a) = \mu_a(l_a)$. Similarly for any $l_b \in \dom{\mu_b}$ such that $\ordinaryTyping[]{\Xi_b}{\Delta}{\mu_b(l_b)}{\tau_{clos}}$, where $\tau_{clos} \in \{\tau_{fn}, \tau_{tbl}\}$, then $\mu_b'(l_b) = \mu_b(l_b)$,
\item  \label{leval-rec-tp-2}$\textsc{lval_base}(lval_a.f_i) = \textsc{lval_base}(lval_b.f_i)$,  $\textsc{lval_base}(lval_a.f_i) = \textsc{lval_base}(lval_a) \in \dom{\epsilon_a}$ and $\textsc{lval_base}(lval_b.f_i) = \textsc{lval_base}(lval_b) \in \dom{\epsilon_b}$,
\item \label{leval-rec-tp-4} $\ordinaryTyping[pc]{\Gamma}{\Delta}{lval_a.f_i}{\type{\tau_i}{\chi_i}}$ and $\ordinaryTyping[pc]{\Gamma}{\Delta}{lval_b.f_i}{\type{\tau_i}{\chi_i}}$,
\item \label{leval-rec-tp-5} if $\chi \sqsubseteq l$, then $\lvalEquality{lval_a.f_i}{lval_b.f_i}$.
\item \label{leval-rec-tp-6} for any $l_a' \in \dom{\mu_{a}}$ and $l_b' \in \dom{\mu_{b}}$ such that $\ordinaryTyping[]{\Xi_{a}}{\Delta}{\mu_{a}(l_a')}{\type{\tau}{\chi}}$ and $\ordinaryTyping[]{\Xi_{b}}{\Delta}{\mu_{b}(l_b')}{\type{\tau}{\chi}}$ and $pc \nsqsubseteq \chi$, we have $\mu_{a}(l_a') = \mu_{a}'(l_a')$ and $\mu_{b}(l_b') = \mu_{b}'(l_b')$
\end{enumerate}

By applying induction hypothesis on the typing derivation of $exp$, we conclude $\NIlvalEval[pc]{\Gamma}{\Delta}{exp}{\type{\{ \overline{f_i: \langle \tau_i, \chi_i \rangle} \}}{\bot}}$. Since $exp$ is evaluated to get its l-value in two initial configuration $\langle \mu_a, \epsilon_a \rangle$ and $\langle \mu_b, \epsilon_b \rangle$, where $\semanticBelowPCState{l}{\Xi_a}{\Xi_b}{\Delta}{\mu_a}{\epsilon_a}{\mu_b}{\epsilon_b}{\Gamma}$, there exists some $\Xi_{a1}'$ and $\Xi_{b1}'$ satisfying $\Xi_a \subseteq \Xi_{a1}'$ and $\Xi_b \subseteq \Xi_{b1}'$, $\dom{\mu_{a}} \subseteq \dom{\mu_a'}$, $\dom{\mu_{b}} \subseteq \dom{\mu_b'}$  and the following:
\begin{equation} \label{leval-rec-in-1-1}
\semanticBelowPCState{l}{\Xi_{a1}'}{\Xi_{b1}'}{\Delta}{\mu_{a1}}{\epsilon_{a}}{\mu_{b1}}{\epsilon_{b}}{\Gamma},
\end{equation}
\begin{equation} \label{leval-rec-in-1-2}
\textsc{lval_base}(lval_a) = \textsc{lval_base}(lval_b) \text{,}~ \textsc{lval_base}(lval_a) \in \dom{\epsilon_a}~\text{and}~ \textsc{lval_base}(lval_b) \in \dom{\epsilon_b}
\end{equation}
\begin{equation} \label{leval-rec-in-1-3}
\ordinaryTyping[pc]{\Gamma}{\Delta}{lval_a}{\type{\{ \overline{f_i: \langle \tau_i, \chi_i \rangle} \}}{\bot}}~ \text{and}~ \ordinaryTyping[pc]{\Gamma}{\Delta}{lval_b}{\type{\{ \overline{f_i: \langle \tau_i, \chi_i \rangle} \}}{\bot}}
\end{equation}
\begin{equation} \label{leval-rec-in-1-4}
\text{if}~ \bot \sqsubseteq l, ~\text{then}~ \lvalEquality{lval_a}{lval_b}.
\end{equation}
and For any $l_a \in \dom{\mu_a}$ such that $\ordinaryTyping[]{\Xi_a}{\Delta}{\mu_a(l_a)}{\tau_{clos}}$, where $\tau_{clos} \in \{\tau_{fn}, \tau_{tbl}\}$, then $\mu_a'(l_a) = \mu_a(l_a)$. Similarly for any $l_b \in \dom{\mu_b}$ such that $\ordinaryTyping[]{\Xi_b}{\Delta}{\mu_b(l_b)}{\tau_{clos}}$, where $\tau_{clos} \in \{\tau_{fn}, \tau_{tbl}\}$, then $\mu_b'(l_b) = \mu_b(l_b)$. With this we have shown \Cref{leval-rec-tp-3}.

Also, for any $l_a' \in \dom{\mu_{a}}$ and $l_b' \in \dom{\mu_{b}}$ such that $\ordinaryTyping[]{\Xi_{a}}{\Delta}{\mu_{a}(l_a')}{\type{\tau}{\chi}}$ and $\ordinaryTyping[]{\Xi_{b}}{\Delta}{\mu_{b}(l_b')}{\type{\tau}{\chi}}$ and $pc \nsqsubseteq \chi$, we have $\mu_{a}(l_a') = \mu_{a1}(l_a')$ and $\mu_{b}(l_b') = \mu_{b1}(l_b')$. This proves \Cref{leval-rec-tp-6}.
\Cref{leval-rec-in-1-1} proves \Cref{leval-rec-tp-1}.
Proof of \Cref{leval-rec-tp-2} follows from the definition of \textsc{lval_base} and \Cref{leval-rec-in-1-2}.
Using \Cref{leval-rec-in-1-3} and T-MemRec we conclude \Cref{leval-rec-tp-4}.
\Cref{leval-rec-in-1-4} with the definition \Cref{def:lval-eq} proves \Cref{leval-rec-tp-5}. Note that we have proved that $\lvalEquality{lval_a.f_i}{lval_b.f_i}$.
\item \textbf{T-MemHdr}

Consider the case where the last typing rule in the derivation of l-value expression is T-MemHdr
\[  \inferrule*[right=T-MemHdr]
{
\ordinaryTyping[pc]{\Gamma}{\Delta}{exp} {\type{header \{ \overline{f_i: \langle \tau_i, \chi_i \rangle} \}}{\bot}~ goes~ d}
}
{
\ordinaryTyping[pc]{\Gamma}{\Delta}{exp.f_{i}}{\type{\tau_{i}}{\chi_{i}}~goes~ d}
}
\]
then showing all the required properties for an evaluation rule as follows is similar to the T-MemRec case.
\begin{mathpar}
\inferrule*[]
{
 \langle \mathcal{C}, \Delta_{}, \mu_{a}, \epsilon_{a}, exp \rangle \Downarrow_{lval}
\langle \mu_{a1}, lval_{a} \rangle
}
{
 \langle \mathcal{C}, \Delta_{}, \mu_{a}, \epsilon_{a}, exp.f \rangle \Downarrow_{lval}
\langle \mu_{a1}, lval_{a}.f \rangle
}

\inferrule*[]
{
 \langle \mathcal{C}, \Delta_{}, \mu_{b}, \epsilon_{b}, exp \rangle \Downarrow_{lval}
\langle \mu_{b1}, lval_{b} \rangle
}
{
 \langle \mathcal{C}, \Delta_{}, \mu_{b}, \epsilon_{b}, exp.f \rangle \Downarrow_{lval}   \langle \mu_{b1}, lval_{b}.f \rangle
}
\end{mathpar}

\item \textbf{T-Index}

Consider the case where the last typing rule in the derivation of l-value expression is T-Index
\[  \inferrule*[right=T-Index]
{
\ordinaryTyping[pc]{\Gamma}{\Delta}{exp_{1}}{\type{\type{\tau}{\chi_1}[n]}{\bot}~goes~ d} \\\\
\ordinaryTyping[pc]{\Gamma}{\Delta}{exp_{2}}{\type{bit \langle 32 \rangle}{\chi_2}} \\\\
\chi_2 \sqsubseteq \chi_1
}
{
\ordinaryTyping[pc]{\Gamma}{\Delta} {exp_{1}[exp_{2}]}{\type{\tau}{\chi_1}~ goes~ d }
}
\]
then we need to show that for any $\Xi_a$, $\Xi_b$, $\mu_{a}$, $\mu_{b}$, $\epsilon_{a}$, $\epsilon_{b}$, $\mu_a'$, $\mu_b'$, if
\[
\semanticBelowPCState{l}{\Xi_a}{\Xi_b}{\Delta}{\mu_a}{\epsilon_a}{\mu_b}{\epsilon_b}{\Gamma}
\]
and $exp_{1}[exp_{2}]$ is evaluated to get its l-value in two initial configuration $\langle \mu_a, \epsilon_a \rangle$ and $\langle \mu_b, \epsilon_b \rangle$ as follows:

\begin{mathpar}
\inferrule*[]
{
 \langle \mathcal{C}, \Delta_{}, \mu_{a}, \epsilon_{a}, exp_{1} \rangle \Downarrow_{lval}  \langle \mu_{a1}, lval_{a} \rangle \\
 \langle \mathcal{C}, \Delta_{}, \mu_{a1}, \epsilon_{a}, exp_{2} \rangle \Downarrow  \langle \mu_{a2}, n_{a} \rangle
}
{
 \langle \mathcal{C}, \Delta_{}, \mu_{a}, \epsilon_{a}, exp_{1}[exp_{2}] \rangle  \Downarrow_{lval}  \langle \mu_{a2}, lval_{a}[n_{a}] \rangle
}

\inferrule*[]
{
 \langle \mathcal{C}, \Delta_{}, \mu_{b}, \epsilon_{b}, exp_{1} \rangle \Downarrow_{lval}  \langle \mu_{b1}, lval_{b} \rangle \\
 \langle \mathcal{C}, \Delta_{}, \mu_{b1}, \epsilon_{b}, exp_{2} \rangle \Downarrow  \langle \mu_{b2}, n_{b} \rangle
}
{
 \langle \mathcal{C}, \Delta_{}, \mu_{b}, \epsilon_{b}, exp_{1}[exp_{2}] \rangle  \Downarrow_{lval}  \langle \mu_{b2}, lval_{b}[n_{b}]\rangle
}
\end{mathpar}
then there exists some $\Xi_a'$ and $\Xi_b'$ satisfying the following properties:
\begin{enumerate}
\item \label{leval-index-tp-1} $\Xi_a \subseteq \Xi_a'$, $\Xi_b \subseteq \Xi_b'$, $\dom{\mu_a} \subseteq \dom{\mu_a'}$ and $\dom{\mu_b}\subseteq \dom{\mu_b'}$, where $\mu_a' = \mu_{a2}$ and $\mu_b' = \mu_{b2}$.
Also, $\semanticBelowPCState{l}{\Xi_{a}'}{\Xi_{b}'}{\Delta}{\mu_{a}'}{\epsilon_{a}}{\mu_{b}'}{\epsilon_{b}}{\Gamma}$.
\item \label{leval-index-tp-3} For any $l_a \in \dom{\mu_a}$ such that $\ordinaryTyping[]{\Xi_a}{\Delta}{\mu_a(l_a)}{\tau_{clos}}$, where $\tau_{clos} \in \{\tau_{fn}, \tau_{tbl}\}$, then $\mu_a'(l_a) = \mu_a(l_a)$. Similarly for any $l_b \in \dom{\mu_b}$ such that $\ordinaryTyping[]{\Xi_b}{\Delta}{\mu_b(l_b)}{\tau_{clos}}$, where $\tau_{clos} \in \{\tau_{fn}, \tau_{tbl}\}$, then $\mu_b'(l_b) = \mu_b(l_b)$,
\item  \label{leval-index-tp-2}$\textsc{lval_base}(lval_a[n_a]) = \textsc{lval_base}(lval_b[n_b])$, $\textsc{lval_base}(lval_{a}[n_{a}]) \in \dom{\epsilon_a}$ and $\textsc{lval_base}(lval_{b}[n_{b}]) \in \dom{\epsilon_b}$,
\item \label{leval-index-tp-4} $\ordinaryTyping[pc]{\Gamma}{\Delta}{lval_a[n_a]}{\type{\tau}{\chi_1}}$ and $\ordinaryTyping[pc]{\Gamma}{\Delta}{lval_b[n_b]}{\type{\tau}{\chi_1}}$,
\item \label{leval-index-tp-5} if $\chi_1 \sqsubseteq l$, then $\lvalEquality{lval_a[n_a]}{lval_b[n_b]}$.
\item \label{leval-index-tp-6} for any $l_a' \in \dom{\mu_{a}}$ and $l_b' \in \dom{\mu_{b}}$ such that $\ordinaryTyping[]{\Xi_{a}}{\Delta}{\mu_{a}(l_a')}{\type{\tau}{\chi}}$ and $\ordinaryTyping[]{\Xi_{b}}{\Delta}{\mu_{b}(l_b')}{\type{\tau}{\chi}}$ and $pc \nsqsubseteq \chi$, we have $\mu_{a}(l_a') = \mu_{a2}(l_a')$ and $\mu_{b}(l_b') = \mu_{b2}(l_b')$
\end{enumerate}

By applying induction hypothesis on the typing derivation of $exp_1$, we conclude $\NIlvalEval[pc]{\Gamma}{\Delta}{exp_1}{\type{\type{\tau}{\chi_1}[n]}{\bot}}$. Since $exp_1$ is evaluated to get its l-value in two initial configuration $\langle \mu_a, \epsilon_a \rangle$ and $\langle \mu_b, \epsilon_b \rangle$, where $\semanticBelowPCState{l}{\Xi_a}{\Xi_b}{\Delta}{\mu_a}{\epsilon_a}{\mu_b}{\epsilon_b}{\Gamma}$, there exists some $\Xi_{a1}'$ and $\Xi_{b1}'$ satisfying $\Xi_a \subseteq \Xi_{a1}'$ and $\Xi_b \subseteq \Xi_{b1}'$, $\dom{\mu_{a}} \subseteq \dom{\mu_{a1}}$, $\dom{\mu_{b}} \subseteq \dom{\mu_{b1}}$ and the following:
\begin{equation} \label{leval-index-in-1-1}
\semanticBelowPCState{l}{\Xi_{a1}'}{\Xi_{b1}'}{\Delta}{\mu_{a1}}{\epsilon_{a}}{\mu_{b1}}{\epsilon_{b}}{\Gamma},
\end{equation}
\begin{equation} \label{leval-index-in-1-2}
\textsc{lval_base}(lval_a) = \textsc{lval_base}(lval_b) \text{,}~ \textsc{lval_base}(lval_a) \in \dom{\epsilon_a}~\text{and}~\textsc{lval_base}(lval_b) \in \dom{\epsilon_b}
\end{equation}
\begin{equation} \label{leval-index-in-1-3}
\ordinaryTyping[pc]{\Gamma}{\Delta}{lval_a}{\type{\type{\tau}{\chi_1}[n]}{\bot}}~ \text{and}~ \ordinaryTyping[pc]{\Gamma}{\Delta}{lval_b}{\type{\type{\tau}{\chi_1}[n]}{\bot}}
\end{equation}
\begin{equation} \label{leval-index-in-1-4}
\text{if}~ \bot \sqsubseteq l, ~\text{then}~ \lvalEquality{lval_a}{lval_b}.
\end{equation}
and For any $l_a \in \dom{\mu_a}$ such that $\ordinaryTyping[]{\Xi_a}{\Delta}{\mu_a(l_a)}{\tau_{clos}}$, where $\tau_{clos} \in \{\tau_{fn}, \tau_{tbl}\}$, then $\mu_{a1}(l_a) = \mu_a(l_a)$. Similarly for any $l_b \in \dom{\mu_b}$ such that $\ordinaryTyping[]{\Xi_b}{\Delta}{\mu_b(l_b)}{\tau_{clos}}$, where $\tau_{clos} \in \{\tau_{fn}, \tau_{tbl}\}$, then $\mu_{b1}(l_b) = \mu_b(l_b)$.
Also, for any $l_a' \in \dom{\mu_{a}}$ and $l_b' \in \dom{\mu_{b}}$ such that $\ordinaryTyping[]{\Xi_{a}}{\Delta}{\mu_{a}(l_a')}{\type{\tau}{\chi}}$ and $\ordinaryTyping[]{\Xi_{b}}{\Delta}{\mu_{b}(l_b')}{\type{\tau}{\chi}}$ and $pc \nsqsubseteq \chi$, we have $\mu_{a}(l_a') = \mu_{a1}(l_a')$ and $\mu_{b}(l_b') = \mu_{b1}(l_b')$.

By applying induction hypothesis of \Cref{ni-exp} on the $exp_2$, we get $\NIexp{pc}{\Gamma}{\Delta}{exp_2}{\type{bit \langle 32 \rangle}{\chi_2}}$. Since $exp_2$ is evaluated in an initial configuration satisfying \Cref{leval-rec-in-1-1}, we can conclude that there exists some $\Xi_{a2}'$ and $\Xi_{b2}'$ satisfying $\Xi_{a1}' \subseteq \Xi_{a2}'$, $\Xi_{b1}' \subseteq \Xi_{b2}'$, and $\dom{\mu_{a1}} \subseteq \dom{\mu_{a2}}$, $\dom{\mu_{b1}} \subseteq \dom{\mu_{b2}}$ and the following:
\[
\semanticBelowPCState{l}{\Xi_{a2}'}{\Xi_{b2}'}{\Delta}{\mu_{a2}}{\epsilon_{a}}{\mu_{b2}}{\epsilon_{b}}{\Gamma}
\]
\[
\NIval{l}{\Xi_{a2}'}{\Xi_{b2}'}{\Delta}{n_{a}}{n_{b}}{\type{bit \langle 32 \rangle}{\chi_{2}}}
\]
And finally, For any $l_a \in \dom{\mu_a}$ such that $\ordinaryTyping[]{\Xi_a}{\Delta}{\mu_a(l_a)}{\tau_{clos}}$, where $\tau_{clos} \in \{\tau_{fn}, \tau_{tbl}\}$, then $\mu_{a2}(l_a) = \mu_{a1}(l_a)$. Similarly for any $l_b \in \dom{\mu_b}$ such that $\ordinaryTyping[]{\Xi_b}{\Delta}{\mu_b(l_b)}{\tau_{clos}}$, where $\tau_{clos} \in \{\tau_{fn}, \tau_{tbl}\}$, then $\mu_{b2}(l_b) = \mu_{b1}(l_b)$.
This equation proves \Cref{leval-index-tp-3}, since $\dom{\mu_a} \subseteq \dom{\mu_{a1}} \subseteq \dom{\mu_{a2}}$ and $\dom{\mu_b} \subseteq \dom{\mu_{b1}} \subseteq \dom{\mu_{b2}}$. Similarly, we prove \Cref{leval-index-tp-6}

Since the type of $n_{a}$ is $\tau= {bit \langle 32 \rangle}$, we can say that $n_{a} = n_{b}$ if the $\chi_1 \sqsubseteq l$. Therefore, \Cref{leval-index-in-1-4} proves \Cref{leval-index-tp-5}.
Using the definition of \textsc{lval_base} along with \Cref{leval-index-in-1-2} we can conclude \Cref{leval-index-tp-2}.
\Cref{leval-index-in-1-3} with T-Index proves \Cref{leval-index-tp-4}.
\end{enumerate}
\end{proof}

\section{L-value Writing}
\begin{lemma}\label{no-side-lval}
Let $\ordinaryTyping[pc]{\Gamma}{\Delta}{lval\_exp}{\type{\tau}{\chi}}$, $\evalstolval{\config[lval\_exp]{\mathcal{C};\Delta}{\mu}{\epsilon}}{\configval{\mu'}{lval}}$. Suppose $ \langle \mathcal{C}, \Delta_{}, \mu_{1}, \epsilon_{1}, lval \rangle \Downarrow  \langle \mu_{2}, val \rangle$, $\dom{\mu'} \subseteq \dom{\mu_1}$ and $\dom{\epsilon} \subseteq \dom{\epsilon_1}$. Then $\mu_2 = \mu_1$.
\end{lemma}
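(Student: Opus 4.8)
The plan is to exploit the fact that an already-computed l-value has an extremely restricted syntactic shape, and that the only Core P4 expression form with a store side-effect is the function call. By the list of admissible l-value expression typing rules (\textsc{T-Var}, \textsc{T-MemRec}, \textsc{T-MemHdr}, \textsc{T-Index}) together with the corresponding l-value evaluation rules, the $lval$ obtained from $\langle \mathcal{C}, \Delta, \mu, \epsilon, lval\_exp \rangle \Downarrow_{lval} \langle \mu', lval \rangle$ is generated by the grammar $lval ::= x \mid lval.f \mid lval[n]$, where every subscript $n$ is a numeric \emph{value}: the index subexpression in the \textsc{T-Index} evaluation rule is fully evaluated before it is placed into the result. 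In particular $lval$ contains no function-call subterm, so re-evaluating it as an ordinary expression in any store and environment cannot change the store. The hypothesis $\ordinaryTyping[pc]{\Gamma}{\Delta}{lval\_exp}{\type{\tau}{\chi}}$ and the first evaluation are used only to justify that $lval$ lies in this grammar; the domain inclusions $\dom{\mu'} \subseteq \dom{\mu_1}$ and $\dom{\epsilon} \subseteq \dom{\epsilon_1}$ are not actually needed for the argument, and are carried only because the lemma is applied in the assignment case of the non-interference proof, where the l-value is re-evaluated in a later, extended state.

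Concretely, I would prove $\mu_2 = \mu_1$ by structural induction on $lval$, equivalently on the evaluation derivation $\langle \mathcal{C}, \Delta, \mu_1, \epsilon_1, lval \rangle \Downarrow \langle \mu_2, val \rangle$. In the base case $lval = x$, the variable-evaluation rule gives $\langle \mathcal{C}, \Delta, \mu_1, \epsilon_1, x \rangle \Downarrow \langle \mu_1, \mu_1(\epsilon_1(x)) \rangle$, so $\mu_2 = \mu_1$ directly. For $lval = lval'.f$, the evaluation first evaluates $lval'$ to a record or header value under some intermediate store $\mu_2'$ and then projects the field $f$; the evaluation of $lval'$ is a strict subderivation, so the induction hypothesis yields $\mu_2' = \mu_1$, and projection does not touch the store, hence $\mu_2 = \mu_1$. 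For $lval = lval'[n]$, the evaluation evaluates $lval'$ to an array value under some store $\mu_2'$, then evaluates the numeric value $n$ (which returns immediately with an unchanged store), then performs the array access; the induction hypothesis gives $\mu_2' = \mu_1$ and the remaining steps leave the store unchanged, so $\mu_2 = \mu_1$.

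The only substantive step is the first one: pinning down the grammar of $lval$. This is a routine inspection of the four admissible l-value typing rules and their evaluation counterparts, checking that the sole slot where a compound expression could appear --- the index position of \textsc{T-Index} --- is reduced to a value before being stored in the result. Everything after that is an immediate structural induction, with the induction hypothesis applied to the unique l-value subterm in each non-base case. I therefore expect no real obstacle beyond carefully enumerating the l-value evaluation rules.
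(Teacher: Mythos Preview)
Your proposal is correct and takes essentially the same approach as the paper. The paper's proof is a one-line sketch---``by induction on the typing derivation of $lval\_exp$; intuitively, $lval$ has no unevaluated expression, so evaluating a normalized value will not have side-effects''---and your structural induction on $lval$ is exactly the unfolding of that sketch, since the l-value evaluation rules put the typing derivation of $lval\_exp$ in one-to-one correspondence with the shape of the resulting $lval$.
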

\begin{proof}
By induction on the typing derivation of $lval\_exp$. Intuitively, $lval$ has no unevaluated expression, so evaluating a normalized value will not have side-effects.
\end{proof}

\begin{lemma} \label{lw-val-ni}
Let $\ordinaryTyping[pc]{\Gamma}{\Delta}{lval\_exp}{\type{\tau}{\chi}}$, $\evalstolval{\config[lval\_exp]{\mathcal{C};\Delta}{\mu_{a}}{\epsilon_{a}}}{\configval{\mu_{a}'}{lval_a}}$ and $\evalstolval{\config[lval\_exp]{\mathcal{C};\Delta}{\mu_{b}}{\epsilon_{b}}}{\configval{\mu_{b}'}{lval_b}}$.

Suppose $\semanticBelowPCState{l}{\Xi_{a}}{\Xi_{b}}{\Delta}{\mu_a}{\epsilon_{a}}{\mu_{b}}{\epsilon_{b}}{\Gamma}$, $\semanticBelowPCState{l}{\Xi_{a1}}{\Xi_{b1}}{\Delta}{\mu_{a1}}{\epsilon_{a}}{\mu_{b1}}{\epsilon_{b}}{\Gamma}$, and $\NIval{l}{\Xi_{a1}}{\Xi_{b1}}{\Delta}{val_a}{val_b}{\type{\tau}{\chi}}$, where $\Xi_a \subseteq \Xi_{a1}$, $\Xi_b \subseteq \Xi_{b1}$, $\dom{\mu_a} \subseteq \dom{\mu_{a1}}$, $\dom{\mu_b} \subseteq \dom{\mu_{b1}}$.

If $ \langle \mathcal{C}, \Delta,\mu_{a1}, \epsilon_{a}, lval_a := val_a \rangle \Downarrow_{write} \mu_{a1}' $, $ \langle \mathcal{C}, \Delta,\mu_{b1}, \epsilon_{b}, lval_b := val_b \rangle \Downarrow_{write} \mu_{b1}'$, then
\begin{enumerate}
\item $\NIval{l}{\Xi_{a1}}{\Xi_{b1}}{\Delta}{\mu_{a1}'(\epsilon_{a}(\textsc{lval_base}(lval_a))}{\mu_{b1}'(\epsilon_{b}(\textsc{lval_base}(lval_b))}{\Gamma(\textsc{lval_base}(lval_b))}$,

\item For any $l_a \in \dom{\mu_a}$ such that $\ordinaryTyping[]{\Xi_a}{\Delta}{\mu_a(l_a)}{\tau_{clos}}$, where $\tau_{clos} \in \{\tau_{fn}, \tau_{tbl}\}$, then $\mu_{a1}'(l_a) = \mu_{a1}(l_a)$. Similarly for any $l_b \in \dom{\mu_b}$ such that $\ordinaryTyping[]{\Xi_b}{\Delta}{\mu_b(l_b)}{\tau_{clos}}$, where $\tau_{clos} \in \{\tau_{fn}, \tau_{tbl}\}$, then $\mu_{b1}'(l_b) = \mu_{b1}(l_b)$.
\item for any $l_a \in \dom{\mu_{a1}}$ and $l_b \in \dom{\mu_{b1}}$ such that $l_b \ne \epsilon_{b}(\textsc{lval_base}(lval_b))$ and $l_a \ne \epsilon_{a}(\textsc{lval_base}(lval_a))$, we have $\mu_{a1}'(l_a) = \mu_{a1}(l_a)$ and $\mu_{b1}'(l_b) = \mu_{b1}(l_b)$,
\end{enumerate}

\end{lemma}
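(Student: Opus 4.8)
The plan is to prove \Cref{lw-val-ni} by induction on the typing derivation of $lval\_exp$, i.e.\ by case analysis on the shape of the l-value, run in parallel with the derivation of the two $\Downarrow_{write}$ judgements. The engine is \Cref{lem-lval-eval}: instantiated at the two l-value evaluations it tells us that $\textsc{lval\_base}(lval_a) = \textsc{lval\_base}(lval_b)$ is a single variable, call it $x$, lying in both $\dom{\epsilon_a}$ and $\dom{\epsilon_b}$; that $\ordinaryTyping[pc]{\Gamma}{\Delta}{lval_a}{\type{\tau}{\chi}}$ and $\ordinaryTyping[pc]{\Gamma}{\Delta}{lval_b}{\type{\tau}{\chi}}$; and that $\lvalEquality{lval_a}{lval_b}$ whenever $\chi \sqsubseteq l$. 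These are syntactic or typing-level facts, so they transport unchanged from the $\langle \mu_a, \epsilon_a\rangle$-world in which the l-values were computed to the $\langle \mu_{a1},\epsilon_a\rangle$-world in which they are written, even though the stores have grown ($\Xi_a \subseteq \Xi_{a1}$, $\dom{\mu_a}\subseteq\dom{\mu_{a1}}$). The other structural fact I use repeatedly is that $\Downarrow_{write}$ modifies exactly one location, $\epsilon_a(x)$ on the $a$-side (and $\epsilon_b(x)$ on the $b$-side), overwriting there the old stored aggregate with the value obtained by replacing the sub-component named by the path $lval_a$ (resp.\ $lval_b$) by $val_a$ (resp.\ $val_b$).

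Given this, claim~(3) is immediate. Claim~(2) follows because the one modified location cannot hold a function or a table value: a writable l-value is built only from \textsc{T-Var}, \textsc{T-MemHdr}, \textsc{T-MemRec}, \textsc{T-Index}, so either $lval_a = x$, in which case well-typedness of the write forces the stored value at $x$ to have a base type, or $lval_a$ descends into $x$ by a projection or an index, which forces $x$'s type to be a record, header, or array; in neither case is $\mu_a(\epsilon_a(x))$ a closure or a table, so every closure/table location is disjoint from the modified one and, by claim~(3), left untouched. For claim~(1) I induct on $lval_a$. If $lval_a = lval_b = x$, then after the writes the values stored at $\epsilon_a(x)$ and $\epsilon_b(x)$ are exactly $val_a$ and $val_b$, and $\Gamma(\textsc{lval\_base}(lval_b)) = \Gamma(x) = \type{\tau}{\chi}$, so the goal is precisely the hypothesis $\NIval{l}{\Xi_{a1}}{\Xi_{b1}}{\Delta}{val_a}{val_b}{\type{\tau}{\chi}}$. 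If $lval_a = lval_a'.f_i$ (cases \textsc{T-MemHdr}, \textsc{T-MemRec}), then $\Gamma(x)$ is a $\bot$-labelled record or header type; by the second below-pc-state hypothesis its two stored values are non-interfering, hence componentwise non-interfering by \Cref{def:def-ni-val}; the write replaces the $f_i$-component (syntactically the same field on both sides) by $val_a$ resp.\ $val_b$, which are non-interfering at the field's security type $\type{\tau}{\chi}$, while all other components stay put and stay non-interfering; reassembling yields non-interference of the updated aggregate. For a genuinely nested path ($lval_a'$ itself a projection) the reassembly is an inner induction on $lval_a'$, peeling one field at a time and invoking the recursive clause of \Cref{def:def-ni-val} for records and headers.

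The case I expect to be the real obstacle is \textsc{T-Index}, $lval_a = lval_a'[n_a]$. When the element label $\chi \sqsubseteq l$, \Cref{lem-lval-eval} yields $\lvalEquality{lval_a'[n_a]}{lval_b'[n_b]}$, hence $n_a = n_b$, and the two runs overwrite the same slot with non-interfering values, so the argument mirrors the record case. When $\chi \not\sqsubseteq l$ the indices may differ, so run $a$ overwrites slot $n_a$ while run $b$ overwrites slot $n_b$, and to re-establish element-wise non-interference of the two arrays I must argue that every slot, after both writes, still carries non-interfering data. This needs that well-formed types are label-monotone: any field nested inside the $\chi$-labelled element carries a label $\sqsupseteq \chi$, hence (as $\chi \not\sqsubseteq l$) a label not below $l$, so \Cref{def:def-ni-val} imposes no equality obligation on any leaf inside a reshuffled slot, and it suffices that each slot is separately well-typed and recursively non-interfering — which follows from the value-typing of $val_a, val_b$ and the incoming below-pc-state hypotheses. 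Securing (and, if it is not already part of type well-formedness, adding) this monotonicity invariant is the crux; the remaining steps — transporting value-typing along the store inclusions $\Xi_a \subseteq \Xi_{a1}$ via \Cref{aux}, and observing that $\Downarrow_{write}$ allocates no fresh location so no store-typing extension is needed — are routine.
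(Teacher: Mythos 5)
Your proposal is correct and follows essentially the same route as the paper's proof: induction on the typing derivation of the l-value expression, with \Cref{lem-lval-eval} supplying base-variable equality, l-value typing, and l-value equality at low labels; the single-location nature of $\Downarrow_{write}$ giving claims (2) and (3); and, for claim (1), reading the parent aggregate, replacing the addressed component by the non-interfering $val_a$/$val_b$, and recursing on the path. The only point where you diverge is the worry about label-monotonicity in the \textsc{T-Index} case with $\chi \not\sqsubseteq l$; the paper discharges this without any extra invariant because the type grammar already settles it — array elements that are records or headers carry outer label $\bot$, which (via $\chi_2 \sqsubseteq \chi_1$) forces the index label to $\bot$ and hence equal indices, so divergent indices arise only for scalar elements at a label not below $l$, where \Cref{def:def-ni-val} imposes no equality obligation on the reshuffled slots.
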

\begin{proof}
By induction hypothesis on the typing derivation of $lval\_exp$.
\begin{enumerate}
\item \textbf{T-Var}

If the l-value expression's typing derivation ends with a variable typing rule, then write to the l-value follows the following  evaluation, where $\mu_{a1}' = \mu_{a1}[l_a\coloneq val_a]$, and $\mu_{b1}' = \mu_{b1}[l_b\coloneq val_b]$.
\[
\inferrule*[]
{
\epsilon_{a}(x) = l_a
}
{ \langle \mathcal{C}, \Delta_{}, \mu_{a1}, \epsilon_{a}, x \coloneq val_a \rangle \Downarrow_{write}  \mu_{a1}[l_a\coloneq val_a]		}
\]
\[\inferrule*[]
{
\epsilon_{b}(x) = l_b
}
{ \langle \mathcal{C}, \Delta_{}, \mu_{b1}, \epsilon_{b}, x \coloneq val_b \rangle \Downarrow_{write}  \mu_{b1}[l_b \coloneq val_b]	}
\]
According to the evaluation rule, $\mu_{a1}'(\epsilon_{a}(\textsc{lval_base}(x)) = \mu_{a1}'(\epsilon_{a}(x)) = val_a$ and $\mu_{b1}'(\epsilon_{b}(\textsc{lval_base}(x)) = \mu_{b}'(\epsilon_{b}(x)) = val_b$. Since we already know that $\NIval{l}{\Xi_{a1}}{\Xi_{b1}}{\Delta}{val_a}{val_b}{\type{\tau}{\chi}}$, we have proved the requirement. Since the memory store doesn't change for other location's besides that of $x$, showing the other two requirements are direct.
\item \textbf{T-Mem}

If the l-value expression's $\ordinaryTyping[pc]{\Gamma}{\Delta}{exp.f_i}{\type{\tau_i}{\chi_i}}$ typing derivation ends with a T-MemRec rule, then write to the l-value follows the following evaluation, where $\mu_{a1}' = \mu_{a3}$ and $\mu_{b1}' = \mu_{b3}$.
\[
\inferrule*[]
{
 \langle \mathcal{C}, \Delta_{}, \mu_{a1}, \epsilon_{a}, lval_a \rangle \Downarrow  \langle \mu_{a2}, \{\overline{f_j = val_{f_a}}\} \rangle \\
 \langle \mathcal{C}, \Delta_{}, \mu_{a2}, \epsilon_{a}, lval_a \coloneq \{f_{i} = val_a, ~ \overline{f_{j\ne i}=val_{f_a}} \} \rangle \Downarrow_{write} \mu_{a3}
}
{
 \langle \mathcal{C}, \Delta_{}, \mu_{a1}, \epsilon_{a}, lval_a.f_{i} \coloneq val_a \rangle \Downarrow_{write}  \mu_{a3}
}
\]

\[
\inferrule*[]
{
 \langle \mathcal{C}, \Delta_{}, \mu_{b1}, \epsilon_{b}, lval_b \rangle \Downarrow  \langle \mu_{b2}, \{\overline{f_j = val_{f_b}}\} \rangle \\
 \langle \mathcal{C}, \Delta_{}, \mu_{b2}, \epsilon_{b}, lval_b \coloneq \{f_{i} = val_b, ~ \overline{f_{j\ne i}=val_{f_b}} \} \rangle \Downarrow_{write} \mu_{b3}
}
{
 \langle \mathcal{C}, \Delta_{}, \mu_{b1}, \epsilon_{b}, lval_b.f_{i} \coloneq val_b \rangle \Downarrow_{write}  \mu_{b3}
}
\]
We know that $\textsc{lval_base}(lval_a.f_i) = \textsc{lval_base}(lval_b.f_i)$ using \Cref{lem-lval-eval}.
We can have two cases:
\begin{itemize}
\item $\chi_i \sqsubseteq l$.
According to \Cref{lem-lval-eval}, this implies that $lval_a.f_i =_{lval} lval_b.f_i$. Therefore, by inversion of the equality defined in \Cref{def:lval-eq}, $lval_a =_{lval} lval_b$. By using \Cref{no-side-lval} to get the value of the respective l-value, we get $\mu_{a2} = \mu_{a1}$ and $\mu_{b2} = \mu_{b1}$. This implies $\semanticBelowPCState{l}{\Xi_{a1}}{\Xi_{b1}}{\Delta}{\mu_{a2}}{\epsilon_{a}}{\mu_{b2}}{\epsilon_{b}}{\Gamma_1}$.
$lval_a$ and $lval_b$ are returned by sub-expression $exp$ of $lval\_exp = exp.f_i$, therefore, by \Cref{lem-lval-eval}, we have $\ordinaryTyping[pc]{\Gamma}{\Delta}{lval_a}{\type{\{f: \type{\tau_i}{\chi_i}\}}{\bot}}$ and $\ordinaryTyping[pc]{\Gamma}{\Delta}{lval_b}{\type{\{f: \type{\tau_i}{\chi_i}\}}{\bot}}$. Therefore, we can apply induction hypothesis of \Cref{ni-exp} to evaluate the value of a well-typed expression under two different configurations. By applying induction hypothesis of \Cref{ni-exp} on evaluating $lval_a =_{lval} lval_b$, we get $\NIval{l}{\Xi_a}{\Xi_b}{\Delta}{\{\overline{f_j = val_{f_a}}\}}{\{\overline{f_j = val_{f_b}}\}}{\type{\{f: \type{\tau_i}{\chi_i}\}}{\bot}}$. Since $val_a$ and $val_b$ are given to be non-interfering, we have
\[\NIval{l}{\Xi_a}{\Xi_b}{\Delta}{\{f_{i} = val_a, ~ \overline{f_{j\ne i}=val_{f_a}} \}}{\{f_{i} = val_b, ~ \overline{f_{j\ne i}=val_{f_b}} \}}{\type{\{f: \type{\tau_i}{\chi_i}\}}{\bot}}\]
We can apply the induction hypothesis of this lemma to write to two l-value expressions generated from a well-typed $exp$, and conclude that
\[\NIval{l}{\Xi_{a1}}{\Xi_{b1}}{\Delta}{\mu_{a3}(\epsilon_{a}(\textsc{lval_base}(lval_a))}{\mu_{b3}(\epsilon_{b}(\textsc{lval_base}(lval_b))}{\Gamma(\textsc{lval_base}(lval_b))}\]
Since $\textsc{lval_base}(lval_a.f_i) = \textsc{lval_base}(lval_a)$ and $\textsc{lval_base}(lval_b.f_i) = \textsc{lval_base}(lval_b)$, we have proved the necessary. Also, the two other requirements follow from the results of this induction hypothesis.
\item $\chi_i \nsqsubseteq l$.
Since $lval_a$ and $lval_b$ are evaluated from $exp$, by using \Cref{lem-lval-eval}, we can conclude that $lval_a =_{lval} lval_b$ because $\bot \sqsubseteq l$. Also, the type of both $lval_a$ and $lval_b$ is ${\type{\{f: \type{\tau_i}{\chi_i}\}}{\bot}}$. Now, by applying induction hypothesis of \Cref{ni-exp} to evaluate the value of a well-typed expression $lval_a =_{lval} lval_b$ under two different configurations, we conclude $\NIval{l}{\Xi_a}{\Xi_b}{\Delta}{\{\overline{f_j = val_{f_a}}\}}{\{\overline{f_j = val_{f_b}}\}}{\type{\{f: \type{\tau_i}{\chi_i}\}}{\bot}}$.
Similar to the previous case, we can apply the induction hypothesis of this lemma on the lval-write to $lval_a$ and $lval_b$ because they are both evaluated from sub-expression $exp$ of $exp.f_i$ (this can be checked from the lval-evaluation derivation). This can conclude that $\NIval{l}{\Xi_{a1}}{\Xi_{b1}}{\Delta}{\mu_{a3}(\epsilon_{a}(\textsc{lval_base}(lval_a))}{\mu_{b3}(\epsilon_{b}(\textsc{lval_base}(lval_b))}{\Gamma(\textsc{lval_base}(lval_b))}$. Also, the two other requirements follow from the results of this induction hypothesis.
\end{itemize}
\item \textbf{T-Hdr}

Follows similarly.
\[
\inferrule*[]
{
 \langle \mathcal{C}, \Delta_{}, \mu_{}, \epsilon_{}, lval \rangle \Downarrow  \langle \mu_{1}, \text{header}\{valid=~true, ~\overline{f = val_{f}}\} \rangle \\
 \langle \mathcal{C}, \Delta_{}, \mu_{1}, \epsilon_{}, lval \coloneq header\{valid = true, f_{i} = val, ~ \overline{f_{\ne i}=val_{f}} \} \rangle \Downarrow_{write} \mu_{2}
}
{
 \langle \mathcal{C}, \Delta_{}, \mu_{}, \epsilon_{}, lval.f_{i} \coloneq val \rangle \Downarrow_{write}  \mu_{2}
}
\]
\item \textbf{T-Index}

If the l-value expression's typing derivation ends with a T-Index rule
\[  \inferrule*[right=T-Index]
{
\ordinaryTyping[pc]{\Gamma}{\Delta}{exp_{1}}{\type{\type{\tau}{\chi_1}[n]}{\bot}~goes~ d} \\\\
\ordinaryTyping[pc]{\Gamma}{\Delta}{exp_{2}}{\type{bit \langle 32 \rangle}{\chi_2}} \\\\
\chi_2 \sqsubseteq \chi_1
}
{
\ordinaryTyping[pc]{\Gamma}{\Delta} {exp_{1}[exp_{2}]}{\type{\tau}{\chi_1}~ goes~ d }
}
\]
then write to the l-value follows the following evaluation, where $\mu_{a1}' = \mu_{a3}$ and $\mu_{b1}' = \mu_{b3}$.
\[
\inferrule*[]
{
 \langle \mathcal{C}, \Delta_{}, \mu_{a1}, \epsilon_{a}, lval_a \rangle \Downarrow  \langle \mu_{a2}, stack~ \tau~\{\overline{val_a}\} \rangle \\
 \langle \mathcal{C}, \Delta_{}, \mu_{a2}, \epsilon_{a}, lval_a \coloneq stack~\tau~\{...,val_{a_{n_a-1}}, val_a, val_{a_{n_a+1}} ,...\} \rangle \Downarrow_{write} \mu_{a3}
}
{
 \langle \mathcal{C}, \Delta_{}, \mu_{a1}, \epsilon_{a}, lval_a[n_a] \coloneq val_a \rangle \Downarrow_{write}  \mu_{a3}
}
\]

\[
\inferrule*[]
{
 \langle \mathcal{C}, \Delta_{}, \mu_{b1}, \epsilon_{b}, lval_b \rangle \Downarrow  \langle \mu_{b2}, stack~ \tau~\{\overline{val_b}\} \rangle \\
 \langle \mathcal{C}, \Delta_{}, \mu_{b2}, \epsilon_{b}, lval_b \coloneq stack~\tau~\{...,val_{b_{n_b-1}}, val_b, val_{b_{n_b+1}} ,...\} \rangle \Downarrow_{write} \mu_{b3}
}
{
 \langle \mathcal{C}, \Delta_{}, \mu_{b1}, \epsilon_{b}, lval_b[n_b] \coloneq val_b \rangle \Downarrow_{write}  \mu_{b3}
}
\]

We know that $\textsc{lval_base}(lval_a[n_a]) = \textsc{lval_base}(lval_b[n_b])$ using \Cref{lem-lval-eval} on $exp_1$.
We can have two cases:
\begin{itemize}
\item $\chi_1 \sqsubseteq l$. \Cref{lem-lval-eval} implies that $lval_a[n_a] =_{lval} lval_b[n_b]$. Therefore, $lval_a =_{lval} lval_b$ and $n_a = n_b$. Using similar argument to the "record" case, we can show that $\NIval{l}{\Xi_{a1}}{\Xi_{b1}}{\Delta}{\mu_{a3}(\epsilon_{a}(\textsc{lval_base}(lval_a))}{\mu_{b3}(\epsilon_{b}(\textsc{lval_base}(lval_b))}{\Gamma(\textsc{lval_base}(lval_b))}$. And by using the definition of $\textsc{lval_base}$, we conclude \[\NIval{l}{\Xi_{a1}}{\Xi_{b1}}{\Delta}{\mu_{a3}(\epsilon_{a}(\textsc{lval_base}(lval_a[n_a]))}{\mu_{b3}(\epsilon_{b}(\textsc{lval_base}(lval_b[n_b]))}{\Gamma(\textsc{lval_base}(lval_b))}\]
\item $\chi_1 \nsqsubseteq l$.
We can have the following cases:
\begin{itemize}
\item Case $\chi_2 \sqsubseteq l$. In this case $n_a = n_b$. (using induction hypothesis of  \Cref{ni-exp} on $exp_2$ evaluation in $\Downarrow_{lval}$ of $exp_1[exp_2]$). Observe that $\ordinaryTyping[pc]{\Gamma}{\Delta}{exp_1}{\type{\type{\tau}{\chi_1}[n]}{\bot}}$, and because $\bot \sqsubseteq l$, we have $lval_a =_{lval} lval_b$ (by using \Cref{lem-lval-eval}, $lval_a$ is the lvalue generated from $exp_1$). By applying \Cref{ni-exp} on $lval_a$'s evaluation, we get $\NIval{l}{\Xi_{a1}}{\Xi_{b1}}{\Delta}{stack~\tau~\{\overline{val_a}\}}{stack~\tau~\{\overline{val_b}\}}{\type{\type{\tau}{\chi_1}[n]}{\bot}}$. Similar to the previous cases, by applying induction hypothesis of this lemma on the lval-write to $lval_a$ and $lval_b$ that are generated from the same $lval\_exp$, we can conclude that \[\NIval{l}{\Xi_{a1}}{\Xi_{b1}}{\Delta}{\mu_{a3}(\epsilon_{a}(\textsc{lval_base}(lval_a))}{\mu_{b3}(\epsilon_{b}(\textsc{lval_base}(lval_b))}{\Gamma(\textsc{lval_base}(lval_b))}\]
\item Case $\chi_2 \nsqsubseteq l$. In this case $n_a$ and $n_b$ can be $n_a \ne n_b$ (using induction hypothesis of  \Cref{ni-exp} on $exp_2$ evaluation in $\Downarrow_{lval}$ of $exp_1[exp_2]$).
As $\bot \sqsubseteq pc$, $lval_a =_{lval} lval_b$. By applying \Cref{ni-exp} on $lval_a$'s evaluation, we get $\NIval{l}{\Xi_{a1}}{\Xi_{b1}}{\Delta}{stack~\tau~\{\overline{val_a}\}}{stack~\tau~\{\overline{val_b}\}}{\type{\type{\tau}{\chi_1}[n]}{\bot}}$. However, with $\chi_2 \nsqsubseteq l$ and $\chi_1 \nsqsubseteq l$ and according to \Cref{def:def-ni-val}, we have $\NIval{l}{\Xi_{a1}}{\Xi_{b1}}{\Delta}{stack~\tau~\{...,val_{a_{n_a-1}}, val_a, val_{a_{n_a+1}} ,...\}}{stack~\tau~\{...,val_{b_{n_b-1}}, val_b, val_{b_{n_b+1}} ,...\}}{\type{\tau}{\chi_1}[n]}$. Similar to the previous case, by applying induction hypothesis of this lemma on the lval-write to $lval_a$ and $lval_b$ that are generated from the same $lval\_exp$, we can conclude that \[\NIval{l}{\Xi_{a1}}{\Xi_{b1}}{\Delta}{\mu_{a3}(\epsilon_{a}(\textsc{lval_base}(lval_a))}{\mu_{b3}(\epsilon_{b}(\textsc{lval_base}(lval_b))}{\Gamma(\textsc{lval_base}(lval_b))}\]
\end{itemize}
\end{itemize}
\end{enumerate}
\end{proof}

\begin{lemma} \label{lem:lvalue-write-final}
Let $\ordinaryTyping[pc]{\Gamma}{\Delta}{lval\_exp}{\type{\tau}{\chi}}$, $\evalstolval{\config[lval\_exp]{\mathcal{C};\Delta}{\mu_{a}}{\epsilon_{a}}}{\configval{\mu_{a}'}{lval_a}}$ and $\evalstolval{\config[lval\_exp]{\mathcal{C};\Delta}{\mu_{b}}{\epsilon_{b}}}{\configval{\mu_{b}'}{lval_b}}$.

Suppose $\semanticBelowPCState{l}{\Xi_{a}}{\Xi_{b}}{\Delta}{\mu_a}{\epsilon_{a}}{\mu_{b}}{\epsilon_{b}}{\Gamma}$, $\semanticBelowPCState{l}{\Xi_{a1}}{\Xi_{b1}}{\Delta}{\mu_{a1}}{\epsilon_{a}}{\mu_{b1}}{\epsilon_{b}}{\Gamma}$, and $\NIval{l}{\Xi_{a1}}{\Xi_{b1}}{\Delta}{val_a}{val_b}{\type{\tau}{\chi}}$, where $\Xi_a \subseteq \Xi_{a1}$, $\Xi_b \subseteq \Xi_{b1}$, $\dom{\mu_a} \subseteq \dom{\mu_{a1}}$, $\dom{\mu_b} \subseteq \dom{\mu_{b1}}$.
If $ \langle \mathcal{C}, \Delta,\mu_{a1}, \epsilon_{a}, lval_a := val_a \rangle \Downarrow_{write} \mu_{a1}' $ and $ \langle \mathcal{C}, \Delta,\mu_{b1}, \epsilon_{b}, lval_b := val_b \rangle \Downarrow_{write} \mu_{b1}' $, then $\semanticBelowPCState{l}{\Xi_{a1}}{\Xi_{b1}}{\Delta}{\mu_{a1}'}{\epsilon_{a}}{\mu_{b1}'}{\epsilon_{b}}{\Gamma}$.
\end{lemma}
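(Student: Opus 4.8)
The plan is to derive the conclusion essentially by repackaging the three facts that \Cref{lw-val-ni} already supplies, after pinning down which variable is actually written. Since the hypotheses of this lemma are verbatim those of \Cref{lw-val-ni}, I first invoke that lemma to obtain: (i) the written base location carries $\NIval$-related values, namely $\NIval{l}{\Xi_{a1}}{\Xi_{b1}}{\Delta}{\mu_{a1}'(\epsilon_{a}(\textsc{lval\_base}(lval_a)))}{\mu_{b1}'(\epsilon_{b}(\textsc{lval\_base}(lval_b)))}{\Gamma(\textsc{lval\_base}(lval_b))}$; (ii) every closure- or table-typed location of $\mu_{a1}$ (resp.\ $\mu_{b1}$) keeps its value under the write; and (iii) every location of $\mu_{a1}$ other than $\epsilon_{a}(\textsc{lval\_base}(lval_a))$ (resp.\ of $\mu_{b1}$ other than $\epsilon_{b}(\textsc{lval\_base}(lval_b))$) is unchanged. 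Separately I apply \Cref{lem-lval-eval} to the two l-value evaluations to learn that $\textsc{lval\_base}(lval_a) = \textsc{lval\_base}(lval_b)$; write $x_0$ for this common base variable and note $x_0 \in \dom{\epsilon_a} = \dom{\epsilon_b}$. Using the input typings $\semanticStoreEnv{\Xi_a}{\Delta}{\mu_a}{\epsilon_a}{\Gamma}$ and $\semanticStoreEnv{\Xi_b}{\Delta}{\mu_b}{\epsilon_b}{\Gamma}$, this gives $\epsilon_a(x_0) \in \dom{\mu_a}$ and $\epsilon_b(x_0) \in \dom{\mu_b}$, so the written locations are genuine pre-existing locations, not locations allocated during l-value evaluation. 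Finally, $\Downarrow_{write}$ only reads already-normalized sub-l-values (\Cref{no-side-lval}) and rewrites a single base location, so $\dom{\mu_{a1}'} = \dom{\mu_{a1}}$ and $\dom{\mu_{b1}'} = \dom{\mu_{b1}}$.

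With these in hand I verify each clause of $\semanticBelowPCState{l}{\Xi_{a1}}{\Xi_{b1}}{\Delta}{\mu_{a1}'}{\epsilon_{a}}{\mu_{b1}'}{\epsilon_{b}}{\Gamma}$ as given by \Cref{def:mem-store-pair-semantic}, keeping $\Xi_{a1}, \Xi_{b1}, \epsilon_a, \epsilon_b$ fixed. The equal-domain condition and the environment typings $\Xi_{a1} \vdash \epsilon_a : \Gamma$, $\Xi_{b1} \vdash \epsilon_b : \Gamma$ are inherited from the hypothesis $\semanticBelowPCState{l}{\Xi_{a1}}{\Xi_{b1}}{\Delta}{\mu_{a1}}{\epsilon_{a}}{\mu_{b1}}{\epsilon_{b}}{\Gamma}$. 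For the store typing $\semanticStore{\Xi_{a1}}{\Delta}{\mu_{a1}'}$: at a location $\ell \ne \epsilon_a(x_0)$ the value is unchanged by (iii), so it still has type $\Xi_{a1}(\ell)$; at $\ell = \epsilon_a(x_0)$, the value-typing clause of $\NIval$ in (i) gives $\ordinaryTyping[]{\Xi_{a1}}{\Delta}{\mu_{a1}'(\epsilon_a(x_0))}{\Gamma(x_0)}$, and $\Xi_{a1}(\epsilon_a(x_0)) = \Gamma(x_0)$ by the environment typing, so the new value is well typed at $\Xi_{a1}(\epsilon_a(x_0))$; the $b$-side is symmetric. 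The closure side-conditions inside the store-and-environment typing are preserved because, by (ii) and (iii), every closure or table value of $\mu_{a1}$ (resp.\ $\mu_{b1}$) survives unchanged into $\mu_{a1}'$ (resp.\ $\mu_{b1}'$), so their captured environments stay related as before. For the pointwise non-interference clause I split on whether $x \in \dom{\epsilon_a}$ equals $x_0$: if $x = x_0$ this is precisely (i); if $x \ne x_0$ then, by injectivity of the environments (distinct variables are bound to distinct locations, an invariant of the semantics), $\epsilon_a(x) \ne \epsilon_a(x_0)$ and $\epsilon_b(x) \ne \epsilon_b(x_0)$, so by (iii) both $\mu_{a1}'(\epsilon_a(x)) = \mu_{a1}(\epsilon_a(x))$ and $\mu_{b1}'(\epsilon_b(x)) = \mu_{b1}(\epsilon_b(x))$, and the required $\NIval$ comes from the input relation $\semanticBelowPCState{l}{\Xi_{a1}}{\Xi_{b1}}{\Delta}{\mu_{a1}}{\epsilon_{a}}{\mu_{b1}}{\epsilon_{b}}{\Gamma}$. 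The two recursive closure/table clauses of \Cref{def:mem-store-pair-semantic} follow the same split: for $x \ne x_0$ the closure/table value is unchanged and we reuse the input relation, while for $x = x_0$ of closure or table type the instance of $\NIval$ from (i) unfolds, via \Cref{def:def-ni-val}, to exactly the required relation on captured environments.

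The step I expect to be the main obstacle is not any single deep argument but the cross-checking of which locations belong to which of $\dom{\mu_a}$, $\dom{\mu_{a1}}$, and $\dom{\mu_{a1}'}$ (and dually on the $b$-side), together with the formally admissible corner case where the written base variable $x_0$ itself has a function or table type. In that case one must confirm that storing a value that is $\NIval$-related at $\tau_{fn}$ (or $\tau_{tbl}$) re-establishes the recursive closure/table conditions rather than disturbing them; this is exactly the unfolding of $\NIval$ at closure/table types in \Cref{def:def-ni-val} combined with fact (i), so no new machinery is required, but it is where the argument must be written out most carefully.
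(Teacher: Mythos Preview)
Your proposal is correct and follows exactly the approach the paper intends: the paper's proof is the single line ``Follows from \Cref{lw-val-ni}'', and your write-up is precisely the unpacking of how the three conclusions of \Cref{lw-val-ni} (together with \Cref{lem-lval-eval} for the common base variable) discharge each clause of \Cref{def:mem-store-pair-semantic}. One small remark: you do not actually need environment injectivity in the $x \ne x_0$ case, since if $\epsilon_a(x) = \epsilon_a(x_0)$ then the environment typing forces $\Gamma(x) = \Xi_{a1}(\epsilon_a(x_0)) = \Gamma(x_0)$ and fact (i) already covers that variable; so the appeal to an unstated semantic invariant can be dropped.
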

\begin{proof}
Follows from \Cref{lw-val-ni}.
\end{proof}

\section{Function Evaluation Strategy}
\begin{lemma} \label{lem:copy-in-out-one}
	Consider the following well-typed expressions $\ordinaryTyping[pc]{\Gamma}{\Delta}{exp}{\type{\tau}{\chi'}}$, and $d~x:\langle \tau, \chi\rangle := exp$, where $\chi' \sqsubseteq \chi$ is evaluated in two different initial configurations $\langle \mu_{a}, \epsilon_{a} \rangle$ and $\langle \mu_{b}, \epsilon_{b} \rangle$ satisfying $\semanticBelowPCState{l}{\Xi_a}{\Xi_b}{\Delta}{\mu_{a}}{\epsilon_{a}}{\mu_{b}}{\epsilon_{b}}{\Gamma}$ as follows:
\[\evalstocopy
    {\config[d~x:\langle \tau, \chi\rangle := exp]{\mathcal{C}; \Delta}{\mu_{a}}{\epsilon_{a}}}
    {\configcopy{\mu_{a}'}{x \mapsto l_a}{lval_a \mapsto l_a}}\]
and
\[\evalstocopy
    {\config[d~x:\langle \tau, \chi\rangle := exp]{\mathcal{C}; \Delta}{\mu_{b}}{\epsilon_{b}}}
    {\configcopy{\mu_{b}'}{x \mapsto l_b}{lval_b \mapsto l_b}}\]
then
\begin{enumerate}
	\item $\semanticBelowPCState{l}{\Xi_a'}{\Xi_b'}{\Delta}{\mu_{a}'}{x \mapsto l_a}{\mu_{b}'}{x \mapsto l_b}{\Gamma'}$, for some $\Xi_a'$, $\Xi_b'$, $\mu_{a}'$, $\mu_{b}'$ such that $\Xi_a \subseteq \Xi_a'$ and $\Xi_b \subseteq \Xi_b'$, $\dom{\mu_{a}} \subseteq \dom{\mu_{a}'}$, $\dom{\mu_{b}} \subseteq \dom{\mu_{b}'}$ and $\Gamma' = \{x \mapsto \type{\tau}{\chi}\}$.
	\item $\semanticBelowPCState{l}{\Xi_a'}{\Xi_b'}{\Delta}{\mu_{a}'}{\epsilon_{a}}{\mu_{b}'}{\epsilon_{b}}{\Gamma}$,
	\item $\textsc{lval_base}(lval_a) \in \dom{\epsilon_a}$, $\textsc{lval_base}(lval_b) \in \dom{\epsilon_b}$, and $\textsc{lval_base}(lval_a) = \textsc{lval_base}(lval_b)$,
	\item $l_a \in \dom{\mu_{a}'}$ and $l_b \in \dom{\mu_{b}'}$,
	\item $l_a$ and $l_b$ are fresh locations, $l_a \notin \Xi_a$ and $l_b \notin \Xi_b$,
	\item For any $l_a \in \dom{\mu_a}$ such that $\ordinaryTyping[]{\Xi_a}{\Delta}{\mu_a(l_a)}{\tau_{clos}}$, where $\tau_{clos} \in \{\tau_{fn}, \tau_{tbl}\}$, then $\mu_a'(l_a) = \mu_a(l_a)$,
  \item For any $l_b \in \dom{\mu_b}$ such that $\ordinaryTyping[]{\Xi_b}{\Delta}{\mu_b(l_b)}{\tau_{clos}}$, where $\tau_{clos} \in \{\tau_{fn}, \tau_{tbl}\}$, then $\mu_b'(l_b) = \mu_b(l_b)$,
  \item \emph{PC is used to bound writes.} For any $l_a \in \dom{\mu_a}$ and $l_b \in \dom{\mu_b}$ such that $\ordinaryTyping[]{\Xi_a}{\Delta}{\mu_a(l_a)}{\type{\tau}{\chi}}$ and $\ordinaryTyping[]{\Xi_b}{\Delta}{\mu_b(l_b)}{\type{\tau}{\chi}}$ and $pc \nsqsubseteq \chi$, we have $\mu_{a}'(l_a) = \mu_{a}(l_a)$ and $\mu_{b}'(l_b) = \mu_{b}(l_b)$.
\end{enumerate}

\textbf{Note.} By \Cref{def:mem-store-pair-semantic}, $\dom{\epsilon_a} = \dom{\epsilon_b}$.
\end{lemma}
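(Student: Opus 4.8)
The plan is to prove \Cref{lem:copy-in-out-one} directly, by case analysis on the shape of the copy-in evaluation $\Downarrow_{copy}$, which either (i) resolves $exp$ to an l-value $lval_a$, reads the current value $val_a$ at that l-value, and binds $x$ to a fresh location holding $val_a$ while recording $lval_a \mapsto l_a$ for copy-out (this covers $d = \mathit{inout}$, as well as $d = \mathit{in}$ when the argument is an l-value), or (ii) evaluates $exp$ directly to a value $val_a$ and binds $x$ to a fresh location holding it (this covers $d = \mathit{in}$ with a non-l-value argument, in which case the copy-out map is empty and conclusion~(3) is vacuous). The two workhorses are the already-proved non-interference statements \Cref{ni-exp} (expressions) and \Cref{lem-lval-eval} (l-value evaluation); everything else is bookkeeping with the store/environment-extension lemmas \Cref{lem:store-typing-extended} and \Cref{lem:only-mem-superset}, the no-side-effect lemma \Cref{no-side-lval}, and the value-subtyping lemma \Cref{lem:ni-subtyping}.

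\emph{L-value case.} I would first apply \Cref{lem-lval-eval} to $exp = lval\_exp$ under the given below-$pc$-equivalent states; this delivers the \textsc{lval\_base} facts (conclusion~(3)), the l-value typings $\ordinaryTyping[pc]{\Gamma}{\Delta}{lval_a}{\type{\tau}{\chi}}$ and its $b$-counterpart, below-$pc$ equivalence of the post-resolution states $\langle\mu_a',\epsilon_a\rangle$, $\langle\mu_b',\epsilon_b\rangle$, the closure/table-preservation facts, the $pc$-bounds-writes fact, and, when $\chi \sqsubseteq l$, the l-value equality $lval_a =_{lval} lval_b$. For the non-interference of the read-in values I would split on whether $\chi \sqsubseteq l$: if so, $lval_a =_{lval} lval_b$ forces $lval_a$ and $lval_b$ to be \emph{syntactically} identical (the relation of \Cref{def:lval-eq} equates array indices), so \Cref{no-side-lval} shows the reads leave the stores unchanged and \Cref{ni-exp} applied to this single l-value expression against the two below-$pc$-equivalent post-resolution stores yields $\NIval{l}{\Xi_a'}{\Xi_b'}{\Delta}{val_a}{val_b}{\type{\tau}{\chi}}$; if instead $\chi \nsqsubseteq l$, value non-interference at label $\chi$ is essentially automatic from \Cref{def:def-ni-val}, and the only remaining obligation---well-typedness of $val_a$ and $val_b$---is obtained by applying \Cref{ni-exp} to each $lval_a$ against a copy of its own state. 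I then choose $l_a \notin \dom{\Xi_a'}$, set $\Xi_a'' = \Xi_a'[l_a \mapsto \type{\tau}{\chi}]$ and $\mu_a'' = \mu_a'[l_a \mapsto val_a]$ (similarly on side $b$); \Cref{lem:store-typing-extended} gives $\semanticStore{\Xi_a''}{\Delta}{\mu_a''}$, \Cref{lem:only-mem-superset} upgrades the post-resolution $\semanticStoreEnv{\Xi_a'}{\Delta}{\mu_a'}{\epsilon_a}{\Gamma}$ to $\semanticStoreEnv{\Xi_a''}{\Delta}{\mu_a''}{\epsilon_a}{\Gamma}$, and since $l_a$ is fresh the below-$pc$ variable equalities and the closure-environment sub-conditions of \Cref{def:mem-store-pair-semantic} are unaffected, giving conclusion~(2). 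Checking \Cref{def:mem-store-pair-semantic} directly for the singleton binding $x \mapsto l_a$ with $\Gamma' = \{x \mapsto \type{\tau}{\chi}\}$---using $\Xi_a''(l_a) = \type{\tau}{\chi}$, the value typings, and (when $\tau$ is a function or table type) the closure/table non-interference components of the $\NIval$ just obtained---gives conclusion~(1); conclusions~(4)--(5) hold by construction of $l_a$; and conclusions~(6)--(8) reduce to the corresponding outputs of \Cref{lem-lval-eval}, since the only new write is to the fresh $l_a \notin \dom{\mu_a}$.

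\emph{Value case and main obstacle.} Here I would instead apply \Cref{ni-exp} to $\ordinaryTyping[pc]{\Gamma}{\Delta}{exp}{\type{\tau}{\chi'}}$, obtaining $\NIval{l}{\Xi_a'}{\Xi_b'}{\Delta}{val_a}{val_b}{\type{\tau}{\chi'}}$ together with the store extensions, below-$pc$ equivalence, closure/table preservation, and $pc$-bounds-writes facts; \Cref{lem:ni-subtyping} then lifts the value non-interference from $\chi'$ to $\chi$---this is exactly where the hypothesis $\chi' \sqsubseteq \chi$ is used---and in particular supplies the value typings at $\type{\tau}{\chi}$. The fresh-location bookkeeping and the verification of conclusions~(1), (2), (4)--(8) then proceed as in the l-value case, with conclusion~(3) vacuous. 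The main obstacle is concentrated in the l-value case: one must be careful to pass from the l-value equality relation $=_{lval}$ to genuine syntactic identity of $lval_a$ and $lval_b$ in order to invoke \Cref{ni-exp} on a \emph{single} expression rather than on two a priori distinct ones, and one must correctly thread the growing store-typing contexts $\Xi$ through the three successive phases (resolution, reading, fresh allocation), re-establishing semantic well-typedness of the store/environment after each---the routine-but-delicate heart of the argument.
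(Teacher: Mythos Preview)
Your overall strategy matches the paper's: case analysis on the copy evaluation, with \Cref{ni-exp} driving the ``value'' case and \Cref{lem-lval-eval} driving the ``l-value'' case, followed by the fresh-location bookkeeping via \Cref{lem:store-typing-extended}, \Cref{lem:only-mem-superset}, \Cref{lem:ni-subtyping}, and \Cref{no-side-lval}. Your split on $\chi \sqsubseteq l$ in the l-value case is in fact more explicit than the paper's write-up, and your observation that $=_{lval}$ forces syntactic identity of the two l-values (so that \Cref{ni-exp} can be invoked on a single expression) is exactly the point one needs.

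There is, however, a mismatch between your case decomposition and the actual operational semantics. The paper splits on the \emph{direction} $d$, not on whether $exp$ happens to be an l-value. For $d = \mathit{in}$ the copy rule \emph{always} evaluates $exp$ with the ordinary $\Downarrow$ (never $\Downarrow_{lval}$) and \emph{always} returns the empty copy-out map $[]$, even when $exp$ is syntactically an l-value; so your claim that ``$d = \mathit{in}$ when the argument is an l-value'' falls into case~(i) and records $lval_a \mapsto l_a$ is incorrect. All of $d = \mathit{in}$ should be handled by your value case (and conclusion~(3) is then vacuous for all of $d=\mathit{in}$, not just the non-l-value subcase). Conversely, the paper's proof carries a third case, $d = \mathit{out}$, in which $exp$ is resolved to an l-value but the fresh location is initialized with $init_\Delta\,\tau$ rather than with the read value; your description of case~(i) does not cover this. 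Once you realign the cases with the actual $\Downarrow_{copy}$ rules, the arguments you have written go through essentially unchanged (the $\mathit{out}$ case is easier than $\mathit{inout}$ since both sides store the same constant $init_\Delta\,\tau$). One further small point: \Cref{ni-exp} is not ``already proved'' when this lemma is invoked---it is used here as the induction hypothesis in the mutual induction, applied to the strictly smaller typing derivation of $exp$.
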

\begin{proof}
Case analysis on the possible directionalities d for the arguments.
\begin{enumerate}
\item \textbf{Copy In}
If the statement $in~x:\langle \tau, \chi\rangle := exp$ is evaluated in two different initial configurations $\langle \mu_a, \epsilon_a \rangle$ and $\langle \mu_b, \epsilon_b \rangle$ satisfying
		\[
			\semanticBelowPCState{l}{\Xi_a}{\Xi_b}{\Delta}{\mu_{a}}{\epsilon_{a}}{\mu_{b}}{\epsilon_{b}}{\Gamma}
    \]

				as follows:
				\[
						\inferrule*[]
						{
							 \langle \mathcal{C}, \Delta, \mu_{a}, \epsilon_{a}, exp\rangle \Downarrow  \langle \mu_{a1}, val_{a}\rangle \qquad l_{a}~ fresh
						}
						{ \langle \mathcal{C}, \Delta, \mu_{a}, \epsilon_{a}, in~ x: \langle \tau, \chi \rangle= exp\rangle \Downarrow_{copy}  \langle \mu_{a1}[l_{a} \mapsto val_{a}], x\mapsto l_{a}, []\rangle
						}
				\]

				\[
						\inferrule*[]
						{
							 \langle \mathcal{C}, \Delta, \mu_{b}, \epsilon_{b}, exp\rangle \Downarrow  \langle \mu_{b1}, val_{b}\rangle \qquad l_{b}~ fresh
						}
						{ \langle \mathcal{C}, \Delta, \mu_{b}, \epsilon_{b}, in~ x: \langle \tau, \chi \rangle= exp\rangle \Downarrow_{copy}  \langle \mu_{b1}[l_{b} \mapsto val_{b}], x\mapsto l_{b}, []\rangle
						}
				\]

    then we need to show each of the following, where $\mu_{a}' = \mu_{a1}[l_{a} \mapsto val_{a}]$ and $\mu_{b}' = \mu_{b1}[l_{b} \mapsto val_{b}]$, $\Gamma' = \{x \mapsto \type{\tau}{\chi}\}$.
\begin{enumerate}
	\item \label{copy-part-a} $\semanticBelowPCState{l}{\Xi_a'}{\Xi_b'}{\Delta}{\mu_{a}'}{x \mapsto l_a}{\mu_{b}'}{x \mapsto l_b}{\Gamma'}$, for some $\Xi_a'$, $\Xi_b'$, $\mu_a'$, $\mu_b'$ such that $\Xi_a \subseteq \Xi_a'$ and $\Xi_b \subseteq \Xi_b'$, $\dom{\mu_a} \subseteq \dom{\mu_a'}$, $\dom{\mu_b} \subseteq \dom{\mu_b'}$.
	\item \label{copy-part-b} $\semanticBelowPCState{l}{\Xi_a'}{\Xi_b'}{\Delta}{\mu_{a}'}{\epsilon_a}{\mu_{b}'}{\epsilon_b}{\Gamma}$
	\item \label{copy-part-c} Since the set of l-values, \ie the third element of the final tuple is empty, vacuously we have $\textsc{lval_base}(lval_a) \in \dom{\epsilon_a}$, $\textsc{lval_base}(lval_b) \in \dom{\epsilon_b}$, and $\textsc{lval_base}(lval_a) = \textsc{lval_base}(lval_b)$.
	\item \label{copy-part-d} $l_a \in \dom{\mu_{a}'}$, $l_b \in \dom{\mu_{b}'}$, $l_a$ and $l_b$ are fresh locations, $l_a \notin \Xi_a$ and $l_b \notin \Xi_b$,
	\item \label{copy-part-e}For any $l_a \in \dom{\mu_a}$ such that $\ordinaryTyping[]{\Xi_a}{\Delta}{\mu_a(l_a)}{\tau_{clos}}$, where $\tau_{clos} \in \{\tau_{fn}, \tau_{tbl}\}$, then $\mu_a'(l_a) = \mu_a(l_a)$. Similarly for any $l_b \in \dom{\mu_b}$ such that $\ordinaryTyping[]{\Xi_b}{\Delta}{\mu_b(l_b)}{\tau_{clos}}$, where $\tau_{clos} \in \{\tau_{fn}, \tau_{tbl}\}$, then $\mu_b'(l_b) = \mu_b(l_b)$,
	\item For any $l_a \in \dom{\mu_a}$ and $l_b \in \dom{\mu_b}$ such that $\ordinaryTyping[]{\Xi_a}{\Delta}{\mu_a(l_a)}{\type{\tau}{\chi}}$ and $\ordinaryTyping[]{\Xi_b}{\Delta}{\mu_b(l_b)}{\type{\tau}{\chi}}$ and $pc \nsqsubseteq \chi$, we have $\mu_{a}'(l_a) = \mu_{a}(l_a)$ and $\mu_{b}'(l_b) = \mu_{b}(l_b)$.
\end{enumerate}

By applying the induction hypothesis of \Cref{ni-exp} on $\ordinaryTyping[pc]{\Gamma}{\Delta}{exp}{\type{\tau}{\chi'}}$, we conclude that $\NIexp{pc}{\Gamma}{\Delta}{exp}{\type{\tau}{\chi'}}$. This can be expanded to show that there exist some $\Xi_{a1}$, $\Xi_{b1}$, $\mu_{a1}$, $\mu_{b1}$ satisfying $\Xi_a \subseteq \Xi_{a1}$, $\Xi_b \subseteq \Xi_{b1}$, $\dom{\mu_a} \subseteq \dom{\mu_{a1}}$,  $\dom{\mu_b} \subseteq \dom{\mu_{b1}}$ and the following:
	\begin{equation}	\label{copy-in-ih-1-1}
		\semanticBelowPCState{l}{\Xi_{a1}}{\Xi_{b1}}{\Delta}{\mu_{a1}}{\epsilon_{a}}{\mu_{b1}}{\epsilon_{b}}{\Gamma}
	\end{equation}
	\begin{equation}\label{copy-in-ih-1-red}
		\NIval{l}{\Xi_{a1}}{\Xi_{b1}}{\Delta}{val_{a}}{val_b}{\type{\tau}{\chi'}}
	\end{equation}
and for any $l_a \in \dom{\mu_a}$ such that $\ordinaryTyping[]{\Xi_a}{\Delta}{\mu_a(l_a)}{\tau_{clos}}$, where $\tau_{clos} \in \{\tau_{fn}, \tau_{tbl}\}$, then $\mu_{a1}(l_a) = \mu_a(l_a)$. Similarly for any $l_b \in \dom{\mu_b}$ such that $\ordinaryTyping[]{\Xi_b}{\Delta}{\mu_b(l_b)}{\tau_{clos}}$, where $\tau_{clos} \in \{\tau_{fn}, \tau_{tbl}\}$, then $\mu_{b1}(l_b) = \mu_b(l_b)$,
Also, for any $l_a \in \dom{\mu_a}$ and $l_b \in \dom{\mu_b}$ such that $\ordinaryTyping[]{\Xi_a}{\Delta}{\mu_a(l_a)}{\type{\tau}{\chi}}$ and $\ordinaryTyping[]{\Xi_b}{\Delta}{\mu_b(l_b)}{\type{\tau}{\chi}}$ and $pc \nsqsubseteq \chi$, we have $\mu_{a1}(l_a) = \mu_{a}(l_a)$ and $\mu_{b1}(l_b) = \mu_{b}(l_b)$,

Using \Cref{lem:ni-subtyping}, we can reduce the \Cref{copy-in-ih-1-red} as follows since $\chi' \sqsubseteq \chi$:
	\begin{equation}	\label{copy-in-ih-1-2}
		\NIval{l}{\Xi_{a1}}{\Xi_{b1}}{\Delta}{val_{a}}{val_b}{\type{\tau}{\chi}}
	\end{equation}

To prove \Cref{copy-part-a}, we take $\Xi_{a}' = \Xi_{a1}[l_a \mapsto \type{\tau}{\chi}]$, $\Xi_{b}' = \Xi_{b1}[l_b \mapsto \type{\tau}{\chi}]$, and $\mu_{a}' = \mu_{a1}[l_a \mapsto val_a]$ and $\mu_{b}' = \mu_{b1}[l_b \mapsto val_b]$. Now to prove $\semanticBelowPCState{l}{\Xi_a'}{\Xi_b'}{\Delta}{\mu_{a}'}{x \mapsto l_a}{\mu_{b}'}{x \mapsto l_a}{\Gamma'}$, we need to show:
		\begin{enumerate}
			\item $\semanticStoreEnv{\Xi_a'}{\Delta}{\mu_{a}'}{\{x \mapsto l_a\}}{\{x \mapsto \type{\tau}{\chi}\}}$ and $\semanticStoreEnv{\Xi_b'}{\Delta}{\mu_{b}'}{\{x \mapsto l_b\}}{\{x \mapsto \type{\tau}{\chi}\}}$.	$\semanticStoreEnv{\Xi_a'}{\Delta}{\mu_{a}'}{\{x \mapsto l_a\}}{\{x \mapsto \type{\tau}{\chi}\}}$ holds as $\semanticStore{\Xi_a'}{\Delta}{\mu_a'}$ (using \Cref{lem:store-typing-extended}) and $\Xi_a' \vdash \{x \mapsto l_a\}: \{x \mapsto \type{\tau}{\chi}\}$ (by definition). Since $x$ is not of function type (as we do not support higher-order function), we do not need to prove the third/ fourth property of \Cref{def:store-env-semantic typing}. Similarly, $\semanticStoreEnv{\Xi_b'}{\Delta}{\mu_{b}'}{\{x \mapsto l_b\}}{\{x \mapsto \type{\tau}{\chi}\}}$ also holds.

			\item $\dom{\{x \mapsto l_a\}} = \dom{\{x \mapsto l_b\}}$. Trivial.
			\item $\NIval{l}{\Xi_a'}{\Xi_b'}{\Delta}{\mu_a'(l_a)}{\mu_b'(l_b)}{\type{\tau}{\chi})}$

			Applying \Cref{lem-ni-val-subtype} on \Cref{copy-in-ih-1-2} with $\Xi_{a1} \subseteq \Xi_a'$ and $\Xi_{b1} \subseteq \Xi_{b}'$, we conclude $\NIval{l}{\Xi_{a}'}{\Xi_{b}'}{\Delta}{val_{a}}{val_b}{\type{\tau}{\chi}}$.
			As $\mu_a'(l_a) = val_a$ and $\mu_b'(l_b) = val_b$, we have shown the necessary.

		\end{enumerate}
    With this we have shown \Cref{copy-part-a}. Observe that we do not need to show properties related to closure variables because $x$ is not a closure variable in our setting.

    Can't this be proved by saying that old locations have same value?
		To prove \Cref{copy-part-b}, we apply \Cref{lem:only-mem-superset} on \Cref{copy-in-ih-1-1} with $\Xi_{a1} \subseteq \Xi_{a}'$, $\Xi_{b1} \subseteq \Xi_{b}'$, $\mu_{a1} \subseteq \mu_a'$, and $\mu_{b1} \subseteq \mu_b'$, to conclude $~\semanticBelowPCState{l}{\Xi_a'}{\Xi_b'}{\Delta}{\mu_{a}'}{\epsilon_{a}}{\mu_{b}'}{\epsilon_{b}}{\Gamma}$.

\Cref{copy-part-d} can be seen in the final configuration of the evaluation rule.
\Cref{copy-part-e} is satisfied using the result of applying induction hypothesis of non-interference for expression.
				\item \textbf{Copy out}

If the statement, $out~x:\langle \tau, \chi\rangle := exp$, is evaluated in two different initial configurations $\langle \mu_a, \epsilon_a \rangle$ and $\langle \mu_b, \epsilon_b \rangle$ satisfying $\semanticBelowPCState{l}{\Xi_a}{\Xi_b}{\Delta}{\mu_{a}}{\epsilon_{a}}{\mu_{b}}{\epsilon_{b}}{\Gamma}$ as follows:
				\[
						\inferrule*[]
						{
							 \langle \mathcal{C}, \Delta, \mu_{a}, \epsilon_{a}, exp\rangle \Downarrow_{lval}  \langle \mu_{a1}, lval_{a}\rangle \qquad l_{a}~ fresh
						}
						{
						 \langle \mathcal{C}, \Delta, \mu_{a}, \epsilon_{a}, out~ x: \langle \tau, \chi \rangle= exp\rangle \Downarrow_{copy}  \langle \mu_{a1}[l_{a} \mapsto init_{\Delta} \tau], x\mapsto l_{a}, [lval_{a}:=l_{a}]\rangle
						}
				\]

				\[
						\inferrule*[]
						{
							 \langle \mathcal{C}, \Delta, \mu_{b}, \epsilon_{b}, exp\rangle \Downarrow_{lval}  \langle \mu_{b1}, lval_{b}\rangle \qquad l_{b}~ fresh
						}
						{
						 \langle \mathcal{C}, \Delta, \mu_{b}, \epsilon_{b}, out~ x: \langle \tau, \chi \rangle= exp\rangle \Downarrow_{copy}  \langle \mu_{b1}[l_{b} \mapsto init_{\Delta} \tau], x\mapsto l_{b}, [lval_{b}:=l_{b}]\rangle
						}
				\]

    Then we need to show each of the following, where $\mu_a' =  \mu_{a1}[l_{a} \mapsto init_{\Delta} \tau]$ and $\mu_b' = \mu_{b1}[l_{b} \mapsto init_{\Delta} \tau]$:
\begin{enumerate}
	\item \label{copy-out-part-a} $\semanticBelowPCState{l}{\Xi_a'}{\Xi_b'}{\Delta}{\mu_{a}'}{x \mapsto l_a}{\mu_{b}'}{x \mapsto l_b}{\Gamma'}$, for some $\Xi_a'$, $\Xi_b'$, $\mu_a'$, $\mu_b'$ such that $\Xi_a \subseteq \Xi_a'$ and $\Xi_b \subseteq \Xi_b'$, $\dom{\mu_a} \subseteq \dom{\mu_a'}$, $\dom{\mu_b'} \subseteq \dom{\mu_b'}$ and $\Gamma' = \{x \mapsto \type{\tau}{\chi}\}$.
	\item \label{copy-out-part-b} $\semanticBelowPCState{l}{\Xi_a'}{\Xi_b'}{\Delta}{\mu_{a}'}{\epsilon_a}{\mu_{b}'}{\epsilon_b}{\Gamma}$
	\item \label{copy-out-part-c} $\textsc{lval_base}(lval_a) \in \dom{\epsilon_a}$, $\textsc{lval_base}(lval_b) \in \dom{\epsilon_b}$, and $\textsc{lval_base}(lval_a) = \textsc{lval_base}(lval_b)$.
	\item \label{copy-out-part-d} $l_a \in \dom{\mu_{a}'}$ and $l_b \in \dom{\mu_{b}'}$. $l_a$ and $l_b$ are fresh locations, $l_a \notin \Xi_a$ and $l_b \notin \Xi_b$,
	\item \label{copy-out-part-e} for any $l_a \in \dom{\mu_a}$ such that $\ordinaryTyping[]{\Xi_a}{\Delta}{\mu_a(l_a)}{\tau_{clos}}$, where $\tau_{clos} \in \{\tau_{fn}, \tau_{tbl}\}$, then $\mu_a'(l_a) = \mu_a(l_a)$. Similarly for any $l_b \in \dom{\mu_b}$ such that $\ordinaryTyping[]{\Xi_b}{\Delta}{\mu_b(l_b)}{\tau_{clos}}$, where $\tau_{clos} \in \{\tau_{fn}, \tau_{tbl}\}$, then $\mu_b'(l_b) = \mu_b(l_b)$,
	\item for any $l_a \in \dom{\mu_a}$ and $l_b \in \dom{\mu_b}$ such that $\ordinaryTyping[]{\Xi_a}{\Delta}{\mu_a(l_a)}{\type{\tau}{\chi}}$ and $\ordinaryTyping[]{\Xi_b}{\Delta}{\mu_b(l_b)}{\type{\tau}{\chi}}$ and $pc \nsqsubseteq \chi$, we have $\mu_{a}'(l_a) = \mu_{a}(l_a)$ and $\mu_{b}'(l_b) = \mu_{b}(l_b)$.
\end{enumerate}

The proof for this case is similar to Copy-in, with the difference that here we use \Cref{lem-lval-eval} to conclude $\semanticBelowPCState{l}{\Xi_{a1}}{\Xi_{b1}}{\Delta}{\mu_{a1}}{\epsilon_{a}}{\mu_{b1}}{\epsilon_{b}}{\Gamma}$, for some $\Xi_{a1}$, $\Xi_{b1}$, $\mu_{a1}$ and $\mu_{b1}$ satisfying $\Xi_a \subseteq \Xi_{a1}$, $\Xi_b \subseteq \Xi_{b1}$, $\dom{\mu_a} \subseteq \dom{\mu_{a1}}$ and $\dom{\mu_b} \subseteq \dom{\mu_{b1}}$, \Cref{copy-out-part-c}, and \Cref{copy-out-part-d}.
    From here proving all the cases is similar to the Copy-in.
				\item \textbf{Copy inout}

				If the statement, $inout~x:\langle \tau, \chi\rangle := exp$, is evaluated in two different initial configurations $\langle \mu_a, \epsilon_a \rangle$ and $\langle \mu_b, \epsilon_b \rangle$ satisfying $\semanticBelowPCState{l}{\Xi_a}{\Xi_b}{\Delta}{\mu_{a}}{\epsilon_{a}}{\mu_{b}}{\epsilon_{b}}{\Gamma}$ as follows:

				\[
						\inferrule*[]
						{
							 \langle \mathcal{C}, \Delta, \mu_{a}, \epsilon_{a}, exp\rangle \Downarrow_{lval}  \langle \mu_{a1}, lval_{a}\rangle \\
							 \langle \mathcal{C}, \Delta, \mu_{a1}, \epsilon_{a}, lval_{a}\rangle \Downarrow  \langle \mu_{a2}, val_{a}\rangle \\
							 l_{a}~ fresh
						}
						{
						 \langle \mathcal{C}, \Delta, \mu_{a}, \epsilon_{a}, inout~ x: \langle \tau, \chi \rangle= exp\rangle \Downarrow_{copy}  \langle \mu_{a2}[l_{a} \mapsto val_{a}], x\mapsto l_{a}, [lval_{a}:=l_a]\rangle
						}
				\]

				\[
						\inferrule*[]
						{
							 \langle \mathcal{C}, \Delta, \mu_{b}, \epsilon_{b}, exp\rangle \Downarrow_{lval}  \langle \mu_{b1}, lval_{b}\rangle \\
							 \langle \mathcal{C}, \Delta, \mu_{b1}, \epsilon_{b}, lval_{b}\rangle \Downarrow  \langle \mu_{b2}, val_{b}\rangle \\
							 l_{b}~ fresh
						}
						{
						 \langle \mathcal{C}, \Delta, \mu_{b}, \epsilon_{b}, inout~ x: \langle \tau, \chi \rangle= exp\rangle \Downarrow_{copy}  \langle \mu_{b2}[l_{b} \mapsto val_{b}], x\mapsto l_{b}, [lval_{b}:=l_b]\rangle
						}
					\]
	Then we need to show each of the following, where $\mu_a' = \mu_{a2}[l_{a} \mapsto val_{a}]$, $\mu_b' = \mu_{b2}[l_{b} \mapsto val_{b}]$:
\begin{enumerate}
	\item \label{copy-io-part-a} $\semanticBelowPCState{l}{\Xi_a'}{\Xi_b'}{\Delta}{\mu_{a}'}{x \mapsto l_a}{\mu_{b}'}{x \mapsto l_b}{\Gamma'}$, for some $\Xi_a'$, $\Xi_b'$, $\mu_a'$, $\mu_b'$ such that $\Xi_a \subseteq \Xi_a'$ and $\Xi_b \subseteq \Xi_b'$, $\dom{\mu_a} \subseteq \dom{\mu_a'}$, $\dom{\mu_b'} \subseteq \dom{\mu_b'}$ and $\Gamma' = \{x \mapsto \type{\tau}{\chi}\}$.
	\item \label{copy-io-part-b} $\semanticBelowPCState{l}{\Xi_a'}{\Xi_b'}{\Delta}{\mu_{a}'}{\epsilon_a}{\mu_{b}'}{\epsilon_b}{\Gamma}$,
	\item \label{copy-io-part-c} $\textsc{lval_base}(lval_a) \in \dom{\epsilon_a}$, $\textsc{lval_base}(lval_b) \in \dom{\epsilon_b}$, and $\textsc{lval_base}(lval_a) = \textsc{lval_base}(lval_b)$.
	\item \label{copy-io-part-d} $l_a \in \dom{\mu_{a}'}$ and $l_b \in \dom{\mu_{b}'}$. $l_a$ and $l_b$ are fresh locations, $l_a \notin \Xi_a$ and $l_b \notin \Xi_b$,
	\item \label{copy-io-part-e}For any $l_a \in \dom{\mu_a}$ such that $\ordinaryTyping[]{\Xi_a}{\Delta}{\mu_a(l_a)}{\tau_{clos}}$, where $\tau_{clos} \in \{\tau_{fn}, \tau_{tbl}\}$, then $\mu_a'(l_a) = \mu_a(l_a)$. Similarly for any $l_b \in \dom{\mu_b}$ such that $\ordinaryTyping[]{\Xi_b}{\Delta}{\mu_b(l_b)}{\tau_{clos}}$, where $\tau_{clos} \in \{\tau_{fn}, \tau_{tbl}\}$, then $\mu_b'(l_b) = \mu_b(l_b)$.
 \item For any $l_a \in \dom{\mu_a}$ and $l_b \in \dom{\mu_b}$ such that $\ordinaryTyping[]{\Xi_a}{\Delta}{\mu_a(l_a)}{\type{\tau}{\chi}}$ and $\ordinaryTyping[]{\Xi_b}{\Delta}{\mu_b(l_b)}{\type{\tau}{\chi}}$ and $pc \nsqsubseteq \chi$, we have $\mu_{a}'(l_a) = \mu_{a}(l_a)$ and $\mu_{b}'(l_b) = \mu_{b}(l_b)$.
\end{enumerate}
By applying the induction hypothesis of \Cref{lem:lval-eval}, we conclude that
\begin{enumerate}
\item $\Xi_a \subseteq \Xi_{a1}$, $\Xi_b \subseteq \Xi_{b1}$, $\dom{\mu_a} \subseteq \dom{\mu_{a1}}$ and $\dom{\mu_b}\subseteq \dom{\mu_{b1}}$,
\item $\semanticBelowPCState{l}{\Xi_{a1}}{\Xi_{b1}}{\Delta}{\mu_{a1}}{\epsilon_{a}}{\mu_{b1}}{\epsilon_{b}}{\Gamma}$
\item For any $l_a \in \dom{\mu_a}$ such that $\ordinaryTyping[]{\Xi_a}{\Delta}{\mu_a(l_a)}{\tau_{clos}}$, where $\tau_{clos} \in \{\tau_{fn}, \tau_{tbl}\}$, then $\mu_{a1}(l_a) = \mu_a(l_a)$. Similarly for any $l_b \in \dom{\mu_b}$ such that $\ordinaryTyping[]{\Xi_b}{\Delta}{\mu_b(l_b)}{\tau_{clos}}$, where $\tau_{clos} \in \{\tau_{fn}, \tau_{tbl}\}$, then $\mu_{b1}(l_b) = \mu_b(l_b)$,
\item if $\chi' \sqsubseteq pc$, then $\lvalEquality{lval_a}{lval_b}$,
\item $\textsc{lval_base}(lval_a) \in \dom{\epsilon_a}$, $\textsc{lval_base}(lval_b) \in \dom{\epsilon_b}$ and $\textsc{lval_base}(lval_a) = \textsc{lval_base}(lval_b)$.
\item $\ordinaryTyping[pc]{\Gamma}{\Delta}{lval_a}{\type{\tau}{\chi}}$ and $\ordinaryTyping[pc]{\Gamma}{\Delta}{lval_b}{\type{\tau}{\chi}}$
\item For any $l_a \in \dom{\mu_{a1}}$ and $l_b \in \dom{\mu_{b1}}$ such that $\ordinaryTyping[]{\Xi_{a1}}{\Delta}{\mu_{a1}(l_a)}{\type{\tau}{\chi}}$ and $\ordinaryTyping[]{\Xi_{b1}}{\Delta}{\mu_{b1}(l_b)}{\type{\tau}{\chi}}$ and $pc \nsqsubseteq \chi$, we have $\mu_{a1}(l_a) = \mu_{a2}(l_a)$ and $\mu_{b1}(l_b) = \mu_{b2}(l_b)$.
\end{enumerate}

By applying \Cref{ni-exp} on the expressions, $lval_a$ and $lval_b$ (which satisfy $lval_a  lval_b$), where $\ordinaryTyping[pc]{\Gamma}{\Delta}{lval_a}{\type{\tau}{\chi}}$ and $\ordinaryTyping[pc]{\Gamma}{\Delta}{lval_b}{\type{\tau}{\chi}}$ we get $\NIval{l}{\Xi_{a1}'}{\Xi_{b1}'}{\Delta}{val_a}{val_b}{\type{\tau}{\chi'}}$. Here, $\Xi_{a1} \subseteq \Xi_{a1}'$ and $\Xi_{b1} \subseteq \Xi_{b1}'$. By applying \Cref{no-side-lval}, we conclude that $\mu_{a2} = \mu_{a1}$ and $\mu_{b2} = \mu_{b1}$. Now similar to the proof for copy-in, we can prove that \Cref{copy-io-part-a} and \Cref{copy-io-part-b}. The other parts directly follow from the above induction results.
\end{enumerate}
\end{proof}

Lifting the copy-in-out rules to a list of statements, we arrive at the following lemma:
\begin{lemma} \label{lem:copy-in-out}
	Consider well-typed expressions$\overline{\ordinaryTyping[pc]{\Gamma}{\Delta}{exp}{\type{\tau}{\chi'}}}$ and the statement, $\overline{d~x:\langle \tau, \chi\rangle := exp}$, , where $\chi' \sqsubseteq \chi$ that is evaluated in two different initial configurations $\langle \mu_a, \epsilon_a \rangle$ and $\langle \mu_b, \epsilon_b \rangle$ satisfying $\semanticBelowPCState{l}{\Xi_a}{\Xi_b}{\Delta}{\mu_{a}}{\epsilon_{a}}{\mu_{b}}{\epsilon_{b}}{\Gamma}$ as follows:

	$\evalstocopy
    {\config[\overline{d~x:\langle \tau, \chi\rangle := exp}]{\mathcal{C};\Delta}{\mu_{a1}}{\epsilon_{a}}}
    {\configcopy{\mu_{a2}}{\overline{x \mapsto l_a}}{\overline{lval_a \mapsto l_a}}}$
    and

    $\evalstocopy
    {\config[\overline{d~x:\langle \tau, \chi\rangle := exp}]{\mathcal{C};\Delta}{\mu_{b1}}{\epsilon_{b}}}
    {\configcopy{\mu_{b2}}{\overline{x \mapsto l_b}}{\overline{lval_b \mapsto l_b}}}$

    Then:
\begin{enumerate}
	\item $\semanticBelowPCState{l}{\Xi_a'}{\Xi_b'}{\Delta}{\mu_{a2}}{\overline{x \mapsto l_a}}{\mu_{b2}}{\overline{x \mapsto l_b}}{\Gamma'}$, for some $\Xi_a'$, $\Xi_b'$, $\mu_a'$, $\mu_b'$ such that $\Xi_a \subseteq \Xi_a'$ and $\Xi_b \subseteq \Xi_b'$, $\dom{\mu_a} \subseteq \dom{\mu_a'}$, $\dom{\mu_b'} \subseteq \dom{\mu_b'}$ and $\Gamma' = \{\overline{x \mapsto \type{\tau}{\chi}}\}$,
	\item $\semanticBelowPCState{l}{\Xi_a'}{\Xi_b'}{\Delta}{\mu_{a2}}{\epsilon_a}{\mu_{b2}}{\epsilon_b}{\Gamma}$,
	\item $\textsc{lval_base}(lval_a) \in \dom{\epsilon_a}$, $\textsc{lval_base}(lval_b) \in \dom{\epsilon_b}$, and $\textsc{lval_base}(lval_a) = \textsc{lval_base}(lval_b)$ for each $lval_a$ and $lval_b$,
	\item $\overline{l_a} \in \dom{\mu_{a2}}$ and $\overline{l_b} \in \dom{\mu_{b2}}$
	\item $\overline{l_a}$ and $\overline{l_b}$ are fresh locations, $\overline{l_a} \notin \Xi_a$ and $\overline{l_b} \notin \Xi_b$,
	\item For any $l_a \in \dom{\mu_a}$ such that $\ordinaryTyping[]{\Xi_a}{\Delta}{\mu_a(l_a)}{\tau_{clos}}$, where $\tau_{clos} \in \{\tau_{fn}, \tau_{tbl}\}$, then $\mu_a'(l_a) = \mu_a(l_a)$. Similarly for any $l_b \in \dom{\mu_b}$ such that $\ordinaryTyping[]{\Xi_b}{\Delta}{\mu_b(l_b)}{\tau_{clos}}$, where $\tau_{clos} \in \{\tau_{fn}, \tau_{tbl}\}$, then $\mu_b'(l_b) = \mu_b(l_b)$,
	\item For any $l_a \in \dom{\mu_a}$ and $l_b \in \dom{\mu_b}$ such that $\ordinaryTyping[]{\Xi_a}{\Delta}{\mu_a(l_a)}{\type{\tau}{\chi}}$ and $\ordinaryTyping[]{\Xi_b}{\Delta}{\mu_b(l_b)}{\type{\tau}{\chi}}$ and $pc \nsqsubseteq \chi$, we have $\mu_{a}'(l_a) = \mu_{a}(l_a)$ and $\mu_{b}'(l_b) = \mu_{b}(l_b)$.
\end{enumerate}

\textbf{Note. } By \Cref{def:mem-store-pair-semantic}, $\dom{\epsilon_a} = \dom{\epsilon_b}$.
\end{lemma}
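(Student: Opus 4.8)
The plan is to prove \Cref{lem:copy-in-out} by induction on the length of the argument list, using \Cref{lem:copy-in-out-one} as the per-element workhorse and \Cref{lem:pair-env-extend} to glue the individual parameter bindings into the accumulated environment.

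\textbf{Base case.} For the empty list, the copy-in-out evaluation leaves $\mu_a,\mu_b$ (and $\epsilon_a,\epsilon_b$) unchanged and produces the empty binding and empty write-back list. Taking $\Xi_a'=\Xi_a$ and $\Xi_b'=\Xi_b$, every conclusion holds either vacuously (the lval-base clauses, since there are no lvals) or directly from the input hypothesis $\semanticBelowPCState{l}{\Xi_a}{\Xi_b}{\Delta}{\mu_a}{\epsilon_a}{\mu_b}{\epsilon_b}{\Gamma}$, with $\Gamma'$-equivalence at the empty $\Gamma'$ being immediate.

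\textbf{Inductive step.} I would write the list as a head $d_1\,x_1{:}\langle\tau_1,\chi_1\rangle := exp_1$ followed by a tail, and observe that the evaluation factors accordingly: first evaluate the head from $\langle\mu_a,\epsilon_a\rangle$ (resp. $\langle\mu_b,\epsilon_b\rangle$), producing $\mu_{a,1}$ (resp. $\mu_{b,1}$), binding $x_1$, and a write-back entry; then evaluate the tail from $\langle\mu_{a,1},\epsilon_a\rangle$ (resp. $\langle\mu_{b,1},\epsilon_b\rangle$). Applying \Cref{lem:copy-in-out-one} to the head — its hypotheses are exactly $\ordinaryTyping[pc]{\Gamma}{\Delta}{exp_1}{\type{\tau_1}{\chi'_1}}$ with $\chi'_1\sqsubseteq\chi_1$ and the input equivalence — yields $\Xi_{a,1}\supseteq\Xi_a$, $\Xi_{b,1}\supseteq\Xi_b$ with (i) below-pc equivalence of $(\mu_{a,1},x_1\mapsto l_{a,1})$ and $(\mu_{b,1},x_1\mapsto l_{b,1})$ at $\{x_1{:}\langle\tau_1,\chi_1\rangle\}$; (ii) $\semanticBelowPCState{l}{\Xi_{a,1}}{\Xi_{b,1}}{\Delta}{\mu_{a,1}}{\epsilon_a}{\mu_{b,1}}{\epsilon_b}{\Gamma}$; plus the lval-base, freshness, closure-preservation, and PC-bound-writes clauses for the head. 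Clause (ii) is precisely what is needed to fire the induction hypothesis on the tail (same $\Gamma$, $\epsilon_a$, $\epsilon_b$, since copy-in never touches the caller environment), giving $\Xi_a'\supseteq\Xi_{a,1}$, $\Xi_b'\supseteq\Xi_{b,1}$ and the full conclusions for the tail relative to $\mu_{a,1},\mu_{b,1}$. Finally I glue: \Cref{lem:pair-env-extend} combines (i) with the tail's binding-equivalence to yield the required equivalence at $\Gamma'=\{\overline{x\mapsto\langle\tau,\chi\rangle}\}$; the write-back list and the lval-base equalities are the union of head and tail; and the $\subseteq$-inclusions together with freshness chain by transitivity.

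\textbf{The hard part} will be the bookkeeping needed to compose, across the whole chain, the two ``preservation'' clauses — closure/table values at old locations unchanged, and locations whose label $\chi$ satisfies $pc\nsqsubseteq\chi$ unchanged — because \Cref{lem:copy-in-out-one} asserts them relative to the store \emph{at the start of its own step}, not the original $\mu_a,\mu_b$. I would chain the equalities $\mu_a(l)=\mu_{a,1}(l)=\cdots=\mu_a'(l)$, but to line up the intermediate invocations I must observe that a location typed at a closure/table type (resp. at a label $\chi$) under $\Xi_a$ keeps that typing under every extension $\Xi_{a,i}$; this is immediate from $\Xi_a\subseteq\Xi_{a,i}$, the fact that a store typing is a function so extensions only add fresh locations, and \Cref{aux}. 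A secondary wrinkle is re-validating the head equivalence (i) under the \emph{final} $\Xi_a',\Xi_b'$ and final stores before applying \Cref{lem:pair-env-extend}; this is handled by a paired analogue of \Cref{lem:only-mem-superset}, or pointwise by \Cref{lem-ni-val-subtype} together with monotonicity of environment typing (\Cref{lem-env-subtyping}). None of this is conceptually deep, but it is where the proof accumulates most of its length.
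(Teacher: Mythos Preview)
Your proposal is correct and matches the paper's approach: the paper does not spell out a proof of \Cref{lem:copy-in-out} but simply introduces it as ``lifting the copy-in-out rules to a list of statements'' from \Cref{lem:copy-in-out-one}, which is precisely the induction-on-list-length argument you describe. Your identification of the chaining bookkeeping (preservation clauses relative to intermediate stores, and re-validating the head equivalence under the final store typings via \Cref{lem-ni-val-subtype} and \Cref{lem:only-mem-superset}) is the right level of care for what the paper leaves implicit.
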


\section{Proof of Non-Interference} \label{app:proof-ni}
\paragraph*{Proof of \Cref{ni-exp}}
The proof is given by induction on the typing derivation of the expression and the cases are given by the last typing rule in the expression's typing derivation.
\begin{enumerate}
\item \textbf{T-Int}
If the typing derivation ends with the following last rule
\[
\inferrule*[right=T-Int]
{
}
{
\ordinaryTyping[pc]{\Gamma}{\Delta}{n_{w}}{\type{int}{\bot}~goes~ in}
}
\]
then we need to show that for any $\Xi_a$, $\Xi_b$, $\mu_{a}$, $\mu_{b}$, $\epsilon_{a}$, $\epsilon_{b}$, $\mu_a'$, $\mu_b'$ satisfying
\begin{equation}\label{int-initial-hypo}
\semanticBelowPCState{l}{\Xi_a}{\Xi_b}{\Delta}{\mu_a}{\epsilon_a}{\mu_b}{\epsilon_b}{\Gamma}
\end{equation}
if the expression $n_w$ is evaluated under two different initial configurations $\langle \mu_a, \epsilon_a \rangle$ and $\langle \mu_b, \epsilon_b \rangle$ as follows:
\[
\inferrule*[right=Eval 1]
{~}
{ \langle \mathcal{C}, \Delta; \mu_{a}; \epsilon_{a}; n_{w} \rangle \Downarrow \langle \mu_{a}, n_{w}\rangle}
\]
\[
\inferrule*[right=Eval 2]
{~}
{ \langle \mathcal{C}, \Delta; \mu_{b}; \epsilon_{b}; n_{w} \rangle \Downarrow \langle \mu_{b}, n_{w}\rangle}
\]
then there exists some $\Xi_a'$ and $\Xi_b'$, such that the following hold:
\begin{enumerate}
\item $\ordinaryTyping[pc]{\Gamma}{\Delta}{n_w}{\type{int}{\bot}}$. Already given in the hypothesis of this theorem,
\item \label{int-tp-1} $\Xi_a \subseteq \Xi_a'$, $\Xi_b \subseteq \Xi_b'$, $\dom{\mu_a} \subseteq \dom{\mu_a'}$, and $\dom{\mu_b} \subseteq \dom{\mu_b'}$ and $\semanticBelowPCState{l}{\Xi_{a}'}{\Xi_{b}'}{\Delta}{\mu_{a}'}{\epsilon_{a}}{\mu_{b}'}{\epsilon_{b}}{\Gamma}$. Here, $\mu_a' = \mu_{a}$ and $\mu_b' = \mu_{b}$,
\item \label{int-tp-2} $\NIval{l}{\Xi_a'}{\Xi_b'}{\Delta}{val_{a}}{val_b}{\type{int}{\bot}}$,
\item For any $l_a \in \dom{\mu_a}$ such that $\ordinaryTyping[]{\Xi_a}{\Delta}{\mu_a(l_a)}{\tau_{clos}}$, where $\tau_{clos} \in \{\tau_{fn}, \tau_{tbl}\}$, then $\mu_a'(l_a) = \mu_a(l_a)$. Similarly for any $l_b \in \dom{\mu_b}$ such that $\ordinaryTyping[]{\Xi_b}{\Delta}{\mu_b(l_b)}{\tau_{clos}}$, where $\tau_{clos} \in \{\tau_{fn}, \tau_{tbl}\}$, then $\mu_b'(l_b) = \mu_b(l_b)$. This is trivial, since memory store doesn't change.
\item For any $l_a \in \dom{\mu_a}$ and $l_b \in \dom{\mu_b}$ such that $\ordinaryTyping[]{\Xi_a}{\Delta}{\mu_a(l_a)}{\type{\tau}{\chi}}$ and $\ordinaryTyping[]{\Xi_b}{\Delta}{\mu_b(l_b)}{\type{\tau}{\chi}}$ and $pc \nsqsubseteq \chi$, we have $\mu_{a}'(l_a) = \mu_{a}(l_a)$ and $\mu_{b}'(l_b) = \mu_{b}(l_b)$. This is trivial, since memory store doesn't change.
\end{enumerate}
First we will prove \Cref{int-tp-1}. Let $\Xi_a' = \Xi_a$ and $\Xi_b' = \Xi_b$, now showing \Cref{int-tp-1} is same as showing
\begin{equation}\label{int-tp-1-red}
\semanticBelowPCState{l}{\Xi_{a}}{\Xi_{b}}{\Delta}{\mu_{a}'}{\epsilon_{a}}{\mu_{b}'}{\epsilon_{b}}{\Gamma}
\end{equation}
From the evaluation rule, we know $\mu_a' = \mu_a$ and $\mu_b' = \mu_b$. Therefore, showing \Cref{int-tp-1-red} is same as showing
\begin{equation*}
\semanticBelowPCState{l}{\Xi_{a}}{\Xi_{b}}{\Delta}{\mu_{a}}{\epsilon_{a}}{\mu_{b}}{\epsilon_{b}}{\Gamma}
\end{equation*}

This is what we had started out with in \Cref{int-initial-hypo}. Therefore we
have shown \Cref{int-tp-1}.

Next to show \Cref{int-tp-2}, we first expand the definition for  \NIforval and prove each of its requirement.
Since $\tau = \type{int}{\bot}$, using the syntactic typing, $\ordinaryTyping[pc]{\Gamma}{\Delta}{n_w}{\type{int}{\bot}~goes~ in}$ we can show that $\semanticTyping[pc]{\Xi_a'}{\Delta}{n_w}{ \type{int}{\bot}}$ and $\semanticTyping[pc]{\Xi_b'}{\Delta}{n_w}{ \type{int}{\bot}}$. Also, since both integers have equal value $val_a = n_w = val_b$, we have shown \NIforval.
\item \textbf{T-Bool} Similar to \textbf{E-Int}.
\item \textbf{T-Var}
If the typing derivation ends with the following last rule
\[      \inferrule*[right=T-Var]
{   x \in \dom{\Gamma} \qquad
\Gamma(x) = \type{\tau}{\chi}
}
{
\ordinaryTyping[pc]{\Gamma}{\Delta}{x}{\type{\tau}{\chi} \text{~goes inout}}
}
\]
then we need to show that for any $\Xi_a$, $\Xi_b$, $\mu_{a}$, $\mu_{b}$, $\epsilon_{a}$, $\epsilon_{b}$, $\mu_a'$, $\mu_b'$ satisfying
\begin{equation}\label{var-initial-hypo}
\semanticBelowPCState{l}{\Xi_a}{\Xi_b}{\Delta}{\mu_a}{\epsilon_a}{\mu_b}{\epsilon_b}{\Gamma},
\end{equation}
if the expression $x$ is evaluated under two different initial configurations $\langle \mu_a, \epsilon_a \rangle$ and $\langle \mu_b, \epsilon_b \rangle$ as follows:
\[
\inferrule*[right=Eval 1]
{
\epsilon_a(x) = l_{a} \qquad
\mu_{a}(l_{a}) = val_{a}
}
{ \langle \mathcal{C}, \Delta; \mu_{a}; \epsilon_{a}; x \rangle \Downarrow \langle \mu_{a}, val_{a}\rangle}
\]

\[
\inferrule*[right=Eval 2]
{
\epsilon_b(x) = l_{b} \qquad
\mu_{b}(l_{b}) = val_{b}
}
{ \langle \mathcal{C}, \Delta; \mu_{b}; \epsilon_{b}; x \rangle \Downarrow \langle \mu_{b}, val_{b}\rangle}
\]
then there exists some $\Xi_a'$ and $\Xi_b'$, such that the following hold:
\begin{enumerate}
\item $\ordinaryTyping[pc]{\Gamma}{\Delta}{x}{\type{\tau}{\chi}}$. Already given in the hypothesis of this theorem,
\item \label{var-tp-1}$\Xi_a \subseteq \Xi_a'$, $\Xi_b \subseteq \Xi_b'$, $\dom{\mu_a} \subseteq \dom{\mu_a'}$, and $\dom{\mu_b} \subseteq \dom{\mu_b'}$ and $\semanticBelowPCState{l}{\Xi_{a}'}{\Xi_{b}'}{\Delta}{\mu_{a}'}{\epsilon_{a}}{\mu_{b}'}{\epsilon_{b}}{\Gamma}$. Here, $\mu_a' = \mu_{a}$ and $\mu_b' = \mu_{b}$,
\item \label{var-tp-2} $\NIval{l}{\Xi_a'}{\Xi_b'}{\Delta}{val_{a}}{val_b}{\type{\tau}{\chi}}$,
\item For any $l_a \in \dom{\mu_a}$ such that $\ordinaryTyping[]{\Xi_a}{\Delta}{\mu_a(l_a)}{\tau_{clos}}$, where $\tau_{clos} \in \{\tau_{fn}, \tau_{tbl}\}$, then $\mu_a'(l_a) = \mu_a(l_a)$. Similarly for any $l_b \in \dom{\mu_b}$ such that $\ordinaryTyping[]{\Xi_b}{\Delta}{\mu_b(l_b)}{\tau_{clos}}$, where $\tau_{clos} \in \{\tau_{fn}, \tau_{tbl}\}$, then $\mu_b'(l_b) = \mu_b(l_b)$.
\item For any $l_a \in \dom{\mu_a}$ and $l_b \in \dom{\mu_b}$ such that $\ordinaryTyping[]{\Xi_a}{\Delta}{\mu_a(l_a)}{\type{\tau}{\chi}}$ and $\ordinaryTyping[]{\Xi_b}{\Delta}{\mu_b(l_b)}{\type{\tau}{\chi}}$ and $pc \nsqsubseteq \chi$, we have $\mu_{a}'(l_a) = \mu_{a}(l_a)$ and $\mu_{b}'(l_b) = \mu_{b}(l_b)$.
\end{enumerate}
First we will prove \Cref{var-tp-1}. Let $\Xi_a' = \Xi_a$ and $\Xi_b' = \Xi_b$, now showing \Cref{var-tp-1} is same as showing
\begin{equation}\label{var-tp-1-red}
\semanticBelowPCState{l}{\Xi_{a}}{\Xi_{b}}{\Delta}{\mu_{a}'}{\epsilon_{a}}{\mu_{b}'}{\epsilon_{b}}{\Gamma}
\end{equation}
From the evaluation rule, we know $\mu_a' = \mu_a$ and $\mu_b' = \mu_b$.
Therefore, showing \Cref{var-tp-1-red} is same as showing
\begin{equation}\label{int-tp-1-final}
\semanticBelowPCState{l}{\Xi_{a}}{\Xi_{b}}{\Delta}{\mu_{a}}{\epsilon_{a}}{\mu_{b}}{\epsilon_{b}}{\Gamma}
\end{equation}

This is what we had started out with in \Cref{var-initial-hypo}. Therefore we
have shown \Cref{int-tp-1}.


Next,  we use the following property from the definition of \Cref{int-tp-1-final}
\begin{equation}
\text{for any} ~x,~ \NIval{l}{\Xi_a}{\Xi_b}{\Delta}{\mu_a(\epsilon_{a}(x))}{\mu_b(\epsilon_{b}(x))}{\Gamma(x)}
\end{equation}
to conclude that $\NIval{l}{\Xi_a}{\Xi_b}{\Delta}{val_a}{val_b}{\type{\tau}{\chi}}$.

\item \textbf{T-SubType-In}
In case the last typing rule is the following and we need to prove that $\NIexp{pc}{\Gamma}{\Delta}{exp}{\type{\tau}{\chi'}}$.
\[
\inferrule*[right=T-SubType-In]
{
\ordinaryTyping[pc]{\Gamma}{\Delta}{exp}{\type{\tau}{\chi}~goes~ in} \\
\chi \sqsubseteq \chi'
}
{
\ordinaryTyping[pc]{\Gamma}{\Delta}{exp}{\type{\tau}{\chi'}~goes~ in}
}
\]

By applying the induction hypothesis of this theorem, we get $\NIexp{pc}{\Gamma}{\Delta}{exp}{\type{\tau}{\chi}}$. Now we need to show that if $\chi \sqsubseteq \chi'$, then $\NIexp{pc}{\Gamma}{\Delta}{exp}{\type{\tau}{\chi'}}$. To show NI of expression, we need to first show that the final memory stores are below-pc equivalent. This is already available from the expansion of $\NIexp{pc}{\Gamma}{\Delta}{exp}{\type{\tau}{\chi}}$. In addition, we need to show that the value that this expression evaluates to is still respecting non-interference of values with the security label $\chi'$ as defined in \Cref{def:def-ni-val}. To do this we use \Cref{lem:ni-subtyping}.
\item \textbf{T-BinOp}
If the typing derivation ends with the following last rule
\[  \inferrule*[right=T-BinOP]
{
\ordinaryTyping[pc]{\Gamma}{\Delta} {exp_{1}}{\type{\rho_{1}}{\chi_{1}}} \\
\ordinaryTyping[pc]{\Gamma}{\Delta}{exp_{2}}{\type{\rho_{2}}{\chi_{2}}} \\\\
\mathcal{T}(\Delta; \oplus; \rho_{1}; \rho_{2}) = \rho_{3} \\
\chi_{1} \sqsubseteq \chi' \\
\chi_{2} \sqsubseteq \chi'
}
{
\ordinaryTyping[pc]{\Gamma}{\Delta}{exp_{1} \oplus exp_{2}}{\type{\rho_3}{\chi'} ~goes~ in}
}
\]
then we need to show that for any $\Xi_a$, $\Xi_b$, $\mu_{a}$, $\mu_{b}$, $\epsilon_{a}$, $\epsilon_{b}$, $\mu_a'$, $\mu_b'$ satisfying
\begin{equation}\label{binop-initial-hypo}
\semanticBelowPCState{l}{\Xi_a}{\Xi_b}{\Delta}{\mu_a}{\epsilon_a}{\mu_b}{\epsilon_b}{\Gamma},
\end{equation}
if the expression $exp_{1} \oplus exp_{2}$ is evaluated under two different initial configurations $\langle \mu_a, \epsilon_a \rangle$ and $\langle \mu_b, \epsilon_b \rangle$ as follows:
\[
\inferrule*[right=Eval 1]
{
 \langle \mathcal{C}, \Delta; \mu_{a}; \epsilon_{a}; exp_{1} \rangle \Downarrow \langle \mu_{a1}, val_{a1} \rangle \\
 \langle \mathcal{C}, \Delta; \mu_{a1}; \epsilon_{a}; exp_{2} \rangle \Downarrow \langle \mu_{a2}, val_{a2} \rangle
}
{ \langle \mathcal{C}, \Delta; \mu_{a}; \epsilon_{a}; exp_{1} \oplus exp_{2} \rangle \Downarrow \langle \mu_{a2}, \mathbb{E} (\oplus, val_{a1}, val_{a2})\rangle}
\]

\[
\inferrule*[right=Eval 2]
{
 \langle \mathcal{C}, \Delta; \mu_{b}; \epsilon_{b}; exp_{1} \rangle \Downarrow \langle \mu_{b1}, val_{b1} \rangle \\
 \langle \mathcal{C}, \Delta; \mu_{b1}; \epsilon_{b}; exp_{2} \rangle \Downarrow \langle \mu_{b2}, val_{b2} \rangle
}
{ \langle \mathcal{C}, \Delta; \mu_{b}; \epsilon_{b}; exp_{1} \oplus exp_{2} \rangle \Downarrow \langle \mu_{b2}, \mathbb{E} (\oplus, val_{b1}, val_{b2})\rangle}
\]
then there exists some $\Xi_a'$ and $\Xi_b'$, such that the following hold:
\begin{enumerate}
\item $\ordinaryTyping[pc]{\Gamma}{\Delta}{exp_{1} \oplus exp_{2}}{\type{\rho_3}{\chi'}}$. Already given in the hypothesis of this theorem,
\item \label{binop-tp-1}$\Xi_a \subseteq \Xi_a'$, $\Xi_b \subseteq \Xi_b'$, $\dom{\mu_a} \subseteq \dom{\mu_a'}$, and $\dom{\mu_b} \subseteq \dom{\mu_b'}$ and $\semanticBelowPCState{l}{\Xi_{a}'}{\Xi_{b}'}{\Delta}{\mu_{a}'}{\epsilon_{a}}{\mu_{b}'}{\epsilon_{b}}{\Gamma}$. Here, $\mu_a' = \mu_{a}$ and $\mu_b' = \mu_{b}$,
\item \label{binop-tp-2} $\NIval{l}{\Xi_a'}{\Xi_b'}{\Delta}{\mathbb{E} (\oplus, val_{a1}, val_{a2})}{\mathbb{E} (\oplus, val_{b1}, val_{b2})}{\type{\rho_3}{\chi'}}$,
\item \label{binop-tp-3} For any $l_a \in \dom{\mu_a}$ such that $\ordinaryTyping[]{\Xi_a}{\Delta}{\mu_a(l_a)}{\tau_{clos}}$, where $\tau_{clos} \in \{\tau_{fn}, \tau_{tbl}\}$, then $\mu_a'(l_a) = \mu_a(l_a)$. Similarly for any $l_b \in \dom{\mu_b}$ such that $\ordinaryTyping[]{\Xi_b}{\Delta}{\mu_b(l_b)}{\tau_{clos}}$, where $\tau_{clos} \in \{\tau_{fn}, \tau_{tbl}\}$, then $\mu_b'(l_b) = \mu_b(l_b)$.
\item \label{binop-tp-4} For any $l_a \in \dom{\mu_a}$ and $l_b \in \dom{\mu_b}$ such that $\ordinaryTyping[]{\Xi_a}{\Delta}{\mu_a(l_a)}{\type{\tau}{\chi}}$ and $\ordinaryTyping[]{\Xi_b}{\Delta}{\mu_b(l_b)}{\type{\tau}{\chi}}$ and $pc \nsqsubseteq \chi$, we have $\mu_{a}'(l_a) = \mu_{a}(l_a)$ and $\mu_{b}'(l_b) = \mu_{b}(l_b)$.
\end{enumerate}
We repeatedly apply induction hypothesis on the typing derivation of $exp_1$ and $exp_2$ to get:
\begin{equation}\label{binop-ih-1}
\NIexp{pc}{\Gamma}{\Delta}{exp_1}{\type{\rho_1}{\chi_1}}
\end{equation}
and
\begin{equation}\label{binop-ih-2}
\NIexp{pc}{\Gamma}{\Delta}{exp_2}{\type{\rho_2}{\chi_2}}
\end{equation}

Expanding \Cref{binop-ih-1} we get:
\begin{equation}\label{binop-ih-1-1}
\semanticBelowPCState{l}{\Xi_{a1}'}{\Xi_{b1}'}{\Delta}{\mu_{a1}}{\epsilon_{a}}{\mu_{b1}}{\epsilon_{b}}{\Gamma}~,
\end{equation}
where $\Xi_{a} \subseteq \Xi_{a1}'$ and $\Xi_{b} \subseteq \Xi_{b1}'$
\begin{equation}\label{binop-ih-1-2}
\NIval{l}{\Xi_{a1}'}{\Xi_{b1}'}{\Delta}{val_{a1}}{val_{b1}}{\type{\rho_1}{\chi_{1}}}
\end{equation}

Similarly expanding \Cref{binop-ih-2} we get:
\begin{equation}\label{binop-ih-2-1}
\semanticBelowPCState{l}{\Xi_{a2}'}{\Xi_{b2}'}{\Delta}{\mu_{a2}}{\epsilon_{a}}{\mu_{b2}}{\epsilon_{b}}{\Gamma}~,
\end{equation}
where $\Xi_{a1}' \subseteq \Xi_{a2}'$ and $\Xi_{b1}' \subseteq \Xi_{b2}'$
\begin{equation}\label{binop-ih-2-2}
\NIval{l}{\Xi_{a2}'}{\Xi_{b2}'}{\Delta}{val_{a2}}{val_{b2}}{\type{\rho_2}{\chi_{2}}}
\end{equation}

Using \Cref{binop-ih-1-1} and \Cref{binop-ih-2-1}, we conclude
$\semanticBelowPCState{l}{\Xi_{a}'}{\Xi_{b}'}{\Delta}{\mu_{a2}}{\epsilon_{a}}{\mu_{b2}}{\epsilon_{b}}{\Gamma}$, where $\Xi_{a}' = \Xi_{a2}'$ and $\Xi_{b}' = \Xi_{b2}'$.
This proves \Cref{binop-tp-1}.

We assume the following about $\mathbb{E}$:
\begin{equation}
if x_1 = x_2~ and~ y_1 = y_2,~ then~\mathbb{E}(\oplus, x_1, y_1) = \mathbb{E}(\oplus, x_2, y_2)
\end{equation}
Thus, if the parameters to the evaluation function $\mathbb{E}$ are non-interfering,
$\NIval{l}{\Xi_a}{\Xi_b}{\Delta}{val_{a1}}{val_{b1}}{\type{\rho_1}{\chi_1}}$ and $\NIval{l}{\Xi_a}{\Xi_b}{\Delta}{val_{a2}}{val_{b2}}{\type{\rho_2}{\chi_2}}$, then the resultant value will also be non-interfering \[\NIval{l}{\Xi_a}{\Xi_b}{\Delta}{\mathbb{E}(\oplus, val_{a1}, val_{a2})}{\mathbb{E}(\oplus, val_{b1}, val_{b2})}{\type{\rho_3}{\chi'}},\] where $\chi_1 \sqsubseteq \chi'$  and $\chi_2 \sqsubseteq \chi'$ and $\rho_3 = \mathcal{T}(\oplus, \rho_1, \rho_2)$.

We consider only binary operations returning integers, bit vectors and booleans.


Using \Cref{binop-ih-1-2}, $\Xi_{a1}' \subseteq \Xi_{a2}'$ and the \Cref{lem-ni-val-subtype} we have:
\begin{equation}\label{binop-ih-1-2-ext}
\NIval{l}{\Xi_{a2}'}{\Xi_{b2}'}{\Delta}{val_{a1}}{val_{b1}}{\type{\rho_1}{\chi_{1}}}
\end{equation}
Using the above equation with \Cref{binop-ih-2-2} and the above assumption about the $\mathbb{E}$ function, we conclude:
\[\NIval{l}{\Xi_{a2}'}{\Xi_{b2}'}{\Delta}{\mathbb{E} (\oplus, val_{a1}, val_{a2})}{\mathbb{E} (\oplus, val_{b1}, val_{b2})}{\type{\rho_3}{\chi'}}\]

Now, we prove \Cref{binop-tp-3}. We know from \Cref{binop-ih-1}  that  for any $l_a \in \dom{\mu_a}$ such that $\ordinaryTyping[]{\Xi_a}{\Delta}{\mu_a(l_a)}{\tau_{clos}}$, where $\tau_{clos} \in \{\tau_{fn}, \tau_{tbl}\}$, then $\mu_{a1}(l_a) = \mu_a(l_a)$. Similarly for any $l_b \in \dom{\mu_b}$ such that $\ordinaryTyping[]{\Xi_b}{\Delta}{\mu_b(l_b)}{\tau_{clos}}$, where $\tau_{clos} \in \{\tau_{fn}, \tau_{tbl}\}$, then $\mu_{b1}(l_b) = \mu_b(l_b)$.
\Cref{binop-ih-2} also implies that for any $l_a \in \dom{\mu_{a1}}$ such that $\ordinaryTyping[]{\Xi_{a2}}{\Delta}{\mu_{a1}(l_a)}{\tau_{clos}}$, where $\tau_{clos} \in \{\tau_{fn}, \tau_{tbl}\}$, then $\mu_{a2}(l_a) = \mu_{a1}(l_a)$. Similarly for any $l_b \in \dom{\mu_{b1}}$ such that $\ordinaryTyping[]{\Xi_{b2}}{\Delta}{\mu_{b1}(l_b)}{\tau_{clos}}$, where $\tau_{clos} \in \{\tau_{fn}, \tau_{tbl}\}$, then $\mu_{b2}(l_b) = \mu_{b1}(l_b)$.
We also know that $\Xi_a \subseteq \Xi_{a1}$, this implies that $l_a \in \dom{\mu_a}$ will also be present in $\dom{\mu_{a1}}$. Therefore, we can show \Cref{binop-tp-3}. \Cref{binop-tp-4} can be similarly shown.

\item \textbf{T-Rec}
If the typing derivation ends with the following last rule
\[	\inferrule*[right=T-Rec]
{
\listOrdinaryTyping[pc]{\Gamma}{\Delta}{\{\overline{exp: \type{\tau_i}{\chi_i}}\}}
}
{
\ordinaryTyping[pc]{\Gamma}{\Delta}{\{ \overline{f: exp} \}}{\type{\{ \overline{f: \langle \tau_i, \chi_i \rangle} \}}{\bot}~goes~ in}
}\]
then we need to show that for any $\Xi_a$, $\Xi_b$, $\mu_{a}$, $\mu_{b}$, $\epsilon_{a}$, $\epsilon_{b}$, $\mu_a'$, $\mu_b'$ satisfying
\begin{equation}\label{rec-initial-hypo}
\semanticBelowPCState{l}{\Xi_a}{\Xi_b}{\Delta}{\mu_a}{\epsilon_a}{\mu_b}{\epsilon_b}{\Gamma},
\end{equation}
if the expression $\{ \overline{f: exp} \}$ is evaluated under two different initial configurations $\langle \mu_a, \epsilon_a \rangle$ and $\langle \mu_b, \epsilon_b \rangle$ as follows:
\[
\inferrule*[right=Eval 1]
{
 \langle \mathcal{C}, \Delta; \mu_{a}; \epsilon_{a}; \overline{exp} \rangle \Downarrow \langle \mu_{a}', \overline{val_{a}}\rangle
}
{ \langle \mathcal{C}, \Delta; \mu_{a}; \epsilon_{a}; \{\overline{f = exp}\} \rangle \Downarrow \langle \mu_{a}', \{\overline{f = val_{a}}\}\rangle}
\]
\[
\inferrule*[right=Eval 2]
{
 \langle \mathcal{C}, \Delta; \mu_{b}; \epsilon_{b}; \overline{exp} \rangle \Downarrow \langle \mu_{b}', \overline{val_{b}}\rangle
}
{ \langle \mathcal{C}, \Delta; \mu_{b}; \epsilon_{b}; \{\overline{f = exp}\} \rangle \Downarrow \langle \mu_{b}', \{\overline{f = val_{b}}\}\rangle}
\]
then there exists some $\Xi_a'$ and $\Xi_b'$, such that the following hold:
\begin{enumerate}
\item $\ordinaryTyping[pc]{\Gamma}{\Delta}{\{\overline{f = exp}\}}{\type{\{ \overline{f: \langle \tau_i, \chi_i \rangle} \}}{\bot}}$. Already given in the hypothesis of this theorem,
\item \label{rec-tp-1}$\Xi_a \subseteq \Xi_a'$, $\Xi_b \subseteq \Xi_b'$, $\dom{\mu_a} \subseteq \dom{\mu_a'}$, and $\dom{\mu_b} \subseteq \dom{\mu_b'}$ and $\semanticBelowPCState{l}{\Xi_{a}'}{\Xi_{b}'}{\Delta}{\mu_{a}'}{\epsilon_{a}}{\mu_{b}'}{\epsilon_{b}}{\Gamma}$,
\item \label{rec-tp-2} $\NIval{l}{\Xi_a'}{\Xi_b'}{\Delta}{\{\overline{f = val_{a}}\}}{\{\overline{f = val_{b}}\}}{\type{\{ \overline{f: \langle \tau_i, \chi_i \rangle} \}}{\bot}}$,
\item \label{rec-tp-3} For any $l_a \in \dom{\mu_a}$ such that $\ordinaryTyping[]{\Xi_a}{\Delta}{\mu_a(l_a)}{\tau_{clos}}$, where $\tau_{clos} \in \{\tau_{fn}, \tau_{tbl}\}$, then $\mu_a'(l_a) = \mu_a(l_a)$. Similarly for any $l_b \in \dom{\mu_b}$ such that $\ordinaryTyping[]{\Xi_b}{\Delta}{\mu_b(l_b)}{\tau_{clos}}$, where $\tau_{clos} \in \{\tau_{fn}, \tau_{tbl}\}$, then $\mu_b'(l_b) = \mu_b(l_b)$.
\item \label{rec-tp-4} For any $l_a \in \dom{\mu_a}$ and $l_b \in \dom{\mu_b}$ such that $\ordinaryTyping[]{\Xi_a}{\Delta}{\mu_a(l_a)}{\type{\tau}{\chi}}$ and $\ordinaryTyping[]{\Xi_b}{\Delta}{\mu_b(l_b)}{\type{\tau}{\chi}}$ and $pc \nsqsubseteq \chi$, we have $\mu_{a}'(l_a) = \mu_{a}(l_a)$ and $\mu_{b}'(l_b) = \mu_{b}(l_b)$,
\end{enumerate}

We repeatedly apply induction hypothesis on each $\ordinaryTyping[pc]{\Gamma}{\Delta}{exp}{\type{\tau_i}{\chi_i}}$ in the sequence $\seqordinaryTyping[pc]{\Gamma}{\Delta}{exp}{\type{\tau}{\chi}}$. The last memory store we arrive at is given by $\mu_a'$ and $\mu_b'$ in the two evaluations. Therefore, after repeated application of induction hypothesis we get,
\begin{equation}\label{rec-final}
\seqNIexp{pc}{\Gamma}{\Delta}{exp}{\type{\tau_i}{\chi_i}}
\end{equation}
Since we evaluate $\overline{exp}$ in initial configurations satisfying \Cref{rec-initial-hypo}, this can be expanded to conclude that there exists some $\Xi_{a}'$ and $\Xi_{b}'$ satisfying  $\Xi_{a} \subseteq \Xi_{a}'$,	$\Xi_{b} \subseteq \Xi_{b}'$, $\dom{\mu_{a}} \subseteq \dom{\mu_{a}'}$, $\dom{\mu_{b}} \subseteq \dom{\mu_{b}'}$  and all of the following:
\begin{equation}\label{rec-ih-1}
\semanticBelowPCState{l}{\Xi_{a}'}{\Xi_{b}'}{\Delta}{\mu_{a}'}{\epsilon_{a}}{\mu_{b}'}{\epsilon_{b}}{\Gamma}~,
\end{equation}
\begin{equation}\label{rec-ih-2}
\NIval{l}{\Xi_{a}'}{\Xi_{b}'}{\Delta}{\overline{ val_{a}}}{\overline{val_{b}}}{\overline{\langle \tau_i, \chi_i \rangle}}
\end{equation}
This is to be interpreted as a sequence of non-interfering values.
\Cref{rec-ih-1} proves the goal in \Cref{rec-tp-1}.

\Cref{rec-ih-2} can be interpreted as satisfying $\NIval{l}{\Xi_{a}'}{\Xi_{b}'}{\Delta}{val_{a}}{val_{b}}{\type{\tau_i}{\chi_i}}$ for each $val_{a}$, $val_{b}$.

We use the TV-rec rule with \Cref{rec-ih-2} to conclude that
$\semanticTyping[]{\Xi_{a}'}{\Delta}{\{\overline{f = val_{a}}\}}{\type{\{ \overline{f: \langle \tau_i, \chi_i \rangle} \}}{\bot}}$
and
$\semanticTyping[]{\Xi_{b}'}{\Delta}{\{\overline{f = val_{b}}\}}{\type{\{ \overline{f: \langle \tau_i, \chi_i \rangle} \}}{\bot}}$. Therefore, we have shown \Cref{rec-tp-2}.
\Cref{rec-tp-3}  and \Cref{rec-tp-4} follows from \Cref{rec-final}.

\item \textbf{T-MemRec} \label{recmem-exp}
If the typing derivation ends with the following last rule
\[	  \inferrule*[right=T-MemRec]
{
\ordinaryTyping[pc]{\Gamma}{\Delta}{exp}{\type{\{ \overline{f_i: \langle \tau_i, \chi_i \rangle} \}}{\bot}}~goes~d
}
{
\ordinaryTyping[pc]{\Gamma}{\Delta}{exp.f_{i}}{\type{\tau_{i}}{\chi_{i}}~ goes~ d}
}\]
then we need to show that for any $\Xi_a$, $\Xi_b$, $\mu_{a}$, $\mu_{b}$, $\epsilon_{a}$, $\epsilon_{b}$, $\mu_a'$, $\mu_b'$ satisfying
\begin{equation}\label{recmem-initial-hypo}
\semanticBelowPCState{l}{\Xi_a}{\Xi_b}{\Delta}{\mu_a}{\epsilon_a}{\mu_b}{\epsilon_b}{\Gamma},
\end{equation}
if the expression $exp.f_{i}$ is evaluated under two different initial configurations $\langle \mu_a, \epsilon_a \rangle$ and $\langle \mu_b, \epsilon_b \rangle$ as follows:
\[
\inferrule*[right=Eval 1]
{
 \langle \mathcal{C}, \Delta; \mu_{a}; \epsilon_{a}; exp \rangle \Downarrow \langle \mu_{a}', \{\overline{f_i: \type{\tau}{\chi} = val_{ai}}\}\rangle
}
{ \langle \mathcal{C}, \Delta; \mu_{a}; \epsilon_{a}; exp.f_{i} \rangle \Downarrow \langle \mu_{a}', val_{ai}\rangle}
\]

\[
\inferrule*[right=Eval 2]
{
 \langle \mathcal{C}, \Delta; \mu_{b}; \epsilon_{b}; exp \rangle \Downarrow \langle \mu_{b}', \{\overline{f_i: \type{\tau}{\chi} = val_{bi}}\}\rangle
}
{ \langle \mathcal{C}, \Delta; \mu_{b}; \epsilon_{b}; exp.f_{i} \rangle \Downarrow \langle \mu_{b}', val_{bi}\rangle}
\]
then there exists some $\Xi_a'$ and $\Xi_b'$, such that the following hold:
\begin{enumerate}
\item $\ordinaryTyping[pc]{\Gamma}{\Delta}{exp.f_{i}}{\type{\tau_{i}}{\chi_{i}}}$. Already given in the hypothesis of this theorem,
\item \label{recmem-tp-1}$\Xi_a \subseteq \Xi_a'$, $\Xi_b \subseteq \Xi_b'$, $\dom{\mu_a} \subseteq \dom{\mu_a'}$, and $\dom{\mu_b} \subseteq \dom{\mu_b'}$ and $\semanticBelowPCState{l}{\Xi_{a}'}{\Xi_{b}'}{\Delta}{\mu_{a}'}{\epsilon_{a}}{\mu_{b}'}{\epsilon_{b}}{\Gamma}$.
\item \label{recmem-tp-2} $\NIval{l}{\Xi_a'}{\Xi_b'}{\Delta}{val_{ai}}{val_{bi}}{\type{\tau_{i}}{\chi_{i}}}$,
\item \label{recmem-tp-3} For any $l_a \in \dom{\mu_a}$ such that $\ordinaryTyping[]{\Xi_a}{\Delta}{\mu_a(l_a)}{\tau_{clos}}$, where $\tau_{clos} \in \{\tau_{fn}, \tau_{tbl}\}$, then $\mu_a'(l_a) = \mu_a(l_a)$. Similarly for any $l_b \in \dom{\mu_b}$ such that $\ordinaryTyping[]{\Xi_b}{\Delta}{\mu_b(l_b)}{\tau_{clos}}$, where $\tau_{clos} \in \{\tau_{fn}, \tau_{tbl}\}$, then $\mu_b'(l_b) = \mu_b(l_b)$.
\item \label{recmem-tp-4} For any $l_a \in \dom{\mu_a}$ and $l_b \in \dom{\mu_b}$ such that $\ordinaryTyping[]{\Xi_a}{\Delta}{\mu_a(l_a)}{\type{\tau}{\chi}}$ and $\ordinaryTyping[]{\Xi_b}{\Delta}{\mu_b(l_b)}{\type{\tau}{\chi}}$ and $pc \nsqsubseteq \chi$, we have $\mu_{a}'(l_a) = \mu_{a}(l_a)$ and $\mu_{b}'(l_b) = \mu_{b}(l_b)$,
\end{enumerate}

By applying induction hypothesis on the typing derivation of $exp$, which is evaluated in an initial configuration satisfying \Cref{recmem-initial-hypo}, we get:
\[
\NIexp{pc}{\Gamma}{\Delta}{exp}{\type{\{ \overline{f_i: \langle \tau_i, \chi_i \rangle} \}}{\bot}}
\]
This implies that there exists a $\Xi_{a}'$, $\Xi_{b}'$, such that $\Xi_a \subseteq \Xi_a'$, $\Xi_b \subseteq \Xi_b'$, $\dom{\mu_a} \subseteq \dom{\mu_a'}$, $\dom{\mu_b} \subseteq \dom{\mu_b'}$ and the following:
\begin{equation}\label{memrec-ih-1}
\semanticBelowPCState{l}{\Xi_{a}'}{\Xi_{b}'}{\Delta}{\mu_{a}'}{\epsilon_{a}}{\mu_{b}'}{\epsilon_{b}}{\Gamma}~,
\end{equation}
This proves \Cref{recmem-tp-1}.
\begin{equation}\label{memrec-ih-2}
\NIval{l}{\Xi_{a}'}{\Xi_{b}'}{\Delta}{\{\overline{f_i: \type{\tau}{\chi} = val_{ai}}\}}{\{\overline{f_i: \type{\tau}{\chi} = val_{bi}}\}}{\type{\{ \overline{f_i: \langle \tau_i, \chi_i \rangle} \}}{\bot}}
\end{equation}
Using the \Cref{def:def-ni-val}, we can observe that for each $val_{ai}$ and $val_{bi}$ the following holds:
\begin{equation}\label{menmrec-ih-2-1}
\NIval{l}{\Xi_{a}'}{\Xi_{b}'}{\Delta}{val_{ai}}{val_{bi}}{\type{\tau_i}{\chi_i}}
\end{equation}
This proves \Cref{recmem-tp-2}.
\Cref{recmem-tp-3} and \Cref{recmem-tp-4} is also a conclusion of applying the induction hypothesis.

\item \textbf{T-Index}
If the typing derivation ends with the following last rule
\[\inferrule*[right=T-Index]
{
\ordinaryTyping[pc]{\Gamma}{\Delta}{exp_{1}}{\type{\type{\tau}{\chi_1}[n]}{\bot}~goes~ d} \\\\
\ordinaryTyping[pc]{\Gamma}{\Delta}{exp_{2}}{\type{bit \langle 32 \rangle}{\chi_2}} \\\\
\chi_2 \sqsubseteq \chi_1
}
{
\ordinaryTyping[pc]{\Gamma}{\Delta} {exp_{1}[exp_{2}]}{\type{\tau}{\chi_1}~ goes~ d }
}\]
then we need to show that for any $\Xi_a$, $\Xi_b$, $\mu_{a}$, $\mu_{b}$, $\epsilon_{a}$, $\epsilon_{b}$, $\mu_a'$, $\mu_b'$ satisfying
\begin{equation}\label{index-initial-hypo}
\semanticBelowPCState{l}{\Xi_a}{\Xi_b}{\Delta}{\mu_a}{\epsilon_a}{\mu_b}{\epsilon_b}{\Gamma},
\end{equation}
if the expression $exp_{1}[exp_{2}]$ is evaluated under two different initial configurations $\langle \mu_a, \epsilon_a \rangle$ and $\langle \mu_b, \epsilon_b \rangle$ as follows.
Observe that if $exp_{2}$ evaluates to a value within the array bounds the following rule will be used; otherwise $\textsc{Eval 1 error}$.
\[
\inferrule*[right=Eval 1]
{
 \langle \mathcal{C}, \Delta; \mu_{a}; \epsilon_{a}; exp_{1} \rangle \Downarrow \langle \mu_{a1}, stack~\tau \{\overline{val_a}\}\rangle \\
 \langle \mathcal{C}, \Delta; \mu_{a1}; \epsilon_{a}; exp_{2} \rangle \Downarrow \langle \mu_{a2}, n_{a}\rangle  \\
0 \leq n_{a} < len(\overline{val_{a}})
}
{ \langle \mathcal{C}, \Delta; \mu_{a}; \epsilon_{a};exp_{1}[exp_{2}] \rangle \Downarrow \langle \mu_{a2}, val_{a~n_{a}}\rangle}
\]

\[
\inferrule*[right=Eval 1 error]
{
 \langle \mathcal{C}, \Delta; \mu_{a}; \epsilon_{a}; exp_{1} \rangle \Downarrow \langle \mu_{a1}, stack~\tau \{\overline{val_a}\}\rangle \\
 \langle \mathcal{C}, \Delta; \mu_{a1}; \epsilon_{a}; exp_{2} \rangle \Downarrow \langle \mu_{a2}, n_{a}\rangle  \\
n_{a} \geq len(\overline{val_a})
}
{ \langle \mathcal{C}, \Delta; \mu_{a}; \epsilon_{a};exp_{1}[exp_{2}] \rangle \Downarrow \langle \mu_{a2}, havoc(\tau)\rangle}
\]

If $exp_{2}$ evaluates to a value within the array bounds the following rule will be used; otherwise $Eval_{2error}$
\[
\inferrule*[right=Eval 2]
{
 \langle \mathcal{C}, \Delta; \mu_{b}; \epsilon_{b}; exp_{1} \rangle \Downarrow \langle \mu_{b1}, stack~\tau \{\overline{val_b}\}\rangle \\
 \langle \mathcal{C}, \Delta; \mu_{b1}; \epsilon_{b}; exp_{2} \rangle \Downarrow \langle \mu_{b2}, n_{b}\rangle  \\
0 \leq n_{b} < len(\overline{val_b})
}
{ \langle \mathcal{C}, \Delta; \mu_{b}; \epsilon_{b};exp_{1}[exp_{2}] \rangle \Downarrow \langle \mu_{b2}, val_{b~n_{b}}\rangle}
\]
\[
\inferrule*[right=Eval 2 error]
{
 \langle \mathcal{C}, \Delta; \mu_{b}; \epsilon_{b}; exp_{1} \rangle \Downarrow \langle \mu_{b1}, stack~\tau \{\overline{val_b}\}\rangle \\
 \langle \mathcal{C}, \Delta; \mu_{b1}; \epsilon_{b}; exp_{2} \rangle \Downarrow \langle \mu_{b2}, n_{b}\rangle  \\
n_{b} \geq len(\overline{val_b})
}
{ \langle \mathcal{C}, \Delta; \mu_{b}; \epsilon_{b};exp_{1}[exp_{2}] \rangle \Downarrow \langle \mu_{b2}, havoc(\tau)\rangle}
\]
then there exists some $\Xi_a'$ and $\Xi_b'$, such that the following hold:
\begin{enumerate}
\item $\ordinaryTyping[pc]{\Gamma}{\Delta}{exp_{1}[exp_{2}]}{\type{\tau}{\chi_1}}$. Already given in the hypothesis of this theorem,
\item \label{index-tp-1}$\Xi_a \subseteq \Xi_a'$, $\Xi_b \subseteq \Xi_b'$, $\dom{\mu_a} \subseteq \dom{\mu_a'}$, and $\dom{\mu_b} \subseteq \dom{\mu_b'}$ and $\semanticBelowPCState{l}{\Xi_{a}'}{\Xi_{b}'}{\Delta}{\mu_{a}'}{\epsilon_{a}}{\mu_{b}'}{\epsilon_{b}}{\Gamma}$. Here, $\mu_a' = \mu_{a2}$ and $\mu_b' = \mu_{b2}$,
\item \label{index-tp-2} $\NIval{l}{\Xi_a'}{\Xi_b'}{\Delta}{val_{a}'}{val_{b}'}{\type{\tau}{\chi_1}}$, where $val_a' \in \{val_{a~n_a}, havoc(\tau)\}$ and $val_b' \in \{val_{b~n_b}, havoc(\tau)\}$,
\item \label{index-tp-3} For any $l_a \in \dom{\mu_a}$ such that $\ordinaryTyping[]{\Xi_a}{\Delta}{\mu_a(l_a)}{\tau_{clos}}$, where $\tau_{clos} \in \{\tau_{fn}, \tau_{tbl}\}$, then $\mu_a'(l_a) = \mu_a(l_a)$. Similarly for any $l_b \in \dom{\mu_b}$ such that $\ordinaryTyping[]{\Xi_b}{\Delta}{\mu_b(l_b)}{\tau_{clos}}$, where $\tau_{clos} \in \{\tau_{fn}, \tau_{tbl}\}$, then $\mu_b'(l_b) = \mu_b(l_b)$.
\item \label{index-tp-4} For any $l_a \in \dom{\mu_a}$ and $l_b \in \dom{\mu_b}$ such that $\ordinaryTyping[]{\Xi_a}{\Delta}{\mu_a(l_a)}{\type{\tau}{\chi}}$ and $\ordinaryTyping[]{\Xi_b}{\Delta}{\mu_b(l_b)}{\type{\tau}{\chi}}$ and $pc \nsqsubseteq \chi$, we have $\mu_{a}'(l_a) = \mu_{a}(l_a)$ and $\mu_{b}'(l_b) = \mu_{b}(l_b)$,
\end{enumerate}

By applying induction hypothesis on the typing derivation of $exp_1$ that is evaluated in configuration satisfying \Cref{index-initial-hypo}, we conclude that there exist some $\Xi_{a1}'$ and $\Xi_{b1}'$ satisfying $\Xi_{a} \subseteq \Xi_{a1}'$,	$\Xi_{b} \subseteq \Xi_{b1}'$ and all of the following:
\begin{equation}\label{index-h-1-1}
\semanticBelowPCState{l}{\Xi_{a1}'}{\Xi_{b1}'}{\Delta}{\mu_{a1}}{\epsilon_{a}}{\mu_{b1}}{\epsilon_{b}}{\Gamma}~,
\end{equation}
\begin{equation}\label{index-h-1-2}
\NIval{l}{\Xi_{a1}'}{\Xi_{b1}'}{\Delta}{stack~\tau\{\overline{val_{a}}\}}{stack~\tau\{\overline{val_{b}}\}}{\type{\type{\tau}{\chi_1}[n]}{\bot}}
\end{equation}
Using the \Cref{def:def-ni-val}, we can observe that for each $val_{a}$ and $val_{b}$ the following holds:
\begin{equation}\label{index-ih-2-1}
\NIval{l}{\Xi_{a1}'}{\Xi_{b1}'}{\Delta}{val_{a}}{val_{b}}{\type{\tau}{\chi_1}}
\end{equation}

By applying induction hypothesis on the typing derivation of $exp_2$ that is evaluated in configuration satisfying \Cref{index-h-1-1}, we conclude that there exist some $\Xi_{a2}'$ and $\Xi_{b2}'$ satisfying $\Xi_{a1}' \subseteq \Xi_{a2}'$,	$\Xi_{b1}' \subseteq \Xi_{b2}'$ and all of the following:
\begin{equation}\label{index-h-2-1}
\semanticBelowPCState{l}{\Xi_{a2}'}{\Xi_{b2}'}{\Delta}{\mu_{a2}}{\epsilon_{a}}{\mu_{b2}}{\epsilon_{b}}{\Gamma}~,
\end{equation}
\begin{equation}\label{index-h-2-2}
\NIval{l}{\Xi_{a2}'}{\Xi_{b2}'}{\Delta}{n_{a}}{n_{b}}{\type{bit \langle 32 \rangle}{\chi_2}}
\end{equation}
\Cref{index-h-2-1} proves the requirement of \Cref{index-tp-1}.
To prove \Cref{index-tp-2} we consider the following cases for the final values $val_a'$ and $val_b'$:
\begin{itemize}
\item Index within bound. In this case both the evaluations use the same evaluation rules.

If $\chi_1 \sqsubseteq l$, then $\chi_2 \sqsubseteq l$, which implies that $n_a = n_b = n_{32}$. We can observe that in this case we will have $val_a' = val_{a~n_{32}}$ and $val_b' = val_{b~n_{32}}$. Using \Cref{index-ih-2-1}, we conclude that $\NIval{l}{\Xi_{a1}'}{\Xi_{b1}'}{\Delta}{val_{a}'}{val_{b}'}{\type{\tau}{\chi_1}}$.  By applying \Cref{lem-ni-val-subtype}, we will get $\NIval{l}{\Xi_{a2}'}{\Xi_{b2}'}{\Delta}{val_{a}'}{val_{b}'}{\type{\tau}{\chi_1}}$. We have shown \Cref{index-tp-2}.

If $\chi_1 \nsqsubseteq l$, according to the \Cref{def:def-ni-val},  $\NIval{l}{\Xi_{a1}'}{\Xi_{b1}'}{\Delta}{val_{a}'}{val_{b}'}{\type{\tau}{\chi_1}}$ will hold even if $val_a' \ne val_b'$. Therefore, even if $n_a \ne n_b$, $val_a' = val_{a~n_{a}}$ and $val_b' = val_{b~n_{b}}$, we will have $\NIval{l}{\Xi_{a1}'}{\Xi_{b1}'}{\Delta}{val_{a}'}{val_{b}'}{\type{\tau}{\chi_1}}$.
\item One index is out-of-bound. In this case one of the evaluation will yield the $havoc(\tau)$ and $n_a$ and $n_b$ should have differed. This implies $\chi_2 \nsqsubseteq l$, which implies $\chi_1 \nsqsubseteq l$. As described in the previous case, $\NIval{l}{\Xi_{a1}'}{\Xi_{b1}'}{\Delta}{val_{a~n_{a}}}{havoc(\tau)}{\type{\tau}{\chi_1}}$ is true according to the \Cref{def:def-ni-val}.
\item Both indices are out-of-bound. In this case the values will be of the form $val_a' = havoc(\tau) = val_b'$. According to the \Cref{def:def-ni-val}, $\NIval{l}{\Xi_{a1}'}{\Xi_{b1}'}{\Delta}{havoc(\tau)}{havoc(\tau)}{\type{\tau}{\chi_1}}$ is satisfied.
\end{itemize}
\item \textbf{T-HdrMem}
If the typing derivation ends with the following last rule
\[ \inferrule*[right=T-MemHdr]
{
\ordinaryTyping[pc]{\Gamma}{\Delta}{exp} {\type{header \{ \overline{f_i: \langle \tau_i, \chi_i \rangle} \}}{\bot}~ goes~ d}
}
{
\ordinaryTyping[pc]{\Gamma}{\Delta}{exp.f_{i}}{\type{\tau_{i}}{\chi_{i}}~goes~ d}
}
\]
then we need to show that for any $\Xi_a$, $\Xi_b$, $\mu_{a}$, $\mu_{b}$, $\epsilon_{a}$, $\epsilon_{b}$, $\mu_a'$, $\mu_b'$ satisfying
\begin{equation}\label{hdr-initial-hypo}
\semanticBelowPCState{l}{\Xi_a}{\Xi_b}{\Delta}{\mu_a}{\epsilon_a}{\mu_b}{\epsilon_b}{\Gamma},
\end{equation}
if the expression $exp.f_{i}$ is evaluated under two different initial configurations $\langle \mu_a, \epsilon_a \rangle$ and $\langle \mu_b, \epsilon_b \rangle$ as follows:
\[
\inferrule*[right=Eval 1]
{
 \langle \mathcal{C}, \Delta; \mu_{a}; \epsilon_{a}; exp \rangle \Downarrow \langle \mu_{a}', header\{valid, \overline{f: \tau = val_{a}}\}\rangle
}
{ \langle \mathcal{C}, \Delta; \mu_{a}; \epsilon_{a}; exp.f_{i} \rangle \Downarrow \langle \mu_{a}', val_{ai}\rangle}
\]

\[
\inferrule*[right=Eval 2]
{
 \langle \mathcal{C}, \Delta; \mu_{b}; \epsilon_{b}; exp \rangle \Downarrow \langle \mu_{b}', header\{valid, \overline{f: \tau = val_{b}}\}\rangle
}
{ \langle \mathcal{C}, \Delta; \mu_{b}; \epsilon_{b}; exp.f_{i} \rangle \Downarrow \langle \mu_{b}', val_{bi}\rangle}
\]
then there exists some $\Xi_a'$ and $\Xi_b'$, such that the following hold:
\begin{enumerate}
\item $\ordinaryTyping[pc]{\Gamma}{\Delta}{exp.f_{i}}{\type{\tau_{i}}{\chi_{i}}}$. Already given in the hypothesis of this theorem.
\item $\Xi_a \subseteq \Xi_a'$, $\Xi_b \subseteq \Xi_b'$, $\dom{\mu_a} \subseteq \dom{\mu_a'}$, and $\dom{\mu_b} \subseteq \dom{\mu_b'}$ and $\semanticBelowPCState{l}{\Xi_{a}'}{\Xi_{b}'}{\Delta}{\mu_{a}'}{\epsilon_{a}}{\mu_{b}'}{\epsilon_{b}}{\Gamma}$,
\item $\NIval{l}{\Xi_a'}{\Xi_b'}{\Delta}{val_{ai}}{val_{bi}}{\type{\tau_{i}}{\chi_{i}}}$,
\item For any $l_a \in \dom{\mu_a}$ such that $\ordinaryTyping[]{\Xi_a}{\Delta}{\mu_a(l_a)}{\tau_{clos}}$, where $\tau_{clos} \in \{\tau_{fn}, \tau_{tbl}\}$, then $\mu_a'(l_a) = \mu_a(l_a)$. Similarly for any $l_b \in \dom{\mu_b}$ such that $\ordinaryTyping[]{\Xi_b}{\Delta}{\mu_b(l_b)}{\tau_{clos}}$, where $\tau_{clos} \in \{\tau_{fn}, \tau_{tbl}\}$, then $\mu_b'(l_b) = \mu_b(l_b)$.
\end{enumerate}
We consider only valid headers in this information-flow control system.
Similar to case \ref{recmem-exp}, we apply induction hypothesis on typing derivation of $exp$ followed by inverting the value typing for headers.

%
%
\item \textbf{T-FuncCall}
If the typing derivation ends with the following last rule
\[
\inferrule*[right=T-Call]
{
\Gamma, \Delta \vdash_{pc} exp_{1}: \langle \overline{d~\type{\tau_i}{\chi_i}} \xrightarrow{pc_{fn}} \langle \tau_{ret}, \chi_{ret}\rangle, \bot \rangle\\
\Gamma, \Delta \vdash_{pc} \overline{exp_{2}:\type{\tau_i}{\chi_i}~ goes~d} \\
pc \sqsubseteq pc_{fn}
}
{
\Gamma, \Delta \vdash_{pc} exp_{1} (\overline{exp_{2}}): \type{\tau_{ret}}{\chi_{ret}} ~\text{goes in}
}
\]
then we need to show that for any $\Xi_a$, $\Xi_b$, $\mu_{a}$, $\mu_{b}$, $\epsilon_{a}$, $\epsilon_{b}$, $\mu_a'$, $\mu_b'$ satisfying
\begin{equation} \label{fncall-initial-hypo}
\semanticBelowPCState{l}{\Xi_a}{\Xi_b}{\Delta}{\mu_{a}}{\epsilon_{a}}{\mu_{b}}{\epsilon_{b}}{\Gamma}
\end{equation}
if the function call expression $exp_{1} (\overline{exp_{2}})$ is evaluated under two different initial configurations $\langle \mu_a, \epsilon_a \rangle$ and $\langle \mu_b, \epsilon_b \rangle$ as follows:
\[
\inferrule*[]
{
 \langle \mathcal{C}, \Delta, \mu_{a}, \epsilon_{a}, exp_{1} \rangle \Downarrow \langle \mu_{a1},clos(\epsilon_{c_a}, \overline{d~x: \langle \tau, \chi \rangle}, \langle \tau_{ret}, \chi_{ret} \rangle, \text{stmt}) \rangle \\
\langle  \Delta, \mu_{a1}, \epsilon_{a}, \overline{d x: \langle \tau, \chi \rangle :=exp_{2} }\rangle \Downarrow_{copy} \langle \mu_{a2}, \overline{x \mapsto l_a}, \overline{lval_a:= l_a} \rangle \\
 \langle \mathcal{C}, \Delta, \mu_{a2}, \epsilon_{c_a}[\overline{x \mapsto l_a}], stmt \rangle \Downarrow \langle \mu_{a3}, \epsilon_{a2}, \text{return~}val_{a}\rangle \\
 \langle \mathcal{C}, \Delta, \mu_{a3}, \epsilon_{a},  \overline{lval:=\mu_{a3}(l)}\rangle \Downarrow_{write} \mu_{4}
}
{
 \langle \mathcal{C}, \Delta, \mu_{a}, \epsilon_{a}, exp_{1} (\overline{exp_{2}})\rangle \Downarrow \langle \mu_{a4}, val_{a} \rangle
}
\]
\[
\inferrule*[]
{
 \langle \mathcal{C}, \Delta, \mu_{b}, \epsilon_{b}, exp_{1} \rangle \Downarrow \langle \mu_{b1},clos(\epsilon_{c_b},\overline{d~x: \langle \tau, \chi \rangle}, \langle \tau_{ret}, \chi_{ret} \rangle, \text{stmt}) \rangle \\
\langle  \Delta, \mu_{b1}, \epsilon_{b}, \overline{d x: \langle \tau, \chi \rangle :=exp_{2} }\rangle \Downarrow_{copy} \langle \mu_{b2}, \overline{x \mapsto l_b}, \overline{lval_b:= l_b} \rangle \\
 \langle \mathcal{C}, \Delta, \mu_{b2}, \epsilon_{c_b}[\overline{x \mapsto l_b}], stmt \rangle \Downarrow \langle\mu_{b3}, \epsilon_{b2}, \text{return~}val_{b} \rangle \\
 \langle \mathcal{C}, \Delta, \mu_{b3}, \epsilon_{b},  \overline{lval_b:=\mu_{b3}(l_b)}\rangle \Downarrow_{write} \mu_{b4}
}
{
 \langle \mathcal{C}, \Delta, \mu_{b}, \epsilon_{b}, exp_{1} (\overline{exp_{2}})\rangle \Downarrow \langle \mu_{b4}, val_{b}\rangle
}
\]
then there exists some $\Xi_a'$ and $\Xi_b'$, such that the following hold:
\begin{enumerate}
\item $\ordinaryTyping[pc]{\Gamma}{\Delta}{exp_{1} (\overline{exp_{2}})}{\type{\tau_{ret}}{\chi_{ret}}}$. Already given in the hypothesis of this theorem.
\item \label{fn-tp-1}$\Xi_a \subseteq \Xi_a'$, $\Xi_b \subseteq \Xi_b'$, $\dom{\mu_a} \subseteq \dom{\mu_a'}$, $\dom{\mu_b} \subseteq \dom{\mu_b'}$, and $\semanticBelowPCState{l}{\Xi_{a}'}{\Xi_{b}'}{\Delta}{\mu_{a}'}{\epsilon_{a}}{\mu_{b}'}{\epsilon_{b}}{\Gamma}$. Here, $\mu_a' = \mu_{a4}$ and $\mu_b' = \mu_{b4}$,
\item \label{fn-tp-2}For any $l_a \in \dom{\mu_a}$ and $l_b \in \dom{\mu_b}$ such that $\ordinaryTyping[]{\Xi_a}{\Delta}{\mu_a(l_a)}{\type{\tau}{\chi}}$ and $\ordinaryTyping[]{\Xi_b}{\Delta}{\mu_b(l_b)}{\type{\tau}{\chi}}$ and $pc \nsqsubseteq \chi$, we have $\mu_{a}'(l_a) = \mu_{a}(l_a)$ and $\mu_{b}'(l_b) = \mu_{b}(l_b)$,
\item $\NIval{l}{\Xi_a'}{\Xi_b'}{\Delta}{val_{a}}{val_b}{\type{\tau_{ret}}{\chi_{ret}}}$,
\item \label{fn-tp-3}For any $l_a \in \dom{\mu_a}$ such that $\ordinaryTyping[]{\Xi_a}{\Delta}{\mu_a(l_a)}{\tau_{clos}}$, where $\tau_{clos} \in \{\tau_{fn}, \tau_{tbl}\}$, then $\mu_a'(l_a) = \mu_a(l_a)$. Similarly for any $l_b \in \dom{\mu_b}$ such that $\ordinaryTyping[]{\Xi_b}{\Delta}{\mu_b(l_b)}{\tau_{clos}}$, where $\tau_{clos} \in \{\tau_{fn}, \tau_{tbl}\}$, then $\mu_b'(l_b) = \mu_b(l_b)$.
\end{enumerate}

By applying induction hypothesis of \Cref{ni-exp} on $exp_1$, which is evaluated in an initial configuration satisfying \Cref{fncall-initial-hypo},
we get: $\NIexp{l}{\Gamma}{\Delta}{exp_1}{\langle \overline{d~\type{\tau_i}{\chi_i}} \xrightarrow{pc_{fn}} \langle \tau_{ret}, \chi_{ret}\rangle, \bot \rangle}$.
This implies that there exists some $\Xi_{a1}$, $\Xi_{b1}$, $\mu_{a1}$, $\mu_{b1}$
satisfying $\Xi_a \subseteq \Xi_{a1}$ and $\Xi_b \subseteq \Xi_{b1}$, $\dom{\mu_a} \subseteq \dom{\mu_{a1}}$,
and $\dom{\mu_b} \subseteq \dom{\mu_{b1}}$ and the following:
\begin{equation} \label{fn-call-ih-1-1}
\semanticBelowPCState{l}{\Xi_{a1}}{\Xi_{b1}}{\Delta}{\mu_{a1}}{\epsilon_{a}}{\mu_{b1}}{\epsilon_{b}}{\Gamma}
\end{equation}
\begin{equation}\label{fn-call-ih-1-2}
\NIval{l}{\Xi_{a1}}{\Xi_{b1}}{\Delta}{val_{a1}}{val_{b1}}{\tau_{fn}}
\end{equation}
Here $val_{a1} = clos(\epsilon_{c_a}, \overline{d x: \type{\tau}{\chi}}, \type{\tau_{ret}}{\chi_{ret}}, stmt)$ and
$val_{b1} = clos(\epsilon_{c_b}, \overline{d x: \type{\tau}{\chi}}, \type{\tau_{ret}}{\chi_{ret}}, stmt)$.

Since $\type{\tau}{\chi} = \tau_{fn}$, by using \Cref{fn-call-ih-1-2} we conclude that $\NIclos{}{\Xi_{a1}}{\Xi_{b1}}{\Delta}{val_{a1}}{val_{b1}}{{\langle \overline{d\type{\tau_i}{\chi_i}} \xrightarrow{pc_{fn}} \langle \tau_{ret}, \chi_{ret}\rangle, \bot \rangle}}$.
Expanding the non-interference definition for closure (\Cref{clos-def}), we conclude that there exists some $\Gamma_{fn}$,
such that the following properties are satisfied:
\[\Xi_{a1}, \Delta \models \epsilon_{c_a}: \Gamma_{fn}\]
\[\Xi_{b1}, \Delta \models \epsilon_{c_b}: \Gamma_{fn}\]
\[\Gamma_{fn}; \Delta \vdash_{pc} clos(\epsilon_{c_a}, \overline{d x: \type{\tau}{\chi}}, \type{\tau_{ret}}{\chi_{ret}}, \text{stmt}):\langle \overline{d\langle \tau, \chi\rangle} \xrightarrow{pc_{fn}} \langle \tau_{ret}, \chi_{ret}\rangle, \bot \rangle\]
\[\Gamma_{fn}; \Delta \vdash_{pc} clos(\epsilon_{c_b}, \overline{d x: \type{\tau}{\chi}}, \type{\tau_{ret}}{\chi_{ret}}, \text{stmt}):\langle \overline{d \langle \tau, \chi\rangle} \xrightarrow{pc_{fn}} \langle \tau_{ret}, \chi_{ret}\rangle, \bot \rangle\]
\begin{equation}
\Gamma_{fn}[\overline{x:\type{\tau}{\chi}}, \terminal{return}: \type{\tau_{ret}}{\chi_{ret}}], \Delta \vdash_{pc_{fn}}stmt \dashv \Gamma_{fn2}
\end{equation}

Application of the induction hypothesis on $exp_1$ also grantees that the closure values do not change in the transition from $\mu_a$ to $\mu_{a1}$ and $\mu_b$ to $\mu_{b1}$. Therefore, we can apply the property of closure values in the state given by \Cref{fn-call-ih-1-1} to the closure values returned after the evaluation of  $exp_1$.
\Cref{fn-call-ih-1-1} concludes that for any $x \in \dom{\epsilon_a}$, satisfying $\Gamma \vdash x: \tau_{fn}$, $\mu_{a1}(\epsilon_a(x)) = clos(\epsilon_{c},...)$, and $\Xi_{a1} \vdash \epsilon_c: \Gamma_c $, we will have $\dom{\epsilon_{c}} \subseteq \dom{\epsilon_a}$ and $\semanticStoreEnv{\Xi_{a1}}{\Delta}{\mu_{a1}}{\epsilon_{c}}{\Gamma_{c}}$ . Here $\tau_{f} =\langle  \overline{d \langle\tau, \chi \rangle} \xrightarrow{pc_{fn}} \langle \tau_{ret}, \chi_{ret}\rangle, \bot \rangle$, for any $\tau, \chi, \tau_{ret}, \chi_{ret}$. This implies that $\dom{\epsilon_{c_a}} \subseteq \dom{\epsilon_{a}}$ and $\semanticStoreEnv{\Xi_{a1}}{\Delta}{\mu_{a1}}{\epsilon_{c_a}}{\Gamma_{fn}}$. Similarly $\dom{\epsilon_{c_b}} \subseteq \dom{\epsilon_{b}}$ and $\semanticStoreEnv{\Xi_{b1}}{\Delta}{\mu_{b1}}{\epsilon_{c_b}}{\Gamma_{fn}}$.

Using \Cref{lem:copy-in-out} for the evaluation of $d x: \langle \tau, \chi \rangle :=\exp_{2} $ in the initial configuration satisfying \Cref{fn-call-ih-1-1}, we conclude the following:
\begin{enumerate}
\item \label{fndecl-ih-2-1} $\semanticBelowPCState{l}{\Xi_{a2}}{\Xi_{b2}}{\Delta}{\mu_{a2}}{\overline{x \mapsto l_a}}{\mu_{b2}}{\overline{x \mapsto l_b}}{\Gamma'}$, for some $\Xi_{a2}$, $\Xi_{b2}$, $\mu_{a2}$, $\mu_{b2}$ such that $\Xi_{a1} \subseteq \Xi_{a2}$ and $\Xi_{b1} \subseteq \Xi_{b2}$, $\dom{\mu_{a1}} \subseteq \dom{\mu_{a2}}$, $\dom{\mu_{b1}} \subseteq \dom{\mu_{b2}}$ and $\Gamma' = \{\overline{x \mapsto \type{\tau}{\chi}}\}$.
\item \label{fndecl-ih-2-2} $\semanticBelowPCState{l}{\Xi_{a2}}{\Xi_{b2}}{\Delta}{\mu_{a2}}{\epsilon_a}{\mu_{b2}}{\epsilon_b}{\Gamma}$
\item For any $l_a' \in \dom{\mu_{a1}}$ and $l_b' \in \dom{\mu_{b1}}$ such that $\ordinaryTyping[]{\Xi_{a1}}{\Delta}{\mu_{a1}(l_a')}{\type{\tau}{\chi}}$ and $\ordinaryTyping[]{\Xi_{b1}}{\Delta}{\mu_{b1}(l_b')}{\type{\tau}{\chi}}$ and $pc \nsqsubseteq \chi$, we have $\mu_{a1}(l_a') = \mu_{a2}(l_a')$ and $\mu_{b2}(l_b') = \mu_{b1}(l_b')$,
\item  $\textsc{lval_base}(lval_a) \in \dom{\epsilon_a}$, $\textsc{lval_base}(lval_b) \in \dom{\epsilon_b}$, and $\textsc{lval_base}(lval_a) = \textsc{lval_base}(lval_b)$ for each $lval_a$ and $lval_b$.
\item $\overline{l_a} \in \dom{\mu_{a2}}$ and $\overline{l_b} \in \dom{\mu_{b2}}$
\item \label{loc-in-store} $l_a$ and $l_b$ are fresh locations, $l_a \notin \Xi_{a1}$ and $l_b \notin \Xi_{b1}$
\item \label{unchanged-clos} For any $l_a \in \dom{\mu_{a1}}$ such that $\ordinaryTyping[]{\Xi_{a1}}{\Delta}{\mu_{a1}(l_a)}{\tau_{clos}}$, where $\tau_{clos} \in \{\tau_{fn}, \tau_{tbl}\}$, then $\mu_{a1}(l_a) = \mu_{a2}(l_a)$. Similarly for any $l_b \in \dom{\mu_{b1}}$ such that $\ordinaryTyping[]{\Xi_b}{\Delta}{\mu_{b1}(l_b)}{\tau_{clos}}$, where $\tau_{clos} \in \{\tau_{fn}, \tau_{tbl}\}$, then $\mu_{b1}(l_b) = \mu_{b2}(l_b)$.
\end{enumerate}
Given \Cref{unchanged-clos}, we can observe that some closure variable $x$ that evaluated to the closures returned on evaluating $exp_1$ will have the same value in $\mu_{a2}$.  Therefore, by expanding \Cref{fndecl-ih-2-2} we conclude
\begin{equation} \label{fn-decl-ih-2-3}
\semanticBelowPCState{l}{\Xi_{a2}}{\Xi_{b2}}{\Delta}{\mu_{a2}}{\epsilon_{c_a}}{\mu_{b2}}{\epsilon_{c_b}}{\Gamma_{fn}}
\end{equation}
Combining \Cref{fndecl-ih-2-1} and \Cref{fn-decl-ih-2-3} using \Cref{lem:pair-env-extend} we get:
\begin{equation} \label{fn-decl-ih-2-4}
\semanticBelowPCState{l}{\Xi_{a2}}{\Xi_{b2}}{\Delta}{\mu_{a2}}{\epsilon_{c_a}[\overline{x \mapsto l_a}]}{\mu_{b2}}{\epsilon_{c_b}[\overline{x \mapsto l_b}]}{\Gamma_{fn}[\overline{x \mapsto \type{\tau}{\chi}}]}
\end{equation}
Note that \Cref{loc-in-store} enforces that $l_a$ and $l_b$ are present in $\mu_{a2}$ and $\mu_{b2}$.

By using the induction hypothesis of \Cref{ni-stmt} on $stmt$ that is evaluated in the initial configuration satisfying \Cref{fn-decl-ih-2-4}, we conclude $\NIstmt{pc}{\Gamma_{fn}[\overline{x:\type{\tau}{\chi}}, \terminal{return}: \type{\tau_{ret}}{\chi_{ret}}]}{\Delta}{stmt}{\Gamma_{fn1}}$ or
there exist some $\Xi_{a3}$, $\Xi_{b3}$, $\mu_{a3}$, $\mu_{b3}$, $\epsilon_{a2}$, and $\epsilon_{b2}$ such that $\Xi_{a2} \subseteq \Xi_{a3}$ and $\Xi_{b2} \subseteq \Xi_{b3}$, $\dom{\mu_{a2}} \subseteq \dom{\mu_{a3}}$, $\dom{\mu_{b2}} \subseteq \dom{\mu_{b3}}$, $\dom{\epsilon_{c_a}[\overline{x \mapsto l_a}]}\subseteq \dom{\epsilon_{a2}}$, and $\dom{\epsilon_{c_b}[\overline{x \mapsto l_b}]} \subseteq \dom{\epsilon_{b2}}$ satisfying:
\begin{equation}
\semanticBelowPCState{l}{\Xi_{a3}}{\Xi_{b3}}{\Delta}{\mu_{a3}}{\epsilon_{a2}}{\mu_{b3}}{\epsilon_{b2}}{\Gamma_{fn1}}
\end{equation}
\begin{equation} \label{clos-eq}
\semanticBelowPCState{l}{\Xi_{a3}}{\Xi_{b3}}{\Delta}{\mu_{a3}}{\epsilon_{c_a}[\overline{x \mapsto l_a}]}{\mu_{b3}}{\epsilon_{c_b}[\overline{x \mapsto l_b}]}{\Gamma_{fn}[\overline{x:\type{\tau}{\chi}}, \terminal{return}: \type{\tau_{ret}}{\chi_{ret}}]}
\end{equation}
and none of the locations with security label $pc \sqsubseteq \chi$ will be updated between $\mu_{a2}$ and $\mu_{a3}$, $\mu_{b2}$ and $\mu_{b3}$.

We know that $\semanticBelowPCState{l}{\Xi_{a2}}{\Xi_{b2}}{\Delta}{\mu_{a2}}{\epsilon_a}{\mu_{b2}}{\epsilon_b}{\Gamma}$.
Any $y \in \dom{\epsilon_a}=\dom{\epsilon_b}$ can satisfy one of the following:
\begin{enumerate}
\item $\epsilon_a(y) = \epsilon_{c_a}[\overline{x \mapsto l_a}](y)$ and $\epsilon_b(y) = \epsilon_{c_b}[\overline{x \mapsto l_b}](y)$, then $\mu_{a3}(\epsilon_a(y)) = \mu_{a3}(\epsilon_{c_a}[\overline{x \mapsto l_a}](y))$ and $\mu_{b3}(\epsilon_b(y)) = \mu_{b3}(\epsilon_{c_b}[\overline{x \mapsto l_b}](y))$. This variable has non-interfering value (\Cref{clos-eq}).
\item \label{unused-vars-fn-call}  $\textsc{unused}(\mu_{a2}, \epsilon_{c_a}, y, \epsilon_a(y))$ (\Cref{def-unused}) and $\textsc{unused}(\mu_{b2}, \epsilon_{c_b}, y, \epsilon_b(y))$, then $\mu_{a3}(y) = \mu_{a2}(y)$ and $\mu_{b3}(y) = \mu_{b2}(y)$. $\semanticBelowPCState{l}{\Xi_{a2}}{\Xi_{b2}}{\Delta}{\mu_{a2}}{\epsilon_a}{\mu_{b2}}{\epsilon_b}{\Gamma}$ guarantees this value to be non-interfering.
\item \label{used-vars-fn-call}$\neg \textsc{unused}(\mu_{a2}, \epsilon_{c_a}, y, \epsilon_a(y))$ and $\neg \textsc{unused}(\mu_{b2}, \epsilon_{c_b}, y, \epsilon_b(y))$, then there exists some closure value with $\epsilon_{c_a}'$ and $\dom{\epsilon_{c_a}'} \subseteq \dom{\epsilon_{c_a}}$, and $\epsilon_{c_b}'$ and $\dom{\epsilon_{c_b}'} \subseteq \dom{\epsilon_{c_b}}$  where $\epsilon_a(y) = \epsilon_{c_a}'(y)$ and $\epsilon_b(y) = \epsilon_{c_b}'(y)$. From \Cref{clos-eq}, we know that $\semanticBelowPCState{l}{\Xi_{a3}}{\Xi_{b3}}{\Delta}{\mu_{a3}}{\epsilon_{c_a}'}{\mu_{b3}}{\epsilon_{c_b}'}{\Gamma_{clos}}$.
\end{enumerate}
To conclude that
\begin{equation} \label{eq-4}
\semanticBelowPCState{l}{\Xi_{a3}}{\Xi_{b3}}{\Delta}{\mu_{a3}}{\epsilon_{a}}{\mu_{b3}}{\epsilon_{b}}{\Gamma}
\end{equation}
we also need to ensure that for all $x$ in $\dom{\epsilon_a} = \dom{\epsilon_b}$
and some $\Gamma_{clos} \subseteq \Gamma$ and any $pc$,  if $\Gamma, \Delta
\vdash_{pc} x: \tau_{clos}$ with closure environments $\epsilon_{c_a}'$ and
$\epsilon_{c_b}'$ in the two states, then
$\semanticBelowPCState{l}{\Xi_{a3}}{\Xi_{b3}}{\Delta}{\mu_{a3}}{\epsilon_{c_a}'}{\mu_{b3}}{\epsilon_{c_b}'}{\Gamma_{clos}}$.
For closure variables satisfying \Cref{used-vars-fn-call}, this will follow from closure properties in \Cref{clos-eq}.
For variables satisfying   \Cref{unused-vars-fn-call}, this will follow from the fact that the variables in their closure environments can again be \textsc{unused} (implies unchanged between $\mu_{a2}$ and $\mu_{a3}$, $\mu_{b2}$ and $\mu_{b3}$) or \textsc{used} (in this case we already know from \Cref{clos-eq} that such variables satisfy non-interference of values).

Using \Cref{lem:lvalue-write-final} on $\mu_{a3}$, $\mu_{b3}$ to assign non-interfering values (\Cref{eq-4} implies that the store has non-interfering values) to l-values, we conclude
\begin{equation}
\semanticBelowPCState{l}{\Xi_{a4}}{\Xi_{b4}}{\Delta} {\mu_{a5}}{\epsilon_{a}}{\mu_{b5}}{\epsilon_{b}}{\Gamma}
\end{equation}
Since $\Xi_a \subseteq \Xi_{a1} \subseteq \Xi_{a2} \subseteq \Xi_{a3} \subseteq \Xi_{a4}$ and $\Xi_b \subseteq \Xi_{b1} \subseteq \Xi_{b2} \subseteq \Xi_{b3} \subseteq \Xi_{b4}$, showing the above equation is same as showing \Cref{fn-tp-1}.
Proof of \Cref{fn-tp-2} and \Cref{fn-tp-3} follows from the results of the application of the theorem for NI for expression, statements above and the fact that domain of memory stores have increasing domains.

\item \textbf{T-MatchKind}
Trivial
\[ \inferrule*[right=T-MemHdr]
{
  match\_kind \{\overline{f}\} \in \Delta(match\_kind) \\
  f_i \in \overline{f}
}
{
\ordinaryTyping[pc]{\Gamma}{\Delta}{match\_kind.f_i}{\type{match\_kind \{\overline{f}\}}{\bot}~goes~in}
}
\]
Evaluation rule
\[
\inferrule*[right=Eval 1]
{
  match\_kind \{\overline{f}\} \in \Delta(match\_kind)
}
{ \langle \mathcal{C}, \Delta; \mu_{a}; \epsilon_{a}; match\_kind.f_i \rangle \Downarrow \langle \mu_{a}, f_i\rangle}
\]

\[
\inferrule*[right=Eval 2]
{
 match\_kind \{\overline{f}\} \in \Delta(match\_kind)
}
{ \langle \mathcal{C}, \Delta; \mu_{b}; \epsilon_{b}; exp.f_{i} \rangle \Downarrow \langle \mu_{b}, f_i\rangle}
\]

\end{enumerate}
\paragraph*{Proof on \Cref{ni-stmt}}
The non-interference theorem for statements is given in \Cref{ni-stmt}.
\begin{enumerate}
\item \textbf{\textsc{T-Empty}}
The last typing rule in the derivation of an empty statement will be:

\[ \inferrule*[]
{
}
{
\stmtTyping{pc}{\Gamma}{\Delta}{\{ \}}{\Gamma}
}
\]

Given the above typing judgement holds for, $\{\}$, statement, we need to show that for any $\Xi_a$, $\Xi_b$, $\mu_{a}$, $\mu_{b}$, $\epsilon_{a}$, $\epsilon_{b}$, $\mu_a'$, $\mu_b'$, $\epsilon_a'$, $\epsilon_b'$ satisfying
\begin{equation} \label{stmt-empty-hyp}
\semanticBelowPCState{l}{\Xi_a}{\Xi_b}{\Delta}{\mu_{a}}{\epsilon_{a}}{\mu_{b}}{\epsilon_{b}}{\Gamma}
\end{equation}
if the statement, $\{\}$ is evaluated under two different initial configurations $\langle \mu_a, \epsilon_a \rangle$ and $\langle \mu_b, \epsilon_b \rangle$ as follows:

\begin{mathpar}
\inferrule*[]
{
~
}
{
 \langle \mathcal{C}, \Delta, \mu_{a}, \epsilon_{a}, \{\} \rangle \Downarrow \langle \mu_{a}, \epsilon_{a}, cont \rangle
}

 \inferrule*[]
{
~
}
{
 \langle \mathcal{C}, \Delta, \mu_{b}, \epsilon_{b}, \{\} \rangle \Downarrow \langle \mu_{b}, \epsilon_{b}, cont \rangle
}
\end{mathpar}

Then there exists some $\Xi_a'$ and $\Xi_b'$, such that the following hold:
\begin{enumerate}
\item $\stmtTyping{pc}{\Gamma}{\Delta}{\{\}}{\Gamma'}{}$. This is already the theorem's hypothesis.
\item We have $\Xi_a \subseteq \Xi_a'$, $\Xi_b \subseteq \Xi_b'$, $\dom{\mu_a}
  \subseteq \dom{\mu_a'}$, $\dom{\mu_b} \subseteq \dom{\mu_b'}$,
  $\dom{\epsilon_a} \subseteq \dom{\epsilon_a'}$, and $\dom{\epsilon_b}
  \subseteq \dom{\epsilon_b'}$,
and $\semanticBelowPCState{l}{\Xi_a'}{\Xi_b'}{\Delta} {\mu_{a}'}{\epsilon_{a}'}{\mu_{b}'}{\epsilon_{b}'}{\Gamma'}$. In this case $\mu_a' = \mu_a$, $\mu_b' = \mu_b$, $\epsilon_a' = \epsilon_a$, $\epsilon_b' = \epsilon_b$.

With $\Xi_a' = \Xi_a$, $\Xi_b' = \Xi_b$, the above equation reduces to showing \Cref{stmt-empty-hyp}.
\item For any $l_a \in \dom{\mu_a}$ such that $\ordinaryTyping[]{\Xi_a}{\Delta}{\mu_a(l_a)}{\tau_{clos}}$, where $\tau_{clos} \in \{\tau_{fn}, \tau_{tbl}\}$, then $\mu_a'(l_a) = \mu_a(l_a)$. Similarly for any $l_b \in \dom{\mu_b}$ such that $\ordinaryTyping[]{\Xi_b}{\Delta}{\mu_b(l_b)}{\tau_{clos}}$, where $\tau_{clos} \in \{\tau_{fn}, \tau_{tbl}\}$, then $\mu_b'(l_b) = \mu_b(l_b)$. This is evident as the memory store remains unchanged.
\item $sig$ in any two evaluations are of the same form. In this case $sig_1 = cont = sig_2$.
\item For any $l_a \in \dom{\mu_a}$ and $l_b \in \dom{\mu_b}$ such that $\ordinaryTyping[]{\Xi_a}{\Delta}{\mu_a(l_a)}{\type{\tau}{\chi}}$ and $\ordinaryTyping[]{\Xi_b}{\Delta}{\mu_b(l_b)}{\type{\tau}{\chi}}$ and $pc \nsqsubseteq \chi$, we have $\mu_{a}'(l_a) = \mu_{a}(l_a)$ and $\mu_{b}'(l_b) = \mu_{b}(l_b)$. The stores remain unchanged.
\end{enumerate}

\item \textbf{\textsc{T-Exit}}
\[  \inferrule*[]
{
    ~
}
{
     \langle \mathcal{C}, \Delta, \sigma, \epsilon, exit \rangle \Downarrow \langle \sigma, \epsilon, exit \rangle
}
\]

Similar to the empty statement case. This time the $sig_{1}=sig_{2}=exit$

\item \textbf{\textsc{T-Cond}}
The last rule in the typing derivation of a conditional statement will be:
\[
\inferrule*[right=T-Cond]
{
\ordinaryTyping[pc]{\Gamma}{\Delta}{exp}{\type{bool}{\chi_1}}\\\\
\stmtTyping{\chi_2}{\Gamma}{\Delta}{stmt_{1}}{\Gamma_{1}}\\
\stmtTyping{\chi_2}{\Gamma}{\Delta}{stmt_{2}}{\Gamma_{2}} \\
\chi_1 \sqsubseteq \chi_2 \\
pc \sqsubseteq \chi_2
}
{
\stmtTyping{pc}{\Gamma}{\Delta}{\terminal{if}~(exp)~~stmt_{1}~\terminal{else}~stmt_{2}}{\Gamma}
}
\]

Given the above typing judgement holds for, $\textsf{if}~(exp)~ stmt_{1}~\textsf{else}~stmt_{2}$, statement, we need to show that for any $\Xi_a$, $\Xi_b$, $\mu_{a}$, $\mu_{b}$, $\epsilon_{a}$, $\epsilon_{b}$, $\mu_a'$, $\mu_b'$, $\epsilon_a'$, $\epsilon_b'$ satisfying
\begin{equation} \label{stmt-ite-hyp}
\semanticBelowPCState{l}{\Xi_a}{\Xi_b}{\Delta}{\mu_{a}}{\epsilon_{a}}{\mu_{b}}{\epsilon_{b}}{\Gamma}
\end{equation}
if the statement, $\text{if}~(exp)~stmt_{1}~\text{else}~stmt_{2}$ is evaluated under two different initial configurations $\langle \mu_a, \epsilon_a \rangle$ and $\langle \mu_b, \epsilon_b \rangle$ as follows (in a given evaluation, a conditional statement can have the $exp$ evaluate to true or false):

\paragraph*{Boolean guard evaluates to false}
\begin{mathpar}
\inferrule*[]
{
 \langle \mathcal{C}, \Delta, \mu_{a}, \epsilon_{a}, exp \rangle \Downarrow  \langle \mu_{a1}, false \rangle \\
 \langle \mathcal{C}, \Delta, \mu_{a1}, \epsilon_{a}, stmt_{2} \rangle \Downarrow  \langle \mu_{a2}, \epsilon_{a1}, sig_{a1} \rangle
}
{
 \langle \mathcal{C}, \Delta, \mu_{a}, \epsilon_{a}, \terminal{if}~ (exp)~ stmt_{1}~  \terminal{else}~ stmt_{2}  \rangle  \Downarrow  \langle \mu_{a2}, \epsilon_{a}, sig_{a1} \rangle
}

\inferrule*[]
{
 \langle \mathcal{C}, \Delta, \mu_{b}, \epsilon_{b}, exp \rangle \Downarrow  \langle \mu_{b1}, false \rangle \\
 \langle \mathcal{C}, \Delta, \mu_{b1}, \epsilon_{b}, stmt_{2} \rangle \Downarrow  \langle \mu_{b2}, \epsilon_{b1}, sig_{b1} \rangle
}
{
 \langle \mathcal{C}, \Delta, \mu_{b}, \epsilon_{b}, \terminal{if}~ (exp)~ stmt_{1}~  \terminal{else}~ stmt_{2} \rangle  \Downarrow  \langle \mu_{b2}, \epsilon_{b}, sig_{b1} \rangle
}
\end{mathpar}

\paragraph*{Boolean guard evaluates to true}
\begin{mathpar}
\inferrule*[]
{
 \langle \mathcal{C}, \Delta, \mu_{a}, \epsilon_{a}, exp \rangle \Downarrow  \langle \mu_{a1}, true \rangle \\
 \langle \mathcal{C}, \Delta, \mu_{a1}, \epsilon_{a}, stmt_{1} \rangle \Downarrow  \langle \mu_{a2}, \epsilon_{a1}, sig_{a2} \rangle
}
{
 \langle \mathcal{C}, \Delta, \mu_{a}, \epsilon_{a}, \terminal{if}~ (exp)~ stmt_{1}~  \terminal{else}~ stmt_{2} \rangle  \Downarrow  \langle \mu_{a2}, \epsilon_{a}, sig_{a2} \rangle
}

\inferrule*[]
{
 \langle \mathcal{C}, \Delta, \mu_{b}, \epsilon_{b}, exp \rangle \Downarrow  \langle \mu_{b1}, true \rangle \\
 \langle \mathcal{C}, \Delta, \mu_{b1}, \epsilon_{b}, stmt_{1} \rangle \Downarrow  \langle \mu_{b2}, \epsilon_{b1}, sig_{b2} \rangle
}
{
 \langle \mathcal{C}, \Delta, \mu_{b}, \epsilon_{b}, \terminal{if}~ (exp)~ stmt_{1}~  \terminal{else}~ stmt_{2}  \rangle  \Downarrow  \langle \mu_{b2}, \epsilon_{b}, sig_{b2} \rangle
}
\end{mathpar}
Then there exists some $\Xi_a'$ and $\Xi_b'$, such that the following hold:
\begin{enumerate}
\item $\stmtTyping{pc}{\Gamma}{\Delta}{\terminal{if}~ (exp)~ stmt_{1}~  \terminal{else}~ stmt_{2}}{\Gamma'}{}$. This is already the theorem's hypothesis.
\item \label{stmt-ite-tp-2} We have $\Xi_a \subseteq \Xi_a'$, $\Xi_b \subseteq
  \Xi_b'$, $\dom{\mu_a} \subseteq \dom{\mu_a'}$, $\dom{\mu_b} \subseteq
  \dom{\mu_b'}$, $\dom{\epsilon_a} \subseteq \dom{\epsilon_a'}$, and
  $\dom{\epsilon_b} \subseteq \dom{\epsilon_b'}$,
and $\semanticBelowPCState{l}{\Xi_a'}{\Xi_b'}{\Delta} {\mu_{a}'}{\epsilon_{a}'}{\mu_{b}'}{\epsilon_{b}'}{\Gamma'}$. In this case $\mu_a' = \mu_{a2}$, $\mu_b' = \mu_{b2}$, $\epsilon_a' = \epsilon_a$, $\epsilon_b' = \epsilon_b$.
\item \label{stmt-ite-tp-4} For any $l_a \in \dom{\mu_a}$ such that $\ordinaryTyping[]{\Xi_a}{\Delta}{\mu_a(l_a)}{\tau_{clos}}$, where $\tau_{clos} \in \{\tau_{fn}, \tau_{tbl}\}$, then $\mu_a'(l_a) = \mu_a(l_a)$. Similarly for any $l_b \in \dom{\mu_b}$ such that $\ordinaryTyping[]{\Xi_b}{\Delta}{\mu_b(l_b)}{\tau_{clos}}$, where $\tau_{clos} \in \{\tau_{fn}, \tau_{tbl}\}$, then $\mu_b'(l_b) = \mu_b(l_b)$.
\item \label{stmt-ite-tp-5} For any $l_a \in \dom{\mu_a}$ and $l_b \in \dom{\mu_b}$ such that $\ordinaryTyping[]{\Xi_a}{\Delta}{\mu_a(l_a)}{\type{\tau}{\chi}}$ and $\ordinaryTyping[]{\Xi_b}{\Delta}{\mu_b(l_b)}{\type{\tau}{\chi}}$ and $pc \nsqsubseteq \chi$, we have $\mu_{a}'(l_a) = \mu_{a}(l_a)$ and $\mu_{b}'(l_b) = \mu_{b}(l_b)$,
\item \label{stmt-ite-tp-3} Final $sig$ in any two evaluations are of the same form. We will show this by proving that despite both the branches yielding independent $sig_{a1}$, $sig_{a2}$ (similarly for $b$), the typing rule will ensure that the final $sig$ will be of the same form.
\end{enumerate}
In the following part, we prove the last four requirements.
By applying induction hypothesis of \Cref{ni-exp} on the well-typed $exp$ that is evaluated in an initial state satisfying \Cref{stmt-ite-hyp}, we conclude that  there exists some $\Xi_{a1}'$ and $\Xi_{b1}'$ such that $\Xi_{a} \subseteq \Xi_{a1}'$ and $\Xi_{b} \subseteq \Xi_{b1}'$, $\dom{\mu_{a1}} \supseteq \dom{\mu_{a}}$, $\dom{\mu_{b1}} \supseteq \dom{\mu_{b}}$ and the following hold:
\begin{equation}\label{stmt-ite-ih-1}
\semanticBelowPCState{l}{\Xi_{a1}'}{\Xi_{b1}'}{\Delta}{\mu_{a1}}{\epsilon_{a}}{\mu_{b1}}{\epsilon_{b}}{\Gamma}~,
\end{equation}
for any $l_a \in \dom{\mu_a}$ such that $\ordinaryTyping[]{\Xi_a}{\Delta}{\mu_a(l_a)}{\tau_{clos}}$, where $\tau_{clos} \in \{\tau_{fn}, \tau_{tbl}\}$, then $\mu_{a1}(l_a) = \mu_a(l_a)$. Similarly for any $l_b \in \dom{\mu_b}$ such that $\ordinaryTyping[]{\Xi_b}{\Delta}{\mu_b(l_b)}{\tau_{clos}}$, where $\tau_{clos} \in \{\tau_{fn}, \tau_{tbl}\}$, then $\mu_{b1}(l_b) = \mu_b(l_b)$,

for any $l_a \in \dom{\mu_a}$ and $l_b \in \dom{\mu_b}$ such that $\ordinaryTyping[]{\Xi_a}{\Delta}{\mu_a(l_a)}{\type{\tau}{\chi}}$ and $\ordinaryTyping[]{\Xi_b}{\Delta}{\mu_b(l_b)}{\type{\tau}{\chi}}$ and $pc \nsqsubseteq \chi$, we have $\mu_{a1}(l_a) = \mu_{a}(l_a)$ and $\mu_{b1}(l_b) = \mu_{b}(l_b)$,
\begin{equation}\label{stmt-ite-ih-2}
\NIval{l}{\Xi_{a1}'}{\Xi_{b1}'}{\Delta}{val_{a1}}{val_{b1}}{\type{bool}{\chi_1}}
\end{equation}
To interpret this judgement, we consider two cases for $\chi_1$:
\begin{itemize}
\item $\chi_1 \sqsubseteq l$. This implies $val_{a1} = val_{b1}$. Therefore, both the evaluations will either take \emph{true} branch or both take \emph{false} branch.
We prove the required results for the \emph{true} case; proof for the other case follows similarly.
By applying the current theorem's induction hypothesis on the well-typed $stmt_1$ that is evaluated in an initial configuration satisfying \Cref{stmt-ite-ih-1}, we conclude that given $\evalsto{\config[stmt_1]{\mathcal{C};\Delta}{\mu_{a1}}{\epsilon_{a}}}{\config{\mu_{a2}}{\epsilon_{a1}}{sig_{a}}}$ and $\evalsto{\config[stmt_1]{\mathcal{C};\Delta}{\mu_{b1}}{\epsilon_{b}}}{\config{\mu_{b2}}{\epsilon_{b1}}{sig_{b}}}$ there exists some $\Xi_{a2}'$ and $\Xi_{b2}'$, such that $\Xi_{a1}' \subseteq \Xi_{a2}'$, $\Xi_{b1}' \subseteq \Xi_{b2}'$, $\dom{\mu_{a2}} \supseteq \dom{\mu_{a1}}$, $\dom{\mu_{b2}} \supseteq \dom{\mu_{b1}}$, $\dom{\epsilon_a} \subseteq \dom{\epsilon_{a1}}$, $\dom{\epsilon_b} \subseteq \dom{\epsilon_{b1}}$, the signals satisfy the property of being of the same form (this proves the requirement in \Cref{stmt-ite-tp-3}) and
\begin{equation} \label{stmt-ite-ih-3}
\semanticBelowPCState{l}{\Xi_{a2}'}{\Xi_{b2}'}{\Delta} {\mu_{a2}}{\epsilon_{a1}}{\mu_{b2}}{\epsilon_{b1}}{\Gamma'},
\end{equation}
\begin{equation} \label{stmt-ite-ih-3-1}
\semanticBelowPCState{l}{\Xi_{a2}'}{\Xi_{b2}'}{\Delta} {\mu_{a2}}{\epsilon_{a}}{\mu_{b2}}{\epsilon_{b}}{\Gamma},
\end{equation}
We already know from above that $\Xi_a \subseteq \Xi_{a1}' \subseteq \Xi_{a2}'$, $\Xi_b \subseteq \Xi_{b1}' \subseteq \Xi_{b2}'$, $\dom{\mu_{a2}} \supseteq \dom{\mu_{a1}} \supseteq \dom{\mu_{a}}$, $\dom{\mu_{b2}} \supseteq \dom{\mu_{b1}} \supseteq \dom{\mu_{b}}$.
Therefore, the \Cref{stmt-ite-ih-3-1} proves the results needed to show \Cref{stmt-ite-tp-2}.
Applying the induction hypothesis also concludes that for any $l_a \in \dom{\mu_a}$ such that $\ordinaryTyping[]{\Xi_a}{\Delta}{\mu_a(l_a)}{\tau_{clos}}$, where $\tau_{clos} \in \{\tau_{fn}, \tau_{tbl}\}$, then $\mu_{a2}(l_a) = \mu_{a1}(l_a)= \mu_a(l_a)$. Similarly for any $l_b \in \dom{\mu_b}$ such that $\ordinaryTyping[]{\Xi_b}{\Delta}{\mu_b(l_b)}{\tau_{clos}}$, where $\tau_{clos} \in \{\tau_{fn}, \tau_{tbl}\}$, then $\mu_{b2}(l_b) = \mu_{b1}(l_b) = \mu_b(l_b)$.
This proves the result needed to show \Cref{stmt-ite-tp-4}.

Applying the induction hypothesis also gives us that for any $l_a \in \dom{\mu_{a1}}$ and $l_b \in \dom{\mu_{b1}}$ such that $\ordinaryTyping[]{\Xi_{a1}'}{\Delta}{\mu_{a1}(l_a)}{\type{\tau}{\chi}}$ and $\ordinaryTyping[]{\Xi_{b1}'}{\Delta}{\mu_{b1}(l_b)}{\type{\tau}{\chi}}$ and $pc \nsqsubseteq \chi$, we have $\mu_{a1}(l_a) = \mu_{a2}(l_a)$ and $\mu_{b1}(l_b) = \mu_{b2}(l_b)$.  As $\dom{\mu_{a}} \subseteq \dom{\mu_{a1}} \subseteq \dom{\mu_{a2}}$, this proves the result needed to show \Cref{stmt-ite-tp-5}.

\item $\chi_1 \nsqsubseteq l$. In this case the conditional guards might differ causing different branches to be taken. However, $\chi_1 \nsqsubseteq l$ implies $\chi' \nsqsubseteq l$. Since we know that $stmt_1$ and $stmt_2$ are well-typed at $\chi'$, which means store locations at $\chi' \nsqsubseteq \chi$ remain unchanged across $\mu_{a1}$ and $\mu_{a2}$, and $\mu_{b1}$ and $\mu_{b2}$. This implies locations at $\chi \sqsubseteq l$ remain unchanged. Therefore, we can conclude from \Cref{stmt-ite-ih-1} that
\begin{equation}
\semanticBelowPCState{l}{\Xi_{a2}'}{\Xi_{b2}'}{\Delta}{\mu_{a2}}{\epsilon_{a}}{\mu_{b2}}{\epsilon_{b}}{\Gamma}
\end{equation}
$stmt_1$ and $stmt_2$ are well-typed at $pc=\chi'$. Since $\chi' \nsqsubseteq l$ and $\bot \sqsubseteq l$, we know that $\chi' \nsqsubseteq \bot$. This implies that $\terminal{return}$ and $\terminal{exit}$ statements cannot be in these statement block because these two statements are well typed at the $pc=\bot$ only. Therefore, only $sig$ that can be returned in these statement blocks are $cont$. With this we prove that the final $sig$ are of the same kind.
\end{itemize}

\item \textbf{\textsc{T-Seq-1}} \label{seq}
The last rule in the typing derivation of a block of statements will be:
\[ \inferrule*[right=T-Seq]
{
\Gamma, \Delta \vdash_{pc} stmt_{1} \dashv \Gamma_{1} \qquad
\Gamma_{1}, \Delta \vdash_{pc} \{ \overline{stmt_{2}} \} \dashv \Gamma_{2}
}
{
\Gamma, \Delta \vdash_{pc} \{ stmt_{1}; \overline{stmt_{2}} \} \dashv \Gamma_2
}
\]

Given the above typing judgement holds for the statement, $ \{ stmt_{1}; \overline{stmt_{2}} \}$, we need to show that for any $\Xi_a$, $\Xi_b$, $\mu_{a}$, $\mu_{b}$, $\epsilon_{a}$, $\epsilon_{b}$, $\mu_a'$, $\mu_b'$, $\epsilon_a'$, $\epsilon_b'$ satisfying
\begin{equation} \label{stmt-blk-hyp}
\semanticBelowPCState{l}{\Xi_a}{\Xi_b}{\Delta}{\mu_{a}}{\epsilon_{a}}{\mu_{b}}{\epsilon_{b}}{\Gamma}
\end{equation}
If the statement, $ \{ stmt_{1}; \overline{stmt_{2}} \}$ is evaluated under two different initial configurations $\langle \mu_a, \epsilon_a \rangle$ and $\langle \mu_b, \epsilon_b \rangle$, then there exists some $\Xi_a'$ and $\Xi_b'$, such that the following hold:
\begin{enumerate}
\item $\declTyping{pc}{\Gamma}{\Delta}{\{ stmt_{1}, \overline{stmt_{2}} \}}{\Gamma'}{}$. This is already the theorem's hypothesis.
\item \label{stmt-blk-tp-2} We have $\Xi_a \subseteq \Xi_a'$, $\Xi_b \subseteq
  \Xi_b'$, $\dom{\mu_a} \subseteq \dom{\mu_a'}$, $\dom{\mu_b} \subseteq
  \dom{\mu_b'}$, $\dom{\epsilon_a} \subseteq \dom{\epsilon_a'}$, and
  $\dom{\epsilon_b} \subseteq \dom{\epsilon_b'}$,
and $\semanticBelowPCState{l}{\Xi_a'}{\Xi_b'}{\Delta} {\mu_{a}'}{\epsilon_{a}'}{\mu_{b}'}{\epsilon_{b}'}{\Gamma'}$. In this case $\mu_a' = \mu_{a2}$, $\mu_b' = \mu_{b2}$, $\epsilon_a' = \epsilon_{a2}$, $\epsilon_b' = \epsilon_{b2}$.
We also need to show that $\semanticBelowPCState{l}{\Xi_a'}{\Xi_b'}{\Delta} {\mu_{a}'}{\epsilon_{a}}{\mu_{b}'}{\epsilon_{b}}{\Gamma}$.
\item \label{stmt-blk-tp-4} For any $l_a \in \dom{\mu_a}$ such that $\ordinaryTyping[]{\Xi_a}{\Delta}{\mu_a(l_a)}{\tau_{clos}}$, where $\tau_{clos} \in \{\tau_{fn}, \tau_{tbl}\}$, then $\mu_a'(l_a) = \mu_a(l_a)$. Similarly for any $l_b \in \dom{\mu_b}$ such that $\ordinaryTyping[]{\Xi_b}{\Delta}{\mu_b(l_b)}{\tau_{clos}}$, where $\tau_{clos} \in \{\tau_{fn}, \tau_{tbl}\}$, then $\mu_b'(l_b) = \mu_b(l_b)$.
\item For any $l_a' \in \dom{\mu_{a}}$ and $l_b' \in \dom{\mu_{b}}$ such that $\ordinaryTyping[]{\Xi_{a}}{\Delta}{\mu_{a}(l_a')}{\type{\tau}{\chi}}$ and $\ordinaryTyping[]{\Xi_{b}}{\Delta}{\mu_{b}(l_b')}{\type{\tau}{\chi}}$ and $pc \nsqsubseteq \chi$, we have $\mu_{a}(l_a') = \mu_{a}'(l_a')$ and $\mu_{b}(l_b') = \mu_{b}'(l_b')$,
\item \label{stmt-blk-tp-3} $sig$ in any two evaluations are of the same form.
\end{enumerate}
There are three cases for this evaluation: involving return statement, exit statement, or ordinary statements. We explain the ordinary statements case in detail, and the other two follow similarly.

\begin{mathpar}
\inferrule*[]
{
         \langle \mathcal{C}, \Delta, \mu_{a}, \epsilon_{a}, stmt_{1} \rangle \Downarrow  \langle \mu_{a1}, \epsilon_{a1}, cont \rangle \\
         \langle \mathcal{C}, \Delta, \mu_{a1}, \epsilon_{a1}, \{\overline{stmt_{2}}\} \rangle \Downarrow   \langle \mu_{a2}, \epsilon_{a2}, sig_a \rangle
}
{
 \langle \mathcal{C}, \Delta, \mu_{a}, \epsilon_{a}, \{ stmt_{1}, \overline{stmt_{2}} \} \rangle  \Downarrow  \langle \mu_{a2}, \epsilon_{a2}, sig_a \rangle
}

\inferrule*[]
{
         \langle \mathcal{C}, \Delta, \mu_{b}, \epsilon_{b}, stmt_{1} \rangle \Downarrow  \langle \mu_{b1}, \epsilon_{b1}, cont \rangle \\
         \langle \mathcal{C}, \Delta, \mu_{b1}, \epsilon_{b1}, \{\overline{stmt_{2}}\} \rangle \Downarrow   \langle \mu_{b2}, \epsilon_{b2}, sig_b \rangle
}
{
 \langle \mathcal{C}, \Delta, \mu_{b}, \epsilon_{b}, \{ stmt_{1}, \overline{stmt_{2}} \} \rangle  \Downarrow  \langle \mu_{b2}, \epsilon_{b2}, sig_b \rangle
}
\end{mathpar}

In the following part, we prove the last three requirements.
Since $stmt_1$ is evaluated in an initial configuration satisfying \Cref{stmt-blk-hyp}, by applying induction hypothesis on the typing derivation of $stmt_1$, we conclude that given $\evalsto{\config[stmt_1]{\mathcal{C};\Delta}{\mu_{a}}{\epsilon_{a}}}{\config{\mu_{a1}}{\epsilon_{a1}}{cont}}$ and $\evalsto{\config[stmt_1]{\mathcal{C};\Delta}{\mu_{b}}{\epsilon_{b}}}{\config{\mu_{b1}}{\epsilon_{b1}}{cont}}$ there exists some $\Xi_{a1}'$ and $\Xi_{b1}'$, such that $\Xi_{a}' \subseteq \Xi_{a1}'$, $\Xi_{b}' \subseteq \Xi_{b1}'$, $\dom{\mu_{a1}} \supseteq \dom{\mu_{a}}$, $\dom{\mu_{b1}} \supseteq \dom{\mu_{b}}$, $\dom{\epsilon_a} \subseteq \dom{\epsilon_{a1}}$, $\dom{\epsilon_b} \subseteq \dom{\epsilon_{b1}}$, the signals satisfy the property of being of the same form (in both case it is $cont$) and
\begin{equation} \label{stmt-blk-ih-1-3}
\semanticBelowPCState{l}{\Xi_{a1}'}{\Xi_{b1}'}{\Delta} {\mu_{a1}}{\epsilon_{a1}}{\mu_{b1}}{\epsilon_{b1}}{\Gamma_1},
\end{equation}
\begin{equation} \label{stmt-blk-ih-1-4}
\semanticBelowPCState{l}{\Xi_{a1}'}{\Xi_{b1}'}{\Delta} {\mu_{a1}}{\epsilon_{a}}{\mu_{b1}}{\epsilon_{b}}{\Gamma},
\end{equation}
and for any $l_a \in \dom{\mu_a}$ such that $\ordinaryTyping[]{\Xi_a}{\Delta}{\mu_a(l_a)}{\tau_{clos}}$, where $\tau_{clos} \in \{\tau_{fn}, \tau_{tbl}\}$, then $\mu_{a1}(l_a) = \mu_a(l_a)$. Similarly for any $l_b \in \dom{\mu_b}$ such that $\ordinaryTyping[]{\Xi_b}{\Delta}{\mu_b(l_b)}{\tau_{clos}}$, where $\tau_{clos} \in \{\tau_{fn}, \tau_{tbl}\}$, then $\mu_{b1}(l_b) = \mu_b(l_b)$.

$\overline{stmt_2}$ is a sequence of statements, so we apply induction hypothesis repeatedly on each statement and conclude that the final states after evaluation of the sequence of statements $\evalsto{\config[\overline{stmt_2}]{\mathcal{C};\Delta}{\mu_{a1}}{\epsilon_{a1}}}{\config{\mu_{a2}}{\epsilon_{a2}}{sig_{a}}}$ and $\evalsto{\config[\overline{stmt_2}]{\mathcal{C};\Delta}{\mu_{b1}}{\epsilon_{b1}}}{\config{\mu_{b2}}{\epsilon_{b2}}{sig_{b}}}$ there exists some $\Xi_{a2}'$ and $\Xi_{b2}'$, such that $\Xi_{a1}' \subseteq \Xi_{a2}'$, $\Xi_{b1}' \subseteq \Xi_{b2}'$, $\dom{\mu_{a2}} \supseteq \dom{\mu_{a1}}$, $\dom{\mu_{b2}} \supseteq \dom{\mu_{b1}}$, $\dom{\epsilon_{a1}} \subseteq \dom{\epsilon_{a2}}$, $\dom{\epsilon_{b1}} \subseteq \dom{\epsilon_{b2}}$, the signals satisfy the property of being of the same form (this proves the requirement in \Cref{stmt-blk-tp-3}) and
\begin{equation}
\semanticBelowPCState{l}{\Xi_{a2}'}{\Xi_{b2}'}{\Delta} {\mu_{a2}}{\epsilon_{a2}}{\mu_{b2}}{\epsilon_{b2}}{\Gamma_2}
\end{equation}
\begin{equation} \label{stmt-blk-ih-2-4}
\semanticBelowPCState{l}{\Xi_{a2}'}{\Xi_{b2}'}{\Delta} {\mu_{a2}}{\epsilon_{a1}}{\mu_{b2}}{\epsilon_{b1}}{\Gamma_1}
\end{equation}
and for any $l_a \in \dom{\mu_a}$ such that $\ordinaryTyping[]{\Xi_a}{\Delta}{\mu_a(l_a)}{\tau_{clos}}$, where $\tau_{clos} \in \{\tau_{fn}, \tau_{tbl}\}$, then $\mu_{a2}(l_a) = \mu_{a1}(l_a)= \mu_a(l_a)$. Similarly for any $l_b \in \dom{\mu_b}$ such that $\ordinaryTyping[]{\Xi_b}{\Delta}{\mu_b(l_b)}{\tau_{clos}}$, where $\tau_{clos} \in \{\tau_{fn}, \tau_{tbl}\}$, then $\mu_{b2}(l_b) = \mu_{b1}(l_b)= \mu_b(l_b)$.
This proves the result needed to show \Cref{stmt-ite-tp-4}.
Since we know that $\dom{\epsilon_a} \subseteq \dom{\epsilon_{a1}}$, any $x \in \dom{\epsilon_a}$ will also be in $\dom{\epsilon_{a1}}$. Similarly for $\epsilon_b$. There can be two cases due to shadowing a variable name:
\begin{enumerate}
\item $\epsilon_{a}(x) = \epsilon_{a1}(x)$, $\epsilon_{b}(x) = \epsilon_{b1}(x)$. In this case, $\mu_{a2}(\epsilon_{a}(x)) = \mu_{a2}(\epsilon_{a1}(x))$ and $\mu_{b2}(\epsilon_{b}(x)) = \mu_{b2}(\epsilon_{b1}(x))$. We know that these variables satisfy non-interference in $\mu_{a2}$ and $\mu_{b2}$ from \Cref{stmt-blk-ih-2-4}.
\item $\epsilon_{a}(x) \ne \epsilon_{a1}(x)$ and $\epsilon_{b}(x) \ne \epsilon_{b1}(x)$.
\begin{enumerate}
  \item \label{unused-vars-seq} If $\textsc{unused}(\langle\mu_{a1}, \epsilon_{a1} \rangle, x, \epsilon_{a}(x))$ and $\textsc{unused}(\langle\mu_{b1}, \epsilon_{b1} \rangle, x, \epsilon_{b}(x))$, then $\mu_{a2}(\epsilon_a(x)) = \mu_{a1}(\epsilon_{a}(x))$ and $\mu_{b2}(\epsilon_b(x)) = \mu_{b1}(\epsilon_{b}(x))$, which we know are non-interfering from \Cref{stmt-blk-ih-1-4}.
  \item \label{used-vars-seq} If $\neg \textsc{unused}(\langle\mu_{a1}, \epsilon_{a1} \rangle, x, \epsilon_{a}(x))$ and $\neg \textsc{unused}(\langle\mu_{b1}, \epsilon_{b1} \rangle, x, \epsilon_{b}(x))$, then there exists some closure value with environment $\epsilon_{c_a}'$ and $\dom{\epsilon_{c_a}'} \subseteq \dom{\epsilon_{a1}}$, and $\epsilon_{c_b}'$ and $\dom{\epsilon_{c_b}'} \subseteq \dom{\epsilon_{b1}}$  where $x \in \dom{\epsilon_{c_a}'}$and $\epsilon_a(x) = \epsilon_{c_a}'(x)$. Also, $\epsilon_b(y) = \epsilon_{c_b}'(y)$. From \Cref{stmt-blk-ih-2-4}, we know that $\semanticBelowPCState{l}{\Xi_{a2}'}{\Xi_{b2}'}{\Delta}{\mu_{a2}}{\epsilon_{c_a}'}{\mu_{b2}}{\epsilon_{c_b}'}{\Gamma_{clos}}$. This implies that this variable $x$ will have non-interfering values in $\mu_{a2}$ and $\mu_{b2}$.
\end{enumerate}
\end{enumerate}
We also need to ensure that for all $x$ in $\dom{\epsilon_a} = \dom{\epsilon_b}$
and some $\Gamma_{clos} \subseteq \Gamma$ and any $pc$,  if $\Gamma, \Delta
\vdash_{pc} x: \tau_{clos}$ with closure environments $\epsilon_{c_a}'$ and
$\epsilon_{c_b}'$ in the two states, then
$\semanticBelowPCState{l}{\Xi_{a2}'}{\Xi_{b2}'}{\Delta}{\mu_{a2}}{\epsilon_{c_a}'}{\mu_{b2}}{\epsilon_{c_b}'}{\Gamma_{clos}}$.
For closure variables satisfying \Cref{used-vars-seq}, this will follow from closure properties in \Cref{stmt-blk-ih-2-4}.
For variables ratifying   \Cref{unused-vars-seq}, this will follow from the fact that the variables in their closure environments can again be \textsc{unused} (implies unchanged between $\mu_{a1}$ and $\mu_{a2}$, $\mu_{b1}$ and $\mu_{b2}$) or \textsc{used} (in this case we already know from \Cref{stmt-blk-ih-2-4} that such variables satisfy non-interference of values).
By combining the observation that all variables in $\epsilon_a$ and $\epsilon_b$ are non-interfering, we can conclude
\[
\semanticBelowPCState{l}{\Xi_{a2}'}{\Xi_{b2}'}{\Delta}{\mu_{a2}}{\epsilon_{a}}{\mu_{b2}}{\epsilon_{b}}{\Gamma}
\]
This proves \Cref{stmt-blk-tp-2}.

For reference, the evaluation rules for the other two cases are as follows:
\begin{mathpar}
\inferrule*[]
{
         \langle \mathcal{C}, \Delta, \mu_{a}, \epsilon_{a}, stmt_{1} \rangle \Downarrow  \langle \mu_{a1}, \epsilon_{a1}, return~val_{a} \rangle
}
{
 \langle \mathcal{C}, \Delta, \mu_{a}, \epsilon_{a}, \{ stmt_{1}, \overline{stmt_{2}} \} \rangle  \Downarrow  \langle \mu_{a1}, \epsilon_{a1}, return~val_{a} \rangle
}

\inferrule*[]
{
         \langle \mathcal{C}, \Delta, \mu_{b}, \epsilon_{b}, stmt_{1} \rangle \Downarrow  \langle \mu_{b1}, \epsilon_{b1}, return~val_{b} \rangle
}
{
 \langle \mathcal{C}, \Delta, \mu_{b}, \epsilon_{b}, \{ stmt_{1}, \overline{stmt_{2}} \} \rangle  \Downarrow  \langle \mu_{b1}, \epsilon_{b1}, return~val_{b} \rangle
}
\end{mathpar}

\begin{mathpar}
\inferrule*[]
{
         \langle \mathcal{C}, \Delta, \mu_{a}, \epsilon_{a}, stmt_{1} \rangle \Downarrow  \langle \mu_{a1}, \epsilon_{a1}, return~val_{a} \rangle
}
{
 \langle \mathcal{C}, \Delta, \mu_{a}, \epsilon_{a}, \{ stmt_{1}, \overline{stmt_{2}} \} \rangle  \Downarrow  \langle \mu_{a1}, \epsilon_{a1}, return~val_{a} \rangle
}

\inferrule*[]
{
         \langle \mathcal{C}, \Delta, \mu_{b}, \epsilon_{b}, stmt_{1} \rangle \Downarrow  \langle \mu_{b1}, \epsilon_{b1}, exit \rangle
}
{
 \langle \mathcal{C}, \Delta, \mu_{b}, \epsilon_{b}, \{ stmt_{1}, \overline{stmt_{2}} \} \rangle  \Downarrow  \langle \mu_{b1}, \epsilon_{b1}, exit \rangle
}
\end{mathpar}

\item \textbf{\textsc{T-Return}}
The last rule in the typing derivation of a return will be:
\[
\inferrule*[right=T-Return]
{
\Gamma,\Delta \vdash_{pc}  exp: \langle \tau, \chi_{ret} \rangle \\
\Gamma(\terminal{return}) = \type{\tau_{ret}}{\chi_{ret}} \\
\Delta    \vdash \tau_{ret} \rightsquigarrow \tau
}
{
\Gamma, \Delta \vdash_{pc} \terminal{return}~ exp \dashv \Gamma
}
\]
Given the above typing judgement holds for, $return~exp$, we need to show that for any $\Xi_a$, $\Xi_b$, $\mu_{a}$, $\mu_{b}$, $\epsilon_{a}$, $\epsilon_{b}$, $\mu_a'$, $\mu_b'$, $\epsilon_a'$, $\epsilon_b'$ satisfying
\begin{equation} \label{stmt-ret-hyp}
\semanticBelowPCState{l}{\Xi_a}{\Xi_b}{\Delta}{\mu_{a}}{\epsilon_{a}}{\mu_{b}}{\epsilon_{b}}{\Gamma}
\end{equation}
if the statement, $return~exp$ is evaluated under two different initial configurations $\langle \mu_a, \epsilon_a \rangle$ and $\langle \mu_b, \epsilon_b \rangle$ as follows:

\begin{mathpar}
\inferrule*[]
{
     \langle \mathcal{C}, \Delta, \mu_{a}, \epsilon_{a}, exp \rangle  \Downarrow  \langle \mu_{a1}, val_{a} \rangle
}
{
 \langle \mathcal{C}, \Delta, \mu_{a}, \epsilon_{a}, \texttt{return exp} \rangle  \Downarrow  \langle \mu_{a1}, \epsilon_{a}, \texttt{return } val_{a}\rangle
}

\inferrule*[]
{
     \langle \mathcal{C}, \Delta, \mu_{b}, \epsilon_{b}, exp \rangle  \Downarrow  \langle \mu_{b1}, val_{b} \rangle
}
{
 \langle \mathcal{C}, \Delta, \mu_{b}, \epsilon_{b}, \texttt{return exp} \rangle  \Downarrow  \langle \mu_{b1}, \epsilon_{b}, \texttt{return } val_{b}\rangle
}
\end{mathpar}
Then there exists some $\Xi_a'$ and $\Xi_b'$, such that the following hold:
\begin{enumerate}
\item $\stmtTyping{pc}{\Gamma}{\Delta}{\terminal{return}~exp}{\Gamma'}{}$, where $\Gamma' = \Gamma$. This is already the theorem's hypothesis.
\item \label{stmt-ret-tp-2} $\Xi_a \subseteq \Xi_a'$, $\Xi_b \subseteq \Xi_b'$, $\dom{\mu_a} \subseteq \dom{\mu_a'}$, $\dom{\mu_b} \subseteq \dom{\mu_b'}$, $\dom{\epsilon_a} \subseteq \dom{\epsilon_a'}$, $\dom{\epsilon_b} \subseteq \dom{\epsilon_b'}$, and $\semanticBelowPCState{l}{\Xi_a'}{\Xi_b'}{\Delta} {\mu_{a}'}{\epsilon_{a}'}{\mu_{b}'}{\epsilon_{b}'}{\Gamma'}$. In this case $\mu_a' = \mu_{a1}$, $\mu_b' = \mu_{b1}$, $\epsilon_a' = \epsilon_a$, $\epsilon_b' = \epsilon_b$.
\item \label{stmt-ret-tp-4} For any $l_a \in \dom{\mu_a}$ such that $\ordinaryTyping[]{\Xi_a}{\Delta}{\mu_a(l_a)}{\tau_{clos}}$, where $\tau_{clos} \in \{\tau_{fn}, \tau_{tbl}\}$, then $\mu_a'(l_a) = \mu_a(l_a)$. Similarly for any $l_b \in \dom{\mu_b}$ such that $\ordinaryTyping[]{\Xi_b}{\Delta}{\mu_b(l_b)}{\tau_{clos}}$, where $\tau_{clos} \in \{\tau_{fn}, \tau_{tbl}\}$, then $\mu_b'(l_b) = \mu_b(l_b)$.
\item \label{stmt-ret-tp-3} $sig$ in any two evaluations are of the same form.
\item \label{stmt-ret-tp-5} For any $l_a' \in \dom{\mu_{a}}$ and $l_b' \in \dom{\mu_{b}}$ such that $\ordinaryTyping[]{\Xi_{a}}{\Delta}{\mu_{a}(l_a')}{\type{\tau}{\chi}}$ and $\ordinaryTyping[]{\Xi_{b}}{\Delta}{\mu_{b}(l_b')}{\type{\tau}{\chi}}$ and $pc \nsqsubseteq \chi$, we have $\mu_{a}(l_a') = \mu_{a}'(l_a')$ and $\mu_{b}(l_b') = \mu_{b}'(l_b')$,
\end{enumerate}
Since $exp$ is evaluated in an initial configuration satisfying
\Cref{stmt-ret-hyp}, by applying induction hypothesis of \Cref{ni-exp} on the
typing derivation of $exp$, we conclude that there exists some $\Xi_{a}'$,
$\Xi_{b}'$, $\mu_{a1}$, and $\mu_{b1}$ such that $\Xi_a \subseteq \Xi_{a}'$,
$\Xi_b \subseteq \Xi_{b}'$, $\dom{\mu_a} \subseteq \dom{\mu_{a1}}$, $\dom{\mu_b}
\subseteq \dom{\mu_{b1}}$ and the following holds:
\begin{equation} \label{stmt-ret-ih-1-1}
\semanticBelowPCState{l}{\Xi_{a}'}{\Xi_{b}'}{\Delta} {\mu_{a1}}{\epsilon_{a}}{\mu_{b1}}{\epsilon_{b}}{\Gamma},
\end{equation}
For any $l_a \in \dom{\mu_a}$ such that $\ordinaryTyping[]{\Xi_a}{\Delta}{\mu_a(l_a)}{\tau_{clos}}$, where $\tau_{clos} \in \{\tau_{fn}, \tau_{tbl}\}$, then $\mu_{a1}(l_a) = \mu_a(l_a)$. Similarly for any $l_b \in \dom{\mu_b}$ such that $\ordinaryTyping[]{\Xi_b}{\Delta}{\mu_b(l_b)}{\tau_{clos}}$, where $\tau_{clos} \in \{\tau_{fn}, \tau_{tbl}\}$, then $\mu_{b1}(l_b) = \mu_b(l_b)$.
This proves \Cref{stmt-ret-tp-2} and \Cref{stmt-ret-tp-4}.
Also, for any $l_a' \in \dom{\mu_{a}}$ and $l_b' \in \dom{\mu_{b}}$ such that $\ordinaryTyping[]{\Xi_{a}}{\Delta}{\mu_{a}(l_a')}{\type{\tau}{\chi}}$ and $\ordinaryTyping[]{\Xi_{b}}{\Delta}{\mu_{b}(l_b')}{\type{\tau}{\chi}}$ and $pc \nsqsubseteq \chi$, we have $\mu_{a}(l_a') = \mu_{a1}(l_a')$ and $\mu_{b}(l_b') = \mu_{b1}(l_b')$. This proves \Cref{stmt-ret-tp-5}.
The above applying of the induction hypothesis also shows
\begin{equation}\label{stmt-ret-ih-1-2}
\NIval{l}{\Xi_a'}{\Xi_b'}{\Delta}{val_{a}}{val_b}{\type{\tau_{ret}}{\chi_{ret}}}.
\end{equation}
Since the signal in this case is of the form $ret~val$, we need to show that
\begin{equation*} \label{stmt-ret-sig}
\Xi_a', \Xi_b', \Delta \models_{pc} NI(val_{a}, val_b): \type{\tau_{ret}}{\chi_{ret}}
\end{equation*}
This is already given by \Cref{stmt-ret-ih-1-2}.

\item \textbf{\textsc{T-Assign}}
The last rule in the typing derivation of an assignment statement will be:
\[
\inferrule*[right=T-Assign]
{
\Gamma,\Delta \vdash_{pc} exp_{1}:\type{\tau}{\chi_{1}}~goes~ inout \\
\Gamma,\Delta \vdash_{pc} exp_{2} :\type{\tau}{\chi_{2}}\\
\chi_{2} \sqsubseteq \chi_{1}\\
pc \sqsubseteq \chi_{1}
}
{
\Gamma, \Delta \vdash_{pc} exp_{1} := exp_{2} \dashv \Gamma
}
\]

Given the above typing judgement holds for, $ exp_{1} := exp_{2}$, we need to show that for any $\Xi_a$, $\Xi_b$, $\mu_{a}$, $\mu_{b}$, $\epsilon_{a}$, $\epsilon_{b}$, $\mu_a'$, $\mu_b'$, $\epsilon_a'$, $\epsilon_b'$ satisfying
\begin{equation} \label{stmt-assign-hyp}
\semanticBelowPCState{l}{\Xi_a}{\Xi_b}{\Delta}{\mu_{a}}{\epsilon_{a}}{\mu_{b}}{\epsilon_{b}}{\Gamma}
\end{equation}
if the statement, $ exp_{1} := exp_{2}$ is evaluated under two different initial configurations $\langle \mu_a, \epsilon_a \rangle$ and $\langle \mu_b, \epsilon_b \rangle$ as follows:

\begin{mathpar}
\inferrule*[]
{
 \langle \mathcal{C}, \Delta, \mu_{a}, \epsilon_{a}, exp_{1} \rangle \Downarrow_{lval}  \langle \mu_{a1}, lval_{a} \rangle \\
 \langle \mathcal{C}, \Delta, \mu_{a1}, \epsilon_{a}, exp_{2} \rangle \Downarrow  \langle \mu_{a2}, val_{a} \rangle \\
 \langle \mathcal{C}, \Delta, \mu_{a2}, \epsilon_{a}, lval_{a} := val_{a} \rangle  \Downarrow_{write} \mu_{a3}
}
{
 \langle \mathcal{C}, \Delta, \mu_{a}, \epsilon_{a}, \{exp_{1} := exp_{2}\} \rangle  \Downarrow  \langle \mu_{a3}, \epsilon_{a}, cont \rangle
}

\inferrule*[]
{
 \langle \mathcal{C}, \Delta, \mu_{b}, \epsilon_{b}, exp_{1} \rangle \Downarrow_{lval}  \langle \mu_{b1}, lval_{b} \rangle \\
 \langle \mathcal{C}, \Delta, \mu_{b1}, \epsilon_{b}, exp_{2} \rangle \Downarrow  \langle \mu_{b2}, val_{b} \rangle \\
 \langle \mathcal{C}, \Delta, \mu_{b2}, \epsilon_{b}, lval_{b} := val_{b} \rangle  \Downarrow_{write} \mu_{b3}
}
{
 \langle \mathcal{C}, \Delta, \mu_{a}, \epsilon_{a}, \{exp_{1} := exp_{2}\} \rangle  \Downarrow  \langle \mu_{a3}, \epsilon_{a}, cont \rangle
}
\end{mathpar}

Then there exists some $\Xi_a'$ and $\Xi_b'$, such that the following hold:
\begin{enumerate}
\item $\declTyping{pc}{\Gamma}{\Delta}{exp_{1} := exp_{2}}{\Gamma'}{}$. This is already the theorem's hypothesis.
\item \label{stmt-assign-tp-2} We have $\Xi_a \subseteq \Xi_a'$, $\Xi_b
  \subseteq \Xi_b'$, $\dom{\mu_a} \subseteq \dom{\mu_a'}$, $\dom{\mu_b}
  \subseteq \dom{\mu_b'}$, $\dom{\epsilon_a} \subseteq \dom{\epsilon_a'}$, and
  $\dom{\epsilon_b} \subseteq \dom{\epsilon_b'}$,
and $\semanticBelowPCState{l}{\Xi_a'}{\Xi_b'}{\Delta} {\mu_{a}'}{\epsilon_{a}'}{\mu_{b}'}{\epsilon_{b}'}{\Gamma'}$. In this case $\mu_a' = \mu_{a2}$, $\mu_b' = \mu_{b2}$, $\epsilon_a' = \epsilon_a$, $\epsilon_b' = \epsilon_b$.
\item \label{stmt-assign-tp-4} For any $l_a \in \dom{\mu_a}$ such that $\ordinaryTyping[]{\Xi_a}{\Delta}{\mu_a(l_a)}{\tau_{clos}}$, where $\tau_{clos} \in \{\tau_{fn}, \tau_{tbl}\}$, then $\mu_a'(l_a) = \mu_a(l_a)$. Similarly for any $l_b \in \dom{\mu_b}$ such that $\ordinaryTyping[]{\Xi_b}{\Delta}{\mu_b(l_b)}{\tau_{clos}}$, where $\tau_{clos} \in \{\tau_{fn}, \tau_{tbl}\}$, then $\mu_b'(l_b) = \mu_b(l_b)$.
\item \label{stmt-assign-tp-5}  For any $l_a' \in \dom{\mu_{a}}$ and $l_b' \in \dom{\mu_{b}}$ such that $\ordinaryTyping[]{\Xi_{a}}{\Delta}{\mu_{a}(l_a')}{\type{\tau}{\chi}}$ and $\ordinaryTyping[]{\Xi_{b}}{\Delta}{\mu_{b}(l_b')}{\type{\tau}{\chi}}$ and $pc \nsqsubseteq \chi$, we have $\mu_{a}(l_a') = \mu_{a}'(l_a')$ and $\mu_{b}(l_b') = \mu_{b}'(l_b')$,
\item \label{stmt-assign-tp-3} $sig$ in any two evaluations are of the same form. In this case $sig_1 = cont = sig_2$.
\end{enumerate}

By applying \Cref{lem-lval-eval} on $exp_1$, which is evaluated in an initial configuration satisfying \Cref{stmt-assign-hyp}, we conclude:

There exists some $\Xi_{a1}$, $\Xi_{b1}$, $\mu_{a1}$ and $\mu_{b1}$ satisfying $\Xi_a \subseteq \Xi_{a1}$, $\Xi_b \subseteq \Xi_{b1}$, $\dom{\mu_a} \subseteq \dom{\mu_{a1}}$ and $\dom{\mu_b} \subseteq \dom{\mu_{b1}}$ and the following:
\begin{equation} \label{stmt-assign-ih-1-1}
\semanticBelowPCState{l}{\Xi_{a1}}{\Xi_{b1}}{\Delta}{\mu_{a1}}{\epsilon_{a}}{\mu_{b1}}{\epsilon_{b}}{\Gamma}
\end{equation}
For any $l_a \in \dom{\mu_a}$ such that $\ordinaryTyping[]{\Xi_a}{\Delta}{\mu_a(l_a)}{\tau_{clos}}$, where $\tau_{clos} \in \{\tau_{fn}, \tau_{tbl}\}$, then $\mu_{a1}(l_a) = \mu_a(l_a)$. Similarly for any $l_b \in \dom{\mu_b}$ such that $\ordinaryTyping[]{\Xi_b}{\Delta}{\mu_b(l_b)}{\tau_{clos}}$, where $\tau_{clos} \in \{\tau_{fn}, \tau_{tbl}\}$, then $\mu_{b1}(l_b) = \mu_b(l_b)$.

Also, for any $l_a' \in \dom{\mu_{a}}$ and $l_b' \in \dom{\mu_{b}}$ such that $\ordinaryTyping[]{\Xi_{a}}{\Delta}{\mu_{a}(l_a')}{\type{\tau}{\chi}}$ and $\ordinaryTyping[]{\Xi_{b}}{\Delta}{\mu_{b}(l_b')}{\type{\tau}{\chi}}$ and $pc \nsqsubseteq \chi$, we have $\mu_{a}(l_a') = \mu_{a1}(l_a')$ and $\mu_{b}(l_b') = \mu_{b1}(l_b')$,

Also, if $\chi_1 \sqsubseteq l$, then $\lvalEquality{lval_a}{lval_b}$. Also, $\textsc{lval_base}(lval_a) \in \dom{\epsilon_a}$ and $\textsc{lval_base}(lval_b) \in \dom{\epsilon_b}$.

By applying induction \Cref{ni-exp} on $exp_2$, which is evaluated in an initial configuration satisfying \Cref{stmt-assign-ih-1-1}, we can conclude:

There exists some $\Xi_{a2}$, $\Xi_{b2}$, $\mu_{a2}$, and $\mu_{b2}$ such that $\Xi_{a1} \subseteq \Xi_{a2}$, $\Xi_{b1} \subseteq \Xi_{b2}$, $\dom{\mu_{a1}} \subseteq \dom{\mu_{a2}}$, $\dom{\mu_{b1}} \subseteq \dom{\mu_{b2}}$ and the following hold:
\begin{equation}
\semanticBelowPCState{l}{\Xi_{a2}}{\Xi_{b2}}{\Delta} {\mu_{a2}}{\epsilon_{a}}{\mu_{b2}}{\epsilon_{b}}{\Gamma},
\end{equation}
\begin{equation}\label{stmt-assign-in-2-2}
\NIval{l}{\Xi_{a2}}{\Xi_{b2}}{\Delta}{val_{a}}{val_b}{\type{\tau}{\chi_{2}}}
\end{equation}

Using \Cref{lem:lvalue-write-final} on l-value write in expressions $lval_a:= val_a$ and $lval_b  := val_b$, we get that
\begin{equation}
\semanticBelowPCState{l}{\Xi_{a2}}{\Xi_{b2}}{\Delta} {\mu_{a3}}{\epsilon_{a}}{\mu_{b3}}{\epsilon_{b}}{\Gamma},
\end{equation}
Since $\Xi_a \subseteq \Xi_{a1} \subseteq \Xi_{a2}$ and $\Xi_b \subseteq \Xi_{b1} \subseteq \Xi_{b2}$, showing the above equation is same as showing \Cref{stmt-assign-tp-2}.
Observe that the $lval_a$ and $lval_b$ have security level $pc \sqsubseteq \chi_1 $, and \Cref{lem:lvalue-write-final} states that only the location given by $\epsilon_a(\textsc{lval_base}(lval_a))$ is updated in the $\mu_{a3}$ and similarly $\mu_{b3}$. Therefore, we have proved \Cref{stmt-assign-tp-5}. Proof of \Cref{stmt-assign-tp-4} follows similarly from the results of applying the above induction hypothesis.
\item \textbf{\textsc{T-VarDecl}}
A well-formed declaration statement will satisfy the following typing rule:\\
			    \[ \inferrule*[]
			        {
			            \Gamma; \Delta \vdash_{pc} \text{var_decl} \dashv \Gamma'; \Delta_1
			        }
			        {
			            \Gamma; \Delta \vdash_{pc} \text{var_decl} \dashv \Gamma'
			        }
			    \]

			    \[  \inferrule*[]
			        {
			                  \langle \mathcal{C}, \Delta, \mu_{a}, \epsilon_{a}, \text{var_decl} \rangle \Downarrow   \langle \Delta_1, \mu_{a}', \epsilon_{a}', cont \rangle
			        }
			        {
			             \langle \mathcal{C}, \Delta, \mu_{a}, \epsilon_{a}, \text{var_decl} \rangle  \Downarrow  \langle \mu_{a}', \epsilon_{a}', cont \rangle
			        }
			    \]

			    \[  \inferrule*[]
			        {
			                  \langle \mathcal{C}, \Delta, \mu_{b}, \epsilon_{b}, \text{var_decl} \rangle \Downarrow   \langle \Delta_1, \mu_{b}', \epsilon_{b}', cont \rangle
			        }
			        {
			             \langle \mathcal{C}, \Delta, \mu_{b}, \epsilon_{b}, \text{var_decl} \rangle  \Downarrow  \langle \mu_{b}', \epsilon_{b}', cont \rangle
			        }
			    \]

			The proof of this case follows from applying the induction hypothesis for NI for declarations. In case of \text{var_decl} $\Delta_1 = \Delta$.
\item \textbf{\textsc{T-TblCall}}
\[
	\inferrule*[right=T-TblCall]
{
\Gamma, \Delta \vdash_{pc} exp: \langle table(pc_{tbl}), \bot \rangle \\
 pc \sqsubseteq pc_{tbl}
}
{
\Gamma, \Delta \vdash_{pc} exp() \dashv \Gamma
}
\]

Given the above typing judgement holds for, $exp()$ statement, we need to show that for any $\Xi_a$, $\Xi_b$, $\mu_{a}$, $\mu_{b}$, $\epsilon_{a}$, $\epsilon_{b}$, $\mu_a'$, $\mu_b'$, $\epsilon_a'$, $\epsilon_b'$ satisfying

\begin{equation} \label{tbl-apply-hyp}
\semanticBelowPCState{l}{\Xi_a}{\Xi_b}{\Delta}{\mu_{a}}{\epsilon_{a}}{\mu_{b}}{\epsilon_{b}}{\Gamma}
\end{equation}
If the statement, $exp()$ is evaluated under two different initial configurations $\langle \mu_a, \epsilon_a \rangle$ and $\langle \mu_b, \epsilon_b \rangle$ as follows,
\[
	\inferrule*[]
{
\langle \mathcal{C}, \Delta, \mu_a, \epsilon_a, exp \rangle \Downarrow \langle \mu_{a1}, table~l_{a}(\epsilon_{c_a}, \overline{exp_{k}: x_k}, \overline{act_{a_j}(\overline{exp_{a_j i}}, \overline{y_{c}: \type{\tau_c}{\chi_c}})})\rangle \\
\langle	\mathcal{C}, \Delta, \mu_{a1}, \epsilon_{c}, \overline{exp_k}\rangle \Downarrow \langle \mu_{a2}, \overline{val_{ka}}\rangle \\
\langle \mathcal{C}, l_{a}, \overline{val_{ka}: x_k}, act_{a_j}(\overline{y_{c_{ji}}: \type{\tau_{c_{ji}}}{\chi_{c_{ji}}}})\rangle \Downarrow_{match} \langle act_{a_j}(\overline{exp_{c_{ji}}})\rangle		           \\
\langle \mathcal{C}, \Delta, \mu_{a2}, \epsilon_{c_a}, act_{aj}(\overline{exp_{a_{ji}}}, \overline{exp_{c_{ji}}})\rangle \Downarrow \langle \mu_{a3}, \epsilon_{c_a}', cont\rangle
}
{
\langle \mathcal{C}, \Delta, \mu_{a}, \epsilon_{a}, exp()\rangle \Downarrow \langle \mu_{a3}, \epsilon_{a}, cont \rangle
}
\]

\[
	\inferrule*[]
{
\langle \mathcal{C}, \Delta, \mu_b, \epsilon_b, exp \rangle \Downarrow \langle \mu_{b1}, table~l_{b}(\epsilon_{c_b}, \overline{exp_{k}: x_k}, \overline{act_{a_j}(\overline{exp_{a_{ji}}}, \overline{y_{c_j}: \type{\tau_{c_j}}{\chi_{c_j}}})})\rangle \\
\langle \mathcal{C}, \Delta, \mu_{b1}, \epsilon_{c_b}, \overline{exp_k}\rangle \Downarrow \langle \mu_{b2}, \overline{val_{kb}}\rangle \\
\langle \mathcal{C}, l_{b}, \overline{val_{kb}: x_k}, act_{a_j}(\overline{y_{c_{ji}}: \type{\tau_{c_{ji}}}{\chi_{c_{ji}}}})\rangle \Downarrow_{match} \langle act_{a_j}(\overline{exp_{c_{ji}}})\rangle		           \\
\langle \mathcal{C}, \Delta, \mu_{b2}, \epsilon_{c_b}, act_{aj'}(\overline{exp_{a_{ji'}}}, \overline{exp_{c_{ji'}}})\rangle \Downarrow \langle \mu_{b3}, \epsilon_{c_b}', cont\rangle
}
{
\langle \mathcal{C}, \Delta, \mu_{b}, \epsilon_{b}, exp()\rangle \Downarrow \langle \mu_{b3}, \epsilon_{b}, cont \rangle
}
\]
Then there exists some $\Xi_a'$ and $\Xi_b'$, such that the following hold:
\begin{enumerate}
\item $\stmtTyping{pc}{\Gamma}{\Delta}{exp()}{\Gamma'}{}$, where $\Gamma' = \Gamma$. This is already the theorem's hypothesis.
\item \label{tbl-apply-tp-2} We have $\Xi_a \subseteq \Xi_a'$, $\Xi_b \subseteq \Xi_b'$, $\dom{\mu_a} \subseteq \dom{\mu_a'}$, $\dom{\mu_b} \subseteq \dom{\mu_b'}$, $\dom{\epsilon_a} \subseteq \dom{\epsilon_a'}$, and $\dom{\epsilon_b} \subseteq \dom{\epsilon_b'}$
and $\semanticBelowPCState{l}{\Xi_a'}{\Xi_b'}{\Delta} {\mu_{a}'}{\epsilon_{a}'}{\mu_{b}'}{\epsilon_{b}'}{\Gamma'}$. In this case $\epsilon_a' = \epsilon_a$, $\epsilon_b' = \epsilon_b$, $\mu_a' = \mu_{a3}$ and $\mu_b' = \mu_{b3}$.
\item For any $l_a \in \dom{\mu_a}$ and $l_b \in \dom{\mu_b}$ such that $\ordinaryTyping[]{\Xi_a}{\Delta}{\mu_a(l_a)}{\type{\tau}{\chi}}$ and $\ordinaryTyping[]{\Xi_b}{\Delta}{\mu_b(l_b)}{\type{\tau}{\chi}}$ and $pc \nsqsubseteq \chi$, we have $\mu_{a}'(l_a) = \mu_{a}(l_a)$ and $\mu_{b}'(l_b) = \mu_{b}(l_b)$,
\item \label{tbl-apply-tp-4} For any $l_a \in \dom{\mu_a}$ such that $\ordinaryTyping[]{\Xi_a}{\Delta}{\mu_a(l_a)}{\tau_{clos}}$, where $\tau_{clos} \in \{\tau_{fn}, \tau_{tbl}\}$, then $\mu_a'(l_a) = \mu_a(l_a)$. Similarly for any $l_b \in \dom{\mu_b}$ such that $\ordinaryTyping[]{\Xi_b}{\Delta}{\mu_b(l_b)}{\tau_{clos}}$, where $\tau_{clos} \in \{\tau_{fn}, \tau_{tbl}\}$, then $\mu_b'(l_b) = \mu_b(l_b)$.
\item \label{tbl-apply-tp-3} $sig$ in any two evaluations are of the same form. In this case $sig_1 = cont = sig_2$.
\end{enumerate}

To show that the final state satisfies \Cref{tbl-apply-tp-2} we start by showing that final state after evaluating all the sub-step in the table evaluation satisfies \Cref{tbl-apply-tp-2}.
\paragraph*{Evaluating table expression}
By applying induction hypothesis of \Cref{ni-exp} on the well-typed $exp$, we get \[\NIexp{pc}{\Gamma}{\Delta}{exp}{\type{table(pc_{tbl})}{\bot}}.\] Since $exp$ is evaluated in an initial configuration satisfying \Cref{tbl-apply-hyp}, we can expand the NI for expression definition to conclude that
there exists some $\Xi_{a1}$, $\Xi_{b1}$, satisfying $\Xi_a \subseteq \Xi_{a1}$, $\Xi_b \subseteq \Xi_{b1}$, $\dom{\mu_a} \subseteq \dom{\mu_{a1}}$ and $\dom{\mu_b} \subseteq \dom{\mu_{b1}}$ and the following:
\begin{equation}\label{tbl-apply-ih-1-1}
\semanticBelowPCState{l}{\Xi_{a1}}{\Xi_{b1}}{\Delta}{\mu_{a1}}{\epsilon_{a1}}{\mu_{b1}}{\epsilon_{b1}}{\Gamma}~,
\end{equation}
\begin{equation}\label{tbl-apply-ih-1-2}
\NIval{l}{\Xi_{a1}}{\Xi_{b1}}{\Delta}{val_{a}}{val_{b}}{\type{table(pc_{tbl})}{\bot}}
\end{equation}
where
\[
  val_a = table~l_a~(\epsilon_{a}, \overline{exp_k: x_k}, \overline{act_{a_j}(\overline{exp_{a_{ji}}}, \overline{y_{c_{ji}}: \type{\tau_{c_{ji}}}{\chi_{c_{ji}}}})})
\]
and
\[
  val_b = table~l_b~(\epsilon_{b}, \overline{exp_k: x_k}, \overline{act_{b_j}(\overline{exp_{a_{ji}}}, \overline{y_{c_{ji}}: \type{\tau_{c_{ji}}}{\chi_{c_{ji}}}})}) .
\]

\Cref{tbl-apply-ih-1-2} expands to give $\NItbl{pc}{\Xi_{a1}}{\Xi_{b1}}{\Delta}{val_{a}}{val_b}{\type{table(pc_{tbl})}{\bot}}$, which implies that there exists a $\Gamma_{tbl}$ and $pc_a$ such that
\begin{enumerate}
\item $\Xi_{a1} \models \epsilon_{c_a}: \Gamma_{tbl}$ and $\Xi_{b1} \models \epsilon_{c_b}: \Gamma_{tbl}$
\item Well-typed. $\Gamma_{tbl}; \Delta \vdash_{pc} table~l_a~(\epsilon_{a}, \overline{exp_k: x_k}, \overline{act_{a_j}(\overline{exp_{a_{ji}}}, \overline{y_{c_{ji}}: \type{\tau_{c_{ji}}}{\chi_{c_{ji}}}})}): \type{table({pc_a},{pc_{tbl}})}{\bot}$.
Similarly, we have $\Gamma_{tbl}; \Delta \vdash_{pc} table~l_b~(\epsilon_{b}, \overline{exp_k: x_k}, \overline{act_{b_j}(\overline{exp_{a_{ji}}}, \overline{y_{c_{ji}}: \type{\tau_{c_{ji}}}{\chi_{c_{ji}}}})}): \type{table({pc_a},{pc_{tbl}})}{\bot}$
\item $\ordinaryTyping[pc_{tbl}]{\Gamma_{tbl}}{\Delta}{x_k}{\type{match\_kind}{\bot}}$ for each $x_k \in \overline{x_k}$
\item $\ordinaryTyping[pc_{tbl}]{\Gamma_{tbl}}{\Delta}{exp_k}{\type{\tau_k}{\chi_k}}$ for each $exp_k \in \overline{exp_k}$
\item $\ordinaryTyping[pc_{tbl}]{\Gamma_{tbl}}{\Delta}{act_{aj}}{\type{\overline{d\type{\tau_{a_{ji}}}{\chi_{a_{ji}}}}~;\overline{\type{\tau_{c_{ji}}}{\chi_{c_{ji}}}} \xrightarrow{pc_{fn_j}} \type{unit}{\bot}}{\bot}}$ for each $act_{a_j} \in \overline{act_{a_j}}$
\item $\ordinaryTyping[pc_{tbl}]{\Gamma_{tbl}}{\Delta}{exp_{a_{ji}}}{\type{\tau_{a_{ji}}}{\chi_{a_{ji}}}~goes~d}$ for each $exp_{a_{ji}} \in \overline{exp_{a_{ji}}}$
\item $val_a =_{tbl} val_b$.
\item $\chi_k \sqsubseteq {pc_{fn_j}}$, for all $j,k$
\item $pc_a \sqsubseteq pc_{fn_j}$, for all $j$
\item ${\chi_k} \sqsubseteq {pc_{tbl}}~\text{for all}~k$.
\item $pc_{tbl} \sqsubseteq pc_a$
\end{enumerate}
From \Cref{tbl-apply-hyp}, we already know that for all $x$ in $\dom{\epsilon}$ and some $\Gamma_{tbl}\subseteq \Gamma$,  if $\Gamma, \Delta \vdash_{pc} x: \tau_{tbl}$,  $\mu(\epsilon(x)) = table~l~(\epsilon_{c},...)$, and $\Xi \models \epsilon_c: \Gamma_{tbl}$, then $\dom{\epsilon_{c}} \subseteq \dom{\epsilon}$ and $\semanticStoreEnv{\Xi}{\Delta}{\mu}{\epsilon_c}{\Gamma_{tbl}}$.

This implies that $\dom{\epsilon_{c_a}} \subseteq \dom{\epsilon_{a}}$ and $\dom{\epsilon_{c_b}} \subseteq \dom{\epsilon_{b}}$. Also, $\semanticStoreEnv{\Xi_{a1}}{\Delta}{\mu_{a1}}{\epsilon_{c_a}}{\Gamma_{tbl}}$ and $\semanticStoreEnv{\Xi_{b1}}{\Delta}{\mu_{b1}}{\epsilon_{c_b}}{\Gamma_{tbl}}$.
Since closure values do not change across $\mu_a$, $\mu_{a1}$, and $\mu_b$, $\mu_{b1}$, the variable that would have evaluated to the table closure value under $\mu_a$ will have the same value under $\mu_{a1}$. By using the property of closures implied by \Cref{tbl-apply-ih-1-1}, we conclude $\semanticBelowPCState{l}{\Xi_{a}}{\Xi_{b}}{\Delta}{\mu_{a1}}{\epsilon_{c_a}}{\mu_{b1}}{\epsilon_{c_b}}{\Gamma_{tbl}}$.
\paragraph*{Evaluating key expression}
By repeatedly applying the induction hypothesis of \Cref{ni-exp} on  $\ordinaryTyping[pc_{tbl}]{\Gamma_{tbl}}{\Delta}{exp_k}{\type{\tau_k}{\chi_k}}$ for each $exp_k \in \overline{exp_k}$, implies that there exists some $\Xi_{a2}$, $\Xi_{b2}$, $\mu_{a2}$ and $\mu_{b2}$ satisfying $\Xi_{a1} \subseteq \Xi_{a2}$, $\Xi_{b1} \subseteq \Xi_{b2}$, $\mu_{a1} \subseteq \mu_{a2}$ and $\mu_{b1} \subseteq \mu_{b2}$ and the following:
\begin{equation}\label{tbl-apply-ih-2-1}
\semanticBelowPCState{l}{\Xi_{a2}}{\Xi_{b2}}{\Delta}{\mu_{a2}}{\epsilon_{c_a}}{\mu_{b2}}{\epsilon_{c_b}}{\Gamma_{tbl}}~,
\end{equation}
\begin{equation}\label{tbl-apply-ih-2-2}
\NIval{l}{\Xi_{a2}}{\Xi_{b2}}{\Delta}{\overline{val_{ka}}}{\overline{val_{kb}}}{\overline{\type{\tau_k}{\chi_k}}}
\end{equation}
This can be read as ``if $\chi_k \sqsubseteq l$ then $val_{ka} = val_{kb}$''.

Also, none of the variables at security label $pc \nsqsubseteq \chi$ are updated between $\mu_{a1}$, $\mu_{a2}$, and $\mu_{b1}$, $\mu_{b2}$.
Similar to the argument used in function call case to prove \Cref{eq-4}, we can also conclude
\begin{equation}\label{tbl-apply-ih-2-3}
\semanticBelowPCState{l}{\Xi_{a2}}{\Xi_{b2}}{\Delta}{\mu_{a2}}{\epsilon_{a}}{\mu_{b2}}{\epsilon_{b}}{\Gamma}
\end{equation}
\paragraph*{Table match}
The $\Downarrow_{match}$ depends on some assumption about the control plane, $\mathcal{C}$ that it will ensure that only well-typed arguments, $\ordinaryTyping[pc_{tbl}]{\Gamma_{tbl}}{\Delta}{exp_{c_{ji}}}{\type{\tau_{c_{ji}}}{\chi_{c_{ji}}}}$ for each $exp_{c_{ji}} \in \overline{exp_{c_{ji}}}$ are passed to partially-applied actions (this is same as Petr4's assumption around the control plane). In addition, considering that the table entries are fixed, matching on a equal $\overline{val_{ka} = val_{kb}}$ will return the same action and arguments, \ie the matched action will be the same $act_{aj} = act_{aj'}$ and $exp_{c_{ji}} = exp_{c_{ji}'}$.
$exp_k$ at security-level $\chi_k \nsqsubseteq l$ might not evaluate to equal values. Therefore, we have two cases for the \textit{match} evaluation, either same actions, $act_{aj} = act_{aj'}$,  with same parameter expressions, $\overline{exp_{c_{ji}} = exp_{c_{ji}'}}$ are returned or $act_{aj} \ne act_{aj'}$ and their parameter expression can also differ.
\paragraph*{Invoking the matched action}
In case $act_{aj} = act_{aj'}$, $\overline{exp_{c_{ji}} = exp_{c_{ji}'}}$ and $\overline{exp_{a_{ji}} = exp_{a_{ji}'}}$, then the last premise of the evaluation rule is equivalent to evaluating a function expression with same parameter expression. By using induction hypothesis of \Cref{ni-stmt} for a well-typed function call statement, we arrive at a final state involving $\Xi_{a3}$, $\Xi_{b3}$, $\mu_{a3}$, $\mu_{b3}$, $\epsilon_{c_a}'$, $\epsilon_{c_b}'$ satisfying $\Xi_{a2} \subseteq \Xi_{a3}$, $\Xi_{b3} \subseteq \Xi_{b3}$, $\dom{\mu_{a2}} \subseteq \dom{\mu_{a3}}$ and $\dom{\mu_{b2}} \subseteq \dom{\mu_{b3}}$, $\dom{\epsilon_{c_a}} \subseteq \dom{\epsilon_{c_a}'}$, and $\dom{\epsilon_{c_b}} \subseteq \dom{\epsilon_{c_b}'}$ and the following:
\begin{equation}\label{tbl-apply-ih-3-1}
\semanticBelowPCState{l}{\Xi_{a3}}{\Xi_{b3}}{\Delta}{\mu_{a3}}{\epsilon_{c_a}'}{\mu_{b3}}{\epsilon_{c_b}'}{\Gamma_{tbl}}~,
\end{equation}
In case of a function call statement, $\epsilon_{c_a} = \epsilon_{c_a}'$, and $\epsilon_{c_b} = \epsilon_{c_b}'$.

Similar to the argument used in function call case to prove \Cref{eq-4}, since we have \Cref{tbl-apply-ih-2-3} we can also conclude
\begin{equation}
\semanticBelowPCState{l}{\Xi_{a3}}{\Xi_{b3}}{\Delta}{\mu_{a3}}{\epsilon_{a}}{\mu_{b3}}{\epsilon_{b}}{\Gamma}
\end{equation}
This proves \Cref{tbl-apply-tp-2}.

In case $act_{aj} \neq act_{aj'}$, $\overline{exp_{c_{ji}} \neq exp_{c_{ji}'}}$ and $\overline{exp_{a_{ji}} \neq exp_{a_{ji}'}}$, then there exists some $val_{ka} \neq val_{kb}$. This implies $\chi_{k} \nsqsubseteq l$. Since $\chi_k \sqsubseteq pc_{tbl}$, we can conclude $pc_{tbl} \nsqsubseteq l$. Although, the function call statements are different in the two cases, we know that both the function call statements are well-typed at $pc_{tbl}$. This implies that when the function call statement in $\mu_{a2}$ and $\epsilon_{c_a}$ is evaluated, then variables at $pc \nsqsubseteq \chi$ will have unchanged value in $\mu_{a3}$. Similarly, the  other function call statement despite being different guarantees that the values of variables at $pc \nsqsubseteq \chi$ in $\mu_{b2}$ and will have unchanged value in $\mu_{b3}$.
We already know that
\[\semanticBelowPCState{l}{\Xi_{a2}}{\Xi_{b2}}{\Delta}{\mu_{a2}}{\epsilon_{a}}{\mu_{b2}}{\epsilon_{b}}{\Gamma}\]
By using the fact that none of the variables at $\chi \sqsubseteq l$ are updated between $\mu_{a2}$ and $\mu_{a3}$, and similarly $\mu_{b2}$ and $\mu_{b3}$, we can conclude that
\[\semanticBelowPCState{l}{\Xi_{a3}}{\Xi_{b3}}{\Delta}{\mu_{a3}}{\epsilon_{a}}{\mu_{b3}}{\epsilon_{b}}{\Gamma}\]
A consistent state requires that any variables at $\chi \sqsubseteq l$ are indistinguishable; this holds in
\[\semanticBelowPCState{l}{\Xi_{a2}}{\Xi_{b2}}{\Delta}{\mu_{a2}}{\epsilon_{a}}{\mu_{b2}}{\epsilon_{b}}{\Gamma}\]
and with no  changes to the variables at $\chi \sqsubseteq l$, it will continue to hold in the final memory store.
\end{enumerate}

\paragraph*{Proof of \Cref{ni-decl}}
By induction on typing derivation of declaration statements.
\begin{enumerate}
\item \textbf{\textsc{T-VarDecl}}
\[\inferrule*[]
{
  \Delta \vdash \tau \rightsquigarrow \tau'
}
{
\Gamma; \Delta \vdash_{pc} \type{\tau}{\chi}~ x \dashv \Gamma [x: \langle \tau', \chi \rangle]; \Delta
}
\]

Given the above typing judgement holds for, $ \type{\tau}{\chi}~ x$, we need to show that $\Xi_a$, $\Xi_b$, $\mu_{a}$, $\mu_{b}$, $\epsilon_{a}$, $\epsilon_{b}$, $\mu_{a}'$, $\mu_{b}'$, $\epsilon_{a}'$, $\epsilon_{b}'$ satisfying
\begin{equation} \label{decl-basic-hyp}
\semanticBelowPCState{l}{\Xi_a}{\Xi_b}{\Delta}{\mu_{a}}{\epsilon_{a}}{\mu_{b}}{\epsilon_{b}}{\Gamma},
\end{equation}
if the declaration, $\type{\tau}{pc }~ x$ is evaluated under two different initial configurations $\langle \mu_a, \epsilon_a \rangle$ and $\langle \mu_b, \epsilon_b \rangle$ as follows,
\begin{mathpar}
\inferrule*[]
{l_{a}~ \text{fresh} \\
\langle \Delta, \mu_a, \epsilon_a, \tau \rangle \Downarrow_{\tau} \tau'
}
{
 \langle \mathcal{C}, \Delta_{}, \mu_{a}, \epsilon_{a}, \type{\tau}{\chi}~ x \rangle  \Downarrow    \langle \Delta_{}, \mu_{a}[l_a:= init_{\Delta} \tau'], \epsilon_{a}[x \mapsto l_a], cont \rangle
}

\inferrule*[]
{
l_{b}~ \text{fresh}  \\
\langle \Delta, \mu_b, \epsilon_b, \tau \rangle \Downarrow_{\tau} \tau'
}
{
 \langle \mathcal{C}, \Delta_{}, \mu_{b}, \epsilon_{b}, \type{\tau}{\chi}~ x \rangle \Downarrow   \langle \Delta_{}, \mu_{b}[l_b:= init_{\Delta} \tau'], \epsilon_{b}[x \mapsto l_b], cont \rangle
}
\end{mathpar}
then there exists $\Xi_a'$, $\Xi_b'$ such that
\begin{enumerate}
\item $\declTyping{pc}{\Gamma}{\Delta}{\type{\tau}{\chi}~ x}{\Gamma'}{\Delta}$. This is already the hypothesis of the theorem.
\item \label{decl-basic-tp-1} $\semanticBelowPCState{l}{\Xi_a'}{\Xi_b'}{\Delta} {\mu_{a}'}{\epsilon_{a}'}{\mu_{b}'}{\epsilon_{b}'}{\Gamma'}$, where $\mu_a' = \mu_{a}[l_a:= init_{\Delta} \tau']$, $\mu_b' = \mu_{b}[l_b:= init_{\Delta} \tau']$, $\epsilon_a' = \epsilon_{a}[x \mapsto l_a]$, $\epsilon_b' = \epsilon_{b}[x \mapsto l_b]$. Also, $\semanticBelowPCState{l}{\Xi_a'}{\Xi_b'}{\Delta} {\mu_{a}'}{\epsilon_{a}}{\mu_{b}'}{\epsilon_{b}}{\Gamma}$,
\item \label{decl-basic-tp-2} For any $l_a \in \dom{\mu_a}$ such that $\ordinaryTyping[]{\Xi_a}{\Delta}{\mu_a(l_a)}{\tau_{clos}}$, where $\tau_{clos} \in \{\tau_{fn}, \tau_{tbl}\}$, then $\mu_a'(l_a) = \mu_a(l_a)$. Similarly for any $l_b \in \dom{\mu_b}$ such that $\ordinaryTyping[]{\Xi_b}{\Delta}{\mu_b(l_b)}{\tau_{clos}}$, where $\tau_{clos} \in \{\tau_{fn}, \tau_{tbl}\}$, then $\mu_b'(l_b) = \mu_b(l_b)$,
\item \label{decl-basic-tp-3} $\Xi_a \subseteq \Xi_a'$, $\Xi_b \subseteq \Xi_b'$, $\dom{\mu_a} \subseteq \dom{\mu_a'}$, $\dom{\mu_b} \subseteq \dom{\mu_b'}$, $\dom{\epsilon_a} \subseteq \dom{\epsilon_a'}$, and $\dom{\epsilon_b} \subseteq \dom{\epsilon_b'}$.
\item \label{decl-basic-tp-4}For any $l_a \in \dom{\mu_a}$ and $l_b \in \dom{\mu_b}$ such that $\ordinaryTyping[]{\Xi_a}{\Delta}{\mu_a(l_a)}{\type{\tau}{\chi}}$ and $\ordinaryTyping[]{\Xi_b}{\Delta}{\mu_b(l_b)}{\type{\tau}{\chi}}$ and $pc \nsqsubseteq \chi$, we have $\mu_{a}'(l_a) = \mu_{a}(l_a)$ and $\mu_{b}'(l_b) = \mu_{b}(l_b)$.
\end{enumerate}
With $\Xi_a' = \Xi_a \cup \{l_a \mapsto \type{\tau'}{\chi}\}$,
$\Xi_b' = \Xi_b \cup \{l_b \mapsto \type{\tau'}{\chi}\}$
the equation in \Cref{decl-basic-tp-3} is evident.
To show \Cref{decl-basic-tp-1}, we need to show the following:
\begin{equation} \label{decl-basic-tp-1-1}
\semanticStoreEnv{\Xi_a'}{\Delta}{\mu_{a}'}{\epsilon_{a}'}{\Gamma'}
\end{equation}
\begin{equation} \label{decl-basic-tp-1-2}
\semanticStoreEnv{\Xi_b'}{\Delta}{\mu_{b}'}{\epsilon_{b}'}{\Gamma'}
\end{equation}
\begin{equation} \label{decl-basic-tp-1-3}
\dom{\epsilon_a'} = \dom{\epsilon_b'}
\end{equation}
\begin{equation} \label{decl-basic-tp-1-4}
\text{For any ~} x \in \dom{\epsilon_a'} = \dom{\epsilon_b'} \text{~we have~} \NIval{l}{\Xi_a'}{\Xi_b'}{\Delta}{\mu_a'(\epsilon_{a}'(x))}{\mu_b'(\epsilon_{b}'(x))}{\Gamma'(x)}
\end{equation}
For all $x$ in $\dom{\epsilon_a'} = \dom{\epsilon_b'}$ and some $\Gamma_{fn} \subseteq \Gamma$ and any $pc$,  if $\Gamma', \Delta \vdash_{pc} x: \tau_{fn}$, $\mu_a'(\epsilon_a'(x)) = clos(\epsilon_{c_a},...)$, $\mu_b'(\epsilon_b'(x)) = clos(\epsilon_{c_b},...)$, $\Xi_a' \models \epsilon_{c_a}: \Gamma_{fn}$, and $\Xi_b' \models \epsilon_{c_b}: \Gamma_{fn}$ , then $\semanticBelowPCState{l}{\Xi_a'}{\Xi_b'}{\Delta}{\mu_{a}'}{\epsilon_{c_a}}{\mu_{b}'}{\epsilon_{c_b}}{\Gamma_{fn}}$.

For all $x$ in $\dom{\epsilon_a'} = \dom{\epsilon_b'}$ and some $\Gamma_{tbl} \subseteq \Gamma'$ and any $pc$,  if $\Gamma', \Delta \vdash_{pc} x: \tau_{tbl}$, $\mu_a'(\epsilon_a'(x)) = table~l_a~(\epsilon_{c_a},...)$, $\mu_b'(\epsilon_b'(x)) = table~l_b~(\epsilon_{c_b},...)$, $\Xi_a' \models \epsilon_{c_a}: \Gamma_{tbl}$, and $\Xi_b' \models \epsilon_{c_b}: \Gamma_{tbl}$ , then $\semanticBelowPCState{l}{\Xi_a'}{\Xi_b'}{\Delta}{\mu_{a}'}{\epsilon_{c_a}}{\mu_{b}'}{\epsilon_{c_b}}{\Gamma_{tbl}}$.

\Cref{decl-basic-tp-1-3} is evident from the definitions of $\epsilon_a'$, and $\epsilon_b'$ and the given fact that $\dom{\epsilon_a} = \dom{\epsilon_b}$.
First, we begin by showing \Cref{decl-basic-tp-1-1}. This requires us to in turn prove the following:
\begin{enumerate}
\item $\semanticStore{\Xi_a'}{\Delta}{\mu_a'}$. This is shown in \Cref{lem:store-typing-extended}.
\item $\Xi_a' \vdash \epsilon_a': \Gamma'$.
We are given $\Xi_a \vdash \epsilon_a: \Gamma$.
Using \Cref{lem-env-subtyping}, we can say that $\Xi_a' \vdash \epsilon_a: \Gamma$.
Since $\Gamma' = \Gamma[x \mapsto \type{\tau'}{\chi}]$, $\epsilon_a' = \epsilon_a[x \mapsto l_a]$, $\Xi_a' = \Xi_a \cup \{l_a \mapsto \type{\tau'}{\chi}\}$, by using the typing judgements for $\Xi \vdash \epsilon: \Gamma$, we can show that
\[\inferrule*[]
{	\Xi_a' \vdash \epsilon_a: \Gamma \qquad
\Xi_a'(l_a) = \type{\tau'}{\chi}
}
{
\Xi_a' \vdash \epsilon_a[x \mapsto l_a]: \Gamma[x \mapsto \type{\tau'}{\chi}]
}
\]
This gives us the proof for $\Xi_a' \vdash \epsilon_a': \Gamma'$.

\item For all $x$ in $\dom{\epsilon_a'}$ and some $\Gamma_{fn} \subseteq \Gamma'$ and any $pc$,  if $\Gamma', \Delta \vdash_{pc} x: \tau_{fn}$, $\mu_a'(\epsilon_a'(x)) = clos(\epsilon_{c},...)$, and $\Xi_a' \models \epsilon_c: \Gamma_{fn}$, then $\dom{\epsilon_{c}} \subseteq \dom{\epsilon_a'}$ and $\semanticStoreEnv{\Xi_a'}{\Delta}{\mu_a'}{\epsilon_{c}}{\Gamma_{fn}}$. Here $\tau_{fn}$ is the function type. We elide the full view of the closures in this definition. Observe that the function closure variables in $\dom{\epsilon_a'}$ are variables that were also in $\dom{\epsilon_a}$ and are not shadowed by the new declaration $x$. We already know for such closure variables that $\dom{\epsilon_{c}} \subseteq \dom{\epsilon_a}$. Therefore, we can conclude that $\dom{\epsilon_{c}} \subseteq \dom{\epsilon_a'}$. Also, we know that $\semanticStoreEnv{\Xi_a}{\Delta}{\mu_a}{\epsilon_{c}}{\Gamma_{fn}}$. Using the proof in \Cref{lem:only-mem-superset} we can conclude that $\semanticStoreEnv{\Xi_a'}{\Delta}{\mu_a'}{\epsilon_{c}}{\Gamma_{fn}}$.
\item For all $x$ in $\dom{\epsilon_a'}$ and some $\Gamma_{tbl} \subseteq \Gamma'$ and any $pc$,  if $\Gamma', \Delta \vdash_{pc} x: \tau_{tbl}$,  $\mu_a'(\epsilon_a'(x)) = table~l~(\epsilon_{c},...)$, and $\Xi_a' \models \epsilon_c: \Gamma_{tbl}$, then $\dom{\epsilon_{c}} \subseteq \dom{\epsilon_a'}$ and $\semanticStoreEnv{\Xi_a'}{\Delta}{\mu_a'}{\epsilon_c}{\Gamma_{tbl}}$. Here $\tau_{tbl}$ is the table type. Proof for this is similar to the function closures case.
\end{enumerate}
Proof of \Cref{decl-basic-tp-1-2} follows similarly.

To show \Cref{decl-basic-tp-1-4}, we again use the fact that any $y \in \dom{\epsilon_a'}$ will be either in $\dom{\epsilon_a}$ or be the new variable.
The new variable already satisfies $\NIval{l}{\Xi_a'}{\Xi_b'}{\Delta}{\mu_a'(\epsilon_{a}'(x))}{\mu_b'(\epsilon_{b}'(x))}{\Gamma'(x)}$, since the value is $init_{\Delta} \tau'$ which is not a function closure.
For the other case where $y \in \dom{\epsilon_a}$, we already know that $\text{for any ~} y \in \dom{\epsilon_a} = \dom{\epsilon_b} \text{~we have~} \NIval{l}{\Xi_a}{\Xi_b}{\Delta}{\mu_a(\epsilon_{a}(x))}{\mu_b(\epsilon_{b}(x))}{\Gamma(x)}$. This concludes $\NIval{l}{\Xi_a'}{\Xi_b'}{\Delta}{\mu_a'(\epsilon_{a}'(x))}{\mu_b'(\epsilon_{b};(x))}{\Gamma'(x)}$ because for variables in $\epsilon_a$ not equal to this new variable the memory store remains unchanged.

The last requirement is to prove that for all $x$ in $\dom{\epsilon_a'} = \dom{\epsilon_b'}$ and some $\Gamma_{fn} \subseteq \Gamma$ and any $pc$,  if $\Gamma', \Delta \vdash_{pc} x: \tau_{fn}$, $\mu_a'(\epsilon_a'(x)) = clos(\epsilon_{c_a},...)$, $\mu_b'(\epsilon_b'(x)) = clos(\epsilon_{c_b},...)$, $\Xi_a' \models \epsilon_{c_a}: \Gamma_{fn}$, and $\Xi_b' \models \epsilon_{c_b}: \Gamma_{fn}$ , then $\semanticBelowPCState{l}{\Xi_a'}{\Xi_b'}{\Delta}{\mu_{a}'}{\epsilon_{c_a}}{\mu_{b}'}{\epsilon_{c_b}}{\Gamma_{fn}}$.  This holds true because we already know that these closures satisfied $\semanticBelowPCState{l}{\Xi_a}{\Xi_b}{\Delta}{\mu_{a}}{\epsilon_{c_a}}{\mu_{b}}{\epsilon_{c_b}}{\Gamma_{fn}}$ and because the memory stores haven't changed for any of the locations in $\dom{\mu_a}$ or $\dom{\mu_b}$, we can conclude that $\semanticBelowPCState{l}{\Xi_a'}{\Xi_b'}{\Delta}{\mu_{a}'}{\epsilon_{c_a}}{\mu_{b}'}{\epsilon_{c_b}}{\Gamma_{fn}}$ is also true. Similarly, we can show this for table closures as well.

This proves that $\semanticBelowPCState{l}{\Xi_a'}{\Xi_b'}{\Delta} {\mu_{a}'}{\epsilon_{a}'}{\mu_{b}'}{\epsilon_{b}'}{\Gamma'}$, where $\mu_a' = \mu_{a}[l_a:= init_{\Delta} \tau']$, $\mu_b' = \mu_{b}[l_b:= init_{\Delta} \tau']$, $\epsilon_a' = \epsilon_{a}[x \mapsto l_a]$, $\epsilon_b' = \epsilon_{b}[x \mapsto l_b]$.
We also have $\semanticBelowPCState{l}{\Xi_a'}{\Xi_b'}{\Delta} {\mu_{a}'}{\epsilon_{a}}{\mu_{b}'}{\epsilon_{b}}{\Gamma}$, which follows from the definition (using induction in a similar manner as shown in \Cref{lem:only-mem-superset}).
\Cref{decl-basic-tp-2} and \Cref{decl-basic-tp-4} is satisfied because only the value of the new variable $x$ is updated across $\mu_a$ and $\mu_a'$ and similarly $\mu_b$ and $\mu_b'$.

\item \textbf{\textsc{T-VarInit}}
\[\inferrule*[]
{
\ordinaryTyping[pc]{\Gamma}{\Delta}{exp}{\type{\tau}{\chi}} \\
\Delta \vdash \tau' \rightsquigarrow \tau
}
{
\Gamma; \Delta \vdash_{pc} \type{\tau'}{\chi}~ x:= exp \dashv \Gamma [x: \langle \tau, \chi\rangle]; \Delta
}
\]

Given the above typing judgement holds for $\type{\tau}{\chi}~ x:= exp$ declaration, we need to show that for any
$\Xi_a$, $\Xi_b$, $\mu_{a}$, $\mu_{b}$, $\epsilon_{a}$, $\epsilon_{b}$, $\mu_{a}'$, $\mu_{b}'$, $\epsilon_{a}'$, $\epsilon_{b}'$ satisfying
\begin{equation} \label{decl-init-hyp}
\semanticBelowPCState{l}{\Xi_a}{\Xi_b}{\Delta}{\mu_{a}}{\epsilon_{a}}{\mu_{b}}{\epsilon_{b}}{\Gamma}
\end{equation}
if the declaration, $\type{\tau}{\chi}~ x:= exp$ is evaluated under two different initial configurations $\langle \mu_a, \epsilon_a \rangle$ and $\langle \mu_b, \epsilon_b \rangle$ as follows,

\begin{mathpar}
\inferrule*[]
{
 \langle \mathcal{C}, \Delta_{ }, \mu_{a}, \epsilon_{a}, exp \rangle \Downarrow \langle \mu_{a1}, val_{a}\rangle \\
l_{a}~ \text{fresh}
}
{
 \langle \mathcal{C}, \Delta_{ }, \mu_{a}, \epsilon_{a}, \tau~x\coloneq exp \rangle \Downarrow   \langle \Delta_{ }, \mu_{a1}[l_a:=val_a], \epsilon_{a}[x \mapsto l_a], cont \rangle
}

\inferrule*[]
{
 \langle \mathcal{C}, \Delta_{ }, \mu_{b}, \epsilon_{b}, exp \rangle \Downarrow \langle \mu_{b1}, val_{b}\rangle \\
l_{b}~ \text{fresh}
}
{
 \langle \mathcal{C}, \Delta_{ }, \mu_{b}, \epsilon_{b}, \type{\tau}{\chi}~x\coloneq exp \rangle \Downarrow   \langle \Delta_{ }, \mu_{b1}[l_b:=val_b], \epsilon_{b}[x \mapsto l_b], cont \rangle
}
\end{mathpar}
then there exists some $\Xi_a'$, $\Xi_b'$ such that
\begin{enumerate}
\item $\declTyping{pc}{\Gamma}{\Delta}{\type{\tau}{\chi}~x\coloneq exp}{\Gamma'}{\Delta}$. This is already the hypothesis of the theorem.
\item \label{decl-int-tp-1} $\semanticBelowPCState{l}{\Xi_a'}{\Xi_b'}{\Delta} {\mu_{a}'}{\epsilon_{a}'}{\mu_{b}'}{\epsilon_{b}'}{\Gamma'}$, where $\mu_a' = \mu_{a1}[l_a:=val_a]$, $\mu_b' =  \mu_{b1}[l_b:=val_b]$, $\epsilon_a' = \epsilon_{a}[x \mapsto l_a]$, and $\epsilon_b' = \epsilon_{b}[x \mapsto l_b]$. Also, $\semanticBelowPCState{l}{\Xi_a'}{\Xi_b'}{\Delta} {\mu_{a}'}{\epsilon_{a}}{\mu_{b}'}{\epsilon_{b}}{\Gamma}$,
\item \label{decl-int-tp-2} For any $l_a \in \dom{\mu_a}$ such that $\ordinaryTyping[]{\Xi_a}{\Delta}{\mu_a(l_a)}{\tau_{clos}}$, where $\tau_{clos} \in \{\tau_{fn}, \tau_{tbl}\}$, then $\mu_a'(l_a) = \mu_a(l_a)$. Similarly for any $l_b \in \dom{\mu_b}$ such that $\ordinaryTyping[]{\Xi_b}{\Delta}{\mu_b(l_b)}{\tau_{clos}}$, where $\tau_{clos} \in \{\tau_{fn}, \tau_{tbl}\}$, then $\mu_b'(l_b) = \mu_b(l_b)$,
\item \label{decl-int-tp-3} $\Xi_a \subseteq \Xi_a'$, $\Xi_b \subseteq \Xi_b'$, $\dom{\mu_a} \subseteq \dom{\mu_a'}$, $\dom{\mu_b} \subseteq \dom{\mu_b'}$, $\dom{\epsilon_a} \subseteq \dom{\epsilon_a'}$, and $\dom{\epsilon_b} \subseteq \dom{\epsilon_b'}$.
\item For any $l_a \in \dom{\mu_a}$ and $l_b \in \dom{\mu_b}$ such that $\ordinaryTyping[]{\Xi_a}{\Delta}{\mu_a(l_a)}{\type{\tau}{\chi}}$ and $\ordinaryTyping[]{\Xi_b}{\Delta}{\mu_b(l_b)}{\type{\tau}{\chi}}$ and $pc \nsqsubseteq \chi$, we have $\mu_{a}'(l_a) = \mu_{a}(l_a)$ and $\mu_{b}'(l_b) = \mu_{b}(l_b)$.
\end{enumerate}
We can show the last three requirements similar to the previous case. In this case, we additionally know using the induction hypothesis of \Cref{ni-exp} that $exp$ evaluates to values that satisfy NI for values.

\item \textbf{\textsc{T-FuncDecl}}
\[
\inferrule*[right=T-FuncDecl]
{
\Gamma_1 = \Gamma[\overline{x_{i}: \type{\tau_{i}'}{\chi_{i}}}, \terminal{return}: \type{\tau_{ret}'}{\chi_{ret}}]\\
\declTyping{pc_{fn}}{\Gamma_{1}}{\Delta}{stmt}{\Gamma_{2}} \\\\
\Delta \vdash \tau_i \rightsquigarrow \tau_i'~\text{for each}~\tau_i \\
\Delta \vdash \tau_{ret} \rightsquigarrow \tau_{ret}' \\
\Gamma' = \Gamma[x: \type{\overline{d~\type{\tau_{i}'}{\chi_{i} }} \xrightarrow[]{pc_{fn}} \type{\tau_{ret}'}{\chi_{ret}}}{\bot}]
}
{
\declTyping{pc}{\Gamma}{\Delta}{\terminal{function}~\type{\tau_{ret}}{\chi_{ret}}~x~(\overline{d~ x_{i}: \type{\tau_{i}}{\chi_{i}}}) \{stmt\}}{\Gamma'}{\Delta}
}\]
Given the above typing judgement holds for function declaration, we need to show that for any
$\Xi_a$, $\Xi_b$, $\mu_{a}$, $\mu_{b}$, $\epsilon_{a}$, $\epsilon_{b}$, $\mu_{a}'$, $\mu_{b}'$, $\epsilon_{a}'$, $\epsilon_{b}'$, $\Delta$ satisfying
\begin{equation} \label{decl-fndecl-hyp}
\semanticBelowPCState{l}{\Xi_a}{\Xi_b}{\Delta}{\mu_{a}}{\epsilon_{a}}{\mu_{b}}{\epsilon_{b}}{\Gamma}
\end{equation}
if the function declaration is evaluated under two different initial configurations $\langle \mu_a, \epsilon_a \rangle$ and $\langle \mu_b, \epsilon_b \rangle$ as follows,
\begin{mathpar}
\inferrule*[]
{
l_{a}~ fresh \\
val_{a} =  clos(\epsilon_{a},  \overline{d~ x_{i}: \type{\tau_{i}'}{\chi_i}}, \type{\tau_{ret}'}{\chi_{ret}}, stmt) \\\\
\langle \Delta, \mu_a, \epsilon_a, \overline{\tau_i}\rangle \Downarrow_{\tau} \overline{\tau_i'}\\
\langle \Delta, \mu_a, \epsilon_a, \tau_{ret}\rangle \Downarrow_{\tau} \overline{\tau_{ret}'}
}
{
 \langle \mathcal{C}, \Delta, \mu_{a}, \epsilon_{a}, \textsf{function}~\type{\tau_{ret}}{\chi_{ret}}~x~(\overline{d~ x_{i}: \type{\tau_{i}}{\chi_{i}}}) \{\overline{decl} ~stmt\} \rangle \Downarrow        	 \langle \Delta, \mu_{a}[l_{a} \mapsto val_{a}], \epsilon_{a}[x \mapsto l_{a}], cont \rangle
}

\inferrule*[]
{
l_{b}~ fresh \\
val_{b} =  clos(\epsilon_{b},  \overline{d~ x_{i}: \type{\tau_{i}'}{\chi_i}}, \type{\tau_{ret}'}{\chi_{ret}},stmt)
\\\\
\langle \Delta, \mu_a, \epsilon_a, \overline{\tau_i}\rangle \Downarrow_{\tau} \overline{\tau_i'}\\
\langle \Delta, \mu_a, \epsilon_a, \tau_{ret}\rangle \Downarrow_{\tau} \overline{\tau_{ret}'}
}
{
 \langle \mathcal{C}, \Delta, \mu_{b}, \epsilon_{b}, \textsf{function}~\type{\tau_{ret}}{\chi_{ret}}~x~(\overline{d~ x_{i}: \type{\tau_{i}}{\chi_{i}}}) \{\overline{decl} ~stmt\} \rangle \Downarrow       	\langle \Delta, \mu_{b}[l_{b} \mapsto val_{b}], \epsilon_{b}[x \mapsto l_{b}], cont \rangle
}
\end{mathpar}
then there exists some $\Xi_a'$, $\Xi_b'$ such that:
\begin{enumerate}
\item $\declTyping{pc}{\Gamma}{\Delta}{function~declaration}{\Gamma'}{\Delta}$. This is already the hypothesis of the theorem.
\item \label{fndecl-tp-1} $\semanticBelowPCState{l}{\Xi_a'}{\Xi_b'}{\Delta} {\mu_{a}'}{\epsilon_{a}'}{\mu_{b}'}{\epsilon_{b}'}{\Gamma'}$ and $\semanticBelowPCState{l}{\Xi_a'}{\Xi_b'}{\Delta} {\mu_{a}'}{\epsilon_{a}}{\mu_{b}'}{\epsilon_{b}}{\Gamma}$, where $\mu_a' = \mu_{a}[l_{a} \mapsto val_{a}]$, $\mu_b' = \mu_{b}[l_{b} \mapsto val_{b}]$, $\epsilon_a' = \epsilon_{a}[x \mapsto l_{a}]$ and $\epsilon_b' = \epsilon_{b}[x \mapsto l_{b}]$.
\item \label{fndecl-tp-4}For any $l_a \in \dom{\mu_a}$ such that $\ordinaryTyping[]{\Xi_a}{\Delta}{\mu_a(l_a)}{\tau_{clos}}$, where $\tau_{clos} \in \{\tau_{fn}, \tau_{tbl}\}$, then $\mu_a'(l_a) = \mu_a(l_a)$. Similarly for any $l_b \in \dom{\mu_b}$ such that $\ordinaryTyping[]{\Xi_b}{\Delta}{\mu_b(l_b)}{\tau_{clos}}$, where $\tau_{clos} \in \{\tau_{fn}, \tau_{tbl}\}$, then $\mu_b'(l_b) = \mu_b(l_b)$.
\item  \label{fndecl-tp-3}$\Xi_a \subseteq \Xi_a'$, $\Xi_b \subseteq \Xi_b'$, $\dom{\mu_a} \subseteq \dom{\mu_a'}$, $\dom{\mu_b} \subseteq \dom{\mu_b'}$, $\dom{\epsilon_a} \subseteq \dom{\epsilon_a'}$, and $\dom{\epsilon_b} \subseteq \dom{\epsilon_b'}$, $\Delta \subseteq \Delta_1$.
\item \label{fndecl-tp-2} For any $l_a \in \dom{\mu_a}$ and $l_b \in \dom{\mu_b}$ such that $\ordinaryTyping[]{\Xi_a}{\Delta}{\mu_a(l_a)}{\type{\tau}{\chi}}$ and $\ordinaryTyping[]{\Xi_b}{\Delta}{\mu_b(l_b)}{\type{\tau}{\chi}}$ and $pc \nsqsubseteq \chi$, we have $\mu_{a}'(l_a) = \mu_{a}(l_a)$ and $\mu_{b}'(l_b) = \mu_{b}(l_b)$,
\end{enumerate}
With $\Xi_a' = \Xi_a \cup \{l_a \mapsto \tau_{fn}\}$,
$\Xi_b' = \Xi_b \cup \{l_b \mapsto \tau_{fn}\}$,
$\mu_a' = \mu_a \cup \{ l_a \mapsto val_a \}$,
$\mu_b' = \mu_b \cup \{ l_b \mapsto val_b \}$,
$\dom{\epsilon_a'} = \dom{\epsilon_a} \cup \{x\}$
$\dom{\epsilon_b'} = \dom{\epsilon_b} \cup \{x\}$ the equation in \Cref{decl-basic-tp-3} is evident.

To prove $\semanticBelowPCState{l}{\Xi_a'}{\Xi_b'}{\Delta} {\mu_{a}'}{\epsilon_{a}}{\mu_{b}'}{\epsilon_{b}}{\Gamma}$, we need to show the following:
\begin{equation} \label{fndecl-tp-0-1}
\semanticStoreEnv{\Xi_a'}{\Delta}{\mu_{a}'}{\epsilon_{a}}{\Gamma}
\end{equation}
\begin{equation} \label{fndecl-tp-0-2}
\semanticStoreEnv{\Xi_b'}{\Delta}{\mu_{b}'}{\epsilon_{b}}{\Gamma}
\end{equation}
\begin{equation} \label{fndecl-tp-0-3}
\dom{\epsilon_a} = \dom{\epsilon_b}
\end{equation}
\begin{equation} \label{fndecl-tp-0-4}
\text{For any ~} x \in \dom{\epsilon_a} = \dom{\epsilon_b} \text{~we have~} \NIval{l}{\Xi_a'}{\Xi_b'}{\Delta}{\mu_a'(\epsilon_{a}(x))}{\mu_b'(\epsilon_{b}(x))}{\Gamma(x)}
\end{equation}
and
for all $x$ in $\dom{\epsilon_a} = \dom{\epsilon_b}$ and some $\Gamma_{fn} \subseteq \Gamma$ and any $pc$,  if $\Gamma, \Delta \vdash_{pc} x: \tau_{fn}$, $\mu_a'(\epsilon_a(x)) = clos(\epsilon_{c_a},...)$, $\mu_b'(\epsilon_b(x)) = clos(\epsilon_{c_b},...)$, $\Xi_a' \models \epsilon_{c_a}: \Gamma_{fn}$, and $\Xi_b' \models \epsilon_{c_b}: \Gamma_{fn}$ , then $\semanticBelowPCState{l}{\Xi_a'}{\Xi_b'}{\Delta}{\mu_{a}'}{\epsilon_{c_a}}{\mu_{b}'}{\epsilon_{c_b}}{\Gamma_{fn}}$.

First, we begin by showing \Cref{fndecl-tp-0-1}. This requires us to in turn prove the following:
\begin{enumerate}
\item $\semanticStore{\Xi_a'}{\Delta}{\mu_a'}$. This is shown in \Cref{lem:store-typing-extended}.
\item $\Xi_a' \vdash \epsilon_a: \Gamma$. This follows from $\Xi_a \vdash \epsilon_a: \Gamma$ and weakening of store typing context.
\item Next, we need to show that any closure value has $\dom{\epsilon_{clos}} \subseteq \dom{\epsilon}$  (already known from \Cref{decl-fndecl-hyp}) and $\semanticStoreEnv{\Xi_a'}{\Delta}{\mu_a'}{\epsilon_{clos}}{\Gamma_{clos}}$, where $\epsilon_{clos}$ is the environment bound to the closure.
We already know that $\semanticStoreEnv{\Xi_a}{\Delta}{\mu_a}{\epsilon_{clos}}{\Gamma_{clos}}$. Using \Cref{lem:only-mem-superset}, we have $\semanticStoreEnv{\Xi_a'}{\Delta}{\mu_a'}{\epsilon_{clos}}{\Gamma_{clos}}$.
\end{enumerate}
Similarly, we can prove \Cref{fndecl-tp-0-2}.
Since value of no location besides the fresh $l_a$ and $l_b$ changes between $\mu_a$ and $\mu_a'$ and $\mu_b$ and $\mu_b'$, we can show \Cref{fndecl-tp-0-4} and the one following it using the results from \Cref{decl-fndecl-hyp}. All the variables referenced by closures that were declared until $\epsilon_a$ or $\epsilon_b$ have unchanged values.

To prove $\semanticBelowPCState{l}{\Xi_a'}{\Xi_b'}{\Delta} {\mu_{a}'}{\epsilon_{a}'}{\mu_{b}'}{\epsilon_{b}'}{\Gamma}$, we need to show the following:
\begin{equation} \label{fndecl-tp-1-1}
\semanticStoreEnv{\Xi_a'}{\Delta}{\mu_{a}'}{\epsilon_{a}'}{\Gamma'}
\end{equation}
\begin{equation} \label{fndecl-tp-1-2}
\semanticStoreEnv{\Xi_b'}{\Delta}{\mu_{b}'}{\epsilon_{b}'}{\Gamma'}
\end{equation}
\begin{equation} \label{fndecl-tp-1-3}
\dom{\epsilon_a'} = \dom{\epsilon_b'}
\end{equation}
\begin{equation} \label{fndecl-tp-1-4}
\text{For any ~} x \in \dom{\epsilon_a'} = \dom{\epsilon_b'} \text{~we have~} \NIval{l}{\Xi_a'}{\Xi_b'}{\Delta}{\mu_a'(\epsilon_{a}'(x))}{\mu_b'(\epsilon_{b}'(x))}{\Gamma'(x)}
\end{equation}
and for all $x$ in $\dom{\epsilon_a'} = \dom{\epsilon_b'}$ and some $\Gamma_{fn} \subseteq \Gamma$ and any $pc$,  if $\Gamma', \Delta \vdash_{pc} x: \tau_{fn}$, $\mu_a'(\epsilon_a'(x)) = clos(\epsilon_{c_a},...)$, $\mu_b'(\epsilon_b'(x)) = clos(\epsilon_{c_b},...)$, $\Xi_a' \models \epsilon_{c_a}: \Gamma_{fn}$, and $\Xi_b' \models \epsilon_{c_b}: \Gamma_{fn}$ , then $\semanticBelowPCState{l}{\Xi_a'}{\Xi_b'}{\Delta}{\mu_{a}'}{\epsilon_{c_a}}{\mu_{b}'}{\epsilon_{c_b}}{\Gamma_{fn}}$.

First, we begin by showing \Cref{fndecl-tp-1-1}. This requires us to in turn prove the following:
\begin{enumerate}
\item $\semanticStore{\Xi_a'}{\Delta}{\mu_a'}$. This is shown in \Cref{lem:store-typing-extended}.
\item $\Xi_a' \vdash \epsilon_a': \Gamma'$. Since we are given $\Xi_a \vdash \epsilon_a: \Gamma$, by using \Cref{lem-env-subtyping}, we can say that $\Xi_a' \vdash \epsilon_a: \Gamma$.
Since $\Gamma' = \Gamma[x \mapsto \type{\overline{d\type{\tau_{i}}{\chi_{i} }} \xrightarrow[]{pc_{fn}} \type{\tau_{ret}}{\chi_{ret}}}{\bot}]$, $\epsilon_a' = \epsilon_a[x \mapsto l_a]$, $\Xi_a' = \Xi_a \cup \{l_a \mapsto \type{\tau}{pc}\}$, by using the rules for $\Xi \vdash \epsilon: \Gamma$, we can show that $\Xi_a' \vdash \epsilon_a': \Gamma'$.
\[\inferrule*[]
{	\Xi_a' \vdash \epsilon_a: \Gamma \qquad
\Xi_a'(l_a) = \type{\overline{d~\type{\tau_{i}}{\chi_{i} }} \xrightarrow[]{pc_{fn}} \type{\tau_{ret}}{\chi_{ret}}}{\bot}
}
{
\Xi_a' \vdash \epsilon_a[x \mapsto l_a]: \Gamma[x \mapsto \type{\overline{d~\type{\tau_{i}}{\chi_{i} }} \xrightarrow[]{pc_{fn}} \type{\tau_{ret}}{\chi_{ret}}}{\bot}]
}
\]
\item Next, we need to show that any closure value has $\dom{\epsilon_{clos}} \subseteq \dom{\epsilon'}$ and $\semanticStoreEnv{\Xi_a'}{\Delta}{\mu_a'}{\epsilon_{clos}}{\Gamma_{clos}}$, where $\epsilon_{clos}$ is the environment bound to the closure.
Since $\epsilon_a' = \epsilon_a[x \mapsto l_a]$, where $\Gamma', \Delta \vdash_{pc} x: \tau_{fn}$, we need to show the above property for $y \in \dom{\epsilon_a}$ that is not equal to $x$ and the new closure variable $x$.
Since
\[
\text{for any~} y \in \dom{\Gamma}~\text{such that~} y \ne x ,\text{we have ~}		\Gamma'(y) = \Gamma(y)
\]
\[
\text{for any~} y \in \dom{\epsilon_a}~\text{such that~} y \ne x ,\text{we have ~}		\epsilon_a'(y) = \epsilon_a(y)
\]
and from \Cref{decl-fndecl-hyp}, we already know that for all $\Gamma, \Delta \vdash_{pc} y: \tau_{clos} \in \dom{\epsilon_a}$ such that $y\ne x$ and  $\mu_a'(\epsilon_a'(y)) = val_{clos}$, we have $\dom{\epsilon_{c}} \subseteq \dom{\epsilon_a}$, $\semanticStoreEnv{\Xi_a}{\Delta}{\mu_a}{\epsilon_{c}}{\Gamma_{clos}}$ and $\semanticStoreEnv{\Xi_b}{\Delta}{\mu_b}{\epsilon_{c}}{\Gamma_{clos}}$. Using \Cref{lem:only-mem-superset}, we can conclude that $\semanticStoreEnv{\Xi_a'}{\Delta}{\mu_a'}{\epsilon_{c}}{\Gamma_{clos}}$ and $\semanticStoreEnv{\Xi_b'}{\Delta}{\mu_b'}{\epsilon_{c}}{\Gamma_{clos}}$.

Since $\mu_a'(\epsilon_a'(x)) = clos(\epsilon_{a},  \overline{d~ x_{i}: \type{\tau_{i}'}{\chi_i}}, \type{\tau_{ret}}{\chi_{ret}}, stmt)$, $\dom{\epsilon_a} \subseteq \dom{\epsilon_a'}$, and $\semanticStoreEnv{\Xi_a'}{\Delta}{\mu_a'}{\epsilon_{a}}{\Gamma}$ we can conclude that for all closure values, we have $\dom{\epsilon_{c}} \subseteq \dom{\epsilon_a'}$, $\semanticStoreEnv{\Xi_a'}{\Delta}{\mu_a'}{\epsilon_{c}}{\Gamma_{clos}}$ and $\semanticStoreEnv{\Xi_b'}{\Delta}{\mu_b'}{\epsilon_{c}}{\Gamma_{clos}}$.
\end{enumerate}
This proves \Cref{fndecl-tp-1-1}.
Proof of \Cref{fndecl-tp-1-2} follows similarly.

To show \Cref{fndecl-tp-1-4}, we again use the fact that any $y \in \dom{\epsilon_a'}$ will be either in $\dom{\epsilon_a}$ or be the function name, $x$.
For the case where $y\ne x \in \dom{\epsilon_a}$, we already know that $\NIval{l}{\Xi_a}{\Xi_b}{\Delta}{\mu_a(\epsilon_{a}(x))}{\mu_b(\epsilon_{b}(x))}{\Gamma(x)}$.
This implies that $\text{for any ~} y\ne x \in \dom{\epsilon_a} = \dom{\epsilon_b}$, we have $\NIval{l}{\Xi_a'}{\Xi_b'}{\Delta}{\mu_a'(\epsilon_{a}'(x))}{\mu_b'(\epsilon_{b}'(x))}{\Gamma'(x)}$, since such $y$ satisfies $\mu_a'(\epsilon_a'(x)) = \mu_a(\epsilon_a(x))$, $\mu_b'(\epsilon_b'(x)) = \mu_b(\epsilon_b(x))$, $\Gamma(x) = \Gamma'(x)$ and $\Xi_a'(x) = \Xi_a(x)$, and $\Xi_b'(x) =\Xi_b(x)$.

For the case when $y = x$, we need to show the following , where $\tau_{fn} =  \type{\overline{d~\type{\tau_{i}}{\chi_{i} }} \xrightarrow[]{pc_{fn}} \type{\tau_{ret}}{\chi_{ret}}}{\bot}$.
\begin{equation}
\NIval{l}{\Xi_a'}{\Xi_b'}{\Delta}{clos(\epsilon_{a},  \overline{d~ x_{i}: \type{\tau_{i}'}{\chi_i}}, \type{\tau_{ret}}{\chi_{ret}},stmt)}{clos(\epsilon_{b},  \overline{d~ x_{i}: \type{\tau_{i}'}{\chi_i}}, \type{\tau_{ret}}{\chi_{ret}},stmt)}{\tau_{fn}}
\end{equation}

To show this, we need to first prove that $\ordinaryTyping[]{\Xi_a'}{\Delta}{clos(\epsilon_{a},  \overline{d~ x_{i}: \type{\tau_{i}'}{\chi_i}}, \type{\tau_{ret}}{\chi_{ret}}, \overline{decl}~stmt)}{ \type{\overline{d~\type{\tau_{i}}{\chi_{i} }} \xrightarrow[]{pc_{fn}} \type{\tau_{ret}}{\chi_{ret}}}{\bot}}$.
For this we need to look at the value typing rule for function closures.
\[\inferrule[TV-Clos]
{
\Xi \vdash \epsilon: \Gamma \\
\Gamma[\overline{x: \type{\tau}{\chi}}, \textsf{return} = \type{\tau_{ret}}{\chi_{ret}}], \Delta \vdash_{pc_{fn}} stmt \dashv \Gamma'
}
{\Xi, \Delta \vdash_{} clos(\epsilon, \overline{d~x: \type{\tau}{\chi}}, \type{\tau_{ret}}{\chi_{ret}}, \overline{decl}~stmt): \langle \langle \overline{d~\tau, \chi}\rangle \xrightarrow{pc_{fn}} \langle \tau_{ret}, \chi_{ret}\rangle, \bot \rangle}
\]
These premises are:
\begin{enumerate}
\item We need to show $\Xi_a' \vdash \epsilon_a: \Gamma$.
We already know from \Cref{decl-fndecl-hyp} that $\Xi_a \vdash \epsilon_a: \Gamma$. Using \Cref{lem-env-subtyping}, we can conclude $\Xi_a' \vdash \epsilon_a: \Gamma$.
\item We need to show $\declTyping{pc_{fn}}{\Gamma_{}[\overline{x_{i}: \type{\tau_{i}}{\chi_{i}}}, return: \type{\tau_{ret}}{\chi_{ret}}]}{\Delta}{stmt}{\Gamma_{1}}$
\end{enumerate}
This is satisfied as a part of the premise in the typing rule for function declaration.
This concludes \[\ordinaryTyping[]{\Xi_a'}{\Delta}{clos(\epsilon_{a},  \overline{d~ x_{i}: \type{\tau_{i}'}{\chi_i}}, \type{\tau_{ret}}{\chi_{ret}},stmt)}{ \type{\overline{d~\type{\tau_{i}}{\chi_{i} }} \xrightarrow[]{pc_{fn}} \type{\tau_{ret}}{\chi_{ret}}}{\bot}}\]
\[\ordinaryTyping[]{\Xi_b'}{\Delta}{clos(\epsilon_{b},  \overline{d~ x_{i}: \type{\tau_{i}'}{\chi_i}}, \type{\tau_{ret}}{\chi_{ret}}, stmt)}{ \type{\overline{d~\type{\tau_{i}}{\chi_{i} }} \xrightarrow[]{pc_{fn}} \type{\tau_{ret}}{\chi_{ret}}}{\bot}}\]

Next, we show that for $\Gamma$, the following properties hold:
\begin{enumerate}
\item $\Xi_a' \vdash \epsilon_{a}: \Gamma$, $\Xi_b' \vdash \epsilon_{b}: \Gamma$. Already shown above.
\item $\ordinaryTyping[pc]{\Gamma}{\Delta}{clos(\epsilon_{a},  \overline{d~ x_{i}: \type{\tau_{i}'}{\chi_i}}, \type{\tau_{ret}}{\chi_{ret}}, stmt)}{\type{\overline{d~\type{\tau_{i}}{\chi_{i} }} \xrightarrow[]{pc_{fn}} \type{\tau_{ret}}{\chi_{ret}}}{\bot}}$. We already know this by the typing derivation.
Similarly, 	$\ordinaryTyping[pc]{\Gamma}{\Delta}{clos(\epsilon_{b},  \overline{d~ x_{i}: \type{\tau_{i}'}{\chi_i}}, \type{\tau_{ret}}{\chi_{ret}}, stmt)}{\type{\overline{d~\type{\tau_{i}}{\chi_{i} }} \xrightarrow[]{pc_{fn}} \type{\tau_{ret}}{\chi_{ret}}}{\bot}}$.
\item  $\Gamma[\overline{x:\type{\tau}{\chi}}, return: \type{\tau_{ret}}{\chi_{ret}}], \Delta \vdash_{pc_{fn}} stmt \dashv \Gamma'$
\item Also, we have $val_a =_{clos} val_b$.
\end{enumerate}

We also need to show that for all $y \in \dom{\epsilon_a'} = \dom{\epsilon_b'}$ and some $\Gamma_{clos} \subseteq \Gamma'$ and any $pc$,  if $\Gamma', \Delta \vdash_{pc} y: \tau_{clos}$, $\mu_a'(\epsilon_a'(x)) = val_{clos}$ with environment $\epsilon_{c_a}$, $\mu_b'(\epsilon_b'(y)) = val_{clos}$ with environment $\epsilon_{c_b}$, $\Xi_a' \models \epsilon_{c_a}: \Gamma_{clos}$, and $\Xi_b' \models \epsilon_{c_b}: \Gamma_{clos}$ , then $\semanticBelowPCState{l}{\Xi_a'}{\Xi_b'}{\Delta}{\mu_{a}'}{\epsilon_{c_a}}{\mu_{b}'}{\epsilon_{c_b}}{\Gamma_{clos}}$.
For $y\ne x$, we know that a closure value would satisfy $\semanticBelowPCState{l}{\Xi_a}{\Xi_b}{\Delta}{\mu_{a}}{\epsilon_{c_a}}{\mu_{b}}{\epsilon_{c_b}}{\Gamma_{clos}}$. Since value of no variable referenced by any of the closure defined until $\epsilon_a$ is updated between $\mu_a'$  and $\mu_b'$, we can say that $\semanticBelowPCState{l}{\Xi_a}{\Xi_b}{\Delta}{\mu_{a}'}{\epsilon_{c_a}}{\mu_{b}'}{\epsilon_{c_b}}{\Gamma_{clos}}$. By weakening the store typing context, we can also say $\semanticBelowPCState{l}{\Xi_a'}{\Xi_b'}{\Delta}{\mu_{a}'}{\epsilon_{c_a}}{\mu_{b}'}{\epsilon_{c_b}}{\Gamma_{clos}}$.
For the new closure variable $x$, we already have $\semanticBelowPCState{l}{\Xi_a'}{\Xi_b'}{\Delta}{\mu_{a}'}{\epsilon_{a}}{\mu_{b}'}{\epsilon_{b}}{\Gamma}$.

This finally proves \Cref{fndecl-tp-1}.
\Cref{fndecl-tp-2} is satisfied because only the value of the location pointed by the function name, $x$ is updated in the memory store. \Cref{fndecl-tp-4} is trivial since no location  besides the fresh $l_a$ and $l_b$ are updated.

\item \textbf{\textsc{T-TblDecl}}
		\[      \inferrule*[right=T-TblDecl]
    {
    \newordinaryTyping[pc_{tbl}]{\Gamma}{\Delta}{\overline{exp_k:\type{\tau_k}{\chi_k}}}{} \\
    \newordinaryTyping[pc_{tbl}]{\Gamma}{\Delta}{\overline{x_k:\type{match\_kind}{\bot}}}{} \\\\
    \newordinaryTyping[pc_{tbl}]{\Gamma}{\Delta}{act_{a_j}: \type{\overline{d\type{\tau_{a_{ji}}}{\chi_{a_{ji}}}}~;\overline{\type{\tau_{c_{ji}}}{\chi_{c_{ji}}}} \xrightarrow{pc_{fn_j}} \type{unit}{\bot}}{\bot}}{},~\text{for all}~ j\\\\
    \newordinaryTyping[pc_{tbl}]{\Gamma}{\Delta}{\overline{exp_{a_{ji}}:\type{\tau_{a_{ji}}}{\chi_{a_{ji}}}goes~d}}{}  \\
    {\chi_k} \sqsubseteq {pc_{fn_j}}~\text{for all}~j,k \\
    pc_a \sqsubseteq pc_{fn_j}, \text{for all}~j \\
    {\chi_k} \sqsubseteq {pc_{tbl}}~\text{for all}~k\\
    pc_{tbl} \sqsubseteq pc_a
    }
    {
    \declTyping{pc}{\Gamma}{\Delta}{\text{table}~x~ \{\overline{exp_k: x_k}~ \overline{act_{a_j}(\overline{exp_{a_{ji}}})}\}}{\Gamma[x: \type{table(pc_{tbl})}{\bot}]}{\Delta}
    }\]

Given the above typing judgement holds for table declaration, we need to show that for any $\Xi_a$, $\Xi_b$, $\mu_{a}$, $\mu_{b}$, $\epsilon_{a}$, $\epsilon_{b}$, $\mu_{a}'$, $\mu_{b}'$, $\epsilon_{a}'$, $\epsilon_{b}'$ satisfying
\begin{equation} \label{decl-tbl-decl-hyp}
\semanticBelowPCState{l}{\Xi_a}{\Xi_b}{\Delta}{\mu_{a}}{\epsilon_{a}}{\mu_{b}}{\epsilon_{b}}{\Gamma}
\end{equation}
if the table declaration is evaluated under two different initial configurations $\langle \mu_a, \epsilon_a \rangle$ and $\langle \mu_b, \epsilon_b \rangle$ as follows,
\begin{mathpar}
\inferrule*[]
{
l_{a}~ fresh \\
val_{a} =  table~l_a~(\epsilon_{a}, \overline{exp_k: x_k}, \overline{act_{a_j}(\overline{exp_{a_{ji}}}, \overline{y_{c_{ji}}: \type{\tau_{c_{ji}}}{\chi_{c_{ji}}}})})
}
{
 \langle \mathcal{C}, \Delta, \mu_{a}, \epsilon_{a}, \text{table~}~x~\{\overline{exp_k: x_k}~ \overline{act_{a_j}(\overline{exp_{a_{ji}}})}\} \rangle \Downarrow         	 \langle \Delta, \mu_{a}[l_{a} \mapsto val_{a}], \epsilon_{a}[x \mapsto l_{a}], cont \rangle
}

\inferrule*[]
{
l_{b}~ fresh \\
val_{b} =  table~l_b~(\epsilon_{b}, \overline{exp_k: x_k}, \overline{act_{a_j}(\overline{exp_{a_{ji}}}, \overline{y_{c_{ji}}: \type{\tau_{c_{ji}}}{\chi_{c_{ji}}}})})
}
{
 \langle \mathcal{C}, \Delta, \mu_{b}, \epsilon_{b}, \text{table~}~x~\{\overline{exp_k: x_k}~ \overline{act_{a_j}(\overline{exp_{a_{ji}}})}\} \rangle \Downarrow           	\langle \Delta, \mu_{b}[l_{b} \mapsto val_{b}], \epsilon_{b}[x \mapsto l_{b}], cont \rangle
}
\end{mathpar}
then there exists some $\Xi_a'$, $\Xi_b'$ such that
\begin{enumerate}
\item $\declTyping{pc}{\Gamma}{\Delta}{table~declaration}{\Gamma'}{\Delta}$. This is already the hypothesis of the theorem.
\item \label{tbldecl-tp-1} $\semanticBelowPCState{l}{\Xi_a'}{\Xi_b'}{\Delta} {\mu_{a}'}{\epsilon_{a}'}{\mu_{b}'}{\epsilon_{b}'}{\Gamma'}$ and $\semanticBelowPCState{l}{\Xi_a'}{\Xi_b'}{\Delta} {\mu_{a}'}{\epsilon_{a}}{\mu_{b}'}{\epsilon_{b}}{\Gamma}$, where $\mu_a' = \mu_{a}[l_{a} \mapsto val_{a}]$, $\mu_b' = \mu_{b}[l_{b} \mapsto val_{b}]$, $\epsilon_a' = \epsilon_{a}[x \mapsto l_{a}]$ and $\epsilon_b' = \epsilon_{b}[x \mapsto l_{b}]$, $\Gamma' = \Gamma[x\mapsto \type{\text{table}(pc_{tbl})}{\bot}]$.
\item \label{tbldecl-tp-2} For any $l_a \in \dom{\mu_a}$ such that $\ordinaryTyping[]{\Xi_a}{\Delta}{\mu_a(l_a)}{\tau_{clos}}$, where $\tau_{clos} \in \{\tau_{fn}, \tau_{tbl}\}$, then $\mu_a'(l_a) = \mu_a(l_a)$. Similarly for any $l_b \in \dom{\mu_b}$ such that $\ordinaryTyping[]{\Xi_b}{\Delta}{\mu_b(l_b)}{\tau_{clos}}$, where $\tau_{clos} \in \{\tau_{fn}, \tau_{tbl}\}$, then $\mu_b'(l_b) = \mu_b(l_b)$,
\item \label{tbldecl-tp-3} $\Xi_a \subseteq \Xi_a'$, $\Xi_b \subseteq \Xi_b'$, $\dom{\mu_a} \subseteq \dom{\mu_a'}$, $\dom{\mu_b} \subseteq \dom{\mu_b'}$, $\dom{\epsilon_a} \subseteq \dom{\epsilon_a'}$, and $\dom{\epsilon_b} \subseteq \dom{\epsilon_b'}$.
\item For any $l_a \in \dom{\mu_a}$ and $l_b \in \dom{\mu_b}$ such that $\ordinaryTyping[]{\Xi_a}{\Delta}{\mu_a(l_a)}{\type{\tau}{\chi}}$ and $\ordinaryTyping[]{\Xi_b}{\Delta}{\mu_b(l_b)}{\type{\tau}{\chi}}$ and $pc \nsqsubseteq \chi$, we have $\mu_{a}'(l_a) = \mu_{a}(l_a)$ and $\mu_{b}'(l_b) = \mu_{b}(l_b)$.
\end{enumerate}
With $\mu_{a}' = \mu_{a}[l_{a} \mapsto val_{a}]$, $\epsilon_{a}' = \epsilon_{a}[x \mapsto l_{a}]$, $\textsc{dom}(\epsilon_{a}') = \textsc{dom}(\epsilon_{a}) \cup \{x\}$,  $\mu_{b}' = \mu_{b}[l_{b} \mapsto val_{b}]$, $\epsilon_{b}' = \epsilon_{b}[x \mapsto l_{b}]$, $\textsc{dom}(\epsilon_{b}') = \textsc{dom}(\epsilon_{b}) \cup \{x\}$, $\Xi_a' = \Xi_a[l_a \mapsto \type{text{table}(pc_{tbl})}{\bot}]$, and $\Xi_b' = \Xi_b[l_b \mapsto \type{\text{table}(pc_{tbl})}{\bot}]$ \Cref{tbldecl-tp-3} is evident.

Proof of  \Cref{tbldecl-tp-1} follows similar to the function declaration case.
The interesting bit is to show that the freshly added table name $x$ satisfies the following property. For the case when $y = x$, we need to show that
\[\NIval{l}{\Xi_a'}{\Xi_b'}{\Delta}{table~l_a~(\epsilon_{a}, \overline{exp_k: x_k}, \overline{act_{a_j}(\overline{exp_{a_{ji}}}, \overline{y_{c_{ji}}: \type{\tau_{c_{ji}}}{\chi_{c_{ji}}}})})}{table~l_b~(\epsilon_{a}, \overline{exp_k: x_k}, \overline{act_{a_j}(\overline{exp_{a_{ji}}}, \overline{y_{c_{ji}}: \type{\tau_{c_{ji}}}{\chi_{c_{ji}}}})})}{\tau_{tbl}}\]
where $\tau_{tbl} =  \type{table(pc_{tbl})}{\bot}$.
For this, we need to first show that
\[\ordinaryTyping[]{\Xi_a'}{\Delta}{table~l_a~(\epsilon_{a}, \overline{exp_k: x_k}, \overline{act_{a_j}(\overline{exp_{a_{ji}}}, \overline{y_{c_{ji}}: \type{\tau_{c_{ji}}}{\chi_{c_{ji}}}})})}{\type{table(pc_{tbl})}{\bot}}\]
\[\ordinaryTyping[]{\Xi_b'}{\Delta}{table~l_b~(\epsilon_{b}, \overline{exp_k: x_k}, \overline{act_{a_j}(\overline{exp_{a_{ji}}}, \overline{y_{c_{ji}}: \type{\tau_{c_{ji}}}{\chi_{c_{ji}}}})})}{\type{table(pc_{tbl})}{\bot}}\]
To show this, we need to prove that the premises of the following value typing rule are satisfied,
\[
\inferrule[TV-Tbl]
{
\Xi \vdash \epsilon: \Gamma \\
    \newordinaryTyping[pc_{tbl}]{\Gamma}{\Delta}{\overline{exp_k:\type{\tau_k}{\chi_k}}}{} \\
    \newordinaryTyping[pc_{tbl}]{\Gamma}{\Delta}{\overline{x_k:\type{match\_kind}{\bot}}}{} \\\\
    \newordinaryTyping[pc_{tbl}]{\Gamma}{\Delta}{act_{a_j}: \type{\overline{d~\type{\tau_{a_{ji}}}{\chi_{a_{ji}}}}~;\overline{\type{\tau_{c_{ji}}}{\chi_{c_{ji}}}} \xrightarrow{pc_{fn_j}} \type{unit}{\bot}}{\bot}}{},~\text{for all}~ j\\\\
    \newordinaryTyping[pc_{tbl}]{\Gamma}{\Delta}{\overline{exp_{a_{ji}}:\type{\tau_{a_{ji}}}{\chi_{a_{ji}}}goes~d}}{}  \\
    {\chi_k} \sqsubseteq {pc_{fn_j}}~\text{for all}~j,k \\
    pc_a \sqsubseteq pc_{fn_j}, \text{for all}~j \\
    {\chi_k} \sqsubseteq {pc_{tbl}}~\text{for all}~k\\
    pc_{tbl} \sqsubseteq pc_{a}
}
{{\Xi},{\Delta}\vdash_{}{\text{table}~l~ \{\epsilon, \overline{exp_k: x_k}~ \overline{act_{a_j}(\overline{exp_{a_{ji}}})}\}}: \type{table(pc_{tbl})}{\bot}}
\]
\begin{enumerate}
\item We need to show $\Xi_a', \Delta \vdash \epsilon_a: \Gamma$.
We already know from \Cref{decl-fndecl-hyp} that $\Xi_a \vdash \epsilon_a: \Gamma$.
Using \Cref{lem-env-subtyping}, we can conclude 	$\Xi_a' \vdash \epsilon_a: \Gamma$.
\item We need to show $\newordinaryTyping[pc_{tbl}]{\Gamma}{\Delta}{\overline{exp_k:\type{\tau_k}{\chi_k}}}{}$
\item We need to show $\newordinaryTyping[pc_{tbl}]{\Gamma}{\Delta}{\overline{x_k:\type{match\_kind}{\bot}}}{}$
\item We need to show $\newordinaryTyping[pc_{tbl}]{\Gamma}{\Delta}{\overline{act_{a_j}: \type{\overline{d~\type{\tau_{a_{ji}}}{\chi_{a_{ji}}}}~;\overline{\type{\tau_{c_{ji}}}{\chi_{c_{ji}}}} \xrightarrow{pc_{fn_j}} \type{unit}{\bot}}{\bot}}}{}$
\item We need to show $ \newordinaryTyping[pc_{tbl}]{\Gamma}{\Delta}{\overline{exp_{a_{ji}}:\type{\tau_{a_{ji}}}{\chi_{a_{ji}}}~goes~d}}{}$
\end{enumerate}
The last four properties are satisfied as a part of the premise for the typing rule for table declaration.
This concludes:
\[
  \ordinaryTyping[]{\Xi_a'}{\Delta}{table~l_a~(\epsilon_{a}, \overline{exp_k:
  x_k}, \overline{act_{a_j}(\overline{exp_{a_{ji}}}, \overline{y_{c_{ji}}:
\type{\tau_{c_{ji}}}{\chi_{c_{ji}}}})})}{\type{table(pc_{tbl})}{\bot}}
\]
\[
  \ordinaryTyping[]{\Xi_b'}{\Delta}{table~l_b~(\epsilon_{a}, \overline{exp_k: x_k}, \overline{act_{a_j}(\overline{exp_{a_{ji}}}, \overline{y_{c_{ji}}: \type{\tau_{c_{ji}}}{\chi_{c_{ji}}}})})}{ \type{table(pc_{tbl})}{\bot}}
\]

Next we show that for $\Gamma$ the following properties hold (by expanding the definition of NI for table closures)
\begin{enumerate}
\item $\Xi_a'\vdash \epsilon_{a}: \Gamma$, $\Xi_b' \vdash \epsilon_{b}: \Gamma$. Already shown.
\item $\ordinaryTyping[pc]{\Gamma}{\Delta}{table~l_a~(\epsilon_{a}, \overline{exp_k: x_k}, \overline{act_{a_j}(\overline{exp_{a_{ji}}}, \overline{y_{c_{ji}}: \type{\tau_{c_{ji}}}{\chi_{c_{ji}}}})})}{\type{table(pc_{tbl})}{\bot}}$. We already know this by the typing derivation.
Similarly, 	$\ordinaryTyping[pc]{\Gamma}{\Delta}{table~l_b~(\epsilon_{a}, \overline{exp_k: x_k}, \overline{act_{a_j}(\overline{exp_{a_{ji}}}, \overline{y_{c_{ji}}: \type{\tau_{c_{ji}}}{\chi_{c_{ji}}}})})}{\type{table(pc_{tbl})}{\bot}}$.
\item  $\ordinaryTyping[pc_{tbl}]{\Gamma}{\Delta}{x_k}{\type{match\_kind}{\bot}}$ for each $x_k \in \overline{x_k}$.
%
\item $\newordinaryTyping[pc_{tbl}]{\Gamma}{\Delta}{\overline{exp_k:\type{\tau_k}{\chi_k}}}{}$. for each $exp_k \in \overline{exp_k}$.
%
\item $\ordinaryTyping[pc_{tbl}]{\Gamma}{\Delta}{act_{aj}}{\type{\overline{d~\type{\tau_{a_{ji}}}{\chi_{a_{ji}}}}~;\overline{\type{\tau_{c_{ji}}}{\chi_{c_{ji}}}} \xrightarrow{pc_{fn_j}} \type{unit}{\bot}}{\bot}}$ for each $act_{a_j} \in \overline{act_{a_j}}$.
\item $\ordinaryTyping[pc_{tbl}]{\Gamma}{\Delta}{exp_{a_{ji}}}{\type{\tau_{a_{ji}}}{\chi_{a_{ji}}}~goes~d}$ for each $exp_{a_{ji}} \in \overline{exp_{a_{ji}}}$.
%
\item $val_a =_{tbl} val_b$. 
\item $\chi_k \sqsubseteq {pc_{fn_j}}$, for all $j,k$
\item $pc_a \sqsubseteq pc_{fn_j}$, for all $j$
\item $\chi_k \sqsubseteq {pc_{tbl}}~\text{for all}~k$
\item $pc_{tbl} \sqsubseteq pc_a$
\end{enumerate}
These properties are can be shown using the premise in the typing derivation.

\item  \textbf{\textsc{T-Typedef}}
\[ \inferrule*[right=T-Typedef]
{
}
{
\declTyping{pc}{\Gamma}{\Delta}{\terminal{typedef}~\tau~X}{\Gamma}{\Delta[X=\tau]}
}
\]
\[
\inferrule*
{~}
{ \langle \mathcal{C}, \Delta; \mu_{a}; \epsilon_{a};\terminal{typedef}~\tau~X \rangle \Downarrow \langle \Delta[X=\tau], \mu_a, \epsilon_a, cont \rangle}
\]
The proof of this case is trivial. The only interesting part is to show that $\semanticBelowPCState{l}{\Xi_a'}{\Xi_b'}{\Delta[X=\tau]}{\mu_{a}}{\epsilon_{a}}{\mu_{b}}{\epsilon_{b}}{\Gamma}$. We already know that $\semanticBelowPCState{l}{\Xi_a}{\Xi_b}{\Delta}{\mu_{a}}{\epsilon_{a}}{\mu_{b}}{\epsilon_{b}}{\Gamma}$. By definition of this judgement for a pair of consistent state, we can observe that we can prove this for the extended $\Delta$ as it is a case of weakening the context.
\item \textbf{\textsc{T-MatchKind}}
\[ \inferrule*[right=T-MemHdr]
{
}
{
\declTyping{pc}{\Gamma}{\Delta}{match\_kind\{\overline{f}\}}{\Gamma}{\Delta[match\_kind=match\_kind\{\overline{f}\}]}
}
\]
Evaluation rule is
\[
\inferrule*[right=Eval 1]
{~}
{ \langle \mathcal{C}, \Delta; \mu_{a}; \epsilon_{a}; match\_kind\{\overline{f}\} \rangle \Downarrow \langle \Delta[match\_kind=match\_kind\{\overline{f}\}], \mu_a, \epsilon_a, cont \rangle}
\]
The proof is similar to the typedef case.

\item \textbf{\textsc{T-Seq-2}}

\[ \inferrule*[right=T-Seq]
{
\Gamma, \Delta \vdash_{pc} decl \dashv \Gamma_{1}, \Delta_1 \qquad
\Gamma_{1}, \Delta_1 \vdash_{pc} stmt \dashv \Gamma_{2}
}
{
\Gamma, \Delta \vdash_{pc} decl~stmt \dashv \Gamma_2, \Delta_1
}
\]

\[
\inferrule*[]
{
         \langle \mathcal{C}, \Delta, \mu_{a}, \epsilon_{a}, decl \rangle \Downarrow  \langle \Delta_1, \mu_{a1}, \epsilon_{a1}, cont \rangle \\
         \langle \mathcal{C}, \Delta_1, \mu_{a1}, \epsilon_{a1}, stmt \rangle \Downarrow   \langle \mu_{a2}, \epsilon_{a2}, sig_a \rangle
}
{
 \langle \mathcal{C}, \Delta, \mu_{a}, \epsilon_{a}, decl~stmt \rangle  \Downarrow  \langle \Delta_1, \mu_{a2}, \epsilon_{a2}, sig_a \rangle
}
\]

\[
\inferrule*[]
{
         \langle \mathcal{C}, \Delta, \mu_{b}, \epsilon_{b}, decl \rangle \Downarrow  \langle \Delta_1, \mu_{b1}, \epsilon_{b1}, cont \rangle \\
         \langle \mathcal{C}, \Delta_1, \mu_{b1}, \epsilon_{b1}, stmt \rangle \Downarrow   \langle \mu_{b2}, \epsilon_{b2}, sig_b \rangle
}
{
 \langle \mathcal{C}, \Delta, \mu_{b}, \epsilon_{b}, decl~stmt \rangle  \Downarrow  \langle \Delta_1, \mu_{b2}, \epsilon_{b2}, sig_b \rangle
}
\]
Given the above typing judgement holds for the statement,$decl~stmt$, we need to show that for any $\Xi_a$, $\Xi_b$, $\mu_{a}$, $\mu_{b}$, $\epsilon_{a}$, $\epsilon_{b}$, $\mu_a'$, $\mu_b'$, $\epsilon_a'$, $\epsilon_b'$ satisfying
\begin{equation} \label{stmt-seq-hyp}
\semanticBelowPCState{l}{\Xi_a}{\Xi_b}{\Delta}{\mu_{a}}{\epsilon_{a}}{\mu_{b}}{\epsilon_{b}}{\Gamma}
\end{equation}
If the statement, $decl~stmt$ is evaluated under two different initial configurations $\langle \mu_a, \epsilon_a \rangle$ and $\langle \mu_b, \epsilon_b \rangle$, then there exists some $\Xi_a'$ and $\Xi_b'$, such that the following hold:
\begin{enumerate}
\item $\declTyping{pc}{\Gamma}{\Delta}{decl~stmt}{\Gamma_2}{\Delta_1}$. This is already the theorem's hypothesis.
\item \label{stmt-seq-tp-2} We have$\Xi_a \subseteq \Xi_a'$, $\Xi_b \subseteq
  \Xi_b'$, $\dom{\mu_a} \subseteq \dom{\mu_a'}$, $\dom{\mu_b} \subseteq
  \dom{\mu_b'}$, $\dom{\epsilon_a} \subseteq \dom{\epsilon_a'}$, and
  $\dom{\epsilon_b} \subseteq \dom{\epsilon_b'}$, $\Delta \subseteq \Delta_1$,
and $\semanticBelowPCState{l}{\Xi_a'}{\Xi_b'}{\Delta_1} {\mu_{a}'}{\epsilon_{a}'}{\mu_{b}'}{\epsilon_{b}'}{\Gamma'}$. In this case $\mu_a' = \mu_{a2}$, $\mu_b' = \mu_{b2}$, $\epsilon_a' = \epsilon_{a2}$, $\epsilon_b' = \epsilon_{b2}$.
We also need to show that $\semanticBelowPCState{l}{\Xi_a'}{\Xi_b'}{\Delta} {\mu_{a}'}{\epsilon_{a}}{\mu_{b}'}{\epsilon_{b}}{\Gamma}$.
\item \label{stmt-seq-tp-4} For any $l_a \in \dom{\mu_a}$ such that $\ordinaryTyping[]{\Xi_a}{\Delta}{\mu_a(l_a)}{\tau_{clos}}$, where $\tau_{clos} \in \{\tau_{fn}, \tau_{tbl}\}$, then $\mu_a'(l_a) = \mu_a(l_a)$. Similarly for any $l_b \in \dom{\mu_b}$ such that $\ordinaryTyping[]{\Xi_b}{\Delta}{\mu_b(l_b)}{\tau_{clos}}$, where $\tau_{clos} \in \{\tau_{fn}, \tau_{tbl}\}$, then $\mu_b'(l_b) = \mu_b(l_b)$.
\item For any $l_a' \in \dom{\mu_{a}}$ and $l_b' \in \dom{\mu_{b}}$ such that $\ordinaryTyping[]{\Xi_{a}}{\Delta}{\mu_{a}(l_a')}{\type{\tau}{\chi}}$ and $\ordinaryTyping[]{\Xi_{b}}{\Delta}{\mu_{b}(l_b')}{\type{\tau}{\chi}}$ and $pc \nsqsubseteq \chi$, we have $\mu_{a}(l_a') = \mu_{a}'(l_a')$ and $\mu_{b}(l_b') = \mu_{b}'(l_b')$,
\item \label{stmt-seq-tp-3} $sig$ in any two evaluations are of the same form.
\end{enumerate}
The proof is direct by applying induction hypothesis on the $decl$ and $stmt$. We will highlight the most interesting part.
By applying induction hypothesis of \Cref{ni-decl} on $decl$, we conclude that NI decl.
This implies $\semanticBelowPCState{l}{\Xi_{a1}}{\Xi_{b1}}{\Delta_1} {\mu_{a1}}{\epsilon_{a1}}{\mu_{b1}}{\epsilon_{b1}}{\Gamma}$ and $\semanticBelowPCState{l}{\Xi_{a1}}{\Xi_{b1}}{\Delta_1} {\mu_{a1}}{\epsilon_{a}}{\mu_{b1}}{\epsilon_{b}}{\Gamma}$
By applying induction hypothesis of \Cref{ni-stmt} on $stmt$, we conclude that $\semanticBelowPCState{l}{\Xi_{a2}}{\Xi_{b2}}{\Delta_1}{\mu_{a2}}{\epsilon_{a2}}{\mu_{b2}}{\epsilon_{b2}}{\Gamma}$ and $\semanticBelowPCState{l}{\Xi_{a2}}{\Xi_{b2}}{\Delta_1}{\mu_{a2}}{\epsilon_{a1}}{\mu_{b2}}{\epsilon_{b1}}{\Gamma}$. To prove $\semanticBelowPCState{l}{\Xi_{a2}}{\Xi_{b2}}{\Delta_1}{\mu_{a2}}{\epsilon_{a}}{\mu_{b2}}{\epsilon_{b}}{\Gamma}$, we use the same approach from \textsc{T-Seq-1} case (\Cref{seq}).
\end{enumerate}

\section{Value Typing Rule} \label{vtyping}
\begin{mathpar}
\inferrule[TV-Rec]
{\Xi, \Delta \vdash_{} \overline{val: \type{\tau}{\chi}}}
{\Xi, \Delta \vdash_{} \{\overline{f=val}\}: \type{\{\overline{f: \type{\tau}{\chi}}\}}{\bot}}

\inferrule[TV-Hdr]
{\Xi, \Delta \vdash_{} \overline{val: \type{\tau}{\chi}}}
{\Xi, \Delta \vdash_{} header\{valid, \overline{f: \type{\tau}{\chi}=val}\}: \type{header\{\overline{f: \type{\tau}{\chi}}\}}{\bot}}

\inferrule[TV-Stack]
{len(\overline{val}) = n\\
\Xi, \Delta \vdash_{} \overline{val}: \type{\tau}{\chi}}
{\Xi, \Delta \vdash_{} stack~ \type{\tau}{\chi}~\{\overline{val}\}: ~\type{\type{\tau}{\chi}[n]}{\bot}}

\inferrule[TV-Clos]
{
\Xi \vdash \epsilon: \Gamma \\
\Gamma[\overline{x: \type{\tau}{\chi}}, \textsf{return} = \type{\tau_{ret}}{\chi_{ret}}], \Delta \vdash_{pc_{fn}} stmt \dashv \Gamma'
}
{\Xi, \Delta \vdash_{} clos(\epsilon, \overline{d~x: \type{\tau}{\chi}}, \type{\tau_{ret}}{\chi_{ret}}, \overline{decl}~stmt): \langle \langle \overline{d~\tau, \chi}\rangle \xrightarrow{pc_{fn}} \langle \tau_{ret}, \chi_{ret}\rangle, \bot \rangle}

\inferrule[TV-Tbl]
{
\Xi \vdash \epsilon: \Gamma \\
    \newordinaryTyping[pc_{tbl}]{\Gamma}{\Delta}{\overline{exp_k:\type{\tau_k}{\chi_k}}}{} \\
    \newordinaryTyping[pc_{tbl}]{\Gamma}{\Delta}{\overline{x_k:\type{match\_kind}{\bot}}}{} \\\\
    \newordinaryTyping[pc_{tbl}]{\Gamma}{\Delta}{act_{a_j}: \type{\overline{d~\type{\tau_{a_{ji}}}{\chi_{a_{ji}}}}~;\overline{\type{\tau_{c_{ji}}}{\chi_{c_{ji}}}} \xrightarrow{pc_{fn_j}} \type{unit}{\bot}}{\bot}}{},~\text{for all}~ j\\\\
    \newordinaryTyping[pc_{tbl}]{\Gamma}{\Delta}{\overline{exp_{a_{ji}}:\type{\tau_{a_{ji}}}{\chi_{a_{ji}}}goes~d}}{}  \\
    {\chi_k} \sqsubseteq {pc_{fn_j}}~\text{for all}~j,k \\
    pc_a \sqsubseteq pc_{fn_j}, \text{for all}~j \\
    {\chi_k} \sqsubseteq {pc_{tbl}}~\text{for all}~k\\
    pc_{tbl} \sqsubseteq pc_{a}
}
{{\Xi},{\Delta}\vdash_{}{\text{table}~l~ \{\epsilon, \overline{exp_k: x_k}~ \overline{act_{a_j}(\overline{exp_{a_{ji}}})}\}}: \type{table(pc_{tbl})}{\bot}}

\inferrule[TV-PartialApp]
{
\Gamma, \Delta \vdash_{pc} x_{act}: \type{\overline{d~\type{\tau_{a_{ji}}}{\chi_{a_{ji}}}}~;\overline{\type{\tau_{c_{ji}}}{\chi_{c_{ji}}}} \xrightarrow{pc_{fn}} \type{unit}{\bot}}{\bot} \\
\Gamma, \Delta \vdash_{pc} \overline{exp:\type{\tau}{\chi} goes~ d}
}
{{\Xi},{\Delta}\vdash_{}x_{act}(\overline{exp}, \overline{x_c: \type{\tau}{\chi}}): \type{\overline{d~\type{\tau_{a_{ji}}}{\chi_{a_{ji}}}}~;\overline{\type{\tau_{c_{ji}}}{\chi_{c_{ji}}}} \xrightarrow{pc_{fn}} \type{unit}{\bot}}{\bot}}

\inferrule[Match]{\Delta(match\_kind) = match\_kind \{\overline{f}\}}{\Xi, \Delta \vdash_{} match\_kind.f: match\_kind}

\inferrule[TV-SubType]
{\Xi, \Delta \vdash_{} val: \type{\tau}{\chi} \\ \chi \sqsubseteq \chi'}
{\Xi, \Delta \vdash_{} val: \type{\tau}{\chi'} }
\end{mathpar}
\fi

\end{document}